\newcommand{\nn}{\nonumber}
\newcommand{\bb}{\mathbb}
\newcommand{\scr}{\mathscr}
\newcommand{\p}{\partial}
\newcommand{\tr}{\text{tr}}
\newcommand{\LCGamma}{\mathring{\Gamma}}
\newcommand{\LCnabla}{\mathring{\nabla}}
\newcommand{\un}{\underline}
\newcommand{\ti}{\tilde }
\newcommand{\td}{\textnormal{d}}
\newcommand{\ts}{\textnormal{s}}
\newcommand{\E}{\textnormal{e}}
\def\I{\textnormal{i}}
\newcommand{\be}{\begin{equation}}
\newcommand{\ee}{\end{equation}}
\newcommand{\beqn}{\begin{eqnarray}}
\newcommand{\eeqn}{\end{eqnarray}}
\newcommand{\beq}{\begin{eqnarray}}
\newcommand{\eeq}{\end{eqnarray}}
\newcommand{\mX}{\mathfrak{X}}
\newcommand{\mY}{\mathfrak{Y}}
\newcommand{\mZ}{\mathfrak{Z}}
\newcommand{\RR}{\mathbb{R}}
\newcommand{\mD}{\mathfrak{D}}
\newcommand{\mg}{\mathfrak{g}}
\newcommand{\umX}{\un{\mX}}
\newcommand{\umY}{\un{\mY}}
\newcommand{\umZ}{\un{\mZ}}
\newcommand{\uX}{\un{X}}
\newcommand{\uY}{\un{Y}}
\newcommand{\uE}{\un{E}}
\newcommand{\ut}{\un{t}}
\newcommand{\uA}{{\un{A}}}
\newcommand{\uB}{{\un{B}}}
\newcommand{\uM}{{\un{M}}}
\newcommand{\uN}{{\un{N}}}
\newcommand{\umu}{\un{\mu}}
\newcommand{\unu}{\un{\nu}}
\newcommand{\ualpha}{\un{\alpha}}
\newcommand{\ubeta}{\un{\beta}}
\newcommand{\End}{\text{End}}
\newcommand{\Der}{\text{Der}}
\newcommand{\hatd}{\hat{\td}}
\newcommand{\Aconn}[1]{\phi_{#1}}
\newcommand{\econn}{\hat{A}}
\newcommand{\ecurv}{\hat{F}}
\theoremstyle{definition}
\newtheorem{defn}{Definition}[chapter]
\newtheorem{rem}{Remark}[chapter]
\newtheorem{ex}{Example}[chapter]
\theoremstyle{plain}
\newtheorem{thm}{Theorem}[chapter]
\newtheorem{prop}[thm]{Proposition}
\newtheorem{lemma}[thm]{Lemma}
\newcommand{\bthm}{\begin{thm}}
\newcommand{\ethm}{\end{thm}}
\newcommand{\bpro}{\begin{prop}}
\newcommand{\epro}{\end{prop}}
\newcommand{\bdefi}{\begin{defn}}
\newcommand{\edefi}{\end{defn}}
\newcommand{\bex}{\begin{ex}}
\newcommand{\eex}{\end{ex}}
\newcommand{\brem}{\begin{rem}}
\newcommand{\erem}{\end{rem}}
\newcommand{\bpf}{\begin{proof}}
\newcommand{\epf}{\end{proof}}
\newcounter{counterforappendices}
\begin{document}

\title{Topics in Weyl Geometry and Quantum Anomalies}
\author{Weizhen Jia}
\department{Physics}
\phdthesis
\degreeyear{2024}
\committee{
    Associate Professor Thomas Faulkner, Chair\\
    Professor Robert G. Leigh, Director of Research\\
    Professor Taylor L. Hughes\\
    Associate Professor Jorge Noronha}
\maketitle

\frontmatter

\begin{abstract}
The interplay between geometry, symmetry, and physics reveals fundamental insights of Nature. In this thesis we explore several facets of these topics, including Weyl geometry and its applications in holographic duality, and the geometric structure of gauge theory and quantum anomalies in the language of Lie algebroids.

The first part of this thesis focuses on the Weyl-covariant nature of holography. The conformal boundary of an asymptotically locally AdS (ALAdS) spacetime carries a conformal geometry. The commonly used Fefferman-Graham (FG) gauge explicitly breaks the Weyl symmetry of the boundary theory. This can be resolved by applying the Weyl-Fefferman-Graham (WFG) gauge, in which the boundary carries a Weyl geometry, which is a natural extension of conformal geometry with the Weyl covariance mediated by a Weyl connection. Based on this idea, we generalize the Fefferman-Graham ambient construction for conformal geometry to a corresponding construction for Weyl geometry. We modify the FG ambient metric into a Weyl-ambient metric by implementing the WFG gauge, then we show that the Weyl-ambient space as a pseudo-Riemannian geometry at codimension-2 a Weyl manifold. Conversely, we also show that the Weyl-ambient metric can be uniquely reconstructed from a codimension-2 Weyl manifold provided the initial data of the metric and Weyl connection. Through the Weyl-ambient construction, we investigate Weyl-covariant quantities on the Weyl manifold and define Weyl-obstruction tensors. We show that Weyl-obstruction tensors appear as poles in the Fefferman-Graham expansion of the ALAdS bulk metric for even boundary dimensions. Under holographic renormalization in the WFG gauge, we compute the Weyl anomaly of the boundary theory in multiple dimensions and demonstrate that Weyl-obstruction tensors can be used as the building blocks for the Weyl anomaly of the dual quantum field theory (QFT). Furthermore, the holographic calculation with a background Weyl geometry also suggests an underlying geometric interpretation of the Weyl anomaly, which motivates the second part of this thesis.

The second part of this thesis is devoted to understanding the geometric nature of the Becchi-Rouet-Stora-Tyutin (BRST) formalism and quantum anomalies. Conventionally, the geometric interpretation for anomalies is studied through the Wess-Zumino consistency condition and descent equations, where the anomaly lives in the ghost number one sector of the BRST cohomology. Using the language of Lie algebroids, the BRST complex can be encoded in the exterior algebra of an Atiyah Lie algebroid derived from the principal bundle of the gauge theory. We develop the correspondence of the BRST cohomology and the Lie algebroid cohomology. We showed explicitly that the cohomology of an Atiyah Lie algebroid in a trivialization gives rise to the BRST cohomology. In addition, in the framework of Lie algebroid, the gauge transformations and diffeomorphisms are implemented on an equal footing. We then apply the Lie algebroid cohomology in studying quantum anomalies and demonstrate the computation for chiral and Lorentz-Weyl (LW) anomalies. In particular, we pay close attention to the fact that the geometric intuition afforded by the Lie algebroid (which was absent in the traditional BRST complex) provides hints of a deeper picture that simultaneously geometrizes the consistent and covariant forms of the anomaly. In the algebroid construction, the difference between the consistent and covariant anomalies is simply a different choice of basis. This indicates that the Lie algebroid cohomology is indeed a suitable formulation for geometrizing quantum anomalies. 

The two parts of this thesis are structured to be self-contained and can be read independently. While each part delves into distinct topics, they converge on the subject of the Weyl anomaly. Collectively, they contribute to advancing our understanding of the Weyl anomaly from various perspectives.
\end{abstract}

\begin{dedication}
To my parents, whose love sustained me through my years overseas, enabling the pursuit of my aspirations.
\par
\&
\par
To my motherland, whose profound culture and history have imparted wisdom, shaping me as a human.
\end{dedication}

\begin{acknowledgments}

As I finalize this thesis, my journey of PhD is also coming to its completion. I am grateful to those who supported me along the way. 
\par
First and foremost, I would like to thank my advisor Prof.~Rob Leigh. Spending two years taking his classes on quantum field theory and AdS/CFT laid a solid foundation for conducting the research presented in this thesis. After joining his group, his passion for physics and profound way of thinking have left a lasting impact on me. I am also grateful for his guidance in fostering my skills of independent study, which I consider critical for my career in physics. I would like to sincerely thank the members on my doctoral committee, Profs.~Tom Faulkner, Taylor Hughes and Jorge Noronha. The enlightening conversations with each of them over the years have offered me valuable insights from across different fields in physics, which encouraged me to develop an interdisciplinary mindset. 
\par
This thesis could not be completed without the contributions of my collaborators. I would like to thank Manthos Karydas and Marc Klinger, with whom I had the privilege to work. Manthos's support, especially during my early days in the group, has been particularly meaningful to me. His insightful ``interrogations" have consistently deepened my understanding of the issues addressed within or beyond this thesis. Marc possesses an exceptional breadth and depth of knowledge in mathematics and a remarkable sense of connecting them with physics. The discussions with him never failed to be educational to me. I would like to express special thanks to Dr.~Luca Ciambelli for his constant support throughout my research projects. His illuminating comments have inspired the resolution of many challenges I encountered. His advice and support regarding my academic career have also been truly invaluable.
\par
I would like to thank Prof.~Mike Stone for his lectures on mathematical physics and for always being open to having discussions regarding anomalies, which have been extremely useful. I also thank Profs.~Nigel Goldenfeld, Jessie Shelton, Sheldon Katz, Rui Fernandes, Eduardo Fradkin and Barry Bradlyn, for their excellent lectures in physics and math departments. Taking their course and interacting with them has not only enriched my understanding of relevant subjects covered in this thesis, but also broadened my horizon and made me more versatile for my future studies. I am also thankful to Prof. Lance Cooper for his miscellaneous assistance and financial support, which played a crucial role in the completion my PhD and this thesis. 
\par
\begin{CJK*}{UTF8}{bsmi}
Over the years, I have received countless support from many colleagues and friends. Many thanks to Sammy Goldman and Pin-Chun Pai (白秉錞) for being wonderful research groupmates. They provided numerous valuable feedbacks on my presentations during group meetings, and always created a positive and motivating environment.
I would also like to thank the fellows from the high energy theory group, including Fikret Ceyhan, Min Li (李敏), Xuchen Cao (曹旭晨), Kort Beck, Jia Wang (王佳), Antony Speranza, Animik Ghosh, and Zhencheng Wang (王振丞), as well as Enrico Speranza and Kevin Ingles from the nuclear theory group, for all the helpful conversations.\end{CJK*} \begin{CJK*}{UTF8}{gbsn}I also want to thank former students and postdocs of the high energy theory group who have been supportive at various stages of my PhD, including Simon Lin (林山), Srivatsan Balakrishnan, Souvik Dutta, Yidong Chen (陈一东), Udit Gupta, Jackson Fliss, and Sagar Lokhande. Besides, I am deeply grateful to my friends at ICMT, Akash Vijay, Jacopo Gliozzi and Dmitry Manning-Coe, who have always been warm and encouraging to me. The stimulating clash of ideas between my background and their condensed matter expertise has also taught me a great deal.
\end{CJK*}

\begin{CJK*}{UTF8}{gbsn}
I am greatly honored for the friendships I have formed with Brandon Buncher, Dalton Chaffee, Marcus Rosales, Sai Paladugu, Nick Abboud, Spencer Johnson, Shaun Lahert and Pranjal Ralegankar. I appreciate all the fun times we shared and especially for helping me blend into the culture when I first arrived in this country. To my Chinese friends in Urbana, Cunwei Fan (范存炜), Xueying Wang (王雪莹), Minhui Zhu (朱\end{CJK*}\begin{CJK*}{UTF8}{bsmi}旻\end{CJK*}\begin{CJK*}{UTF8}{gbsn}晖), Jiayu Shen (沈佳宇), Yu Ding (丁遇), Yunkai Wang (王云开), Kai Zheng (郑楷) and Yumu Yang (杨雨木), interacting with each of you has been inspiring and uplifting.

Lastly, I want to express my gratitude to Penghao Zhu (朱鹏浩) and Mengdi Zhao (赵梦迪) for their unwavering friendship. The moments spent hanging out with this lovely couple, chatting about physics and life, and sharing homemade food to each other provided me with the mental strength to get through tough times. These are treasured memories that I will always cherish.
\end{CJK*}

\end{acknowledgments}

{
    \hypersetup{linkcolor=black}  
    \tableofcontents
    \addtocontents{toc}{\vspace{-\baselineskip}}
}

\mainmatter
\part{Weyl-Ambient Metrics, Obstruction Tensors and Holography}
\chapter{Introduction}
\label{chap:Intro1}
\section{Backgrounds on Geometry}
Conformal geometry is a very rich area of mathematics with its history deeply intertwined with that of physics. Historically, the subject was initiated at the beginning of the twentieth century with the work of Hermann Weyl \cite{Weyl:1918pdp}, \'Elie Cartan \cite{CartanLesE} and Tracy Y.~Thomas \cite{thomas1925invariants}. In physics, there have been numerous applications of conformal geometry, from conformal compactification \cite{Penrose:1962ij} and conformal gravity \cite{Mannheim:2011ds} to the anti-de Sitter/conformal field theory (AdS/CFT) correspondence \cite{Maldacena:1997re,Witten:1998qj}. 
 \par
The fundamental structure appearing in conformal geometry is a manifold $M$ endowed with a \emph{conformal class} of metrics $[g]$. Two metrics belong in the same conformal class $[g]$ if one metric is a smooth positive multiple of the other. Local rescalings of the metric tensor by an arbitrary smooth positive function are called \emph{Weyl transformations}. Compared to pseudo-Riemannian manifolds $(M,g)$, conformal manifolds are endowed with an enlarged symmetry group with both diffeomorphisms and Weyl transformations, denoted by $\text{Diff}(M)\ltimes \text{Weyl}$. A tensor $T$ on a conformal manifold $(M,[g])$ is said to be conformally covariant if it transforms covariantly under a Weyl transformation:
\begin{equation}\label{conformal_Weyl_tensor}
T\to {\cal B}(x)^{w_{T}}T\,,\qquad\text{when}\qquad g\to {\cal B}(x)^{-2}g\,,
\end{equation}
where $w_{T} $ is  the Weyl weight of the tensor $T$. On the physics side, conformal-covariant tensors appear as expectation values of operators in conformal field theories coupled to a background metric. As an important example, the expectation value of the trace of the energy-momentum tensor acquires an anomalous term after quantization, namely the celebrated Weyl anomaly \cite{Capper:1974ic}, which will be discussed in detail shortly. By investigating the effective action in dimensional regularization, Deser and Schwimmer \cite{Deser:1993yx} made a conjecture regarding the possible candidates for the Weyl anomaly, which are global conformal invariants. This conjecture was later proven in \cite{alexakis2012decomposition,Boulanger:2007st,Boulanger:2007ab}.\footnote{The analysis in \cite{alexakis2012decomposition} concerns local conformal invariants, corresponding to the type B Weyl anomaly, while \cite{Boulanger:2007ab,Boulanger:2007st} deals with the type A Weyl anomaly.}

Just as diffeomorphism-covariant quantities, i.e.,\ tensors, on pseudo-Riemannian manifolds can easily be constructed out of the metric, Riemann tensor and covariant derivatives, one might expect to find conformal-covariant tensors on conformal manifolds.  
However, unlike the abundance of diffeomorphism-covariant quantities on $(M,g)$, it is significantly harder to construct conformal-covariant tensors on $(M,[g])$. Before the work of Fefferman and Graham, the only known examples of conformal tensors were the Weyl tensor $W_{ijkl}$ (traceless part of the Riemann tensor $R_{ijkl}$) in any dimension, the Cotton tensor $C_{ijk}$ \cite{cotton1899varietes} in $3d$ and the Bach tensor $B_{ij}$ \cite{bach1921weylschen} in $4d$. By means of the Schouten tensor
\be
P_{ij}=\frac{1}{d-2}\bigg(R_{ij}-\frac{1}{2(d-1)}Rg_{ij}\bigg)\,,
\ee
these tensors can be expressed as
\begin{align}
\label{Weyltensor}
W_{ijkl}&=R_{ijkl}-g_{ik}P_{jl}-g_{jl}P_{ik}+g_{jk}P_{il}+g_{il}P_{jk}\,,\\
C_{ijk}&=\nabla_{k}P_{ij}-\nabla_{j} P_{ik}\,,\\
B_{ij}&=\nabla^k\nabla_k P_{ij}-\nabla^k\nabla_{j} P_{ik}-W_{ljik}P^{kl}\,.
\end{align}
In their seminal work \cite{AST_1985__S131__95_0,Fefferman:2007rka}, Fefferman and Graham introduced the ambient metric construction based on previous work by Fefferman \cite{fefferman1979parabolic}, which provided a systematic method of finding conformal-covariant tensors. The basic idea of the construction was to associate  a $(d+2)$-dimensional ``ambient" pseudo-Riemannian manifold to a $d$-dimensional conformal manifold. One can then find a specific class of ambient diffeomorphisms that induces Weyl transformations on the conformal manifold. 
\par
An important outcome of the ambient construction was to define \emph{extended obstruction tensors} from covariant derivatives of the ambient Riemann tensor \cite{graham2009extended}. Obstruction tensors are the generalization to higher (even) dimension of the Bach tensor. For each even dimension $d\geqslant4$, the corresponding obstruction tensor is the only irreducible conformal-covariant tensor in that dimension \cite{graham2005ambient}. Defined through the ambient space, the $k^{th}$ extended obstruction tensor $\Omega^{(k)}_{ij}$ has a simple pole at $d=2k+2$, whose residue is the obstruction tensor in that dimension. For example, the first extended obstruction tensor reads
\be
\Omega^{(1)}_{ij}=-\frac{1}{d-4}B_{ij}\,,
\ee
where $B_{ij}$ is the Bach tensor, namely the obstruction tensor in $4d$. 
\par
A different perspective on conformal geometry was introduced by Weyl \cite{Weyl:1918pdp}, whose idea was to make the physical scale a local quantity. The Weyl connection was introduced so that one can transport the physical scale between two points of the manifold. Although Weyl's initial attempt to identify the Weyl connection with the electromagnetic gauge field failed, the consistent mathematical structure he introduced was developed further in \cite{10.4310/jdg/1214429379,doi:10.1063/1.529582}. In this approach, a Weyl connection $a$ is introduced on the conformal manifold which transforms together with the metric $g$ under a Weyl transformation. One can modify the conformal class $[g]$ to a \emph{Weyl class} $[g,a]$, which is the equivalence class formed by the pairs $(g,a)\sim ({\cal B}(x)^{-2}g, a- \td\ln{\cal B}(x))$. This defines a Weyl manifold $(M,[g,a])$, and the conformal geometry is promoted to \emph{Weyl geometry} \cite{10.4310/jdg/1214429379,doi:10.1063/1.529582,scholz2018unexpected}. Equivalently, a Weyl connection can be thought of as a connection on the \emph{Weyl structure}, which is a principal bundle with the Weyl symmetry group as the structure group \cite{10.4310/jdg/1214429379}. 
\par
Similarly to a conformal-covariant tensor, one can define a Weyl-covariant tensor $T$ on a Weyl manifold $(M,[g,a])$ to be a tensor that transforms covariantly under a Weyl transformation:
\begin{equation}\label{Weyl_tensor}
\begin{split}
T\to {\cal B}^{w_{T}}(x)T\,,\qquad\text{when}\qquad g\to {\cal B}(x)^{-2}g\,,\quad a\to a- \td \ln {\cal B}(x)\,.
\end{split}
\end{equation}
Although conformal-covariant tensors on a conformal manifold $(M,[g])$ are hard to find, Weyl-covariant tensors on a Weyl manifold $(M,[g,a])$ can be constructed quite easily. Recall that on a pseudo-Riemannian manifold $(M,g)$, one can define a Levi-Civita (LC) connection $\nabla$, and it is well-known that diffeomorphism-covariant quantities can be constructed from the metric, Riemann curvature, and covariant derivatives  of the Riemann curvature. On a Weyl manifold $(M,[g,a])$, one can define a Weyl-Levi-Civita connection $\hat\nabla$, and a plethora of Weyl-covariant quantities can similarly be constructed from the metric, Weyl-Riemann curvature, and  Weyl-covariant derivatives $\hat\nabla$ of the Weyl-Riemann curvature. This indicates that the $\text{Diff}(M)\ltimes \text{Weyl}$ symmetry is manifested more naturally on a Weyl manifold, and the representation has a similar structure as that of  $\text{Diff}(M)$ on  pseudo-Riemannian manifolds. There are corresponding notions of Weyl metricity, Weyl torsion and a uniqueness theorem giving a Weyl-LC connection \cite{10.4310/jdg/1214429379,Condeescu:2023izl}. 
\par
From the geometry side, the main goal of Part I of this thesis is to provide an ambient construction for Weyl manifolds. We start by introducing the Weyl-ambient metric as a modification of the FG ambient metric. We will then present two perspectives. The first one is a top-down approach. We will see that one naturally obtains a codimension-2 Weyl manifold $(M,[g,a])$. A more formal approach is the bottom-up perspective, where we start from a $d$-dimensional conformal manifold $(M,[g])$, which is then enhanced into a Weyl manifold $(M,[g,a])$ by introducing a connection on the Weyl structure over $M$. A $(d+2)$-dimensional Weyl-ambient space can then be constructed by taking the Weyl structure as an initial surface, which follows the rigorous ambient space construction in \cite{Fefferman:2007rka}. We also provide a definition of Weyl-obstruction tensors on a Weyl manifold $(M,[g,a])$ through the Weyl-ambient space $(\tilde M,\tilde g)$, in a way analogous to how obstruction tensors were defined in \cite{graham2009extended,Fefferman:2007rka}. Many properties of the extended Weyl-obstruction tensors can also be derived from the Weyl-ambient space. 

\section{Backgrounds on Physics}
\par
To physicists, perhaps a more familiar scenario is lying on a hyperbola in the ambient space, namely a $(d + 1)$-dimensional asymptotically locally AdS (ALAdS) geometry, usually referred to in the physics literature as the ``(AL)AdS bulk." The conformal boundary of an ALAdS spacetime is an important example of conformal geometry, as it carries not a single metric but a conformal class of metrics, given that the asymptotic boundary is formally located at conformal infinity. The AdS/CFT correspondence \cite{Maldacena:1997re,Witten:1998qj} conjectures a duality between quantum gravity theories in the AdS bulk and conformal field theories on the boundary. This duality is an example of gauge/gravity dualities and a realization of the holographic principle of quantum gravity \cite{tHooft:1993dmi,Susskind:1994vu}. The large-$N$ limit of the boundary CFT corresponds to the semiclassical limit of the bulk gravity theory, where the Einstein-Hilbert action dominates the effective theory. Moreover, a strongly coupled boundary theory corresponds to a weakly coupled gravity theory in the bulk. Thus, besides the motivation for quantum gravity, the AdS/CFT duality has provided a versatile toolkit applied in various fields, including condensed matter physics \cite{Herzog:2007ij,Hartnoll:2009sz,Sachdev:2010ch}, nuclear physics \cite{Kim:2011ey,Pahlavani:2014dma,Aoki:2012th}, hydrodynamics \cite{Policastro:2002se,Bhattacharyya:2007vjd,Haack:2008cp,Hubeny:2011hd}, and quantum information theory \cite{Almheiri:2014lwa,Dong:2016eik,Pastawski:2015qua,Hayden:2016cfa,Banks:2020dus}.
\par
In the context of AdS/CFT, diffeomorphisms that induce Weyl transformations of the boundary metric are the Weyl diffeomorphisms in the bulk. Thus, conformal-covariant tensors can descend from ambient Riemannian tensors, and their Weyl transformations can be derived from certain ambient diffeomorphisms. In a suitable coordinate system $\{z,x^\mu\}$ ($\mu=0,\cdots,d-1$), the metric of any $(d+1)$-dimensional ALAdS spacetime can be expanded with respect to the bulk coordinate $z$ into two series, called the Fefferman-Graham expansion \cite{Ciambelli:2019bzz,Leigh}. The Weyl transformations can be represented by a local scaling of the coordinate $z$. 
\par
Usually when discussing AdS/CFT, one picks a specific representative of the conformal class. For example, the most commonly used choice for studying the conformal boundary of an ALAdS spacetime is the Fefferman-Graham (FG) gauge \cite{AST_1985__S131__95_0,Fefferman:2007rka}. However, the FG gauge explicitly breaks the Weyl symmetry by fixing a specific boundary metric. This is also manifested by the fact that the FG ansatz of the bulk metric is not preserved under a Weyl diffeomorphism. More specifically, in this case one can introduce a Penrose-Brown-Henneaux (PBH) transformation \cite{Imbimbo_2000,Bautier:2000mz,Rooman:2000ei} in the bulk to induce a Weyl transformation on the boundary, but the subleading terms in the $z$-expansion will not transform in a Weyl-covariant way if the form of the FG ansatz is to be preserved. 

In order to resolve this issue, one can relax the FG ansatz of the ALAdS bulk metric to the Weyl-Fefferman-Graham (WFG) ansatz \cite{Ciambelli:2019bzz}. In this way, the form of the bulk metric is preserved under a Weyl diffeomorphism, and all the terms in the $z$-expansion transform in a Weyl-covariant way, which brings a powerful reorganization of the holographic dictionary. It was shown \cite{Ciambelli:2019bzz} that in the WFG gauge, the bulk LC connection induces a Weyl connection on the conformal boundary. Thus, the ALAdS bulk geometry in the WFG gauge induces a Weyl geometry instead of only a conformal geometry on the conformal boundary. Following \cite{Ciambelli:2019bzz}, the WFG gauge was further investigated in \cite{Jia:2021hgy,Jia:2023gmk,Ciambelli:2023ott}. We have seen that in the FG ambient construction, the conformal boundary $(M,[g])$ of a $(d+1)$-dimensional ALAdS bulk is associated with a $(d+2)$-dimensional ambient space, and the ALAdS bulk in the FG gauge can be considered as a hypersurface in the ambient space. A natural question to ask is whether such a construction exists for the conformal boundary as a Weyl manifold. In this thesis we will provide such a construction. We introduce the Weyl-ambient space $(\tilde M,\tilde g)$ as a modification of the FG ambient space, in which the ALAdS bulk in the WFG gauge is a hypersurface and its boundary is associated with a codimension-2 Weyl manifold $(M,[g,a])$.
\par
For an even-dimensional boundary, the two series in the FG expansion will mix and the solution to the equations of motion encounters a pole. Formulating the FG expansion is using the technique of dimensional regularization, i.e.\ regarding $d$ as a variable (formally complex), the extended obstruction tensor $\Omega^{(k)}_{ij}$ can be read off from the pole of the FG expansion in $2k$-dimension. Equivalently, the obstruction tensor can also be introduced as a logarithmic term at order $O(z^{d-2})$ for $d=2k$, causing an obstruction to the power series expansion\cite{graham2005ambient}. Using the technique of dimensional regularization, the Weyl-obstruction tensors and extended Weyl-obstruction tensors were introduced in \cite{Jia:2021hgy} as the poles in the on-shell metric expansion. The extended obstruction tensors also play an integral role in the context of holography as the basic building blocks of the holographic Weyl anomaly \cite{Henningson:1998gx,graham2009extended}.

The Weyl anomaly, also known as the conformal anomaly or trace anomaly, reflects the violation of the Weyl symmetry in a quantum theory that is present in a classical theory. (For a general overview of quantum anomalies, see Section \ref{sec:anomalyoverview} in Part II). It is quantified by the nonvanishing trace of the energy-momentum tensor in even dimensions, which has been computed for various conformal field theories \cite{Capper1974,Deser1976,Duff1977,Polyakov1981,Fradkin:1983tg,Bonora:1985cq,Osborn1991,Deser1993,Henningson:1998gx,Boulanger:2007st,Boulanger:2007ab} and exhibits many physical consequences. For example, it has been found that it significantly contributes to the proton mass \cite{Ji:1994av,Ji:1995sv}. In condensed matter systems, experimentally accessible effects have been discussed in \cite{Northe:2022tjr}. In string theory, the cancellation of the Weyl anomaly determines the dimensionality of bosonic string theory to be 26 and superstring theory to be 10 \cite{Polchinski:1998rq,Polchinski:1998rr}. The results of Weyl anomaly in $2d$ and $4d$ are well-known:
\begin{align}
\label{WA2d4d}
2d:\langle T^\mu{}_\mu\rangle= -\frac{c}{24\pi} R\,,\qquad4d:\langle T^\mu{}_\mu\rangle= cW^2-aE^{(4)}\,,
\end{align}
where $W^2$ is the contraction of two Weyl tensors, and $E^{(4)}$ is the Euler density in $4d$. The coefficient $c$ in $2d$ is the central charge of the $2d$ CFT, which has the crucial property that it monotonically decreases along the renormalization group (RG) flow from the the ultraviolet (UV) to the infrared (IR), a result known as the c-theorem \cite{Zamolodchikov:1986gt}. Similarly, in $4d$, the coefficient $a$ follows $a_{UV}>a_{IR}$, known as the $a$-theorem \cite{Komargodski:2011vj}. These results highlight one of the key aspects of the unique nature of the Weyl anomaly compared to other kinds of anomalies. 
\par
In the context of holography, the Weyl anomaly was first suggested in \cite{Witten:1998qj}, and was then calculated from the bulk in \cite{Liu:1998bu} and \cite{Henningson:1998gx}.
For a holographic theory where we have the vacuum Einstein theory in the bulk, one gets $a=c$ in the 4-dimensional boundary theory as a constraint on the central charges. In the FG gauge, after going through the holographic renormalization procedure by adding counterterms to cancel the divergence extracted by the regulator, one finds that the holographic Weyl anomaly in an even dimension corresponds to the logarithmic term in the bulk volume expansion. In mathematical literature this is also referred to as the Q-curvature \cite{branson1991explicit,Branson1995,2001math.....10271F,Fefferman2003} (see \cite{Chang2008} for a short review), which has been studied by means of obstruction tensors and extended obstruction tensors in \cite{graham2005ambient} and \cite{graham2009extended}. Going into the WFG gauge, it was shown in \cite{Ciambelli:2019bzz} using dimensional regularization that the Weyl anomaly in $2k$-dimension can be extracted directly from the variation of the pole term at the $O(z^{2k-d})$-order of the ``bare" on-shell action under the $d\to 2k^-$ limit. Using this method in the WFG gauge, it was found in \cite{Jia:2021hgy} that the holographic Weyl anomaly can be expressed in terms of extended Weyl-obstruction tensors. 
\par
From the physics side, our goal in Part I of this thesis is to find the holographic Weyl anomaly in higher dimensions utilizing the the features of the Weyl geometry and WFG gauge, and organize the results in a form that manifests its general structure.\footnote{For discussions on the Weyl anomaly in non-holographic contexts utilizing Weyl geometry, see \cite{Zanusso:2023vkn,Ghilencea:2023sti}.} It has been shown in \cite{Ciambelli:2019bzz} that, up to total derivatives, the Weyl anomaly in $2d$ and $4d$ in the WFG gauge has the same form of that in the FG gauge, but now become Weyl-covariant. We generalize these results to $6d$ and $8d$ by calculating the Weyl anomaly explicitly, and we find that the same statement still holds. Furthermore, we show that by promoting the obstruction tensors in the FG gauge to the Weyl-obstruction tensors in the WFG gauge, one can use them as natural building blocks for the Weyl anomaly. In this way, we will see clearly how  the WFG gauge Weyl-covariantizes the Weyl anomaly without introducing additional nontrivial cocycles. Our results also reveal some interesting clues about the general form of the holographic Weyl anomaly in any dimension.

\section{Organization of Part I}
The rest of Part I is organized as follows. 

In Chapter \ref{chap:pre}, we provide necessary preliminaries. Section \ref{sec:Weyl} introduces Weyl geometry, including useful quantities and identities. Section \ref{sec:FG} discusses obstruction tensors and extended obstruction tensors in the FG gauge and their properties. Section \ref{sec:WFG} reviews the WFG gauge as a Weyl-covariant modification of the FG gauge and explains how the bulk LC connection induces a Weyl connection on the conformal boundary.

In Chapter \ref{chap:Weylambiant}, we first review the Fefferman-Graham ambient metric before introducing the Weyl-ambient metric $\tilde g$ at the end of Section \ref{sec:ambient}. To build intuition, we start with the flat ambient metric and generalize to Ricci-flat ambient metrics. Different coordinate systems presented in Section \ref{sec:ambient} are described in Appendix \ref{app:coords}. In Section \ref{sec:WAS}, we formulate Weyl-ambient geometry from two perspectives. First, from a top-down perspective, we demonstrate how $(\tilde M,\tilde g)$ induces a codimension-2 Weyl manifold $(M,[g,a])$. Then, we introduce the bottom-up construction of the Weyl-ambient metric. We show that the Weyl-ambient metric has a well-defined perturbative initial value problem, with Ricci-flatness as the equation of motion, following and generalizing \cite{Fefferman:2007rka}. Some major theorems from \cite{Fefferman:2007rka} are extended with suitable modifications. 

Chapter \ref{chap:WOT} is dedicated to Weyl-obstruction tensors. In Section \ref{sec:WOTbulk}, we generalize the obstruction tensors derived from in Section \ref{sec:FG} to Weyl-obstruction tensors by solving the Einstein equations in the WFG gauge. Expansions of the Einstein equations can be found in Appendix \ref{app:B0}. In Section \ref{sec:WOTambient}, we discuss how the Weyl-covariant tensors on $(M,[g,a])$ are derived from the Riemann tensor of $(\tilde M,\tilde g)$, and define the extended Weyl-obstruction tensors. We use a first-order formalism in Section \ref{sec:topdown} with a null frame, with details provided in Appendix \ref{app:Null}. We then discuss Weyl-covariant tensors and extended Weyl-obstruction tensors in the second-order formalism, and prove the extended Weyl-obstruction tensors defined from both approaches. The results of Chapter \ref{chap:Weylambiant} and Chapter \ref{chap:WOT} are summarized in Section \ref{sec:conclu}.

In Chapter \ref{chap:HWA}, we introduce the anomalous Weyl-Ward identity in Weyl geometry and discuss the holographic Weyl anomaly in the WFG gauge in Section \ref{sec:WWI}. Using Weyl-Schouten and extended Weyl-obstruction tensors, we derive the holographic Weyl anomaly in the WFG gauge up to $8d$ in Section \ref{sec:HWA}. More details of the calculation are provided are in Appendix \ref{app:expansion}. In Section \ref{sec:roleWFG}, we explore aspects of Weyl structure in the formulas for Weyl-obstruction tensors and Weyl anomaly. Finally, in Section \ref{sec:disc1}, we summarize our results and point out possible directions for future research.

The results presented in Part I sourced mostly from the joint research works \cite{Jia:2021hgy,Jia:2023gmk} with the author's advisor Robert~G.~Leigh, and collaborator Manthos Karydas.

\section{Notation}
We will label the indices in a $d$-dimensional manifold $M$ by lowercase Latin letters $i,j,\cdots$, in a $(d+1)$-dimensional ALAdS bulk by lowercase Greek letters $\mu,\nu,\cdots$, and in a $(d+2)$-dimensional ambient space $\tilde M$ by uppercase Latin letters $I,J,\cdots$. The vectors on $M$ are denoted by $ \un U,\un V$, on the Weyl structure ${\cal P}_W$ over $M$ are denoted by $\un u,\un v$, and on the ambient manifold $\tilde M$ are denoted by $\un {\cal U},\un{\cal V}$.

In Subsections \ref{sec:topdown} and \ref{sec:firstorder}, we mainly use the dual frame $\{\bm e^I\}$, and the ambient frame indices are $I=+,1,\cdots,d,-$. Unless otherwise indicated, in Subsections \ref{sec:bottomup}, \ref{sec:proofs} and  \ref{sec:secondorder} we mainly use the ambient coordinate system $\{t,x^i,\rho\}$, and the indices are $I=0,1,\cdots,d,\infty$, where $0$ labels the $t$-component and $\infty$ labels the $\rho$-component. The notation $(0,x^i,\infty)$ is also used for the components in a trivialization ${\cal P}_W\times \bb R\simeq \bb R_+\times M\times\bb R$, even without specifying a choice of coordinates on $M$. The above-mentioned notation is summarized in Table \ref{t1}.

In Chapter \ref{chap:pre}, Section \ref{sec:WOTbulk} and Chapter \ref{chap:HWA} we use $\gamma^{(2k)}_{ij}$ and $a^{(2k)}_{i}$ for the bulk expansions in $z$-coordinate, while in Chapter \ref{chap:Weylambiant} and Section \ref{sec:WOTambient} we use $\gamma^{(k)}_{ij}$ and $a^{(k)}_{i}$ for the ambient expansions in $\rho$-coordinate, which correspond to $(-2)^k\gamma^{(2k)}_{ij}/L^{2k}$ and $(-2)^ka^{(2k)}_{i}/L^{2k}$ in the $z$-expansion, respectively.
\begin{table}[!h]
\centering
\caption{Notation for Part I}
\begin{tabularx}{\textwidth}{c|c|c|X}
\toprule
Dimension & Manifold & Vectors & Indices \\
\midrule
$d$ & $M$ & $ \un U,\un V$ & $i,j,\cdots$ $\quad\{x^i\}$ $\quad i=1,\cdots,d$\\
\midrule
$d+1$ & (AL)AdS$_{d+1}$ &  & $\mu,\nu,\cdots$ $\quad\{x^\mu\}=\{z,x^i\}$  $\quad i=1,\cdots,d$ \\
\midrule
$d+1$ & ${\cal P}_W$ & $\un u,\un v$ &   \\
\midrule
$d+2$ & $\tilde M$ & $\un {\cal U},\un{\cal V}$ & $I,J,\cdots$\newline  In the frame $\{\bm e^I\}=\{\bm e^+,\bm e^i,\bm e^-\}$, $I=+,1,\cdots,d,-$. \newline In the coordinates $\{x^I\}=\{t,x^i,\rho\}$, $I=0,1,\cdots,d,\infty$.\\
\bottomrule
\end{tabularx}
\label{t1}
\end{table} 

\chapter{Preliminaries}
\label{chap:pre}
\section{Weyl Geometry}
\label{sec:Weyl}
In this section we provide a brief review of Weyl geometry (see also \cite{10.4310/jdg/1214429379,doi:10.1063/1.529582}). We will mainly introduce the geometric quantities equipped with Weyl connection as well as some useful relations we will use later in this thesis. We use $a,b,\cdots$ to label the internal frame indices and $i,j,\cdots$ to label the spacetime indices. For clarity, we also put $\circ$ on the top of Levi-Civita quantities, e.g.\ $\mathring R^a{}_{bcd}$, $\mathring P_{ab}$, etc.
\par
Given a generalized Riemannian manifold $(M,g)$ with a connection $\nabla$, in an arbitrary basis $\{\un e_a\}$, the connection coefficients $\Gamma^c{}_{ab}$ are defined as
\begin{align}
\label{conncoef}
\nabla_{\un e_a}\un e_b=\Gamma^c{}_{ab}\un e_c\,.
\end{align}
The torsion tensor and Riemann curvature tensor of $\nabla$ in this basis are given by
\begin{align}
\label{Tor}
T^c{}_{ab}\un e_c&\equiv \nabla_{\un e_a}\un e_b-\nabla_{\un e_b}\un e_a-[\un e_a,\un e_b]\,,\\
\label{Rie}
R^a{}_{bcd}\un e_a&\equiv \nabla_{\un e_c}\nabla_{\un e_d}\un e_b-\nabla_{\un e_d}\nabla_{\un e_c}\un e_b-\nabla_{[\un e_c,\un e_d]}\un e_b\,.
\end{align}
When $\nabla$ is associated with $g$ and is torsion-free, it is called a Levi-Civita (LC) connection, denoted by $\LCnabla$. Using $\LCGamma$ to denote the LC connection coefficients, we have $\mathring\nabla_{\un e_a}\un e_b=\LCGamma^c{}_{ab}\un e_c$. 
By definition, the conditions satisfied by the LC connection coefficients $\LCGamma^c{}_{ab}$ are
\begin{align}
\label{NM}
0&=(\LCnabla g)(\un e_a,\un e_b,\un e_c)=\LCnabla_{\un e_c}g(\un e_a,\un e_b)-\LCGamma^d{}_{ca}g(\un e_d,\un e_b)-\LCGamma^d{}_{cb}g(\un e_d,\un e_a)\,,\\
0&=T^a{}_{bc}=\LCGamma^c{}_{ab}-\LCGamma^c{}_{ba}-C_{ab}{}^c\,,
\end{align}
where $C_{ab}{}^c$ are the commutation coefficients defined by 
\begin{align}
\label{comcoeff}
[\un e_a,\un e_b]=C_{ab}{}^c\un e_c\,. 
\end{align}
Denote $g_{ab}\equiv g(\un e_a,\un e_b)$ as the component of the metric in the frame $\{\un e_a\}$. From these conditions $\LCGamma^c{}_{ab}$ can be derived as
\begin{align}
\label{LC}
\LCGamma^c{}_{ab}=&\,\frac{1}{2}g^{cd}\big(\un e_a(g_{db})+\un e_b(g_{ad})-\un e_d(g_
{ab})\big)-\frac{1}{2}g^{cd}(C_{ad}{}^e g_{eb}+C_{bd}{}^e g_{ae}-C_{ab}{}^e g_{ed})\,.
\end{align}
If we choose a local coordinate basis $\{\un\p_i\}$ with $\un e_a=e_a^i\un\p_i$, the dual frame $e^a=e^a_i \td x^i$ satisfies $e^a_ie_a^j=\delta^j_i$. Then noticing that \eqref{comcoeff} in this coordinate basis reads
\begin{align}
e^i_a\p_i e^j_b-e^i_b\p_i e^j_a=C_{ab}{}^c e^i_c\,, 
\end{align}
we can see that the LC connection coefficients in this coordinate basis go back to the familiar Christoffel symbol
\begin{align}
\label{LCGamma}
\LCGamma^k{}_{ij}\equiv\LCGamma^c{}_{ab}e^a_i e^b_j e_c^k=\frac{1}{2}g^{kl}(\p_ig_{jl}+\p_jg_{il}-\p_lg_{ij})\,.
\end{align}
\par
Now we will work in a coordinate basis $\{\un\p_i\}$.\footnote{Note that $\un e_a\equiv e_a^i\un\p_i$ and $\bm e^a\equiv e^a_i\td x^i$ have Weyl weights $+1$ and $-1$ respectively, while $\un\p_i$ and $\td x^i$ have no Weyl weights. This is because the Weyl transformation of the frame only comes from the soldering of the vector bundle associated with the frame bundle to the tangent space of $M$.} Consider a Weyl transformation
\begin{align}
\label{WeylA}
g_{ij}\to{\cal B}^{-2}g_{ij}\,.
\end{align}
The metricity tensor $\nabla g$ any connection $\nabla$ will transform non-covariantly under \eqref{WeylA}:
\begin{align}
\nabla_i g_{jk}\to B^{-2}(\nabla_i g_{jk}-2\nabla_i\ln{\cal B}g_{jk})\,.
\end{align}
To restore the Weyl covariance, one can introduce a Weyl connection $A=A_i\td x^i$ which transforms under a Weyl transformation as
\begin{align}
A_i\to A_i-\nabla_i\ln{\cal B}\,.
\end{align}
Then, we obtain an object that is Weyl-covariant:
\begin{align}
(\nabla_i g_{jk}-2A_i g_{jk})\to{\cal B}^{-2}(\nabla_i g_{jk}-2A_i g_{jk})\,.
\end{align}
More generally, for a tensor $T$ of an arbitrary type (with indices suppressed) that transforms under a Weyl transformation with a specific Weyl weight $\omega_T$, i.e.\ $T\to B^{\omega_T}T$, we can define
\begin{align}
\label{WeylD}
\hat\nabla_i T\equiv\nabla_i T+w_TA_i T\,.
\end{align}
In this way, $\hat\nabla$ acting on $T$ will also transform Weyl-covariantly as 
\begin{align}
\hat\nabla_i T\to B^{\omega_T}\hat\nabla_i T\,.
\end{align}
\par
Now we choose the connection $\nabla$ by setting the metricity as follows
\begin{align}
\nabla_i g_{jk}=2A_i g_{jk}\,.
\end{align}
Equivalently, we say that this connection has vanishing \emph{Weyl metricity}, since
\begin{align}
\hat\nabla_i g_{jk}=0\,.
\end{align}
We will also require $\nabla$ defined in the above equation to be torsion-free. Then, $\hat\nabla$ is called a \emph{Weyl-LC connection}. The connection coefficients of $\nabla$ in the coordinate basis become
\begin{align}
\label{WeylLC}
\Gamma^k{}_{ij}={}&\frac{1}{2}g^{kl}(\p_k g_{lj}+\p_j g_{il}-\p_l g_
{ij})-(A_i\delta^k{}_j+A_j\delta^k{}_i-g^{kl}A_l g_{ij})\,.
\end{align}
We can see that this is different from the Christoffel symbols \eqref{LCGamma} due to the extra terms involving the Weyl connection. When $\nabla$ and $\LCnabla$ act on a vector, their difference can be reflected by
\begin{align}
\label{div_v}
\nabla_i v^j=\LCnabla_i v^j-(A_i\delta^j{}_k+A_k\delta^j{}_i-g^{jl}A_l g_{ik})v^k\,.
\end{align}
It is worthwhile to notice that if $v^i$ has Weyl weight $d=\dim M$, then it follows from \eqref{WeylD} and \eqref{div_v} that $\hat\nabla_i v^i=\LCnabla_i v^i$.
\par
Now one can compute the Riemann tensor of $\nabla$ and its contractions. Denoting the coordinate components of the Riemann tensor of $\LCnabla$ as $\mathring R^i{}_{jkl}$, one finds from \eqref{Rie} that
\begin{align}
\label{WRiem}
R^i{}_{jkl}={}&\mathring{R}^i{}_{jkl}+\LCnabla_l A_j\delta^i{}_k-\LCnabla_k A_j\delta^i{}_l
+(\LCnabla_l A_k-\LCnabla_k A_l)\delta^i{}_j
+\LCnabla_k A^i g_{jl}-\LCnabla_l A^i g_{jk}\nn\\
&+A_j(A_l\delta^i{}_k-A_k\delta^i{}_l)
+A^i(g_{jl}A_k-g_{jk}A_l)
+A^2(g_{jk}\delta^i{}_l-g_{jl}\delta^i{}_k)\,,\\
R_{ij}={}&\mathring{R}_{ij}-\frac{d}{2}F_{ij}+(d-2)(\LCnabla_{(i}A_{j)}+A_i A_j)+(\LCnabla\cdot A-(d-2)A^2)g_{ij}\,,\\
\label{WRic}
R={}&\mathring{R}+2(d-1)\LCnabla\cdot A-(d-1)(d-2)A^2\,,
\end{align}
where $R_{ij}\equiv R^k{}_{ikj}$, $R\equiv R_{ij}g^{ij}$, and we defined the curvature of $A_i$ as $F_{ij}=\LCnabla_{i}A_{j}-\LCnabla_{j}A_{i}$. It is easy to see from \eqref{WRiem} that, unlike $\mathring R^i{}_{jkl}$, the $R^i{}_{jkl}$ of $\nabla$ now is not antisymmetric in the first two indices, and it does not have the interchange symmetry for the two index pairs. Also, the $R_{ij}$ of $\nabla$ is not symmetric due to the appearance of the $F_{ij}$ term.
\par
On the other hand, from \eqref{conncoef} we have the connection coefficients $\hat\Gamma^c{}_{ab}$ for $\hat\nabla$:
\begin{align}
\label{WLCsplit}
\hat\Gamma^c{}_{ab}\un e_c=\hat\nabla_{\un e_a}\un e_b=\nabla_{\un e_a}\un e_b+A(\un e_a)\un e_b=\Gamma^c{}_{ab}\un e_c+A(\un e_a)\un e_b\,,
\end{align}
where we used the fact that the basis vector $\un e_a$ has Weyl weight $+1$. Plugging this into \eqref{Rie}, we find that the Riemann tensor of $\hat\nabla$ and its contractions satisfy
\begin{align}
\label{hatR}
\hat R^i{}_{jkl}=&R^i{}_{jkl}+\delta^i{}_j F_{kl}\,,\qquad\hat R_{ij}=R_{ij}+F_{ij}\,,\qquad\hat R=R\,.
\end{align}
We refer to $\hat R^i{}_{jkl}$, $\hat R_{ij}$ and $\hat R$ as the Weyl-Riemann tensor, Weyl-Ricci tensor, and Weyl-Ricci scalar, respectively.\footnote{Note that in some literature, e.g.~\cite{Ciambelli:2019bzz}, the quantities defined using $\nabla$ instead of $\hat\nabla$ are called Weyl quantities.} Similar to the curvature tensors for $\nabla$, the Weyl-Riemann tensor is not antisymmetric in the first two indices and does not have the interchange symmetry for the two index pairs, and the Weyl-Ricci tensor is not symmetric. Also notice that the Weyl-Weyl tensor, namely the traceless part of the Weyl-Riemann tensor, is equal to the LC Weyl tensor, i.e.
\begin{align}
\hat W^i{}_{jkl}=\mathring W^i{}_{jkl}\,.
\end{align}
\par
Unlike the LC curvature quantities, which transform in a non-covariant way under the Weyl transformation, the Weyl-Riemann tensor, Weyl-Ricci tensor, and Weyl-Ricci scalar transform under the Weyl transformation as
\begin{align}
\label{hatRtrans}
\hat R^i{}_{jkl}\to\hat R^i{}_{jkl}\,,\qquad\hat R_{ij}\to\hat R_{ij}\,,\qquad\hat R\to{\cal B}^2\hat R\,.
\end{align}
Furthermore, we can define the Weyl-Schouten tensor $\hat P_{ij}$ and Weyl-Cotton tensor $\hat C_{ijk}$ as
\begin{align}
\label{WP1}
\hat P_{ij}&=\frac{1}{d-2}\bigg(\hat R_{ij}-\frac{1}{2(d-1)}\hat Rg_{ij}\bigg)\,,\\
\label{WC1}
\hat C_{ijk}&=\hat\nabla_{k}\hat P_{ij}-\hat\nabla_{j}\hat P_{ik}\,.
\end{align}
Although the LC Schouten tensor $\mathring P_{ij}$ defined by substituting $\hat R_{ij}$ and $\hat R$ in \eqref{WP1} with $R_{ij}$ and $R$ is a symmetric tensor, $\hat P_{ij}$ has an antisymmetric part $\hat P_{[ij]}=-F_{ij}/2$. In terms of the LC connection, the Bach tensor is defined by (the indices of the components are raised and lowered by $g$)
\begin{align}
\mathring B_{ij}=\LCnabla^k\LCnabla_k \mathring P_{ij}-\LCnabla^k\LCnabla_{j}\mathring P_{ik}-\mathring W_{ljik}\mathring P^{kl}\,,
\end{align}
which satisfies $\mathring B_{ij}\to{\cal B}^2\mathring B_{ij}$ in $4d$. Now we can define the Weyl-Bach tensor
\begin{align}
\label{WB1}
\hat B_{ij}=\hat\nabla^k\hat\nabla_k \hat P_{ij}-\hat\nabla^k\hat\nabla_{j}\hat P_{ik}-\hat W_{ljik}\hat P^{kl}\,.
\end{align}
Similar to the LC Bach tensor, the Weyl-Bach tensor is also symmetric and traceless; however, it is Weyl-covariant in any dimension. Following \eqref{WRiem}--\eqref{WRic}, here we list the above-mentioned Weyl quantities in terms of their corresponding LC quantities:
\begin{align}
\label{hatP}
\hat P_{ij}&=\mathring P_{ij}+\LCnabla_j A_{i}+A_{i}A_{j}-\frac{1}{2}A^2g_{ij}\,,\\
\hat C_{ijk}&=\mathring C_{ijk}-A_l\mathring W^l{}_{ikj}\,,\\
\hat B_{ij}&=\mathring{B}_{ij}+(d-4)(A^k\mathring{C}_{kji}-2A^k\mathring{C}_{ijk}+A^k A^l \mathring W_{likj})\,.
\end{align}
\par
The Bianchi identity for $\hat\nabla$ reads
\begin{align}
\hat\nabla_i \hat R^m{}_{jkl}+\hat\nabla_k\hat R^m{}_{jli}+\hat\nabla_l\hat R^m{}_{jik}=0\,.
\end{align}
Noticing that $\hat\nabla_i g_{jk}=0$, the contraction of the above equation gives
\begin{align}
\label{BI}
\hat\nabla^i\hat G_{ij}=0\,,
\end{align}
where we defined the Weyl-Einstein tensor $\hat G_{ij}\equiv \hat R_{ij}-\frac{1}{2}\hat Rg_{ij}$. Using \eqref{WP1}, this identity can also be expressed using the Weyl-Schouten tensor as
\begin{align}
\label{BIP}
\hat\nabla^i\hat P_{ij}=\hat\nabla_j\hat P\,.
\end{align}
where $\hat P\equiv\hat P_{ij}g^{ij}$. Starting from \eqref{WB1} and using \eqref{BIP} repeatedly, one obtains
\begin{align}
\label{nablaB}
\hat\nabla^i\hat B_{ij}=(d-4)\hat P^{ik}(\hat C_{kij}+\hat C_{jik})\,.
\end{align} 
Note that since  $\mathring P_{ij}$ is symmetric, while the Cotten tensor is antisymmetric in the last two indices. Thus, the above equation in the LC case becomes
\begin{align}
\LCnabla^i\mathring B_{ij}=(d-4)\mathring P^{ik}\mathring C_{kij}\,.
\end{align} 
It is also useful to notice that in the LC case, the divergence of the Cotton tensor vanishes
\begin{align}
\label{divC}
\LCnabla^i\mathring C_{ijk}=0\,,
\end{align} 
while for the Weyl-Cotton tensor we have instead
\begin{align}
\hat\nabla^i\hat C_{ijk}=\hat W_{lkmj}F^{lm}\,.
\end{align} 
In the end of this section, we list the Weyl weights of the above-mentioned Weyl quantities in Table~\ref{table1}.
\begin{table}[!htbp]
\centering
\caption{Weyl weights of Weyl-covariant quantities}
\begin{tabular}{ccccccccccc}
\toprule
$\un e_a$ & $\bm e^a$ & $g_{ij}$ & $g^{ij}$ &$\hat R^i{}_{jkl}$ & $\hat R_{ij}$ & $\hat R$ & $F_{ij}$& $\hat P_{ij}$ &$\hat C_{ijk}$ &$\hat B_{ij}$\\
\midrule
$+1$ & $-1$ & $-2$ & $+2$ &$0$ & $0$&$+2$ & $0$& $0$ &$0$ &$+2$\\
\bottomrule
\end{tabular}
\label{table1}
\end{table}

\section{Fefferman-Graham Expansion and Obstruction Tensors}
\label{sec:FG}
The obstruction tensor is known as the only irreducible conformal covariant tensor besides the Weyl tensor in an even-dimensional spacetime. The general references for obstruction tensors are \cite{graham2005ambient,Fefferman:2007rka}, where they were defined precisely in terms of the ambient metric. Instead of providing the formal definition immediately, in this section we will demonstrate the obstruction tensors as poles of the Fefferman-Graham expansion. The same method will also be used in Section \ref{sec:WOTbulk} for Weyl-obstruction tensors. In Section \ref{sec:WOTambient} we will introduce the precise definition of Weyl-obstruction tensors using the ambient formalism.
\par
According to the Fefferman-Graham theorem \cite{AST_1985__S131__95_0}, the metric of a $(d+1)$-dimensional asymptotically locally AdS (ALAdS) spacetime can always be expressed in the following form
\begin{equation}
\label{FG}
\td s^2=L^2\frac{\td z^2}{z^2}+h_{ij}(z;x)\td x^i \td x^j\,,\qquad i,j=0,\cdots,d-1\,,
\end{equation}
where the coordinate $z$ can be considered as a ``radial" coordinate, and $z=0$ is the ``location" of the conformal boundary. When $h_{ij}=L^2 \eta_{ij}/z^2 $ with $\eta_{ij}$ the flat metric, this represents the Poincar\'e metric for AdS$_{d+1}$. Near the conformal boundary, $h_{ij}$ can be expanded with respect to $z$ as follows \cite{Ciambelli:2019bzz}:
\begin{equation}
\label{hex0}
h_{ij}(z;x)=\frac{L^2}{z^2}\left[\gamma^{(0)}_{ij}(x)+\frac{z^2}{L^2}\gamma^{(2)}_{ij}(x)+\cdots\right]+\frac{z^{d-2}}{L^{d-2}}\left[\pi^{(0)}_{ij}(x)+\frac{z^2}{L^2}\pi^{(2)}_{ij}(x)+\cdots\right]\,.
\end{equation}
As we mentioned in Chapter \ref{chap:Intro1}, the conformal boundary carries a conformal class of metrics. In the FG expansion $\gamma^{(0)}_{ij}$ serves as the ``canonical" representative of the conformal class sourcing the energy-momentum tensor of the dual field theory on the boundary, while $\pi^{(0)}_{ij}$ corresponds to the expectation value of the energy-momentum tensor \cite{Leigh}. Once $\gamma^{(0)}_{ij}$ is given, each term in the first series can be determined by solving the vacuum Einstein equations with negative cosmological constant in the bulk. Similarly, once $\pi^{(0)}_{ij}$ is given, the second series will be determined. However, $\pi^{(0)}_{ij}$ is not completely arbitrary but is actually constrained by the Einstein equations. To be more specific, the $zz$-component of the Einstein equations tells us that $\pi^{(0)}_{ij}$ is traceless while the $zi$-components indicate that it is also divergence-free.
\par
Nevertheless, subtleties will arise when the boundary dimension $d$ is an even integer, since the two series in \eqref{hex0} mix into one. To resolve this issue for an even $d=2k$, we treat $d$ formally as a variable $d\in \mathbb{C}$ in the expansion \eqref{hex0} and let $d$ approach $2k$ from below. As we will see explicitly, when the Einstein equations are satisfied, $\gamma^{(2k)}_{ij}$ has a first order pole at $d=2k$. For any integer $k\geqslant2$, up to some factor, the coefficient of the pole term (which is actually a meromorphic function of the boundary dimension) is what we define as the \emph{obstruction tensor}, denoted by $\mathcal{O}^{(2k)}_{ij}$:
\begin{equation}\label{gamma2k}
\gamma^{(2k)}_{ij}= \frac{c_{(2k)}}{d-2k}\mathcal{O}^{(2k)}_{ij}+ \tilde{\gamma}_{ij}^{(2k)}\,,\qquad c_{(2k)}=-\frac{L^{2k}}{2^{2k-3}k!}\frac{\Gamma(d/2-k+1)}{ \Gamma(d/2-1)}\,,
\end{equation}
where the normalization factor $c_{(2k)}$ has been chosen so that the obstruction tensor agrees with the convention of \cite{Fefferman:2007rka}, and the tensor $\tilde{\gamma}^{(2k)}_{ij}$ is analytic at $d=2k$. 
\par
 Besides holographic dimensional regularization \cite{Leigh}, another common approach is to introduce a logarithmic term for $d=2k$ \cite{Henningson:1998gx}, which turns out to be proportional to the obstruction tensor. This is also the origin of the name obstruction tensor, as it obstructs the existence of a formal power series expansion. Note that the tensor $\mathcal{O}^{(2k)}_{ij}$ is well-defined in any dimension, but only behaves as an ``obstruction" when $d=2k$. The relation between the two approaches will be cleared up at the end of this section once we show how to correctly take the limit for an even $d$ in holographic dimensional regularization. 
\par
Now we present the obstruction tensors in $d=2,4,6$ explicitly. First, by solving the bulk Einstein equations to the $O(z^2)$-order one finds that
\begin{align}
\label{gamma2}
\frac{\gamma^{(2)}_{ij}}{L^2}=-\frac{1}{d-2}\left(R^{(0)}_{ij}-\frac{R^{(0)}}{2(d-1)}\gamma_{ij}^{(0)}\right)\,,
\end{align}
where $R^{(0)}_{ij}$ and $R^{(0)}$ represent the Ricci tensor and Ricci scalar of $\gamma_{ij}^{(0)}$ on the boundary, respectively. One can recognize $\gamma^{(2)}_{ij}/L^2$ as the Schouten tensor $P_{ij}$ on the boundary (with a minus sign):
\begin{align}
\label{P}
P_{ij}&=\frac{1}{d-2}\left(R^{(0)}_{ij}-\frac{R^{(0)}}{2(d-1)}\gamma_{ij}^{(0)}\right)\,.
\end{align}
Indeed we notice that there is a first order pole when $d=2$ as expected. However, it is easy to see that  the residue of the pole vanishes identically for $d=2$. This is the reason $P_{ij}$ is usually not referred to as the obstruction tensor for $d=2$.
\par
At the $O(z^4)$-order, the Einstein equations give us
\begin{align}
\label{gamma4}
\frac{\gamma^{(4)}_{ij}}{L^4}=-\frac{1}{4(d-4)}B_{ij}+\frac{1}{4}P_{ki}P^{k}{}_{j}\,.
\end{align}
Note that on the boundary, the tensor indices are lowered and raised using $\gamma^{(0)}_{ij}$ and its inverse $\gamma_{(0)}^{ij}$. The tensor $B_{ij}$ is the Bach tensor, which is defined as
\begin{align}
\label{B}
B_{ij}&=\nabla_{(0)}^l\nabla^{(0)}_l P_{ij}-\nabla_{(0)}^l\nabla^{(0)}_{j} P_{il}-W^{(0)}_{kjil}P^{lk}\,,
\end{align}
where $\nabla^{(0)}_i$ is the derivative operator on the boundary associated with $\gamma^{(0)}_{ij}$, and $W^{(0)}_{kijl}$ is the Weyl tensor of $\gamma_{ij}^{(0)}$. We notice that the first term has a pole at $d=4$ and it follows from \eqref{gamma2k} that the obstruction tensor for $d=4$ is just the Bach tensor, i.e.\ $
{\cal O}^{(4)}_{ij}= B_{ij}$.\par
Similarly, if we move on to the $O(z^6)$-order of the Einstein equations, we find that $\gamma^{(6)}_{ij}$ has a pole at $d=6$ and can be written as
\begin{align}
\label{gamma6}
\frac{\gamma^{(6)}_{ij}}{L^6}&=-\frac{1}{24(d-6)(d-4)}{\cal O}^{(6)}_{ij}+\frac{1}{6(d-4)}B_{ki}P^k{}_j\,.
\end{align}
From \eqref{gamma2k} one can see that ${\cal O}^{(6)}_{ij}$ is the obstruction tensor for $d=6$, now given by
\begin{align}
\label{O6}
{\cal O}^{(6)}_{ij}={}&\nabla_{(0)}^l\nabla^{(0)}_l B_{ij}-2W^{(0)}_{kjil}B^{lk}-4B_{ij}P+2(d-4)\big(2P^{kl}\nabla^{(0)}_l C_{(ij)k}+\nabla^{(0)}_l PC_{(ij)}{}^l\nn\\
&\qquad\qquad-C^{k}{}_{i}{}^{l}C_{ljk}+ \nabla_{(0)}^l P^k{}_{(i}C_{j)kl}-W^{(0)}_{kijl}P^{l}{}_m P^{mk}\big)\,,
\end{align}
where $P\equiv P_{ij}\gamma_{(0)}^{ij}$, and $C_{ijk}$ is the Cotton tensor on the boundary defined as
\begin{align}
C_{ijk}=\nabla^{(0)}_k P_{ij}-\nabla^{(0)}_j P_{ik}\,.
\end{align}
\par
Let us make a few remarks on some important properties of the obstruction tensors. First, they are symmetric traceless tensors for any boundary dimension $d$. The traceless condition can be derived from the $zz$-component of the Einstein equations at the $O(z^{2k})$-order. Also, the obstruction tensor ${\cal O}^{(2k)}_{ij}$ is divergence-free when $d=2k$. For instance, divergence of the Bach tensor gives
\begin{align}
\label{divB}
\nabla_{(0)}^j B_{ji}=(d-4)P^{jk}C_{kji}\,.
\end{align}
The divergence of the Bach tensor can be read  from the $O(z^4)$-order of the $zi$-component of  Einstein equations. In general, at any $O(z^{2k})$-order one finds that the divergence of ${\cal O}^{(2k)}_{ij}$ is proportional to $d-2k$ and thus vanishes when $d=2k$. The divergence of ${\cal O}^{(2k)}_{ij}$ can also be obtained by using the following identity
\begin{align}
\label{divP}
\nabla_{(0)}^j P_{ji}=\nabla^{(0)}_i P\,.
\end{align}
This is equivalent to the contracted Bianchi identity at the boundary [similar to \eqref{BIP} for the Weyl-Schouten tensor], which can also be read from the leading order of the $zi$-component of  Einstein equations. Finally, a notable feature of ${\cal O}^{(2k)}_{ij}$ is that it is Weyl-covariant when $d=2k$ with Weyl weight $2k-2$ (which will be proved from the ambient space in Subsection \ref{sec:firstorder}).
\par
For convenience, we can also absorb the $d$-dependent factors in $\gamma^{(2k)}_{ij}$ by introducing Graham's extended obstruction tensor $\Omega^{(k-1)}_{ij}$ ($k\geqslant 2$):
\begin{align}
\label{extO}
\Omega^{(1)}_{ij}=-\frac{1}{d-4}B_{ij}\,,\qquad\Omega^{(2)}_{ij}=\frac{1}{(d-6)(d-4)}\mathcal O^{(6)}_{ij}\,,\qquad\cdots
\end{align}
The extended obstruction tensor $\Omega^{(k)}_{ij}$ was precisely defined in \cite{graham2009extended} in the context of the ambient metric. The general relation between the obstruction tensor and extended obstruction tensor is
\begin{align}
\Omega^{(k)}_{ij}=\frac{(-1)^{k}}{2^k}\frac{\Gamma(d/2-k-1)}{\Gamma(d/2-1)}{\cal O}^{(2k+2)}_{ij}\qquad (k\geqslant1)\,.
\end{align}
\par
We finish this section by describing how to get the $d\to 2k^{-}$ limit of the two series in \eqref{hex0} properly. By taking the limit carefully we will recover a logarithmic term in the expansion whose coefficient is exactly the obstruction tensor for $d=2k$, which also justifies the name ``obstruction" as we mentioned before. There are two issues one has to deal with while taking the  $d\to 2k^{-}$ limit. First, as we already noted, $\gamma^{(2k)}_{ij}$ has a pole at $d-2k$, so it diverges in this limit.  Second, the two series mix since both $\gamma^{(2k)}_{ij}$ and $\pi^{(0)}_{ij}$ appear at the same order $O(z^{2(k-1)})$ in \eqref{hex0}, for $d=2k$. To keep the $O(z^{2k})$-order finite we pose that  $\pi_{ij}^{(0)}$ should also have a pole for $d=2k$ proportional to ${\cal O}^{(2k)}_{ij}$ so that the divergence in $\gamma^{(2k)}_{ij}$ gets canceled, i.e.\ we claim that $\pi^{(0)}_{ij}$ has the following form:
\begin{equation}\label{pi0}
\pi^{(0)}_{ij}= - \frac{c_{(2k)}}{d-2k}{\cal O}^{(2k)}_{ij} + \tilde{\pi}^{(0)}_{ij}\,,
\end{equation}
where $\tilde{\pi}^{(0)}_{ij}$ is finite at $d=2k$. Substituting back \eqref{pi0} and \eqref{gamma2k} to \eqref{hex0} we get
\begin{equation}
h_{ij}(z;x)=\sum_{n=0}^{k-1}\gamma_{ij}^{(2n)}\left(\frac{z}{L}\right)^{2n-2} + \big(\tilde\gamma_{ij}^{(2k)}+ \tilde{\pi}^{(0)}_{ij}\big)\left(\frac{z}{L}\right)^{2k-2}-c_{(2k)}\left(\frac{z}{L}\right)^{2k-2}\text{ln}\left(\frac{z}{L}\right){\cal O}^{(2k)}_{ij} +o\big((z/L)^{d}\big)\,.
\end{equation}
This makes contact with the expansion with a logarithmic term (for an even $d$) presented in the literature, e.g.\ \cite{Henningson:1998gx,deHaro:2000vlm,Skenderis2002}.

\section{Weyl-Fefferman-Graham Formalism}
\label{sec:WFG}
In this section we provide a brief review of the Weyl-Fefferman-Graham (WFG) formalism established in \cite{Ciambelli:2019bzz}. We will see that in the WFG gauge, the conformal boundary of an ALAdS spacetime is endowed with Weyl geometry, and the geometric quantities are naturally upgraded to the ``Weyl quantities" that we introduced in Section \ref{sec:Weyl}.
\par
The Fefferman-Graham ansatz \eqref{FG} is quite convenient for calculations, especially in the context of holographic renormalization. In this setup, one can induce a Weyl transformation of the boundary metric by a bulk diffeomorphism, namely the PBH transformation \cite{Imbimbo_2000},
\begin{equation}
z\to z'= z/{\cal{B}}(x)\,,\qquad  x^{i}\to x'^{i}= x^{i}+ \xi^{i}(z;x)\, , 
\end{equation}
where $\xi^{i}(z;x)$ vanish at the boundary $z=0$. The functions $\xi^{i}(z;x)$ can be found (infinitesimally) in terms of ${\cal{B}}(x)$ by the constraint that the form of the FG ansatz is preserved under the transformation. However, under the PBH transformation, the subleading terms in the FG expansion \eqref{hex0} do not transform in a Weyl-covariant way. The source of this complication is the compensating diffeomorphisms $\xi^{i}(z;x)$ introduced for preserving the FG ansatz. 
\par
This above-mentioned issue motivated the authors of \cite{Ciambelli:2019bzz} to replace the FG ansatz with
\begin{align}
\label{WFG}
\td s^2=L^{2}\left(\frac{\td z}{z}-a_i(z;x) \td x^i\right)^2
+h_{ij}(z;x)\td x^i \td x^j\,,
\end{align}
which was named the Weyl-Fefferman-Graham ansatz. With the additional Weyl structure $a_{i}$ added, the form of the WFG ansatz is now preserved under the Weyl diffeomorphism
\begin{align}
\label{weyl}
z\to z'=z/{\cal B}(x)\,,\qquad x^i\to x'^i=x^i\,.
\end{align}
It is not hard to see that the Weyl diffeomorphism \eqref{weyl} induces the following transformation of the fields $a_{i}$ and $h_{ij}$:
\begin{align}
\label{weylha}
a_i(z;x)\to a'_{i}(z';x)= a_i({\cal B}(x)z';x)-\p_i\ln{\cal B}(x)\,,\quad h_{ij}\to h'_{ij}(z';x)= h_{ij}({\cal B}(x)z';x)\,.
\end{align}
Thus, we can now induce a Weyl transformation on the boundary and preserve the form of the metric without introducing the irritating $\xi^{i}(z;x)$. Note that according to the FG theorem, any ALAdS spacetime can always be expressed in the FG form, and so \eqref{WFG} can be transformed into \eqref{FG} under a suitable diffeomorphism. This indicates that $a_i$ is actually pure gauge in the bulk. Another way of going back to the FG gauge is to simply set  $a_i$ to zero; in this perspective, the FG gauge is nothing but a special case of the WFG gauge with a fixed gauge.
\par
The main utility of the WFG gauge is that all the terms (except one) in the $z$-expansions of $h_{ij}(z;x)$ and $a_{i}(z;x)$ transform as Weyl tensors under Weyl diffeomorphisms. To see this, let us expand $h_{ij}$ and $a_i$ near $z=0$:
\begin{align}
\label{hex}
h_{ij}(z;x)&=\frac{L^2}{z^2}\left[\gamma^{(0)}_{ij}(x)+\frac{z^2}{L^2}\gamma^{(2)}_{ij}(x)+\cdots\right]+\frac{z^{d-2}}{L^{d-2}}\left[\pi^{(0)}_{ij}(x)+\frac{z^2}{L^2}\pi^{(2)}_{ij}(x)+\cdots\right]\,,\\
\label{aex}
a_{i}(z;x)&=\left[a^{(0)}_{i}(x)+\frac{z^2}{L^2}a^{(2)}_{i}(x)+\cdots\right]
+\frac{z^{d-2}}{L^{d-2}}\left[p^{(0)}_{i}(x)+\frac{z^2}{L^2}p^{(2)}_{i}(x)+\cdots\right]\,.
\end{align}
In the FG gauge where $a_i$ is turned off, the FG expansion only includes \eqref{hex}, and the subleading terms $\gamma^{(2k)}_{ij}$ in the first series are determined solely by the boundary induced metric $\gamma^{(0)}_{ij}$ and its derivatives. Now with the extra series \eqref{aex}, $\gamma^{(2k)}_{ij}$ will also depend on $a^{(0)}_i$, $a^{(2)}_i$, $a^{(4)}_i$, etc. 
Moving on, from the transformations \eqref{weylha} under a Weyl diffeomorphism, one finds the transformation of each term in the expansions \eqref{hex} and \eqref{aex} as follows \cite{Ciambelli:2019bzz}:
\begin{align}
\label{GP1}
\gamma^{(2k)}_{ij}(x)&\to\gamma^{(2k)}_{ij}(x){\cal B}(x)^{2k-2}\,,\qquad
\pi^{(k)}_{ij}(x)\to \pi^{(2k)}_{ij}(x){\cal B}(x)^{d-2+2k}\,,\\
\label{AP}
a^{(2k)}_{i}(x)&\to a^{(2k)}_{i}(x){\cal B}(x)^{2k}-\delta_{k,0}\p_i\ln{\cal B}(x)\,,\qquad
p^{(2k)}_{i}(x)\to p^{(2k)}_{i}(x){\cal B}(x)^{d-2+2k}\,.
\end{align}
Indeed, we see that almost all the terms in the expansions transform Weyl-covariantly. The only exception is $a^{(0)}_{i}$, which transforms inhomogeneously under Weyl transformation, and thus does not have a definite Weyl weight. All the other terms in the expansions \eqref{hex} and \eqref{aex} can be viewed as tensor fields on the boundary and we can easily read off their Weyl weights from the power of ${\cal B}(x)$  appearing in \eqref{GP1} and \eqref{AP}. 
\par
Having the expansion of $h_{ij}$, it is also useful to expand its inverse:
\begin{align}
\label{hinv}
h^{ij}(z;x)&=\frac{z^2}{L^2}\left[\gamma_{(0)}^{ij}(x)+\frac{z^2}{L^2}\gamma_{(2)}^{ij}(x)+...\right]+\frac{z^{d+2}}{L^{d+2}}\left[\pi_{(0)}^{ij}(x)+\frac{z^2}{L^2}\pi_{(2)}^{ij}(x)+...\right]\\
&=\frac{z^2}{L^2}\left[\gamma_{(0)}^{ij}(x)-\frac{z^2}{L^2}\tilde{m}_{(2)k}^{i}\gamma_{(0)}^{kj}(x)-\frac{z^4}{L^4}\tilde{m}_{(4)k}^{i}\gamma_{(0)}^{kj}(x)+\cdots\right]\nn+\frac{z^{d+2}}{L^{d+2}}\left[\tilde{n}_{(2)k}^{i}\gamma_{(0)}^{kj}(x)+\cdots\right]\,,
\end{align}
where $\tilde{m}_{(2k)j}^{i}\equiv-\gamma_{(2k)}^{ik}\gamma^{(0)}_{kj}$, $\tilde{n}_{(2k)j}^{i}\equiv-\pi_{(2k)}^{ik}\gamma^{(0)}_{kj}$. Denoting $m_{(k)j}^{i}\equiv\gamma_{(0)}^{ik}\gamma^{(k)}_{kj}$ and $n_{(k)j}^{i}\equiv\gamma_{(0)}^{ik}\pi^{(k)}_{kj}$, we can solve the above expansion order by order and get
\begin{align}
\gamma_{(0)}^{ij}&=(\gamma^{(0)}_{ij})^{-1}\,,\qquad \tilde m_{(2)j}^{i}=m_{(2)j}^{i}\,,\qquad\tilde m_{(4)j}^{i}=m_{(4)j}^{i}-m_{(2)k}^{i}m_{(2)j}^{k}\,,\qquad\cdots\\
\tilde n_{(0)j}^{i}&=n_{(0)j}^{i}\,,\qquad\tilde n_{(2)j}^{i}=n_{(2)j}^{i}-m_{(2)k}^{i}n_{(0)j}^{k}-n_{(0)k}^{i}m_{(2)j}^{k}\,,\qquad\cdots\nn
\end{align}
\par
For a metric in the form of \eqref{WFG} defined on the bulk manifold $M$, one can choose a dual form basis and its corresponding vector basis as follows:
\begin{align}
\label{basis}
\bm e^z&=L\frac{\td z}{z}-La_i(z;x)\td x^i\,,\qquad \bm e^i=\td x^i\,,\\
\un e_z&=\frac{z}{L}\un\p_z\equiv \un D_z\,,\qquad \un e_i=\un\p_i+za_i(z;x)\un\p_z\equiv \un D_i\,.
\end{align}
Then the tangent space at any point $(z,x^{i})\in M$ can be spanned by the basis $\{\un D_z,\un D_i\}$, and the basis vectors $\{\un D_i\}$ form a $d$-dimensional distribution on $M$ which belongs to the kernel of $\bm e^z$. The Lie brackets of these basis vectors are
\begin{align}
\label{DmDn}
[\un D_i,\un D_j]=Lf_{ij}\un D_z\,,\qquad[\un D_z,\un D_i]=L\varphi_i\un D_z\,,
\end{align}
where $\varphi_i\equiv D_za_i$ and $f_{ij}\equiv D_i a_j-D_j a_i$ ($D_z$ and $D_i$ represent taking the derivatives along $\un e_z$ and $\un e_i$). According to the Frobenius theorem, the condition for the distribution spanned by $\{\un D_i\}$ to be integrable is that $[\un D_i,\un D_j]=0$, i.e.\ $f_{ij}=0$. In this case, this distribution defines a hypersurface. For instance, in the FG gauge where  $a_{i}$ is turned off, the distribution $\{\un D_i\}$ becomes $\{\un \p_i\}$, which generates a foliation of constant-$z$ surfaces. However, $\{\un D_i\}$ in the WFG gauge is not necessarily an integrable distribution, and thus one needs to keep in mind that the boundary hypersurface $z=0$ is in general not part of a foliation.
\par 
Suppose $\nabla$ is the Levi-Civita (LC) connection on $M$. One can find the connection coefficients of $\nabla$ in the frame $\{\un D_z,\un D_i\}$ from its definition \eqref{conncoef}:
\begin{align}
\nabla_{\un D_i}\un D_j=\Gamma^k{}_{ij}\un D_k+\Gamma^z{}_{ij}\un D_z\,.
\end{align}
The coefficients $\Gamma^k{}_{ij}$ in the above equation define the induced connection coefficients on the distribution over $M$ spanned by $\{\un D_i\}$. Using the LC condition (torsion-free and metricity-free) of the bulk $\nabla$ we obtain that
\begin{align}
\label{bulkLC}
\Gamma^k{}_{ij}=\frac{1}{2}h^{kl}( D_i h_{lj} +D_j h_{il} - D_l h_{ji})\,,
\end{align}
where we have read from \eqref{DmDn} that the commutation coefficients vanish. Expanding $\Gamma^k{}_{ij}$ with respect to $z$, at the leading order one finds that
\begin{align}
 \label{IndWeyl}
\Gamma^{k}_{(0)}{}_{ij}
 &=\frac{1}{2}\gamma_{(0)}^{kl}\big(
 \p_i \gamma^{(0)}_{jl}
 +\p_j \gamma^{(0)}_{il}-\p_l \gamma^{(0)}_{ij}\big)-\big(a^{(0)}_i\delta^k{}_j+a^{(0)}_j\delta^k{}_i+a^{(0)}_l\gamma_{(0)}^{kl}\gamma^{(0)}_{ij}\big)\,.
\end{align} 
We can see that \eqref{IndWeyl} gives exactly the connection coefficients of a torsion-free connection with Weyl metricity shown in \eqref{WeylLC} (where $A_i$ and $g_{ij}$ correspond to $a^{(0)}_i$ and $\gamma^{(0)}_{ij}$). That is, on the boundary with $z\to0$ we have a connection $\nabla^{(0)}$ satisfying
\begin{align}
\label{nonmetry}
\nabla^{(0)}_i\gamma^{(0)}_{jk}=2a^{(0)}_i\gamma^{(0)}_{jk}\,.
\end{align}
This indicates that although $a_i$ is pure gauge in the bulk, its leading order $a_{i}^{(0)}$ serves as a Weyl connection at the conformal boundary. Together with the induced metric $\gamma^{(0)}_{ij}$, they provide a Weyl geometry at the boundary \cite{10.4310/jdg/1214429379}. Under a boundary Weyl transformation
\begin{align}
\label{WT}
\gamma_{ij}^{(0)} \to {\cal B}(x)^{-2} \gamma^{(0)}_{ij}\,,\qquad a_{i}^{(0)}\to a_{i}^{(0)}- \partial_{i}\text{ln}{\cal}B(x)\,,
\end{align}
for any tensor $T$ (with indices suppressed) with Weyl weight $w_T$ on the boundary, we have
\begin{align}
T\to B^{w_T}T\,,\qquad(\nabla^{(0)}_i T+w_Ta^{(0)}_i T)\to B^{w_T}(\nabla^{(0)}_i T+w_Ta^{(0)}_i T)\,.
\end{align} 
One can also absorb the Weyl connection and define $\hat\nabla^{(0)}$ such that
\begin{align}
\hat\nabla^{(0)}_i T\equiv\nabla^{(0)}_i T+w_Ta^{(0)}_i T\,,
\end{align} 
which renders $\hat\nabla^{(0)}_i T$ Weyl-covariant. Particularly, Eq.\ \eqref{nonmetry} indicates that $\hat\nabla^{(0)}$ is a Weyl-LC connection, which makes it convenient for boundary calculations.
\par
Now that we have the Weyl geometry on the boundary, the geometric quantities there are promoted to the ``Weyl quantities" as we demonstrated in Section \ref{sec:Weyl}. More precisely, for any geometric quantity constructed by the boundary metric $\gamma^{(0)}_{ij}$ and the LC connection in the FG case, we now have a Weyl-covariant counterpart of it constructed by $\gamma^{(0)}_{ij}$, $a^{(0)}_i$ and $\hat\nabla^{(0)}$ in the WFG case. For instance, we have the Weyl-Riemann tensor $\hat R^{i}{}^{(0)}_{jl\sigma}$, Weyl-Ricci tensor $\hat R^{(0)}_{ij}$ and Weyl-Ricci scalar $\hat R^{(0)}$. In addition, $f_{ij}$ induces on the boundary a tensor $f^{(0)}_{ij}=\p_i a^{(0)}_j-\p_j a^{(0)}_{i}$, namely the curvature of the Weyl connection $a^{(0)}$, which is obviously Weyl-invariant. We can also define the Weyl-Schouten tensor $\hat P_{ij}$ and Weyl-Cotton tensor $\hat C_{ijl}$ on the boundary as follows:
\begin{align}
\label{WP}
\hat P_{ij}&=\frac{1}{d-2}\bigg(\hat R^{(0)}_{ij}-\frac{1}{2(d-1)}\hat R^{(0)}\gamma^{(0)}_{ij}\bigg)\,,\\
\label{WC}
\hat C_{ijl}&=\hat\nabla^{(0)}_{l}\hat P_{ij}-\hat\nabla^{(0)}_{j}\hat P_{il}\,.
\end{align}
In Chapter \ref{chap:WOT}, we will also see the Weyl-covariant counterparts of the obstruction tensors.
\par
We emphasis again that the symmetry of the indices of a Weyl quantity is not necessarily the same as the corresponding quantity defined with the LC connection. For instance, the Weyl-Ricci tensor is not symmetric, with its antisymmetric part $\hat R^{(0)}_{[ij]}=-(d-2)f^{(0)}_{ij}/2$, and hence the Weyl-Schouten tensor $\hat P_{ij}$ also contains an antisymmetric part $\hat P_{[ij]}=-f^{(0)}_{ij}/2$. 

\chapter{Weyl-Ambient Geometries}
\label{chap:Weylambiant}
\section{Ambient Metrics}
In this section we will start by reviewing the FG ambient metric and then introduce the Weyl-ambient metric. To build up some intuition, we begin with the flat ambient metric and then generalize to Ricci-flat ambient metrics.
\label{sec:ambient}
\subsection{Flat Ambient Metrics}
\label{sec:flatambiant}
The simplest example of an ambient space is the flat ambient space. Consider the $(d+2)$-dimensional Minkowski spacetime $\mathbb{R}^{1,d+1}$ with the metric
\begin{equation}\label{Flat_Ambient metric}
\eta= - (\td X^{0})^{2}+\sum_{i=1}^{d+1}(\td X^{i})^{2}\,.
\end{equation}
One can describe $(d+1)$-dimensional Euclidean AdS spaces as the following codimension-1 hyperboloids:\footnote{One can also take the signature in \eqref{Flat_Ambient metric} to be $(2,d)$. Then, $g^+$ will be the Lorentzian signature AdS spacetime and the $\delta_{ij}$ in \eqref{Flat_Ambient_3} becomes $\eta_{ij}$. More generally, if one takes the signature in \eqref{Flat_Ambient metric} to be $(p,d+2-p)$, then the signature of $g^+$ will be $(p-1,d+2-p)$.}
\be
(X^{0})^2-R^2=L^2\,,\qquad R^2=\sum_{i=1}^{d+1}(X^{i})^{2}\,,
\ee
where $L$ represents the AdS radius. The hyperboloids with different $L$ form a one-parameter family of hypersurfaces foliating the interior of the future light cone, denoted by ${\cal N}^+$, emanating from the origin of the Lorentzian coordinate system $\{X^{0}, X^{i}\}$. Then, one can also write the Minkowski metric in the following ``cone'' form:
\begin{equation}\label{Flat_Ambient_4}
\eta = -\td \ell^2 + \frac{\ell^2}{L^2} g^{+}\,,\qquad \ell>0\,,
\end{equation}
where the coordinate $\ell=\sqrt{(X^{0})^2-R^2}$, and $g^{+}$ is the $(d+1)$-dimensional Euclidean AdS metric. Now the Euclidean AdS space is represented by the hyperbola $\ell=L$. The metric $g^{+}$ can be expressed in the Fefferman-Graham (FG) form in the following different ways (see Appendix \ref{app:coords} for details):
\begin{align}
\label{eq:adsglob}
\qquad g^{+}_{S}&=  \frac{L^2}{z^2} \Big(\td z^2 + L^2(1- \frac{1}{4}(z/L)^2)^2\td\Omega_{d}^2\Big)\,,\qquad 0<z<2L\,,\\
\label{eq:adsPoin}
\qquad g^{+}_{F}&= \frac{L^2}{z^2}\left(\td z^2 + \delta_{ij}\td x^{i}\td x^{j}\right)\,,\qquad i=1,\cdots,d\,,\qquad z>0\,.
\end{align}
The metric \eqref{Flat_Ambient_4} with $g^+=g^{+}_{S}$ or $g^{+}_{F}$ is defined in the whole interior of the light cone ${\cal N}^+$,\footnote{Note that for Lorentzian signature AdS spacetime, the metric \eqref{Flat_Ambient_4} with $g^{+}_{F}$ only covers half of the interior of the future light cone.} while their AdS boundaries have different topologies. It is easy to see that the AdS boundary at $z\to0^+$ of $g^+_S$ in \eqref{eq:adsglob} is conformally a $d$-sphere while that of $g^+_F$ in \eqref{eq:adsPoin} is conformally flat. 

While the metric \eqref{Flat_Ambient_4} is singular in the limit $z\to 0^{+}$ with $\ell$ fixed, it is well-defined when taking both $z$ and $\ell$ to zero with $z/\ell$ fixed. To make this evident we introduce a new coordinate system $\{t,x^{i},\rho\}$, called the \emph{ambient coordinate system}, with $t=\ell/z$ and $\rho=-z^{2}/2$. First we look at the metric \eqref{Flat_Ambient_4} with $g^+_S$ in \eqref{eq:adsglob}, which in the ambient coordinate system becomes
\begin{align}\label{Flat_Ambient_2}
\eta &= 2\rho \td t^2 + 2t \td t \td\rho + t^2 (1+ \frac{\rho}{2 L^2})^2 L^2 \td\Omega_{d}^2 \,.
\end{align} 
The coordinate patch of $\{\ell,x^i,z\}$ which covers the interior of the light cone surface ${\cal N}^{+}$, corresponds to $t\in (0,\infty) $, $\rho \in (-2L^2,0)$ (see Figure~\ref{fig:cones}). However, it is apparent now that the limit $\rho\to 0^{-}$ of the above metric is well-defined, and thus we can extend the coordinate patch of $\{t,x^i,\rho\}$ to include an open neighborhood of the surface ${\cal N}^{+}$ at $\rho=0$. Hence, ${\cal N}^{+}$ is parametrized by $\{t,x^{i}\}$, where $t\in \mathbb{R}_{+}$ and $x^{i}$ are the coordinates of the $d$-sphere $S_{d}$. In other words, ${\cal N}^{+}$ can be regarded as a line bundle over $S^d$ whose fibers are parametrized by $t$.
\begin{figure}[!htb]
\begin{center}
\includegraphics[width=3.5in]{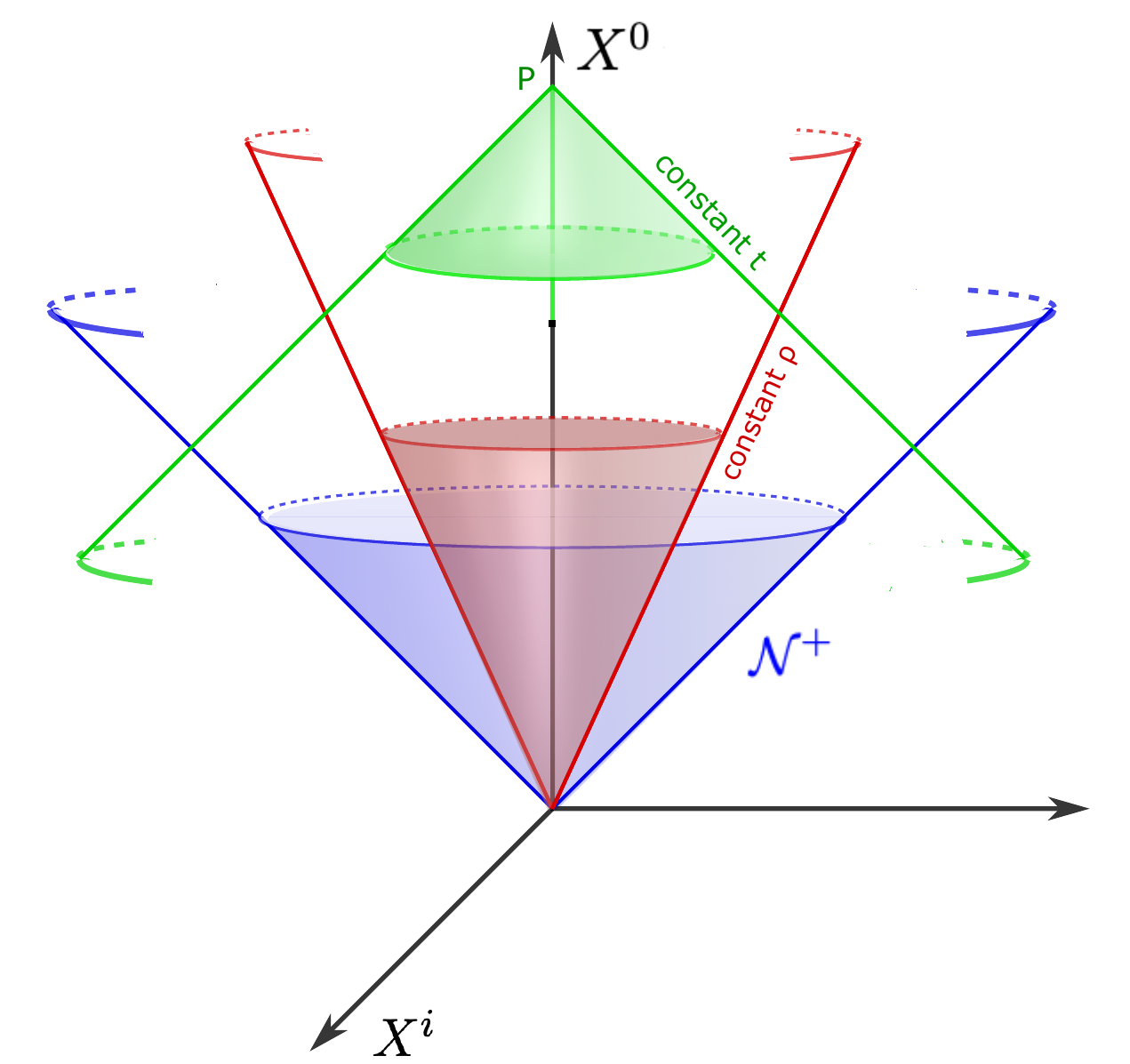}
\caption{Sketch of a constant-$\rho$ surface (red) and a constant-$t$ surface (green) of the flat ambient metric \eqref{Flat_Ambient_2} in the Lorentzian coordinate system $\{X^{0}, X^{i}\}$. Constant-$t$ surfaces are past directed light cones. Changing $t$ moves the apex $P$ of the cone along the $X^{0}$-axes. Constant-$\rho$ surfaces are future directed timelike cones. When $\rho\to 0^{-}$ the constant $\rho$ surface becomes the light cone ${\cal N}^{+}$ (blue) \cite{Jia:2023gmk}.} 
\label{fig:cones}
\end{center}
\end{figure}
\par
Suppose $\phi$ is a function on $\bb R^{1,d+1}$, which defines a hypersurface $\Sigma$ by the locus of points $p\in \mathbb{R}^{1,d+1}$ such that $\phi(t,x^{i},\rho)|_p=0$. In order to find the intersection $\Sigma \cap \mathcal{N}^{+}$, one can set $\rho=0$ and solve for $t$ as a function $t(x^{i})$ of the $d$-sphere coordinates from $\phi(t,x,\rho=0)=0$. The pullback metric on the intersection submanifold is $\eta |_{\Sigma \cap {\cal N}^{+}} = t (x)^2L^2 \td\Omega_{d}^2$. The function $t(x)$ depends on the choice of function $\phi$ (which is arbitrary) that defines $\Sigma$, and thus we see that the pullback metric is conformally equivalent to the metric of $S_{d}$. An example is to take $\phi=\ln t$, and to consider the pull back of the metric at $\rho=0$, $t=1$, namely $\eta|_{\rho=0,t=1}= L^2\td\Omega^{2}_{d}$. If we perform a diffeomorphism $t={\cal B}(x)^{-1}t'$ and pull back the metric at $\rho=0$, $t'=1$, then we find $\eta|_{\rho=0,t'=1}= {\cal B}(x)^{-2}L^2\td\Omega^{2}_{d}$. Therefore, at $\rho=0$ we have a \emph{conformal class} $[g]$ of $d$-dimensional metrics, and the $(d+2)$-dimensional Minkowski metric expressed in \eqref{Flat_Ambient_2} is said to be the ambient metric of $[g]$. This implies that the null surface ${\cal N}^{+}$ at $\rho=0$ is associated with a metric bundle, which will be important for the formal construction later in Subsection \ref{sec:bottomup}.
\par
Similarly, the metric \eqref{Flat_Ambient_4} with $g^+_{F}$ in \eqref{eq:adsPoin} can also be expressed in the ambient coordinates as

\begin{equation}\label{Flat_Ambient_3}
\eta = 2\rho \td t^2 + 2t\td t \td\rho + t^2\delta_{ij} \td x^{i}\td x^{j}\,,\qquad i=1,\cdots,d\,.
\end{equation}
In this case, the original coordinate patch of $\{\ell,x^i,z\}$ corresponds to $t\in(0,\infty)$, $\rho\in(-\infty,0)$, and the null surface ${\cal N}^+$ is again covered by the $\{t,x^i,\rho\}$ system at $\rho=0$. Intersecting the null surface with a hypersurface and taking the pullback metric on the intersection, we now obtain a $d$-dimensional metric $\td s^2= t(x)^2 \delta_{ij}\td x^{i}\td x^{j}$ that is conformally flat. This metric is also in the conformal class $[g]$ but the topology is different from the $d$-dimensional metric obtained from \eqref{Flat_Ambient_2}. Note that the flat ambient metric in either \eqref{Flat_Ambient_2} or \eqref{Flat_Ambient_3} is homogeneous of degree 2 with respect to the $t$-coordinate; that is, under a constant scaling $t\to s t$ the metric transforms as $\eta \to s^{2}\eta$, or in the infinitesimal form,
\begin{equation}
\label{eq:homo}
{\cal L}_{\un T}\eta= 2\eta \,,\qquad\un T=t\un\p_{t}\,.
\end{equation}
We will retain this property also for Ricci-flat ambient metrics and the Weyl-ambient metric. For relaxation of this homogeneity condition, see \cite{graham2008inhomogeneous}.

\subsection{Ricci-Flat Ambient Metrics}
\label{sec:RFambient}
The flat ambient metric combines hyperbolic metrics and their conformal boundaries in a unified  framework. Before we describe its utility, we will review the  generalization of flat ambient metrics to Ricci-flat ambient metrics. This will allow us to consider $(d+1)$-dimensional asymptotically locally Anti-de Sitter (ALAdS) spaces which are especially relevant in holographic theories.
\par
The main observation that allows an extension to Ricci-flat ambient metrics is that \eqref{Flat_Ambient_4} can be generalized in the following form:
\begin{equation}\label{R_Flat_Ambient}
\ti g= -\td \ell^2 + \frac{\ell^2}{L^2} g^{+}_{\mu\nu}(x)\td x^{\mu}\td x^{\nu},\qquad \mu,\nu=1,\cdots d+1\,,\quad \ell>0\,,
\end{equation}
where now $g^{+}(x)$ is an arbitrary $(d+1)$-dimensional metric independent of $\ell$. We will refer to this $(d+1)$-dimensional geometry as the ``bulk''. The ambient Ricci tensor $\tilde Ric(\tilde g)$ can be decomposed in terms of the Ricci tensor of $g^+$ as \cite{Fefferman:2007rka,Graham:1991jqw}
\begin{equation}
\ti Ric(\ti g)= Ric(g^{+}) + \frac{d}{L^2} g^{+} \,.
\end{equation}
The right-hand side of the above equation can also be written as $G_{\mu\nu}(g^{+})+ \Lambda g^{+}_{\mu\nu}$ with $\Lambda= -\frac{d(d-1)}{2L^2}$.  Therefore, when the ambient metric $\ti g$ is Ricci-flat, $g^{+}$ is an Einstein metric and thus satisfies the vacuum Einstein equations.
\par
According to the Fefferman-Graham theorem \cite{AST_1985__S131__95_0,Graham:1991jqw}, any ALAdS Einstein metric $g^+$ can be expressed in the Fefferman-Graham form \eqref{FG}
\begin{equation}
\label{eq:g+FG}
g^{+}= L^2\frac{\td z^2}{z^2}+ \frac{L^2}{z^2}\gamma_{ij}(x,z)\td x^{i}\td x^{j}\,,\qquad i,j=1,\cdots ,d\,,\quad z>0\,,
\end{equation}
where $h_{ij}(x,z)=\gamma_{ij}(x,z)/z^2$ in \eqref{FG}. Then, by a coordinate transformation $t=\ell/z$ and $\rho=-z^{2}/2$, the metric \eqref{R_Flat_Ambient} takes the form
\begin{equation}\label{ambient_metric}
\ti g = 2\rho \td t^2 + 2t\td t\td\rho + t^2 \gamma_{ij}(x,\rho)\td x^{i}\td x^{j} \,,\qquad t>0\,.
\end{equation}
We can see that the flat ambient metrics \eqref{Flat_Ambient_2} and \eqref{Flat_Ambient_3} are nothing but special cases of \eqref{ambient_metric} when $\tilde g=\eta$ and $g^+$ is taken to be \eqref{eq:adsglob} and \eqref{eq:adsPoin}, respectively. The codimension-2 metric is now generalized to an arbitrary $\gamma_{ij}(x,z)$ whose corresponding $g^+$ in \eqref{eq:g+FG} is an Einstein metric.
\par
Note that the advantages of the ambient coordinate system $\{t,x^i,\rho\}$ mentioned before for the flat ambient space are now carried over to the Ricci-flat case. One can see that the surface at $\rho=0$ is still a null hypersurface, denoted by $\cal N$, which is a coordinate singularity in the original $\{\ell,x^i,z\}$ coordinate system. Hence, the ambient coordinate system permits one to extend the spacetime region to include an open neighborhood of the null surface ${\cal N}$. Denoting the extended spacetime manifold as $\tilde M$, then ${\cal N}$ is a hypersurface in $\tilde M$ parametrized by $\{t,x^{i}\}$, which furnishes a conformal class $[\gamma]$ of codimension-2 metrics. Suppose $M$ is a $d$-dimensional manifold equipped with the conformal class $[\gamma]$, then $(\tilde M,\tilde g)$ is called the $(d+2)$-dimensional ambient space of $(M,[\gamma])$. 
\par
Being part of the Ricci-flat ambient space, ${\cal N}$ can be regarded as an initial value surface. Then given the initial data $\gamma_{ij}(x,\rho)|_{\rho=0}$, the Ricci-flatness condition can be used to ``propagate'' the metric beyond the initial surface to a neighborhood around $\rho=0$. That is, the Ricci-flatness condition $\ti Ric(\ti g)=0$ is a set of differential equations for $\tilde g_{ij}(x,\rho)$, which can be solved iteratively in a series around $\rho=0$ given the initial value $\ti g(x,\rho)|_{\rho=0}$. The initial value problem for the Ricci-flat ambient space has been defined and evaluated rigorously in \cite{Fefferman:2007rka}, the results of which will be carried over to the Weyl-ambient space in Subsection \ref{sec:bottomup}.

\subsection{Weyl-Ambient Metrics}
\label{sec:Weylambiant}
Now we are ready to introduce the Weyl-ambient metric. We start from the $(d+2)$-dimensional ambient metric in the form of \eqref{R_Flat_Ambient}. The expression of $g^+$ in \eqref{eq:g+FG} is the FG ansatz for an ALAdS spacetime, which is not  preserved under a Weyl diffeomorphism $z\to z/{\cal B}(x)$, $x^i\to x^i$ as we explained in Section \ref{sec:WFG}. To manifest the Weyl covariance, one should apply the WFG gauge to $g^+$ by adding an additional mode $a_\mu$ to \eqref{eq:g+FG} as follows:
\begin{equation}
\label{eq:g+WFG}
g ^+_{\text{WFG}}=L^2\left(\frac{\td z}{z} - a_{i}(x,z)\td x^{i}\right)^{2}+ \frac{L^2}{z^2}\gamma_{ij}(x,z)\td x^{i}\td x^{j}\,,\qquad z>0\,.
\end{equation}
Now we substitute the $g^+$ in \eqref{R_Flat_Ambient} with the WFG ansatz \eqref{eq:g+WFG}, then transforming back to the ambient coordinates $\{t,x^i,\rho\}$, we obtain the \emph{Weyl-ambient metric}
\begin{equation}\label{Weyl_ambient}
\ti g = 2\rho \td t^2 + 2t^2\td\rho\left(\frac{\td t}{t} + a_{i}(x,\rho)\td x^{i}\right) + t^2 g_{ij}(x,\rho)\td x^{i}\td x^{j}\,,\qquad t>0\,,
\end{equation}
where $g_{ij}(x,\rho):= \gamma_{ij}(x,\rho)- 2\rho a_{i}(x,\rho)a_{j}(x,\rho)$. We call the pseudo-Riemannian space $(\tilde M,\tilde g)$  a \emph{Weyl-ambient space}. Having the form of the Weyl-ambient metric,  the \emph{ambient Weyl diffeomorphism}\footnote{In terms of the coordinates $\ell,z$, the ambient Weyl diffeomorphism acts as $(\ell',x'^i,z')=(\ell, x^i, {\cal B}(x)^{-1}z)$.}
\begin{equation}
\label{eq:Weyldiff}
t'={\cal B}(x)t\,,\qquad x'^i=x^i\,,\qquad \rho'={\cal B}(x)^{-2}\rho
\end{equation}
induces a change in the constituents $a_i$ and $\gamma_{ij}$ of the form
\begin{equation}
\label{eq:Weyldiff2}
a'_i(x',\rho')=a_i(x,\rho)-\p_i\ln{\cal B}(x)\,,\qquad\gamma'_{ij}(x',\rho')={\cal B}(x)^{-2}\gamma_{ij}(x,\rho)\,.
\end{equation}
If we regard the ALAdS bulk as a hypersurface of the Weyl-ambient space, the above transformation gives rise to the Weyl diffeomorphism which preserves the WFG ansatz. In addition, we want to point out that just as the ambient metric \eqref{ambient_metric} is homogeneous with respect to $t$, the homogeneity property \eqref{eq:homo}  also pertains for the Weyl-ambient metric \eqref{Weyl_ambient} since both $a_{i}(x,\rho)$ and $\gamma_{ij}(x,\rho)$ are independent of $t$.   This homogeneity property will be repeatedly used throughout this thesis. In the following we use this property in order to show how an induced Weyl class arises from the Weyl-ambient metric; it is also crucial for the bottom-up construction and for proving Propositions~\ref{prop1} and \ref{Proposition_6.5_Graham}.
\par
The Ricci-flatness condition $\ti Ric(\ti g)=0$ for the Weyl-ambient metric \eqref{Weyl_ambient}, similar to that for the ambient metric \eqref{ambient_metric}, is a set of differential equations for $\tilde g_{ij}(x,\rho)$ which can be solved order by order in a neighborhood of $\rho=0$ given the initial value $\ti g_{ij}(x,\rho)|_{\rho=0}$. To be precise, in a neighborhood of $\rho=0$ we can expand $\gamma_{ij}$ and $a_{i}$ as\footnote{Similar to \eqref{hex0}, there will be a second series starting from the $\rho^{d/2}$ order in the expansion \eqref{eq:gexpan}:
\begin{align*}
\gamma_{ij}(x,\rho)&=( \gamma^{(0)}_{ij}(x)+ \gamma^{(1)}_{ij}(x)\rho+ \cdots)+\rho^{{d/2}}( \pi^{(0)}_{ij}(x)+ \pi^{(1)}_{ij}(x)\rho  + \cdots)\,.
\end{align*}
However, to solve for the second series in $\gamma_{ij}$ order by order one needs the interior data $\pi_{ij}^{(0)}$ of the ambient space.}
\begin{align}
\label{eq:gexpan}
\gamma_{ij}(x,\rho)&= \gamma^{(0)}_{ij}(x)+ \gamma^{(1)}_{ij}(x)\rho +\gamma^{(2)}_{ij}(x)\rho^2 + \cdots\,,\\
\label{eq:aexpan} 
a_{i}(x,\rho)&= a^{(0)}_{i}(x) + a^{(1)}_{i}(x)\rho + a^{(2)}_{i}(x)\rho^2 + \cdots\,.
\end{align}
Notice that the $\gamma^{(k)}_{ij}$ and $a^{(k)}_{i}$ in the $\rho$-expansion here correspond to $(-2)^k\gamma^{(2k)}_{ij}/L^{2k}$ and $(-2)^ka^{(2k)}_{i}/L^{2k}$ in the $z$-expansion in \eqref{hex} and \eqref{aex}, respectively. From the equation $\ti Ric(\ti g)=0$, one can solve for $\gamma^{(n)}_{ij}(x)$ in terms of $\gamma^{(k)}_{ij}(x)$ and $a^{(k)}_{i}(x)$ with $k$ up to $n-1$. However, the modes $a^{(n)}_{i}(x)$ are not determined by the Ricci flatness condition and hence we regard $a^{(k)}_{i}(x,\rho)$ as input data. This initial value problem will be examined in detail in Subsection~\ref{sec:bottomup} after the Weyl-ambient space is defined in terms of the Weyl structure and the ansatz in \eqref{Weyl_ambient} will be shown to be the uniquely determined Weyl-ambient metric for any given $\gamma^{(0)}_{ij}(x)$ and $a_i(x,\rho)$. 
\par
From the transformation \eqref{eq:Weyldiff2} and the expansions \eqref{eq:gexpan} and \eqref{eq:aexpan}, we can see that $\gamma_{ij}^{(k\geqslant0)}$ and $a^{(k\geqslant1)}_{i}(x)$ transform covariantly under the ambient Weyl diffeomorphism \eqref{eq:Weyldiff}, with Weyl weights $2k-2$ and $2k$, respectively:
\be
\gamma^{(k\geqslant0)}_{ij}(x)\to{\cal B}(x)^{2k-2}\gamma^{(k\geqslant0)}_{ij}(x)\,,\qquad a_i^{(k\geqslant1)}(x)\to{\cal B}(x)^{2k}a_i^{(k\geqslant1)}(x)\,.
\ee
On the other hand, $a^{(0)}_i$ transforms as $a^{(0)}_i\to a^{(0)}_i-\p_i\ln{\cal B}$. Therefore, we should anticipate that $a^{(0)}_i$ can be interpreted as a Weyl connection on the codimension-2 geometry.  In Section \ref{sec:WFG} we have shown that the bulk metric of an ALAdS spacetime in the WFG gauge provides a Weyl geometry on the conformal boundary. In the next section we will show that by introducing $a_i(x,\rho)$ in the ambient metric, we indeed obtain a Weyl geometry at codimension-2, where $\gamma^{(0)}_{ij}$ and $a^{(0)}_i$ play the role of a metric and a Weyl connection, respectively.
\par
Closing this section, we remark that the codimension-1 surface ${\cal N}$ at $\rho=0$ is again a null surface parametrized by $(t,x)$ with $t\in \mathbb{R}_{+}$, just like the case of the ambient metric \eqref{ambient_metric}. This surface in fact has the structure of a line bundle with each fiber parametrized by $t$, which turns out to be a principal bundle with the structure group $\bb R_+$. The new ingredient $a_i$ in the Weyl-ambient metric \eqref{Weyl_ambient} induces naturally a connection on this principal bundle, represented by $a^{(0)}_i=a_i|_{\rho=0}$. We will explore this in Section~\ref{sec:bottomup}. 

\section{Weyl-Ambient Space}
\label{sec:WAS}
The goal of this section is formulate the Weyl-ambient geometry from two perspectives. First we analyze the Weyl-ambient metric from a top-down perspective by showing explicitly that the Weyl-ambient metric \eqref{Weyl_ambient} leads to a Weyl geometry at codimension-2. Then we introduce the more formal bottom-up construction of the Weyl-ambient space in Subsection \ref{sec:bottomup} and show that the Weyl ambient metric can be constructed from the codimension-2 Weyl geometry. 

\subsection{Top-Down Perspective}
\label{sec:topdown}
We start from a $(d+2)$-dimensional manifold $\tilde M$. Define a dual frame $\{\bm e^{P}\}$ on the $\tilde M$ as follows:
\begin{align}
\label{e+-def}
\bm e^+&=\td t+ta_i(x,\rho)\td x^i\,,\qquad \bm e^i=\td x^i\,,\qquad\bm e^-=t\td\rho+\rho \td t-t\rho a_i(x,\rho)\td x^i\,,
\end{align}
where now $P=\{+,i,-\}$. In this frame the Weyl-ambient metric \eqref{Weyl_ambient} can be written as
\begin{align}
\label{eq:metricnull}
\tilde g=\bm e^+\otimes\bm e^-+\bm e^-\otimes\bm e^++t^2\gamma_{ij}\bm e^i\otimes\bm e^j\,.
\end{align}
It is easy to check that the 1-forms defined in \eqref{e+-def} are covariant under \eqref{eq:Weyldiff} and \eqref{eq:Weyldiff2}, and thus the form of $\tilde g$ in \eqref{eq:metricnull} is preserved under an ambient Weyl diffeomorphism. The corresponding frame $\{\un D_{P}\}$ of \eqref{e+-def} reads
\begin{align}
\label{eq:D+-i}
\un D_+&=\un\p_t-\frac{\rho}{t}\un\p_\rho\,,\qquad\un D_i=\un\p_i-ta_i(x,\rho)\un\p_t+2\rho a_i(x,\rho)\un\p_\rho\,,\qquad\un D_-=\frac{1}{t}\un\p_\rho\,.
\end{align}
From \eqref{eq:metricnull} it is clear that $\un D_+$ and $\un D_-$ are null vectors. $\{\un D_i\}$ form a basis of a $d$-dimensional \emph{distribution} $C_d\subset T\tilde M$, defined as
\begin{align}
C_d=\big\{\un{\mathcal V}\in T\tilde M\,|\,i_{\un{\mathcal V}}\bm e^\pm=0\big\}\,.
\end{align}
It follows from \eqref{eq:D+-i} that
\begin{align}
[\un D_i,\un D_j]=-tf_{ij}\un D_++t\rho f_{ij}\un D_-\,,
\end{align}
where $f_{ij}=D_ia_j-D_ja_i$ is the curvature of $a_i(x,\rho)$. The Frobenius theorem implies that the distribution $C_d$ is integrable when $f_{ij}=0$, though we will not generally assume this to be the case. One should note that the codimension-1 distribution spanned by $\{\un D_i,\un D_+\}$ is integrable at $\rho=0$, and thus defines a codimension-1 subspace (see Appendix \ref{app:Null} for relevant details).

Suppose $M$ is a $d$-dimensional manifold with a local coordinate system $\{y^i\}$ on $U\subset M$, and a point $\tilde p\in\tilde M$ has coordinates $(t,x^i,\rho)$. One can consider the coordinate patch $\tilde U$ of the ambient coordinate system $\{t,x^i,\rho\}$ as a fiber bundle with the projection $\pi:\tilde U\to U$ such that $\pi(\tilde p)=p\in M$ has coordinates $y^i=x^i$, i.e.\ each fiber in $\tilde U$ is parametrized by $(t,\rho)$. For simplicity, in what follows we will refer to $\tilde U$ as $\tilde M$ and $U$ as $M$, and we will not distinguish $\{x^i\}$ and $\{y^i\}$. Now that we have a bundle structure $\pi:\tilde M\to M$, we can see that $a_i(x,\rho)$ plays the role of an Ehresmann connection that specifies the horizontal subspace $H_{\tilde p}=C_d|_{\tilde p}\subset T_{\tilde p}\tilde M$, which defines the horizontal lift $T_pM\to H_{\tilde p}$ with $\un\p_i\mapsto\un D_i$. In general then, we are describing an isolated surface.

\par
Since we have a bundle structure $\pi:\tilde M\to M$, each section defines an embedding $\phi:M\to\tilde M$ such that a point $p\in M$ with coordinates $x^i$ is mapped to $\phi(p)=(t(x),x^i,\rho(x))$. With the horizontal subspace defined, we have $\pi_*:H_p\to T_pM$ such that $\pi_*(\un D_i)=\un\p_i$. Now consider the embedding $\phi$ with $\phi(p)=(t=1,x^i,\rho=0)$. We can define an induced metric $\gamma_{ij}^{(0)}(x)$ on $M$ by ``pulling back''\footnote{Note that we abuse the term as this is technically not a standard pullback by the embedding $\phi$, because $\un D_i$ is not tangent to $\phi[M]$.} $\tilde g_{ij}(t,x,\rho)= \tilde g(\un D_i,\un D_j)$ from the subspace of $\tilde M$ at $t=1$ and $\rho=0$ similar to what we did for the flat ambient space:
\be
\label{eq:inducedmetr}
\gamma^{(0)}_{ij}= \tilde g_{ij}|_{t=1,\rho=0}\,.
\ee
Under the coordinate transformation \eqref{eq:Weyldiff} in $\tilde M$ induced by an ambient diffeomorphism, we can consider the pullback $\gamma'^{(0)}(x')$ of $\tilde g'(t',x',\rho')$ by $\phi'(p)=(t'=1,x'^i,\rho'=0)$:
\be
\label{eq:inducedmetr'}
\gamma'^{(0)}_{ij}= \tilde g'_{ij}|_{t'=1,\rho'=0}\,,
\ee
where $\tilde g'_{ij}=g'(\un D'_i,\un D'_j)$, with $\un D'_i=\un\p'_i-t'a'_i(x',\rho')\un\p'_t+2\rho' a'_i(x',\rho')\un\p'_\rho$. Since $\tilde g'_{ij}=t'^2\gamma'_{ij}(x',\rho')$, we have
\be
\gamma'^{(0)}_{ij}={\cal B}(x)^{-2} \tilde g'_{ij}|_{t'={\cal B}(x),\rho'=0}={\cal B}(x)^{-2} \tilde g_{ij}|_{t=1,\rho=0}={\cal B}(x)^{-2}\gamma^{(0)}_{ij}\,.
\ee
That is, under the ambient Weyl diffeomorphism in $\tilde M$, we obtain two induced metrics which are related by a Weyl transformation in $M$. Hence, the ambient Weyl diffeomorphisms acting on the surface $\rho=0$, namely the null surface $\cal N$, gives rise to a conformal class of metrics on $M$.\footnote{If one only performs a local scaling in the coordinate $t$, i.e.\ $t'=B(x)t, x'^i=x^i, \rho'=\rho$, then one can also get a conformal class of metrics from other constant-$\rho$ surfaces. However, to obtain the induced Weyl connection and a Weyl class, one needs to perform the ambient Weyl diffeomorphism, and thus needs the restriction of $\rho=0$.}

\par
Having a conformal class of induced metrics on $M$, now let us look at how a connection is induced from $\tilde M$ onto $M$. Suppose $\tilde\nabla$ is the Levi-Civita connection of the ambient space $(\tilde M,\tilde g)$, i.e.\ it is torsion-free and has zero metricity $\tilde\nabla_{\un D_P}\tilde g_{MN}=0$. The ambient connection coefficients $\tilde\Gamma^P{}_{MN}$ of $\tilde\nabla$ are defined with respect to the frame $\un D_M$ of $T\tilde M$ as:
\begin{align}
\tilde\nabla_{\un D_M}\un D_N=\tilde\Gamma^i{}_{MN}\un D_i+\tilde\Gamma^+{}_{MN}\un D_++\tilde\Gamma^-{}_{MN}\un D_-\,.
\end{align}
In the following discussion we will denote the covariant derivative $\tilde\nabla_{\un D_P}$ along $\un D_P$ as $\tilde\nabla_P$ for brevity ($P=+,i,-$); we emphasize that these are not however the coordinate frame components.
The ambient connection 1-form $\tilde{\bm\omega}^{M}{}_{N}=\tilde\Gamma^M{}_{PN}\bm e^P$ in this frame is then found to be (the matrix elements are arranged in the order of $+,i,-$)
\begin{align}
\tilde{\bm\omega}^{M}{}_{N}=&\left(\begin{array}{ccc}a_k & -t\psi_{kj}& 0  \\\frac{1}{t}(\delta_k{}^i-\rho\psi_k{}^i) & \tilde\Gamma^i{}_{kj}& \frac{1}{t}\psi_k{}^i  \\0 & -t(\gamma_{kj}-\rho\psi_{kj})& -a_k \end{array}\right)\bm e^k\nn\\
\label{eq:conn1form}
&+\left(\begin{array}{ccc}0 & \rho\varphi_j& 0  \\\frac{\rho^2}{t^2}\varphi^i &\frac{1}{t}(\delta_j{}^i-\rho\psi_{j}{}^i)& -\frac{\rho}{t^2}\varphi^i  \\0& -\rho^2\varphi_j & 0 \end{array}\right)\bm e^++\left(\begin{array}{ccc}0 & -\varphi_j & 0 \\-\frac{\rho}{t^2}\varphi^i &\frac{1}{t}\psi_j{}^i & \frac{1}{t^2}\varphi^i  \\0  & \rho \varphi_j & 0 \end{array}\right)\bm e^- \,,
\end{align}
where the upper $i,j$ indices are raised by $\gamma^{ij}\equiv(\gamma_{ij})^{-1}$, and
\begin{align}
\label{eq:psiphi}
\psi_{ij}&=\frac{1}{2}(\p_\rho \gamma_{ij}+f_{ij})\,,\qquad\varphi_i=\p_\rho a_i \,,\qquad f_{ij}=D_ia_j-D_ja_i\,,\\
\label{eq:Gammaijk}
\tilde\Gamma^{i}{}_{jk}&=\frac{1}{2}\gamma^{im}(D_j\gamma_{mk}+D_k\gamma_{jm}-D_m\gamma_{jk})-(a_j\delta^i{}_k+a_k\delta^i{}_j-a^i\gamma_{jk})\,.
\end{align}

We note that the Levi-Civita condition $\tilde\nabla_{i}\tilde g_{jk}=0$ evaluates to $\nabla_{i}\gamma_{jk}=2a_i\gamma_{jk}$, where $\nabla$ is the connection on the distribution $C_d$ induced by $\tilde\nabla$, with $\nabla_{i}\gamma_{jk}:=D_i\gamma_{jk}-\tilde\Gamma^m{}_{ij} \gamma_{m k}-\tilde\Gamma^m{}_{ik} \gamma_{jm}$. Hence, if we interpret $\gamma_{ij}$, i.e.\ $\tilde g_{IJ}$ restricted to the $i,j$ indices, as giving rise to a metric on the distribution $C_d$ spanned by $\{\un D_i\}$ in $\tilde M$, then the connection $\nabla$ on $C_d$ has a nonvanishing metricity $2a_i\gamma_{jk}$. Equivalently, this connection has vanishing Weyl metricity, and it is therefore convenient and natural to introduce a connection $\hat\nabla$ on $C_d$, such that \[\hat\nabla_i\gamma_{jk}:=\nabla_i\gamma_{jk}-2a_i\gamma_{jk}=0.\] The vanishing of the Weyl metricity is a Weyl-covariant condition, whereas the vanishing of the usual metricity $\nabla_i\gamma_{jk}$ is not. More generally, for any tensor $T$ defined on $C_d$ (i.e.,\ $ T$ has no $+,-$ components) that transforms covariantly under an ambient Weyl diffeomorphism as $T(t,x^i,\rho)\to {\cal B}(x)^{w_T} T({\cal B}(x)^{-1}t,x^i,{\cal B}^{2}(x)\rho)$, the derivative
\be
\label{eq:hatnabla}
\hat\nabla_i T:=\nabla_iT+w_Ta_iT
\ee
will also transform covariantly with the same weight. For example, it follows from the definitions in \eqref{eq:psiphi} that $\varphi_i(x,\rho)\to{\cal B}(x)^2\varphi_i(x,{\cal B}(x)^2\rho)$ and $\psi_{ij}(x,\rho)\to\psi_{ij}(x,{\cal B}(x)^2\rho)$, and thus we can write their Weyl-covariant derivatives as
\be
\hat\nabla_i\varphi_j=\nabla_i\varphi_j+2a_i\varphi_j\,,\qquad 
\hat\nabla_i\psi_{jk}=\nabla_i\psi_{jk}\,.
\ee
From the above behavior of the induced connection on $C_d$, we can naturally expect that the induced connection on $M$ will give us a codimension-2 Weyl geometry. However, since $\{\un D_i\}$ is not an integrable distribution when $a_i$ is turned on, the connection coefficients \eqref{eq:Gammaijk} cannot be pulled back directly to $M$. As we will see below, this problem does not exist if we focus on the surface at $\rho=0$.

Notice that $\tilde\Gamma^{i}{}_{jk}$ does not depend on $t$, and thus at any value of $t$ at $\rho=0$, the induced connection coefficients can be expressed as
\begin{align}
\label{eq:Weylconn}
\Gamma_{(0)}^{i}{}_{jk}\equiv\tilde\Gamma^{i}{}_{jk}|_{\rho=0}=\frac{1}{2}\gamma_{(0)}^{im}(\p_j\gamma^{(0)}_{mk}+\p_k\gamma^{(0)}_{jm}-\p_m\gamma^{(0)}_{jk})-(a^{(0)}_j\delta^i{}_k+a^{(0)}_k\delta^i{}_j-a_{(0)}^i\gamma^{(0)}_{jk})\,.
\end{align}
To define an induced connection on $M$, let us take $t=1$ as a representative, i.e.\ take $\phi(M)$ to be a $d$-dimensional surface in $\tilde M$ at $\rho=0$ and $t=1$. At  first sight, the connection defined by \eqref{eq:Weylconn} is still an induced connection on the distribution spanned by $\{\un D_i\}$, which does not lie on the codimension-2 surface $\phi[M]$ when $a_i$ is turned on. However, when the dual frame $\{\bm e^P\}$ gets pulled back on $M$, we get $\{\bm e^i=\td x^i\}$, and the corresponding vector basis on $TM$ is $\{\un \p_i\}$. Hence, the ambient LC connection $\tilde\nabla$ defined on $T^*\tilde M$ induces a connection $\nabla^{(0)}$ on $T^*M$ in the following natural manner
\begin{align}
\nabla^{(0)}_{\un \p_j}\bm e^i\equiv\nabla_{\un D_j}\bm e^i|_{\rho=0,t=1}=-\Gamma_{(0)}^i{}_{jk}\bm e^k\,.
\end{align}
Then, $\nabla^{(0)}$ can also be defined on $TM$, which defines the parallel transport of a vector along a curve on $M$:
\begin{align}
\nabla^{(0)}_{\un\p_i}\un\p_j=\Gamma_{(0)}^k{}_{ij}\un\p_k\,.
\end{align}
In this way we get a connection $\nabla^{(0)}$ on $M$ whose connection coefficients are given by \eqref{eq:Weylconn}. This is a connection that satisfies $\nabla^{(0)}_i\gamma^{(0)}_{jk}=2a^{(0)}_i\gamma^{(0)}_{jk}$, i.e.\ it has vanishing Weyl metricity, and $a^{(0)}_i$ plays the role of a Weyl connection on $M$. One can also define a metricity-free connection $\hat\nabla^{(0)}$ on $M$ satisfying $\hat\nabla^{(0)}_{i}\gamma^{(0)}_{jk}=\nabla^{(0)}_i\gamma^{(0)}_{jk}-2a^{(0)}_i\gamma^{(0)}_{jk}=0$, which can be referred to as a Weyl-LC connection. 
\par
An ambient Weyl diffeomorphism in $\tilde M$ induces on $M$ a Weyl transformation $\gamma^{(0)}_{ij}\to{\cal B}^{-2}\gamma^{(0)}_{ij}$, $a^{(0)}_i\to a^{(0)}_i-\p_i\ln{\cal B}$.\footnote{If one considers a more general version of the diffeomorphism \eqref{eq:Weyldiff} where $x'=x'(x)$, then 
\begin{equation*}
\frac{\p x'^j}{\p x^i}a'^{(0)}_j(x')=a_i^{(0)}(x)-\p_i\ln{\cal B}(x)\,,\qquad\frac{\p x'^i}{\p x^k}\frac{\p x'^j}{\p x^l}\gamma_{kl}'^{(0)}(x')={\cal B}(x)^{-2}\gamma_{ij}^{(0)}(x)\,.
\end{equation*}
The transformation $(t,x^i,\rho)\to(t,x'^i(x),\rho)$ realizes the Diff$(M)$ part of the $\text{Diff}(M)\ltimes \text{Weyl}$ symmetry on $M$.} This means that we get a \emph{Weyl class} $[\gamma^{(0)},a^{(0)}]$, which is the equivalence class formed by all the pairs of $\gamma^{(0)}$ and $a^{(0)}$ that are connected by Weyl transformations, i.e.,\
\be
(\gamma^{(0)}_{ij},a^{(0)}_i)\sim({\cal B}(x)^{-2}\gamma^{(0)}_{ij},a^{(0)}_i-\p_i\ln{\cal B}(x))\,.
\ee
With the Weyl class defined on $M$, we obtain a $d$-dimensional Weyl manifold $(M,[\gamma^{(0)},a^{0}])$ induced by the Weyl-ambient space $(M,\tilde g)$, where the geometric quantities defined in terms of the Weyl connection are Weyl covariant. For example, one can define on $M$ the Weyl-Riemann tensor $\hat R_{(0)}^i{}_{jkl}$, Weyl-Ricci tensor $\hat R^{(0)}_{ij}$, Weyl-Ricci scalar $\hat R^{(0)}$, etc.

\subsection{Bottom-Up Perspective}
\label{sec:bottomup}
In this subsection we will present a geometric interpretation of the Weyl-ambient metric \eqref{Weyl_ambient} as well as the Weyl connection therein in terms of a bottom-up construction. By ``bottom-up'' we mean to construct a $(d+2)$-dimensional Weyl-ambient space from a $d$-dimensional manifold $M$. The majority of this subsection will follow a similar construction in Section 2 and Section 3 of \cite{Fefferman:2007rka} where a more detailed exposition of the ambient construction can be found. We will generalize the main definitions and theorems there with the inclusion of a Weyl connection on the principal $\RR_+$-bundle. (See Section \ref{sec:primer} for the basics of principal bundles.) The resulting Weyl structure together with the metric bundle, viewed as an associated bundle, will be then used to define the Weyl-ambient metric. For this subsection to be self-contained we repeat some of the definitions and proofs of \cite{Fefferman:2007rka} when necessary while generalizing them appropriately. 
\par
We start with a $d$-dimensional manifold $M$ and introduce a principal $\bb{R}_{+}$-bundle $\mathcal{P}_W$ over $M$ that we call a {\it Weyl structure}.\footnote{We use this name since ${\cal P}_W$ can be regarded as a $G$-structure of the frame bundle, in which the structure group is reduced from $GL(d,\RR)$ to $\RR_+$.} 

\begin{defn}\label{Weylstructure}
Given a $d$-dimensional manifold $M$, a \emph{Weyl structure} is a $(d+1)$-dimensional manifold ${\cal P}_W$ together with the structure group $\bb R_+$, which is equipped with
\par
\ding{172} a free right action $\delta:\mathcal{P}_W\times\bb R_+\to\mathcal{P}_W$, such that $\delta_s(p)= p\cdot s$, $\forall p\in\mathcal{P}_W$, $s\in\bb R_+$;
\par
\ding{173} a projection map $\pi: \mathcal{P}_W\rightarrow M$, such that $\pi(p)=\pi(p\cdot s)$, $\forall p\in\mathcal{P}_W$, $s\in\bb R_+$;
\par
\ding{174} a local trivialization $T_i:\pi^{-1}(U_i)\to U_i\times \bb{R}_{+}$ for each open set $U_i\subset M$ with $T_i(p)=(\pi(p),t_i(p))$, where $t_i:\pi^{-1}(U_i)\to \bb{R}_{+}$ satisfies $t_i(p\cdot s)=t_i(p)\cdot s$ for all $s\in\bb{R}_{+}$.
\par
For brevity, suppose $U_i\subset M$ has local coordinates $\{x^i\}$, we can express a point $p\in{\cal P}_W$ as $(x,t)$ with $t\in \bb{R}_{+}$. 
\end{defn}
A connection on the Weyl structure can be described as follows. First we note that the push forward $\pi_*:T{\cal P}_W\to TM$ defines the vertical sub-bundle $V\subset T{\cal P}_W$ given at any point $p\in{\cal P}_W$ by
\beq
V_p=\ker(\pi_*)\equiv\{\un v\in T_p{\cal P}_W\,|\,\pi_*(\un v)=\un 0\}.
\eeq
In the present case $V_p$ is a one-dimensional vector space spanned by the fundamental vector field which generates the  group action along the fibers; in the local trivialization, it is expressed as $\un T=t\un\p_t$. From the perspective of ${\cal P}_W$, we can then think of the action of $\bb{R}_+$ as corresponding to a dilatation of the fibers. To assign a connection on ${\cal P}_W$ is to specify a horizontal sub-space $H_p\subset T_p{\cal P}_W$ such that $T_p{\cal P}_W=H_p\oplus V_p$ at any $p$. In the local trivialization given above, the horizontal bundle can be described as the span of vectors of the form $\un D_i=\un\p_i-a_i(x)t\un\p_t$.\footnote{Here we have required that $a(x)$ be independent of $t$ in order to make the Weyl-ambient metric homogeneous of degree 2 with respect to $t$.} Equivalently, it can be described as the kernel of a form $n:=t^{-1}\td t+a_i(x)\td x^i\in T^*{\cal P}_W$, i.e.
\beq
\label{eq:HV}
H_p:= \{\un u\in T_p{\cal P}_W\,|\,i_{\un u}n=0\}\qquad \forall p\in{\cal P}_W.
\eeq
We note that under the Abelian group action $(x,t(x))\mapsto (x,t'(x))=(x,t(x)s(x))$, we have
\beq
\label{eq:horizontaln}
n'= n+\big(a_i'(x)-a_i(x)+\p_i\ln s(x)\big)\td x^i\,,
\eeq
and so we see that the coefficients $a_i(x)$ transform as connection coefficients.
Note also that it is natural to introduce the projector ${\bf a}:T{\cal P}_W\to V$ as
\beq\label{projector_triv_1}
{\bf a}=t\un\p_t\otimes \big(t^{-1}\td t+a_i(x)\td x^i\big)\,,
\eeq
which is an alternative way to express the connection on ${\cal P}_W$. We will refer to both $\mathbf a$ and $a_i(x)$ as the \emph{Weyl connection}.

This line bundle has an important representation given by a conformal class of metrics. Indeed, all the non-trivial representations are one-dimensional, and thus a representation of $\bb{R}_+$ is given by specifying a Weyl weight $w$. We call the corresponding associated bundle ${\cal E}_{w}$ and its sections respond to the group action as
\beq
T_x\mapsto s(x)^w T_x\,.
\eeq
Equivalently, this determines the transition functions on the associated bundle.

Suppose a conformal class $[g]$ of smooth metrics of signature $(p,q)$ is given on $M$, in which any two representatives $g$ and $g'$ are related by a smooth function ${\cal B}(x)$ as $g'_x={\cal B}(x)^{-2}g_x$, where $g_x$ is the value of $g$ at a point $x\in M$. Then, $(M,[g])$ is a conformal manifold. One can define a \emph{metric bundle} ${\cal G}$ as follows \cite{Fefferman:2007rka}:
\begin{defn}\label{Metric Bundle}
A \emph{metric bundle} $\mathcal{G}$ is the collection of pairs $(x,h)$ where $h= s^2 g_{x}$, $\forall s\in \mathbb{R}_{+}$ and $\forall x\in M$, which is equipped with 
\par
\ding{172} a dilatation map $\tilde\delta_{s}: \mathcal{G}\to \mathcal{G}$ such that $\tilde\delta_{s}(x,h)= (x,s^2 h)$, $\forall s\in \mathbb R_{+}$. 
\par
\ding{173} a projection map $\tilde\pi: \mathcal{G}\rightarrow M$ such that $(x,h)\mapsto x$;
\end{defn}
This definition simply identifies a conformal class of metrics with a bundle associated to the Weyl structure given by the weight $w=-2$ representation of $\bb{R}_+$. We note that it is isomorphic to the Weyl structure ${\cal P}_W$, as is any non-trivial associated bundle of ${\cal P}_W$.\footnote{Note that in \cite{Fefferman:2007rka}, the metric bundle $\cal G$ itself is treated as the principal $\bb R_+$-bundle through an isomorphism. Here we introduced the Weyl structure ${\cal P}_W$ and distinguish it from $\cal G$ in order to emphasize that a conformal class of metrics furnishes a representation of the group $\bb{R}_+$ with $w=-2$.} Under a trivialization, assigning an isomorphism between ${\cal P}_W$ and the metric bundle $\cal G$ can be thought of as a choice of representative $g$ of the conformal class $[g]$ if we identify
\begin{equation}
\label{eq:trivial}
(x,t)\in U_i\times\mathbb{R}_{+} \qquad \text{with }\qquad (x,t^2 g_{x})\in {\cal G}\,.
\end{equation}
Given $g\in[g]$, for any $p\in{\cal P}_W$, by means of the corresponding $(x,h)\in{\cal G}$ one can define a symmetric tensor $\mathbf{g}_{0}$ of type $(0,2)$ called the \emph{tautological tensor} that acts on vector fields $\un w_1,\un w_2 \in T_{p}{\cal P}_W$ as follows:
\begin{equation}\label{tautological_tensor}
\mathbf{g}_{0} (\un w_1,\un w_2)\equiv h(\pi_{*}\un w_1,\pi_{*}\un w_2)\,,
\end{equation}
which can be expressed as $\mathbf{g}_0=t^2\pi^*g$ under the identification in \eqref{eq:trivial}. 

If we pick another representative $g'_{x}={\cal B}(x)^{-2}g_{x}$ of the conformal class $[g]$, following the identification in \eqref{eq:trivial}, we obtain another isomorphism between ${\cal P}_W$ and ${\cal G}$ by identifying
\begin{equation}
(x,t')\in U_i\times\mathbb{R}_{+} \qquad \text{with }\qquad (x,t'^2 g'_{x})\in {\cal G}.
\end{equation}
It is easy to see that the two isomorphisms are related by setting $t'= {\cal B}(x)t$. To preserve the horizontal subspace on ${\cal P}_W$, from \eqref{eq:horizontaln} we can see that $a'_{i}(x)$ satisfies
\begin{equation}
a_{i}'(x)= a_{i}(x)- \p_{i}\text{ln}{\cal B}(x)\,.
\end{equation}
In the present circumstances, it is natural to replace the notion of conformal class $[g]$ by the {\it Weyl class} $[g,a]$, with the property
\beq
\forall (g,a), (g',a')\in [g,a],\quad \exists\; {\cal B}(x)\; {\rm such \; that} \; (g'_x,a'_x)=({\cal B}(x)^{-2}g_x,a_x-\td\ln{\cal B}(x))\,,
\eeq
where $\td$ is the exterior derivative on $M$.
\par
Before we proceed to define the Weyl-ambient space based on the Weyl structure ${\cal P}_W$, we would like to make a few remarks. Recall that for the Weyl-ambient metric \eqref{Weyl_ambient}, the coordinates $t$ and $x^{i}$ parametrize a codimension-1 null hypersurface ${\cal N}$ located at $\rho=0$. One can see that this surface is exactly a Weyl structure. In Section \ref{sec:topdown}, the degenerate ``induced metric" of $\tilde g$ on ${\cal N}$ is the tautological tensor, the induced metric $\gamma^{(0)}$ on $M$ is a representative $g$ in the conformal class, and the Weyl connection $a^{(0)}_i(x)$ on $M$ is the $a_i(x)$ in \eqref{projector_triv_1}. Thus, the Weyl class $[\gamma^{(0)},a^{(0)}]$ corresponds to $[g,a]$ in this section, and $(M,[g,a])$ defines a Weyl manifold. We will discuss more details of the role of the Weyl connection and the horizontal subspace it defines in Theorem \ref{diff_Weyl_normal_form} below. It is noteworthy that the projector in \eqref{projector_triv_1}, which defines the Weyl connection on ${\cal P}_W$, is a special case of the construction presented in \cite{Ciambelli:2019lap} with restricted diffeomorphisms.

\par
Now we will define a Weyl-ambient space for a Weyl manifold generalizing the definition of a Fefferman-Graham ambient space for a conformal manifold introduced in \cite{Fefferman:2007rka}. Consider a $(d+2)$-dimensional space $\tilde M$ which looks at least locally like ${\cal P}_W \times\bb{R}$ where each point can be labeled by $(p,\rho)$ with $\rho\in\bb R$. The inclusion map $\iota:{\cal P}_W\to\tilde M$ is defined such that $p\mapsto (p,0)$. By letting the  map $\delta_s$  act only on $p\in{\cal P}_W$,  we can extend $\delta_s$ to a map on $\tilde M$, which commutes with $\iota$. The vector field $\un T$ which generates the Weyl group action is extended to a vector field $\un{\cal T}=\iota_*\un T=t\un\p_t$ on $\tilde M$.
\begin{defn}\label{Ambient}
Suppose $M$ is a $d$-dimensional manifold equipped with a Weyl class $[g,a]$, and ${\cal P}_W$ is a Weyl structure over $M$. A pseudo-Riemannian space $(\tilde M,\tilde g)$ is called the \emph{Weyl-ambient space} for $(M,[g,a])$ if
\par
\ding{172} $\tilde M$ is a dilatation-invariant open neighborhood of ${\cal P}_W\times\{0\}$ in ${\cal P}_W\times\bb R$, and the pullback $\iota^*\tilde g$ is the tautological tensor $\mathbf{g}_0$ defined above;
\par
\ding{173} $\tilde g$ is a smooth metric on $\tilde M$ of signature $(p+1,q+1)$, which is homogeneous of degree 2 on $\tilde M$, i.e.,\ $\delta^*_s\tilde g=s^2\tilde g$, $\forall s\in\bb R_+$;
\par
\ding{174} $Ric(\tilde g)$ vanishes to infinite order at every point of ${\cal P}_W\times\{0\}$.
\end{defn}
\noindent Without condition \ding{174}, $(\tilde M,\tilde g)$ is called a \emph{Weyl pre-ambient space} for $(M,[g,a])$. Note that the condition (3) in \cite{Fefferman:2007rka} is presented differently when $d$ is even and odd, and $Ric(\tilde g)$ has an obstruction in the order $O(\rho^{d/2-1})$ for even $d$. 
Here we take the dimension to be a continuous complex variable, and so the Ricci-flatness condition always holds to infinite order. As explained in Section~\ref{sec:FG}, the obstruction at even dimension will be manifested by the pole of the expansion of $\tilde g$ at even $d$, which is identified as the extended Weyl-obstruction tensor. 
\par
Now we introduce the final ingredient in our Weyl-ambient construction---the \emph{Weyl-normal form}, which is a generalization of the \emph{normal form} defined in \cite{Fefferman:2007rka}.
 \begin{defn}\label{Weyl_normal_form}
 A Weyl pre-ambient space $(\ti M,\ti g)$ for $(M,[g,a])$ is said to be in \emph{Weyl-normal form} with acceleration $\un{\cal A}$ if
 \par
\ding{172} For each fixed $p\in \mathcal{P}_W$, the set of $\rho \in \mathbb{R}$ such that $(p,\rho)\in \ti M$ is an open interval $I_{p}\in\bb R$ containing $0$.
\par
\ding{173} For each $p\in {\cal P}_W$, the parametrized curve $C_p:I_{p} \to \tilde M$, $\rho\mapsto(p,\rho)$ has a tangent vector $\un{\cal U}$, whose acceleration $\un {\cal A}\equiv \tilde\nabla_{\un {\cal U}}\un {\cal U}$ satisfies $\tilde g(\un {\cal T}, \un {\cal A})=0$, where $\tilde\nabla$ is the Levi-Civita connection of $(\tilde M,\tilde g)$.
\par
\ding{174} Let $(t,x,\rho)$ represent a point in $\mathbb{R}_{+}\times M\times \mathbb{R}\simeq {\cal P}_W\times \mathbb{R}$ under the local trivialization induced by $g$. Then, at each point $(t,x,0)\in {\cal P}_W\times \{0\}$, the metric $\ti g$ takes the form
 \begin{equation}\label{form_1}
 \ti g|_{\rho=0}= \mathbf{g}_{0}  + 2t^2(t^{-1}\td t + a_{i}(x)\td x^{i})\td \rho\,,
 \end{equation}
 where $\mathbf{g}_{0}$ is the  tautological symmetric tensor defined in \eqref{tautological_tensor}.
 \end{defn} 
Definition \ref{Weyl_normal_form} is engineered for the purpose of generating the Weyl-ambient metric from the ``initial surface'' at $\rho=0$. At $\rho=0$, the Weyl-ambient metric we have seen in \eqref{Weyl_ambient} has the form \eqref{form_1}, which motivates condition \ding{174}. Since $\un{\cal T}=t\un\p_t$ everywhere in $\tilde M$, condition \ding{173} implies that the covector ${\cal A}$ of the acceleration does not have a $t$-component. Furthermore, one can also parametrize the accelerated curve $C_p$ such that $\tilde g(\un{\cal A},\un{\cal U})=0$, and let $\cal A$ have no $\rho$-component either.\footnote{Suppose $C_p$ has a parameter $\lambda$, then under a reparametrization $\lambda\to f(\lambda)$ we have $\un{\cal U}\to f'\un{\cal U}$, and the acceleration vector transforms  $\un{\cal A}\to f'^2 \un{\cal A} + f'\un{\cal U}(f)\un{\cal U}$, and thus $\tilde g(\un{\cal A}, \un{\cal U})$ can always be set to zero for non-null $\un{\cal U}$ by choosing an appropriate function $f$. For null $\un{\cal U}$ the condition holds automatically.} We will assume that $\rho$ is such a parametrization. Note that in the special case where $\un{\cal A}=0$, the $\rho$-coordinate lines are geodesics, and condition \ding{173} goes back to that of normal form in \cite{Fefferman:2007rka}, while condition \ding{174} will still be different as long as $a_{i}(x)$ are nonvanishing. The acceleration $\un{\cal A}$ encodes all the higher modes $a^{(k\geqslant 1)}_{i}(x)$ in the expansion \eqref{eq:aexpan} of $a_i(x,\rho)$, as we will see in Lemma \ref{Lemma}. In fact, if both $a_{i}(x)$ and $\un{\cal A}$ are zero, the mode $a_{i}(x,\rho)$ in \eqref{Weyl_ambient} vanishes. 
\par
The following Theorem is a generalization of Proposition 2.8 in \cite{Fefferman:2007rka}.
 \begin{thm}\label{diff_Weyl_normal_form}
 Let $(M,[g,a])$ be a Weyl manifold, with $(g,a)$ a representative of the Weyl class. Let ${\cal P}_W$ be the Weyl structure over $M$, and $(\ti M,\ti g)$ be a Weyl pre-ambient space for $(M,[g,a])$. Then, there exists a dilatation-invariant open set $\tilde M'\subset {\cal P}_W\times \mathbb{R}$ containing ${\cal P}_W\times \{0\}$ on which there is a unique diffeomorphism $\phi:\tilde M'\to\ti M$ commuting with dilatations with $\phi |_{{\cal P}_W\times \{0\}}$ being the identity map, such that the Weyl pre-ambient space $(\tilde M', \phi^{*}\ti g)$ is in Weyl-normal form with acceleration $\un{\cal A}'$.
\end{thm}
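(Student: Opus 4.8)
The plan is to mirror the proof of Proposition 2.8 of \cite{Fefferman:2007rka}, the two new features to accommodate being the connection coefficients $a_i(x)$ in the prescribed boundary form \eqref{form_1} and the bookkeeping of the generally nonzero acceleration $\un{\cal A}'$. Write $\Sigma := \iota({\cal P}_W)\subset\tilde M$; this is a hypersurface on which, by Definition \ref{Ambient}, $\iota^*\tilde g = \mathbf g_0$ is a degenerate tensor whose radical is spanned by the fundamental vector field $\un{\cal T}$. The first step is to produce along $\Sigma$ the canonical transverse ``initial velocity'' $\un{\cal U}_0$: the unique vector field along $\Sigma$ that is null, $\tilde g$-orthogonal to the horizontal distribution of ${\cal P}_W$, and normalized by $\tilde g(\un{\cal U}_0,\un{\cal T}) = t^2$ --- equivalently, whose $\tilde g$-pairing with $T\Sigma$ reproduces the conormal read off from the off-diagonal ($\td\rho$) term of \eqref{form_1}. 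Existence and uniqueness reduce to a short linear-algebra argument: since $\un{\cal T}$ lies in the $\tilde g$-orthogonal complement of $T_p\Sigma$ (being in the radical of $\mathbf g_0$), the affine set of vectors realizing the prescribed tangential pairings is the line $\un{\cal U}_* + \mathbb R\,\un{\cal T}$, and because $\tilde g(\un{\cal T},\un{\cal T})|_\Sigma = 0$ while $\tilde g(\un{\cal U}_*,\un{\cal T}) = t^2$, the nullity equation $\tilde g(\un{\cal U}_* + \lambda\un{\cal T},\un{\cal U}_* + \lambda\un{\cal T}) = 0$ is \emph{linear} in $\lambda$ with the unique root $\lambda = -\tilde g(\un{\cal U}_*,\un{\cal U}_*)/2t^2$; the resulting $\un{\cal U}_0$ is automatically transverse, and homogeneity of $\tilde g$ and of the conormal (Definition \ref{Ambient}) makes it equivariant under $\delta_s$.

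The second step is to define $\phi$ as a ``deformed exponential map.'' For each $p\in{\cal P}_W$ let $\sigma_p$ solve the second-order initial-value problem $\tilde\nabla_{\dot\sigma_p}\dot\sigma_p = \un{\cal A}$, $\sigma_p(0) = \iota(p)$, $\dot\sigma_p(0) = \un{\cal U}_0(p)$, where the forcing term $\un{\cal A}$ along the curve is the vector field singled out by the constraints in Definition \ref{Weyl_normal_form}\,\ding{173} (orthogonality to $\un{\cal T}$ and to $\dot\sigma_p$, vanishing $\rho$-component) together with the requirement that $\phi^*\tilde g$ acquire the vanishing components of the standard Weyl-ambient form \eqref{Weyl_ambient}; then set $\phi(p,\rho) := \sigma_p(\rho)$. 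When $a_i\equiv 0$ and $\un{\cal A}\equiv 0$ this is precisely the normal geodesic exponential of \cite{Fefferman:2007rka}. Smooth dependence on initial data gives a smooth $\phi$ near ${\cal P}_W\times\{0\}$; since $d\phi|_{\rho=0}$ is the identity on $T\Sigma$ and sends $\un\p_\rho$ to the transverse vector $\un{\cal U}_0$, it is invertible, so $\phi$ is a local diffeomorphism there; and since $\delta_s^*\tilde g = s^2\tilde g$ makes the initial-value problem $\delta_s$-equivariant, we may shrink the domain to a dilatation-invariant $\tilde M'$ on which $\phi$ commutes with dilatations and restricts to the identity on ${\cal P}_W\times\{0\}$.

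Third, one verifies that $(\tilde M',\phi^*\tilde g)$ is in Weyl-normal form with acceleration $\un{\cal A}' := (\phi^{-1})_*\un{\cal A}$. Conditions \ding{172} and \ding{173} hold by construction of the curves $\rho\mapsto(p,\rho)$, the only subtlety being that the constraints on $\un{\cal A}$ are \emph{consistent}, i.e.\ first integrals of the flow: using ${\cal L}_{\un{\cal T}}\tilde g = 2\tilde g$, which gives $\tilde g(\un{\cal U},\tilde\nabla_{\un{\cal U}}\un{\cal T}) = \tilde g(\un{\cal U},\un{\cal U})$ for $\un{\cal U} := \dot\sigma_p$, one finds $\frac{\td}{\td\rho}\tilde g(\un{\cal U},\un{\cal U}) = 2\tilde g(\un{\cal A},\un{\cal U})$ and $\frac{\td}{\td\rho}\tilde g(\un{\cal U},\un{\cal T}) = \tilde g(\un{\cal A},\un{\cal T}) + \tilde g(\un{\cal U},\un{\cal U})$, so the $\rho$-curves stay null and satisfy $\tilde g(\un{\cal U},\un{\cal T})\equiv t^2$; the remaining refinements ($\tilde g(\un{\cal A},\un{\cal U}) = 0$, no $\rho$-component) are arranged by the reparametrization in the footnote to Definition \ref{Weyl_normal_form}. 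Condition \ding{174} is then immediate, since $\phi^*\tilde g|_{\rho=0}$ is fixed by $\iota^*\tilde g = \mathbf g_0$ together with $\phi^*\tilde g(\un\p_\rho,-)|_{\rho=0} = \tilde g(\un{\cal U}_0,-)$, which is by construction the off-diagonal term of \eqref{form_1}. Finally, for uniqueness, if $\phi_1,\phi_2$ both put $(\tilde M,\tilde g)$ into Weyl-normal form then $\psi := \phi_2^{-1}\circ\phi_1$ is a dilatation-commuting diffeomorphism of a neighbourhood of ${\cal P}_W\times\{0\}$ in ${\cal P}_W\times\mathbb R$, identically the identity on ${\cal P}_W\times\{0\}$; the Weyl-normal-form conditions force the $\rho$-curves and their $\psi$-images to solve the same second-order initial-value problem --- same initial point, and same initial velocity via \eqref{form_1} --- so $\psi$ fixes each of them and hence $\psi = \mathrm{id}$, giving $\phi_1 = \phi_2$.

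I expect the main obstacle to be the well-posedness in the second step: showing that the orthogonality constraints of Definition \ref{Weyl_normal_form} together with the target vanishing-component conditions determine $\un{\cal A}$ \emph{unambiguously} as a smooth function of $(\sigma_p,\dot\sigma_p)$, and that the pullback metric then lands on the nose in the form \eqref{Weyl_ambient}, so that the higher modes $a^{(k\geqslant 1)}_i$ of the expansion \eqref{eq:aexpan} are exactly repackaged into $\un{\cal A}'$. By contrast, the linear algebra of the first step and the uniqueness argument are essentially formal, just as in \cite{Fefferman:2007rka}.
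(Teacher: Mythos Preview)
Your approach is essentially the paper's: both follow the outline of Proposition~2.8 in \cite{Fefferman:2007rka}, constructing the $(g,a)$-transversal initial velocity, launching accelerated curves from $\Sigma$ to build $\phi$, then verifying Weyl-normal form and uniqueness. Your linear-algebra construction of $\un{\cal U}_0$ in step~1 is correct and matches what the paper calls the $(g,a)$-transversal vector (the paper defers existence/uniqueness to Lemma~2.10 of \cite{Fefferman:2007rka}); your conservation-law computations and uniqueness argument in step~3 are also correct and in fact more explicit than the paper's.

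The one real confusion is in step~2, and it is precisely the ``main obstacle'' you flag. You are trying to \emph{determine} $\un{\cal A}$ from the orthogonality constraints of Definition~\ref{Weyl_normal_form}\,\ding{173} together with extra ``vanishing-component'' conditions read off from \eqref{Weyl_ambient}. That is not how the theorem works. The acceleration $\un{\cal A}$ is \emph{prescribed} data --- any smooth vector field along the curve satisfying $\tilde g(\un{\cal T},\un{\cal A})=0$ (and, after reparametrization, $\tilde g(\un{\cal U},\un{\cal A})=0$) will do --- and the $\un{\cal A}'$ appearing in the statement is simply its pullback via $\phi^{-1}$; uniqueness of $\phi$ is then relative to this fixed choice. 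Once $\un{\cal A}$ is given, $\tilde\nabla_{\dot\sigma}\dot\sigma = \un{\cal A}$ is an ordinary second-order ODE with smooth right-hand side, and well-posedness is immediate. The orthogonality constraints leave $d$ components of $\un{\cal A}$ free, and those are exactly the higher modes $a^{(k\geqslant 1)}_i$ you mention --- they are input, not output of this theorem. The further conditions $\tilde g_{00}=2\rho$ and $\tilde g_{0i}=0$ from \eqref{Weyl_ambient} that you seem to be invoking are \emph{not} part of Weyl-normal form (compare Lemma~\ref{Lemma}, which characterizes Weyl-normal form by the $\tilde g_{I\infty}$ components only); they are consequences of Ricci-flatness, established separately in Theorem~\ref{thm:Weyl_ambient}. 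So your anticipated obstacle dissolves once you treat $\un{\cal A}$ as given rather than as something to solve for.
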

This theorem indicates that given a representative pair $(g,a)$, any Weyl pre-ambient space can be put into Weyl-normal form by a diffeomorphism $\phi$. $(\tilde M, \ti g)$ and $(\tilde M', \phi^{*}\ti g)$ are also said to be ambient-equivalent (see Definition 2.2 in \cite{Fefferman:2007rka} for the precise definition of ambient equivalence).
The proof of this theorem will be presented in Subsection~\ref{sec:proofs}.

\par
Before we move on to the main result of this section, namely Theorem \ref{Weyl_ambient_existence}, let us introduce some useful notation. Given a local coordinate system $\{x^i\}$ ($i=1,\cdots,d$) on $M$, the fiber coordinate $t$ of ${\cal P}_W$ and the parameter $\rho$ naturally defines an \emph{ambient coordinate system} $\{t,x^i,\rho\}$ on $\tilde M$. Later on, we will follow \cite{Fefferman:2007rka} and use $I,J,\cdots=(0,i,\infty)$ to label the ambient coordinate indices, where $0$ labels the $t$-component and $\infty$ labels the $\rho$-component. It is also convenient to interpret the notations $(0,i,\infty)$ as representing the components in a trivialization ${\cal P}_W\times \bb R\simeq \bb R_+\times M\times\bb R$, even without specifying a choice of coordinates on $M$.

We will now present Theorem \ref{Weyl_ambient_existence}, which is a natural generalization of Theorem 2.9 of \cite{Fefferman:2007rka}, based on our definition of Weyl-normal form. As a corollary of this theorem, we will show that for a Weyl-ambient space in Weyl-normal form, the Weyl-ambient metric \eqref{Weyl_ambient} emerges from the initial surface uniquely under the Ricci-flatness condition. We emphasize again that we consider the dimension $d$ of the manifold $M$ formally as a complex parameter, and do not need to distinguish between even and odd dimensions. 
\begin{thm}\label{Weyl_ambient_existence}
Let $(M,[g,a])$ be a Weyl manifold, and let $(g,a)$ be a representative in the Weyl class.
 \begin{enumerate}[(A)]
 \item There exists a Weyl-ambient space $(\ti M,\ti g)$ for $(M,[g,a])$ which is in Weyl-normal form with acceleration $\un{\cal A}$.
 \item Suppose that $(\ti M_{1},\ti g_{1})$ and $(\ti M_{2},\ti g_{2})$ are two Weyl-ambient spaces for $(M,[g,a])$, both of which are in Weyl-normal form with acceleration $\un{\cal A}$. Then $\ti g_{1}- \ti g_{2}$ vanishes to infinite order at every point of ${\cal P}_W\times \{0\}$.
 \end{enumerate}
 \end{thm}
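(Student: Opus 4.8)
The plan is to reduce both parts to a recursive solution of the Ricci-flatness equations in the ambient coordinate system $\{t,x^i,\rho\}$, paralleling the proof of Theorem~2.9 of \cite{Fefferman:2007rka} while carrying along the Weyl connection $a_i(x,\rho)$. The first step is to identify the shape of a Weyl pre-ambient metric in Weyl-normal form. A direct computation — the Weyl-geometric counterpart of the verification in Proposition~2.8 of \cite{Fefferman:2007rka}, using homogeneity of degree $2$, the conditions $\ti g(\un{\cal T},\un{\cal A})=0$ and $\ti g(\un{\cal A},\un{\cal U})=0$ that strip the $t$- and $\rho$-components off the acceleration covector, and the prescribed $\rho=0$ data \eqref{form_1} — shows that in these coordinates the metric must be of the form \eqref{Weyl_ambient}, with $g_{ij}(x,0)$ the chosen representative metric, $a_i(x,0)=a_i(x)$, and the higher $\rho$-modes of $a_i(x,\rho)$ fixed by the acceleration $\un{\cal A}$ (this identification is Lemma~\ref{Lemma}). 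Thus the only undetermined data is the $\rho$-dependence of $g_{ij}(x,\rho)$ beyond its initial value.

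Next comes the recursion. Substituting the ansatz \eqref{Weyl_ambient} into $Ric(\ti g)=0$ and using homogeneity to strip off the overall $t^2$, the $ij$-components reduce to an equation of the schematic form
\begin{equation*}
\rho\,\p_\rho^2 g_{ij} + \left(1-\tfrac{d}{2}\right)\p_\rho g_{ij} - \tfrac12\,g^{kl}(\p_\rho g_{kl})\,g_{ij} + \big(\text{terms built from }g,\ \p g,\ a,\ \p a\big) = 0,
\end{equation*}
while the $\infty\infty$- and $\infty i$-components yield first-order constraints in $\rho$. Expanding via \eqref{eq:gexpan}--\eqref{eq:aexpan}, the coefficient of $g^{(n)}_{ij}$ in the $\rho^{n-1}$-order piece of the $ij$-equation is a nonzero numerical multiple of $(2n-d)$ that does \emph{not} involve the Weyl connection; since we take $d$ to be a generic complex parameter, $2n-d\neq 0$ for every positive integer $n$ (this is exactly where the even-$d$ obstruction sits, now displaced to a pole). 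Hence each $g^{(n)}_{ij}$ is determined uniquely by $g^{(k)}_{ij},a^{(k)}_i$ with $k<n$ together with $a^{(n)}_i$, which produces a formal power-series solution in $\rho$; dropping the separate $\rho^{d/2}$-series (which would require interior data) is legitimate for generic $d$. For part (A), Borel's lemma upgrades this series to a genuine smooth metric $\ti g$ of the form \eqref{Weyl_ambient} on a dilatation-invariant neighborhood of ${\cal P}_W\times\{0\}$; it is automatically homogeneous of degree $2$, non-degenerate near $\rho=0$ since $\ti g|_{\rho=0}$ in \eqref{form_1} is, and in Weyl-normal form with acceleration $\un{\cal A}$ directly from Definition~\ref{Weyl_normal_form} and Lemma~\ref{Lemma}. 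Finally $Ric(\ti g)$ vanishes to infinite order at $\rho=0$ by construction, provided one checks that the $\infty\infty$- and $\infty i$-constraints hold at $\rho=0$ (by consistency of the lowest-order relations, e.g.\ the value of $\tr(g^{(1)})$) and are then propagated to all orders by the contracted Bianchi identity $\ti\nabla^I\ti G_{IJ}=0$.

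For part (B), both $\ti g_1$ and $\ti g_2$ are in Weyl-normal form for the same representative $(g,a)$ and with the same acceleration $\un{\cal A}$, so by the first step each has the form \eqref{Weyl_ambient} with the identical function $a_i(x,\rho)$ and the identical initial value $g_{ij}(x,0)=g_{ij}(x)$. Since $Ric(\ti g_1)$ and $Ric(\ti g_2)$ each vanish to infinite order at $\rho=0$, all $\rho$-derivatives of the $ij$-, $\infty\infty$- and $\infty i$-equations vanish at $\rho=0$, so the Taylor coefficients of both metrics obey exactly the recursion of the previous paragraph. Induction on $n$, with the common base value $g^{(0)}_{ij}$, then shows that $\ti g_1$ and $\ti g_2$ have the same Taylor expansion at $\rho=0$, i.e.\ that $\ti g_1-\ti g_2$ vanishes to infinite order at every point of ${\cal P}_W\times\{0\}$.

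The step I expect to be the main obstacle is the decoupling used in the second paragraph: one must check that, once the Weyl connection is switched on, the full Ricci-flatness system still splits cleanly into the non-degenerate evolution recursion for $g^{(n)}_{ij}$ plus constraints, that the $a$- and $\p a$-dependent source terms never contaminate the leading coefficient $\propto(2n-d)$ (so the complex-$d$ argument still applies), and that the Hamiltonian- and momentum-type constraints remain mutually consistent and are preserved under $\rho$-evolution. The remaining work — computing the curvature of \eqref{Weyl_ambient} explicitly, and the Borel-summation step — is routine and follows \cite{Fefferman:2007rka}; these details are carried out in Subsection~\ref{sec:proofs}.
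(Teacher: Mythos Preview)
Your first step overreaches. Lemma~\ref{Lemma} shows only that Weyl-normal form pins down the components $\ti g_{0\infty}=t$, $\ti g_{i\infty}=t^{2}a_{i}(x,\rho)$, $\ti g_{\infty\infty}=0$; it says nothing about $\ti g_{00}$ or $\ti g_{0i}$. Homogeneity together with the $\rho=0$ data \eqref{form_1} forces only $\ti g_{00}=c(x,\rho)$ with $c(x,0)=0$ and $\ti g_{0i}=t\,b_{i}(x,\rho)$ with $b_{i}(x,0)=0$, and the acceleration conditions $\ti g(\un{\cal T},\un{\cal A})=\ti g(\un{\cal U},\un{\cal A})=0$ constrain only $\ti\Gamma_{\infty\infty I}$, which from \eqref{Christoffel_Initial_value} does not involve $c$ or $b_{i}$ at all. (Proposition~2.8 of \cite{Fefferman:2007rka}, which you cite, is the statement that a pre-ambient metric can be put into normal form by a diffeomorphism --- the analogue of Theorem~\ref{diff_Weyl_normal_form} here --- not a statement about the components of a metric already in normal form.) Thus the special form \eqref{Weyl_ambient} is \emph{not} a consequence of Weyl-normal form; it is a consequence of Ricci-flatness. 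The paper therefore carries $c$ and $b_i$ as genuine unknowns alongside $g_{ij}$ (see \eqref{_metric_formal_second_order}): at each inductive step the equations $\ti R_{00}=0$ and $\ti R_{0i}=0$ determine the next Taylor coefficients $\phi_{00},\phi_{0i}$, while $\ti R_{ij}=0$ determines $\phi_{ij}$ (see \eqref{Ricci_inductive}). Only \emph{after} existence and uniqueness are established does Theorem~\ref{thm:Weyl_ambient}, via Lemma~\ref{Lemma_Ricci_Weyl_ambient}, conclude that in fact $c=2\rho$ and $b_i=0$ to all orders.

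This gap is fatal for your part~(B) as written. Your uniqueness argument assumes both $\ti g_1$ and $\ti g_2$ already have the form \eqref{Weyl_ambient} and then matches the $g_{ij}$-recursion; but a competing Weyl-ambient metric in Weyl-normal form is a priori only of the more general shape \eqref{_metric_formal_second_order}, so you must run the induction on $c$, $b_i$ and $g_{ij}$ simultaneously to conclude agreement to infinite order. For part~(A) your shortcut happens to succeed, because Lemma~\ref{Lemma_Ricci_Weyl_ambient} gives $\ti R_{0I}=0$ automatically on the ansatz \eqref{Weyl_ambient} and so solving only the $ij$-block (plus propagating the $\infty$-constraints by Bianchi, as you say) does produce a Weyl-ambient space --- but you should invoke that lemma, not Lemma~\ref{Lemma}, to justify restricting to the ansatz.
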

\par
The proof of Theorem \ref{Weyl_ambient_existence} employs the following lemma. 
\begin{lemma}\label{Lemma}
Let $({\ti M},\ti g)$ be a Weyl pre-ambient space for $(M,[g,a])$. Suppose for each $p\in {\cal P}_W$, the set of all $\rho\in \mathbb{R}$ such that $(p,\rho)\in \ti  M$ is an open interval $I_{p}$ containing $0$. Let $g$ be a metric in the representative $(g,a)$ of the Weyl class, which provides a local trivialization ${\cal P}_W\times \mathbb{R}\simeq \mathbb{R}_{+}\times M\times \mathbb{R}$. Then $(\ti M,\ti g)$ is in Weyl-normal form with acceleration $\un{\cal A}$ if and only if one has on $\ti M$:
\begin{equation}\label{lemma_conditions}
\ti g_{0\infty}=t\,,\qquad \ti g_{i\infty}=t^{2} a_{i}(x,\rho)\,,\qquad \ti g_{\infty\infty}=0\,,
\end{equation}
where $a_{i}(x,\rho)\equiv a_{i}(x)+t^{-2} \int_{0}^{\rho}{\cal A}_{i}(t,x,\rho')\td\rho' $.
\end{lemma}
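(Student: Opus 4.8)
The plan is to work entirely in the ambient coordinate system $\{t,x^i,\rho\}$ attached to the chosen representative $g$, write $I,J,\dots=(0,i,\infty)$ as in the statement, and translate the three clauses of Definition~\ref{Weyl_normal_form} into equations for the components $\tilde g_{IJ}$. Along the curve $C_p$ the base point $(t,x)$ is fixed, so its tangent is $\un{\cal U}=\p_\rho=\p_\infty$, while $\un{\cal T}=t\p_t$ identically; hence clause \ding{172} of Weyl-normal form is verbatim the hypothesis of the lemma, and only \ding{173} and \ding{174} require work. The single computation that drives the whole argument is that of the acceleration covector: from $\un{\cal A}=\tilde\nabla_{\p_\rho}\p_\rho$ one gets, after lowering one index,
\[
{\cal A}_I=\tilde\Gamma_{I\infty\infty}=\p_\rho\tilde g_{I\infty}-\tfrac12\p_I\tilde g_{\infty\infty},
\]
so that ${\cal A}_\infty=\tfrac12\p_\rho\tilde g_{\infty\infty}$ and ${\cal A}_0=\p_\rho\tilde g_{0\infty}-\tfrac12\p_t\tilde g_{\infty\infty}$. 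Before splitting into the two directions I would record that on any Weyl pre-ambient space $\iota^*\tilde g=\mathbf g_0=t^2\pi^*g$ (Definition~\ref{Ambient}), and since $\pi^*g$ carries no $\td\rho$ this fixes $\tilde g_{00}|_{\rho=0}=\tilde g_{0i}|_{\rho=0}=0$ and $\tilde g_{ij}|_{\rho=0}=t^2g_{ij}$; consequently clause \ding{174} is \emph{equivalent}, for a Weyl pre-ambient space, to the three remaining boundary values $\tilde g_{0\infty}|_{\rho=0}=t$, $\tilde g_{i\infty}|_{\rho=0}=t^2a_i(x)$, $\tilde g_{\infty\infty}|_{\rho=0}=0$ read off from \eqref{form_1}.

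For the direction ``Weyl-normal form $\Rightarrow$ the component formulas'': by the parametrization convention adopted below Definition~\ref{Weyl_normal_form} one has $\tilde g(\un{\cal A},\un{\cal U})={\cal A}_\infty=0$, i.e.\ $\p_\rho\tilde g_{\infty\infty}=0$ along every $\rho$-line; integrating from $\rho=0$ with $\tilde g_{\infty\infty}|_{\rho=0}=0$ gives $\tilde g_{\infty\infty}\equiv0$ on $\tilde M$. Clause \ding{173} is $\tilde g(\un{\cal T},\un{\cal A})=t\,{\cal A}_0=0$, which — now that $\tilde g_{\infty\infty}\equiv0$ — reads $\p_\rho\tilde g_{0\infty}=0$; integrating against $\tilde g_{0\infty}|_{\rho=0}=t$ gives $\tilde g_{0\infty}\equiv t$. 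Finally ${\cal A}_i=\p_\rho\tilde g_{i\infty}$, so integrating from $\rho=0$ yields $\tilde g_{i\infty}(t,x,\rho)=t^2a_i(x)+\int_0^\rho{\cal A}_i(t,x,\rho')\,\td\rho'=t^2 a_i(x,\rho)$ with exactly the $a_i(x,\rho)$ of the statement. That this $a_i(x,\rho)$ is $t$-independent is a separate consequence of the degree-$2$ homogeneity: $\delta_s^*\tilde g=s^2\tilde g$ makes $\tilde\nabla$, hence $\un{\cal A}$, dilatation-invariant, so ${\cal A}_i$ is homogeneous of degree $2$ in $t$ and $t^{-2}\int_0^\rho{\cal A}_i$ depends only on $x,\rho$.

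For the converse, assume the three displayed identities with $a_i(x,\rho)$ as in the statement. Restricting to $\rho=0$ and using $a_i(x,0)=a_i(x)$ recovers the three boundary values above, which together with $\iota^*\tilde g=\mathbf g_0$ reconstruct $\tilde g|_{\rho=0}$ in the form \eqref{form_1}, so \ding{174} holds, and \ding{172} is the hypothesis. For \ding{173}, substitute the three identities into the formula for ${\cal A}_I$ above: ${\cal A}_0=\p_\rho(t)-\tfrac12\p_t(0)=0$ and ${\cal A}_\infty=\tfrac12\p_\rho(0)=0$, so both $\tilde g(\un{\cal T},\un{\cal A})=0$ and the parametrization condition hold. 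One also checks the mild self-consistency that this $\un{\cal A}$ is genuinely the acceleration appearing in the defining integral: ${\cal A}_i=\p_\rho\tilde g_{i\infty}=t^2\p_\rho a_i(x,\rho)={\cal A}_i(t,x,\rho)$, closing the loop.

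The whole argument is bookkeeping around the first-order relation $\p_\rho\tilde g_{I\infty}={\cal A}_I+\tfrac12\p_I\tilde g_{\infty\infty}$ integrated along the $\rho$-coordinate lines; no analytic input is needed, and Ricci-flatness plays no role. The two care-points I would watch are: \emph{(i)} separating which $\rho=0$ components are determined by $\iota^*\tilde g=\mathbf g_0$ from those determined by clause \ding{174}, so that \ding{174} is correctly recast as precisely the three boundary values; and \emph{(ii)} in the converse, the fact that $\un{\cal A}$ occurs inside its own defining integral, which is what makes the degree-$2$ homogeneity remark essential — it guarantees that $a_i(x,\rho)$ is a well-defined $t$-independent function and that the integral representation is consistent with $\un{\cal A}=\tilde\nabla_{\un{\cal U}}\un{\cal U}$.
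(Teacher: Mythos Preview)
Your proposal is correct and follows essentially the same approach as the paper: both arguments reduce the question to the first-order relation ${\cal A}_I=\tilde\Gamma_{\infty\infty I}=\p_\rho\tilde g_{I\infty}-\tfrac12\p_I\tilde g_{\infty\infty}$ and integrate componentwise along the $\rho$-lines from the boundary values supplied by \eqref{form_1}. Your write-up is in fact slightly more careful than the paper's in two respects---you separate explicitly which $\rho=0$ data comes from $\iota^*\tilde g=\mathbf g_0$ versus from clause~\ding{174}, and you treat the converse direction as a genuine step rather than leaving it implicit---but the underlying mechanism is identical.
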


\begin{proof}
Suppose $\tilde g$ satisfies \eqref{lemma_conditions}, then it follows from the condition $\iota^{*}\ti g= \mathbf{g}_{0}$ for the pre-ambient space that $\ti g|_{\rho=0}$ must have the form \eqref{form_1}. Thus, all we have to prove is that for $\ti g$ satisfying \eqref{form_1} at $\rho=0$, the condition that the $\rho$-coordinate lines have acceleration $\un{\cal  A}$ with $\tilde g(\un{\cal T}, \un{\cal A})=0$ is equivalent to \eqref{lemma_conditions}. The fact that the $\rho $-coordinate lines have an acceleration $\un{\cal A}$ implies
\begin{equation}\label{Gamma_condition_rho_lines}
\ti\Gamma_{\infty \infty I}= \mathcal{A}_{I}\,,
\end{equation}
where $\ti\Gamma_{IJK}\equiv\ti g_{KL}\ti\Gamma^{L}{}_{IJ}$.
The condition $\tilde g(\un{\cal T},\un{\cal A})=0$ leads to ${\cal A}_0=0$. As we have mentioned, one can also parametrize the curve $C_p:I_p\to\tilde M$ such that $\tilde g(\un{\cal U}, \un{\cal A})=0$, then we also have ${\cal A}_{\infty}=0$, and thus $\un{\cal A}_{I}=\left({\cal A}_{0}, {\cal A}_{i},{\cal A}_{\infty}\right)= \left(0,t^{2}\varphi_{i}(x,\rho),0\right)$. The functions $\varphi_{i}(x,\rho)$ are considered as external input and cannot be determined from the initial conditions. The factor $t^2$ is derived from the homogeneity property of $\ti g$ and \eqref{Gamma_condition_rho_lines}. If we set $I=\infty$ in \eqref{Gamma_condition_rho_lines} we get 
\begin{equation}
\ti\Gamma_{\infty\infty \infty}= {\cal A}_{\infty}= 0\,\implies\,\p_{\rho}g_{\infty\infty}=0\,\implies\, g_{\infty\infty}=0\,,
\end{equation}
where in the last step we used the initial condition $g_{\infty\infty}|_{\rho=0}=0$. Similarly, setting $I=0$ in \eqref{Gamma_condition_rho_lines} we find
\begin{equation}
\p_{\infty }g_{\infty 0}=0\, \implies\, g_{\infty 0}= t\,,
\end{equation}
where we used the initial condition $g_{0\infty}|_{\rho=0}=t$. Finally, setting $I=i$ yields
\begin{equation}
\p_{\rho}g_{\infty i}={\cal A}_{i}(t,\rho;x)\implies g_{\infty i}= t^{2}a_{i}(x)+ t^{2}\int_{0}^{\rho}\varphi_{i}(\rho;x)\td\rho \equiv  t^2 a_{i}(\rho;x)\,,
\end{equation}
where we used the initial condition $\ti g_{\infty i}|_{\rho=0}= t^2 a_{i}(x)$.
 \end{proof}
The main logic of the proof of Theorem \ref{Weyl_ambient_existence} will follow part of Section 3 in \cite{Fefferman:2007rka}. To show part (A) of Theorem \ref{Weyl_ambient_existence}, namely the existence of the Weyl-ambient space ${\tilde M}$ in Weyl-normal form, we need to show the following: for a Weyl manifold $(M,[g,a])$, given a representative $(g,a)$ of the Weyl class and $a_i(x,\rho)$ determined by $\un{\cal A}$, there exists a metric $\ti g$ on an open neighborhood $\ti M$ of ${\cal P}_W\times\{0\}$ with the following properties:
 \begin{enumerate}[(1)]
 \item $\delta^{*}_{s}\ti g= s^2 \ti g$, $\forall s>0$ (homogeneity property);
 \item $\ti g = t^2 g(x)+ 2t^2(t^{-1}\td t+ a_{i}(x)\td x^{i})\td\rho$ when $\rho=0$;
 \item $\ti g_{0\infty}=t,\quad \ti g_{i\infty}=t^{2} a_{i}(x,\rho),\quad \ti g_{\infty\infty}=0$;
 \item $\ti Ric(\ti g)=0$ to infinite order at $\rho=0$.
 \end{enumerate}
 The first property above is the homogeneity property which is still taken to be true for the Weyl-ambient metric. Property (3) is equivalent to condition \ding{173} of  Definition \ref{Weyl_normal_form} due to Lemma \ref{Lemma}, which indicates that $\ti g_{I\infty}$ components are known, while the rest are now regarded as unknown functions. Property (2) can be considered as the initial data of these components at the initial surface at $\rho=0$, while the Ricci-flatness property (4) is a system of partial differential equations that one can solve to find the metric components beyond the initial surface. We will show that this is a well defined initial value problem so that the unknown components of the Weyl-ambient metric can be uniquely determined in a series expansion in $\rho$, which will prove part (B) of Theorem~\ref{Weyl_ambient_existence}. The complete proof will be presented in Subsection~\ref{sec:proofs}.

As an important corollary, we now show in Theorem \ref{thm:Weyl_ambient} that the metric $\tilde g$ determined from Theorem~\ref{Weyl_ambient_existence} has exactly the form of the Weyl-ambient metric \eqref{Weyl_ambient}. First we need the following lemma.
\begin{lemma}\label{Lemma_Ricci_Weyl_ambient}
Suppose a metric $\ti g$ has the following form:
\begin{equation}\label{WFG_ambient}
 \ti g_{IJ}=\left(\begin{array}{ccc}
2\rho & 0 &  t\\
0 & t^2 g_{ij}(x,\rho)& t^2 a_{j}(x,\rho)\\
t&t^2 a_{i}(x,\rho)&0
\end{array}\right)\,.
\end{equation}
Then the Ricci curvature of $\ti g$ satisfies $\ti R_{0I}=0$.
\end{lemma}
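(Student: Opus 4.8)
The plan is to compute the relevant Christoffel symbols and Ricci components of the metric \eqref{WFG_ambient} directly, exploiting the homogeneity in $t$ and the special structure of the $t$-$\rho$ block. First I would record the inverse metric $\tilde g^{IJ}$. Since the $00$, $0\infty$, $\infty\infty$ entries together with the $i\infty$ entries have a simple form, the inverse can be found by block inversion: writing $\gamma^{ij}$ for the inverse of $g_{ij}$, one finds $\tilde g^{ij}=t^{-2}\gamma^{ij}$, $\tilde g^{0i}=-a^i$ (with $a^i\equiv\gamma^{ij}a_j$), $\tilde g^{0\infty}=t^{-1}$, $\tilde g^{00}=t^{-2}(\rho\, a^2\cdot 2 - \text{something})$... more precisely $\tilde g^{00}=2\rho t^{-2} + $ correction from the $a_i$ terms, and $\tilde g^{\infty\infty}=\tilde g^{\infty i}=0$. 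The vanishing of the $\infty$-row of the inverse metric (except $\tilde g^{0\infty}$) is the key structural fact I would isolate early, since it drastically limits which terms survive.

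Next I would use the homogeneity property: under $t\to st$ the metric scales as $\tilde g\to s^2\tilde g$, equivalently $\mathcal L_{\un{\cal T}}\tilde g=2\tilde g$ with $\un{\cal T}=t\un\p_t$. This means every metric component is a monomial in $t$ of the appropriate degree, so $\p_t$ acting on components just multiplies by the degree over $t$; this lets me compute all $\tilde\Gamma^I{}_{0J}$ and $\tilde\Gamma^I{}_{JK}$ involving the $0$-index almost algebraically, with no $x$- or $\rho$-derivatives entering through the $t$-dependence. In particular $\un{\cal T}$ being a homothetic (conformal Killing with constant factor) vector field implies $\tilde\nabla_I\tilde\nabla_J (\text{dilaton}) $-type identities; concretely, for a homothety one has $\tilde R_{IJ}\,{\cal T}^J$ expressible via $\tilde\nabla$ of $\tilde\nabla_{(I}{\cal T}_{J)}$, and since $\tilde\nabla_{(I}{\cal T}_{J)}=\tilde g_{IJ}$ (the homothety equation), this forces $\tilde R_{IJ}{\cal T}^J=0$. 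Because ${\cal T}^J=t\delta^J_0$, this is exactly the statement $\tilde R_{0I}=0$ after lowering. So the cleanest route: (i) verify $\un{\cal T}=t\un\p_t$ satisfies $\tilde\nabla_{(I}{\cal T}_{J)}=\tilde g_{IJ}$ for the metric \eqref{WFG_ambient}, which is immediate from homogeneity plus $\iota$-structure; (ii) invoke the standard identity that a homothetic vector field $\un{\cal T}$ satisfies $\tilde\nabla_I\tilde\nabla_J {\cal T}_K = \tilde R_{KJIL}{\cal T}^L$ (from the homothety equation $\tilde\nabla_I{\cal T}_J = \tilde g_{IJ} + F_{IJ}$ with $F$ the antisymmetric part, here $F=0$ since ${\cal T}_J = \tilde g_{0J}\cdot$(appropriate factor) is exact-ish); (iii) contract to get $\tilde R_{JL}{\cal T}^L=0$, i.e. $\tilde R_{0J}=0$.

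I would present both the conceptual homothety argument and note that it can alternatively be checked by brute-force computation of $\tilde\Gamma$'s. The main obstacle is making step (i) fully rigorous: one must check that ${\cal T}_I = \tilde g_{IJ}{\cal T}^J$ has the property that its symmetrized covariant derivative equals $\tilde g_{IJ}$ exactly, which requires $\delta_s^*\tilde g = s^2\tilde g$ to hold not just at $\rho=0$ but on all of $\tilde M$ — this is built into the form \eqref{WFG_ambient} since every component manifestly scales correctly ($\tilde g_{00}=2\rho$ is $t$-independent but $\rho$ is dilatation-invariant in the trivialization, $\tilde g_{0\infty}=t$, $\tilde g_{ij}\sim t^2$, etc.), so $\mathcal L_{\un{\cal T}}\tilde g = 2\tilde g$ genuinely holds. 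A secondary subtlety: one should double-check that $\tilde g_{0J}$ being proportional to ${\cal T}_J/(\text{scalar})$ gives a closed one-form so the antisymmetric part $F_{IJ}$ of $\tilde\nabla_I{\cal T}_J$ vanishes; here ${\cal T}_I\td x^I = 2\rho\, t\,\td t + t\,\cdot\,(t\,\td\rho + \dots)$, and one verifies $\td({\cal T}_I\td x^I)$ vanishes (or that its contribution drops out of the contracted identity). Granting these, the Ricci identity for homotheties yields $\tilde R_{0I}=0$ with essentially no computation, which I believe is the intended slick proof; the fallback is the explicit but routine Christoffel-symbol calculation.
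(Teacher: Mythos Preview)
Your homothety argument is correct and is a genuinely different route from the paper's proof. The paper simply writes out the inverse metric and Christoffel symbols explicitly (their equations \eqref{inverse_Weyl_ambient} and \eqref{Christoffel_Weyl_ambient}) and plugs into the Ricci formula \eqref{Ricci} to verify $\ti R_{0I}=0$ by direct computation. Your approach instead exploits that $\un{\cal T}=t\un\p_t$ is a \emph{gradient} homothety: one checks ${\cal T}_I\,\td x^I = 2\rho t\,\td t + t^2\,\td\rho = \td(\rho t^2)$, so the antisymmetric part $F_{IJ}=\ti\nabla_{[I}{\cal T}_{J]}$ vanishes, and together with $\mathcal L_{\un{\cal T}}\ti g=2\ti g$ this gives $\ti\nabla_I{\cal T}_J=\ti g_{IJ}$. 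Then $\ti\nabla_K\ti\nabla_I{\cal T}_J=0$, and the Ricci identity forces $\ti R_{0JKI}=0$, which upon contraction yields $\ti R_{0I}=0$. This is shorter and in fact proves the stronger statement $\ti R_{0IJK}=0$ (which the paper uses elsewhere, e.g.\ below Definition~\ref{Weyl_ext_Obst}). The paper's brute-force route, on the other hand, produces the explicit Christoffel symbols as a byproduct, which are reused in the inductive proof of Theorem~\ref{Weyl_ambient_existence}.

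Two points to tighten: your preliminary inverse-metric claims are off (for instance $\ti g^{\infty i}$ and $\ti g^{\infty\infty}$ do \emph{not} vanish when $a_i\neq 0$; compare \eqref{inverse_Weyl_ambient}), but since you abandon that route this does not affect the argument. And you should actually carry out the one-line check that ${\cal T}_I\,\td x^I$ is exact rather than leaving it as ``one verifies''---it is the only nontrivial input beyond homogeneity.
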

\begin{proof}
For $\ti g$ of the form \eqref{WFG_ambient}, we can write the inverse metric as
\begin{equation}\label{inverse_Weyl_ambient}
\begin{split}
\ti g^{IJ}&= \frac{1}{1+ 2\rho a^2}
\left(\begin{array}{ccc}
a^2 & -t^{-1}a^{j} &t^{-1}\\
-t^{-1}a^{i} &t^{-2}(1+ 2\rho a^2 )g^{ij}- 2t^{-2}\rho a^{i}a^{j}&2t^{-2}\rho a^{i}\\
t^{-1}&2t^{-2}\rho a^{j}& -t^{-2}2\rho
\end{array}\right)\,,
\end{split}
\end{equation}
and the Christoffel symbols $\ti\Gamma_{IJK}=\ti g_{KL}\ti\Gamma^{L}{}_{IJ}$ are given by
\begin{equation}\label{Christoffel_Weyl_ambient}
\begin{split}
\ti \Gamma_{IJ0}&=
\left(\begin{array}{ccc}
0& 0 &1\\
0 &-tg_{ij}&-ta_{i}\\
1&-ta_{j}& 0
\end{array}\right)\,,\quad
\ti \Gamma_{IJ\infty}=
\left(\begin{array}{ccc}
0& t a_{j} &0\\
 t a_{i} &-t^{2}\left(\frac{1}{2}\p_\rho g_{ij} - \p_{(i}a_{j)}\right)&0\\
0&0& 0
\end{array}\right)\,,\\
\ti \Gamma_{IJk}&=
\left(\begin{array}{ccc}
0& t g_{jk} &t a_{k}\\
t g_{ik} &t^{2}\Gamma_{ijk}&\frac{t^2}{2}\left(\p_\rho g_{ik} + F_{ik}\right)\\
 t a_{k}&\frac{t^2}{2}\left(\p_\rho g_{jk} + F_{jk}\right)& t^{2}\p_{\rho}a_{k}
\end{array}\right)\,,
\end{split}
\end{equation}
where $\Gamma_{ijk}=g_{kl}\Gamma^{l}{}_{ij}$ are the Christoffel symbols of $g_{ij}(x,\rho)$, and $F_{jk}= \p_{j}a_{k}- \p_{k}a_{j}$. Plugging \eqref{inverse_Weyl_ambient} and  \eqref{Christoffel_Weyl_ambient} into the Ricci curvature [see \eqref{Ricci}] we can compute $\ti R_{0I}$ explicitly and find that $\ti R_{0I}=0$.
\end{proof}

\begin{thm}
\label{thm:Weyl_ambient}
Suppose $(M,[g,a])$ is a Weyl manifold. Let $(\ti M,\ti g)$ be the unique ambient space for $(M,[g,a])$ which is in Weyl-normal form with acceleration $\un{\cal A}$. Then, for any representative $(g,a)$, the uniquely determined metric $\tilde g$ has the following form 
\begin{equation}
\ti g = 2\rho \td t^2 + 2td\rho\left(\frac{\td t}{t} +  a_{i}(x,\rho)\td x^{i}\right) + t^2 g_{ij}(x,\rho)\td x^{i}\td x^{j}\,,
\end{equation}
where $a_{i}(x,\rho)\equiv a_{i}(x)+t^{-2} \int_{0}^{\rho}{\cal A}_{i}(t,x,\rho')$. This metric is exactly the Weyl-ambient metric introduced in \eqref{Weyl_ambient}. 
\end{thm}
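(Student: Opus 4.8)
The plan is to work entirely inside the Weyl-normal form guaranteed by Theorem~\ref{Weyl_ambient_existence}, and to nail down every component of $\ti g$ in the ambient coordinates $\{t,x^i,\rho\}$ (labels $(0,i,\infty)$) by successively imposing the structural data already at hand: Lemma~\ref{Lemma} fixes the whole $\rho$-row of the metric, the homogeneity property \eqref{eq:homo} fixes the $t$-dependence of every component, the normal-form condition \eqref{form_1} fixes the initial data at $\rho=0$, and Ricci-flatness propagates what is left off of ${\cal P}_W\times\{0\}$. The most economical way to organize the end of the argument is to reverse it: show that the explicit ansatz \eqref{Weyl_ambient} is itself a Weyl-ambient space in Weyl-normal form with acceleration $\un{\cal A}$, and then invoke the uniqueness statement~(B) of Theorem~\ref{Weyl_ambient_existence}.

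First I would do the bookkeeping. By Lemma~\ref{Lemma}, being in Weyl-normal form with acceleration $\un{\cal A}$ is equivalent to $\ti g_{0\infty}=t$, $\ti g_{i\infty}=t^2 a_i(x,\rho)$, $\ti g_{\infty\infty}=0$, where $a_i(x,\rho)=a_i(x)+t^{-2}\int_0^\rho{\cal A}_i(t,x,\rho')\,\td\rho'$; moreover the proof of that lemma gives ${\cal A}_I=(0,t^2\varphi_i(x,\rho),0)$, so $a_i(x,\rho)=a_i(x)+\int_0^\rho\varphi_i(x,\rho')\,\td\rho'$ is in fact independent of $t$, exactly as needed in \eqref{Weyl_ambient}. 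Homogeneity then forces $\ti g_{00}=\ti g_{00}(x,\rho)$, $\ti g_{0i}=t\,b_i(x,\rho)$, $\ti g_{ij}=t^2 g_{ij}(x,\rho)$ with all three functions $t$-independent, and \eqref{form_1} supplies the initial values $\ti g_{00}(x,0)=0$, $b_i(x,0)=0$, $g_{ij}(x,0)=g_{ij}(x)$. What remains is to show $\ti g_{00}=2\rho$ and $b_i\equiv 0$ everywhere, and that $g_{ij}(x,\rho)$ is determined.

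For the remaining step I would turn to the explicit ansatz: take $\ti g$ of the form \eqref{Weyl_ambient}, i.e.\ the matrix \eqref{WFG_ambient}, with $a_i(x,\rho)=a_i(x)+\int_0^\rho\varphi_i(x,\rho')\,\td\rho'$ and $g_{ij}(x,\rho)$ an undetermined formal series in $\rho$ with $g_{ij}(x,0)=g_{ij}(x)$. This $\ti g$ is homogeneous of degree $2$ by inspection; its $\rho$-row matches \eqref{lemma_conditions}, so Lemma~\ref{Lemma} puts it in Weyl-normal form with acceleration ${\cal A}_i=\p_\rho(t^2 a_i)=t^2\varphi_i$; and its pullback to ${\cal P}_W\times\{0\}$ is the tautological tensor $\mathbf g_0$. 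Crucially, Lemma~\ref{Lemma_Ricci_Weyl_ambient} tells us $\ti R_{0I}=0$ holds identically for metrics of this form, so the only Einstein equations left are $\ti R_{ij}=0$, $\ti R_{i\infty}=0$, $\ti R_{\infty\infty}=0$. As in the Fefferman--Graham case, $\ti R_{ij}=0$ is a recursion determining the coefficients $g^{(n)}_{ij}$ of \eqref{eq:gexpan} in terms of the lower-order $g^{(k)}_{ij}$ and $a^{(k)}_{i}$ (the relevant expansions being those of Appendix~\ref{app:B0}), and the recursion is non-degenerate precisely because $d$ is kept complex; while $\ti R_{i\infty}=0$ and $\ti R_{\infty\infty}=0$ hold at $\rho=0$ by a direct check on \eqref{form_1} and then persist to all orders as constraints, by the contracted Bianchi identity. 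Hence \eqref{Weyl_ambient} is genuinely a Weyl-ambient space for $(M,[g,a])$ in Weyl-normal form with acceleration $\un{\cal A}$; by Theorem~\ref{Weyl_ambient_existence}(B) it agrees to infinite order at ${\cal P}_W\times\{0\}$ with the metric constructed there, and since both are formal power series in $\rho$ they coincide. In particular $\ti g_{00}=2\rho$ and $\ti g_{0i}=0$, which is the content of the theorem.

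The main obstacle is this last point---forcing $\ti g_{00}=2\rho$ and $\ti g_{0i}=0$. Unlike the $\rho$-row, these components are not controlled by the Weyl-normal-form conditions alone, so they must be read off from the curvature equations, and one has to be certain that imposing $\ti R_{0I}=0$ on the full ansatz does not clash with the $\ti R_{ij}=0$ recursion. Lemma~\ref{Lemma_Ricci_Weyl_ambient} is exactly what defuses this: it shows the proposed form solves $\ti R_{0I}=0$ automatically, so no over-determination occurs. The residual labor---the order-by-order solution for $g_{ij}(x,\rho)$ and the check that the $\ti R_{i\infty}$, $\ti R_{\infty\infty}$ constraints propagate---is routine and belongs to the proof of Theorem~\ref{Weyl_ambient_existence} carried out in Subsection~\ref{sec:proofs}.
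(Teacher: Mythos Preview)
Your approach is correct and reaches the same conclusion, but the paper takes a more direct route. Both proofs hinge on Lemma~\ref{Lemma_Ricci_Weyl_ambient} as the key input; they differ in how the argument is organized. You construct the ansatz \eqref{WFG_ambient} as a candidate Weyl-ambient space, use Lemma~\ref{Lemma_Ricci_Weyl_ambient} to dispose of $\tilde R_{0I}=0$, solve the remaining Ricci equations for $g_{ij}(x,\rho)$ within that ansatz, and then invoke uniqueness (Theorem~\ref{Weyl_ambient_existence}(B)) to identify your candidate with the metric already produced there. The paper instead works inside the unique metric from Theorem~\ref{Weyl_ambient_existence} and proves $\tilde g_{00}=2\rho$, $\tilde g_{0i}=0$ by induction on the $\rho$-order: assuming this holds for $\tilde g^{[m-1]}$, Lemma~\ref{Lemma_Ricci_Weyl_ambient} gives $\tilde R^{[m-1]}_{0I}=0$ exactly, and then the first two lines of the already-derived inductive formula \eqref{Ricci_inductive} force $\phi_{00}=\phi_{0i}=0$ at order $m$. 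The paper's argument is shorter because it reuses the recursion machinery of Subsection~\ref{sec:proofs} rather than re-running it; your argument is conceptually self-contained (existence of the explicit form plus uniqueness) but effectively re-derives the existence half of Theorem~\ref{Weyl_ambient_existence} under a restricted ansatz.
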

\begin{proof}
Based on Theorem \ref{Weyl_ambient_existence}, all we have to prove is that $\tilde g_{00}=2\rho$ and $\tilde g_{0i}=0$ to all orders. Let $\ti g^{(m)}$ be the $m^{th}$ order of $\ti g$, and let  $\ti g^{[k]}$ represent $\tilde g$ with all the orders higher than $O(\rho^{k})$ in the $\rho$-expansion excluded, i.e. $\tilde g=\tilde g^{[k]}+O(\rho^{k+1})$. From \eqref{metric_components_order_1} we find to the first order that $\tilde g^{[1]}_{00}=2\rho$ and $\tilde g^{[1]}_{0i}=0$. Assuming that $\tilde g^{[m-1]}_{00}=2\rho$ and $\tilde g^{[m-1]}_{0i}=0$, it follows from Lemma \ref{Lemma_Ricci_Weyl_ambient} that $\tilde R^{[m-1]}_{00}=\tilde R^{[m-1]}_{0i}=0$. Then, from \eqref{Ricci_inductive} we obtain that $\phi_{00}=\phi_{0i}=0$, and hence $\tilde g^{(m)}_{00}=\tilde g^{(m)}_{0i}=0$ [see \eqref{eq:g+phi}], $\forall m>1$. Therefore, by induction we can deduce to infinite order that $\tilde g_{00}=2\rho$ and $\tilde g_{0i}=0$, which completes the proof.
\end{proof}

\subsection{Proofs}
\label{sec:proofs}
\subsubsection{Proof of Theorem \ref{diff_Weyl_normal_form}}
To prove Theorem \ref{diff_Weyl_normal_form}, we first need to introduce a \emph{$(g,a)$-transversal vector} (generalized from the concept of a $g$-transversal vector in \cite{Fefferman:2007rka}), where the horizontal subspace $H_p$ defined by the Weyl connection plays an important role. Once we pick a representative $(g,a)$ in the Weyl class, $g$ induces an isomorphism between ${\cal P}_W$ and $\cal G$ through \eqref{eq:trivial}, which determines the fiber coordinate $t$ of ${\cal P}_W$; $a$ defines for any $p\in {\cal P}_W$ a horizontal subspace $H_p\subset T_p{\cal P}_W$ given in \eqref{eq:HV}, which can also be viewed as a subspace of $T_{(p,0)}({\cal P}_W\times \mathbb{R})$ via the inclusion map $\iota : {\cal P}_W\to {\cal P}_W\times \mathbb{R}$. We define a vector $\un {\cal V}\in T_{(p,0)}({\cal P}_W\times \mathbb{R})$ to be a \emph{$(g,a)$-transversal vector} for $\tilde g$ if it satisfies the following three conditions at $(p,0)$:
\begin{equation}\label{Weyl g-transversal vector}
\text{\ding{172} }\ti g(\un {\cal V}, \un{\cal T})= t^2\,,\qquad \text{\ding{173} }\ti g(\un{\cal V},\un{\cal H})=0\quad\forall\, \un{\cal H}\in H_{p}\,,\qquad\text{\ding{174} }\ti g(\un{\cal V},\un{\cal V})=0\,.
\end{equation}
When $a_i(x)=0$ in \eqref{projector_triv_1}, i.e.,\ $\mathbf{a}=\un\p_{t} \otimes \td t$, the $(g,a)$-transversal vector for $\tilde g$ goes back to the $g$-transversal vector for $\tilde g$ defined in \cite{Fefferman:2007rka}. From \eqref{form_1} one can see that for $(\tilde M,\tilde g)$ in Weyl-normal form, $\un \p_{\rho}$ is $(g,a)$-transversal for $\tilde g$ at $(p,0)$.
Following the proof of Lemma 2.10 in \cite{Fefferman:2007rka}, it is straightforward to show that the $(g,a)$-transversal vector is unique and dilatation-invariant (i.e.\ $\delta_{s*}V_p=V_{\delta_s(p)}$) for $\tilde g$ at $(p,0)$.  

The proof of Theorem \ref{diff_Weyl_normal_form} proceeds similar to the proof of Proposition 2.8 in \cite{Fefferman:2007rka}; one only has to let the $g$-transversal vector $\un{\cal V}$ to be a $(g,a)$-transversal vector. Here we will not repeat all the details but only outline the proof and elaborate on the steps when the Weyl connection $a$ is relevant. 
\begin{proof}[Proof of Theorem \ref{diff_Weyl_normal_form}]
Suppose $p\in{\cal P}_W$ and let $\un {\cal V}_{p}$ be the $(g,a)$-transversal vector for $\tilde g$ at $(p,0)$.
One can parametrize the (non-geodesic) curve $C_p:\lambda\mapsto \phi(p,\lambda)\in \ti M$ with initial conditions
\begin{equation}
\label{phi_initial_conditions}
\phi(p,0)= (p,0)\,,\qquad \p_{\lambda}\phi(p,\lambda)|_{\lambda=0}= \un{\cal V}_{p}\,,
\end{equation} 
with the ``equation of motion'' $\nabla_{\un{\cal U}}\un{\cal U}=\un{\cal A}$, where $\un{\cal U}= \frac{\td}{\td\lambda}$ is the tangent vector to the accelerated curve $C_p$, and the acceleration vector $\un{\cal A}$ satisfies $\tilde g(\un{\cal T}, \un{\cal A})=0$. Suppose the domain of $\phi$ is $\tilde U_0\subset{\cal P}_W\times\bb R$, which is dilatation-invariant. Then $\phi:\tilde U_0\to\tilde M$ is a smooth map commuting with dilatation, and it can be proved that there exists  $\tilde U_1\subset \tilde U_0$ as a dilatation-invariant neighborhood of ${\cal P}_W\times\bb \{0\}$ such that $\phi:\tilde U_1\to \tilde M$ is a diffeomorphism (see \cite{Fefferman:2007rka}).

\par
Furthermore, one can define $\tilde M'=\{(p,\lambda)\in\tilde U_1|\,(p,\mu)\in\tilde U_1, \forall\mu\in\bb R$ satisfying $|\mu|\leqslant|\lambda|\}$. It is easy to verify that $(\tilde M',\phi^*\tilde g)$ satisfies the conditions of Definition \ref{Ambient} and thus is a Weyl pre-ambient space for $(M,[g,a])$. It follows that for each $p\in {\cal P}_W$, the set for $\lambda$ such that $(p,\lambda)\in\tilde M'$ is an open interval $I_p$ containing $0$, and the parametrized curve $C'_p:\lambda\mapsto(p,\lambda)$ with tangent vector $\un{\cal U}'$ and the acceleration $\un {\cal A}'=\nabla'_{\cal U'}{\cal U'}$ satisfies $\phi^*\tilde g(\un {\cal T}', \un {\cal A}')=0$, where $\un{\cal T}'\equiv \phi^*\un{\cal T}$, and $\nabla'$ is the Levi-Civita connection associated with $\phi^*\tilde g$. Hence, conditions \ding{172} and \ding{173} of Definition \ref{Weyl_normal_form} are satisfied by $(\tilde M',\phi^*\tilde g)$.
\par
Finally let us verify condition \ding{174} of Definition \ref{Weyl_normal_form}. Since $\un{\cal V}$ satisfies the conditions in \eqref{Weyl g-transversal vector} and $\phi$ satisfies \eqref{phi_initial_conditions}, under the identification $\mathbb{R}_{+}\times M\times \mathbb{R}\simeq {\cal P}_W\times \mathbb{R}$ induced by $g$ we have at $(\lambda=0,p)$:
 \begin{align}
 (\phi^{*}\ti g)(\un\p_{\lambda},\un{\cal T})&=t^2\,\nn\\
 (\phi^{*}\ti g)(\un\p_{\lambda},\un{\cal H})&=0\qquad\forall\,\un{\cal H}\in {\cal H}_{p}\,,\\
 (\phi^{*}\ti g)(\un\p_{\lambda},\un\p_{\lambda})&= 0\,.\nn
 \end{align}
For a given connection $\mathbf{a}=t\un\p_t\otimes \big(t^{-1}\td t+a_i(x)\td x^i\big)$ on ${\cal P}_W$, the horizontal subspace ${\cal H}_p$ at $(p,0)$ is spanned by $\un D_i=\un\p_i-ta_i\un\p_t$. Since $(\tilde M',\phi^*\tilde g)$ is a Weyl pre-ambient space for $(M,[g,a])$, $\iota^*(\phi^*g)$ is the tautological tensor $\mathbf g_0$ on ${\cal P}_W$. Then, the above equations give that $\phi^{*}\ti g|_{\lambda=0}=  t^2 \mathbf{g}_{0}  + 2t(\td t + ta_{i}(x)\td x^{i})\td\lambda $. Therefore, all the conditions in Definition \ref{Weyl_normal_form} are satisfied by $(M',\phi^*g)$, which completes the existence part of the Proposition. The uniqueness part follows from the fact that the above construction of $\phi$ is forced. Suppose $\phi:M\to M'$ is a diffeomorphism such that $(M',\phi^*g)$ is a pre-ambient space in Weyl-normal form, then $\un{\cal V}_p$ must be $(g,a)$-transversal for $\ti g$ at $(p,0)$, and the curve $C'_p:\lambda\mapsto \phi(z,\lambda)$ must be the unique curve satisfying the initial conditions \eqref{phi_initial_conditions} and having the acceleration $\un{\cal A}$, which determines $\phi:\tilde M\to \tilde M'$ uniquely.
\end{proof}

\subsubsection{Proof of Theorem \ref{Weyl_ambient_existence}}
\begin{proof}[Proof of Theorem \ref{Weyl_ambient_existence}]
 The proof of this theorem has two main parts. First, from $\ti Ric(\ti g)=0$ and the initial value of $\tilde g$ at $\rho=0$ we will determine the first $\rho$-derivative of the metric components at $\rho=0$. Then, using an inductive argument we will show that all higher derivatives (to infinite order) at $\rho=0$ can also be determined from the Ricci-flatness condition. Let us write the unknown components of $\tilde g$ as
 \begin{equation}
 \ti g_{00}= c(x,\rho)\,,\qquad\ti g_{0 i}= t b_{i}(x,\rho)\,,\qquad\ti g_{ij}= t^2 g_{ij}(x,\rho)\,,
 \end{equation}
where $g_{ij}(x,\rho)$ can be considered as a one-parameter family of metrics on $M$. From property (2) above we have the initial values $c(x,0)=0$ and $b_{i}(x,0)=0$. The general metric has the form
\begin{equation}\label{_metric_formal_second_order}
\ti g_{IJ}= \begin{blockarray}{cccc}
 & 0 & j & \infty  \\
\begin{block}{c(ccc)}
  0 & c(x,\rho)& tb_i(x,\rho) & t\\
   i &  tb_i(x,\rho)  & t^2g_{ij}(x,\rho)&  t^2a_i(x,\rho)\\
  \infty & t & t^2a_i(x,\rho) & 0\\
\end{block}
\end{blockarray}
\,\,,
 \end{equation}
 and the inverse metric is
\begin{equation}\label{inverse_metric_formal}
\ti g^{IJ}=\left(\begin{array}{ccc}\frac{a^2}{\chi}& -\frac{(1- a\cdot b)a^{j}+ a^2 b^{j}}{t\chi} & \frac{1- a\cdot b}{t\chi}\\
 -\frac{(1- a\cdot b)a^{i}+ a^2 b^{i}}{t\chi}  & \frac{g^{ij}}{t^2}+ \frac{(1-a\cdot b)(a^{i}b^{j}+ a^{j}b^{i}) + a^2 b^{i}b^{j}- (c- b^2)a^{i}a^{j}}{t^2 \chi}& \frac{(c- b^2)a^{i}- (1- a\cdot b)b^{i}}{t^2 \chi} \\
 \frac{1- a\cdot b}{t\chi}&\frac{(c- b^2)a^{j}- (1- a\cdot b)b^{j}}{t^2 \chi}& \frac{b^2 - c}{t^2 \chi}\\\end{array}\right)\, ,
 \end{equation}
 where $a^{i}\equiv g^{im}a_{m}$, $b^{i}\equiv g^{im}b_{m}$ and $\chi= a^{2}(c-b^2)+ (1- a\cdot b)^2$, with $a^{2}= a_{k}a^{k}$, $b^2= b_{k}b^{k}$ and $a\cdot b= a_{k}b^{k}$. 
The Christoffel symbols $\ti \Gamma_{IJK}\equiv \ti g_{KM}\ti \Gamma^{M}{}_{IJ}$ are
 
 \begin{equation}\label{Christoffel_Initial_value}
 \begin{split}
2\ti \Gamma_{IJ 0}&=
\left(\begin{array}{ccc}
0& \p_{j}c &\p_{\rho}c\\
\p_{i}c  & t(\p_{i}b_{j}+ \p_{j}b_{i}- 2 g_{ij})& t(\p_{\rho}b_{i}- 2a_{i})\\
\p_{\rho}c &t(\p_{\rho}b_{j}- 2a_{j})& 0
\end{array}\right)\,,\\
2\ti \Gamma_{IJ k}&=
\left(\begin{array}{ccc}
2b_{k}- \p_{k}c & t\left(2 g_{jk}+ \p_{j}b_{k}- \p_{k}b_{j}\right)&t(2a_{k} + \p_{\rho}b_{k})\\
t(2 g_{ik}+ \p_{i}b_{k}- \p_{k}b_{i})  &2 t^2 \gamma_{ijk}& t^2 (\p_{\rho}g_{ik}+ F_{ik}) \\
t(2a_{k}+ \p_{\rho}b_{k}) & t^2 \left(\p_{\rho}g_{jk}+ F_{jk}\right) & 2t^2 \p_{\rho}a_{k}
\end{array}\right)\,,\\
2\ti \Gamma_{IJ \infty}&=
\left(\begin{array}{ccc}
2- \p_{\rho}c& t(2a_{j}- \p_{\rho}b_{j}) & 0\\
t(2a_{i}- \p_{\rho}b_{i})  & t^2 (\p_{i}a_{j}+ \p_{j}a_{i}- \p_{\rho}g_{ij})& 0\\
0 &0& 0
\end{array}\right)\,,
\end{split}
\end{equation}
where $\gamma_{ijk}= g_{km}\gamma^{m}{}_{ij}$ with $\gamma^{m}{}_{ij}= \frac{1}{2}g^{mk}\left(\p_{i}g_{jk}+ \p_{j}g_{ik}-\p_{k}g_{ij}\right)$ and  $F_{jk}= \p_{j}a_{k}- \p_{k}a_{j}$. Calculating the components $\ti R_{IJ}$ of $\tilde Ric(\tilde g)$ to the leading order in $\rho$-expansion from
\begin{equation}\label{Ricci}
\ti R_{IJ}= \frac{1}{2}\ti g^{KL}\left(\p^{2}_{IL}\ti g_{JK}+ \p^{2}_{JK}\ti g_{IL}- \p^{2}_{KL}\ti g_{IJ}- \p^{2}_{IJ}\ti g_{KL}\right)+ \ti g^{KL}\ti g^{PQ}\big(\ti\Gamma_{ILP}\ti\Gamma_{JKQ}- \ti\Gamma_{IJP}\ti\Gamma_{KLQ}\big) \,,
\end{equation}
and setting them to zero as the Ricci-flatness condition demands, we obtain
\begin{equation}\label{metric_components_order_1}
\begin{split}
c(x,\rho)&= 2\rho + O(\rho^{2})\,,\qquad b_{i}(x,\rho)= O(\rho^2)\,,\\ g_{ij}(x,\rho)&=g_{ij}(x)+ \rho\big(2\hat P_{(ij)}- 2a_{i}(x)a_{j}(x)\big)+ O(\rho^{2})\,,
\end{split}
\end{equation} 
where $\hat P_{ij}$ is the Weyl-Schouten tensor. One can observe that this agrees with \eqref{Weyl_ambient}, where $g_{ij}(x)$ corresponds to $\gamma^{(0)}_{ij}$ in the expansion \eqref{eq:gexpan}, and the order $O(\rho)$ matches $\gamma^{(1)}_{ij}$ [see \eqref{Pgf}]. Note that the above components of a Weyl-ambient metric reduce to the components of an ambient metric in \cite{Fefferman:2007rka} when the Weyl connection $a_i$ is turned off. 
\par
The next stage of the proof is to carry out an inductive perturbation calculation for higher orders in $\rho$. The purpose of this calculation is to prove (inductively) that the Ricci-flatness condition can be used to determine the unknown components of $\tilde g$ in Weyl-normal form to infinite order in $\rho$.\par
Let $\ti g^{[k]}$ represent a metric that includes the terms of the $\rho$-expansion of $\tilde g$ up to (including) order $O(\rho^{k})$, i.e.,\ $\tilde g=\tilde g^{[k]}+O(\rho^{k+1})$. Then, the Ricci-flatness condition of $\tilde g$ implies that the components $\ti R^{[k]}_{IJ}$ of $Ric(\tilde g^{[k]})$ satisfy
\begin{align}
\label{eq:Ricciflatm}
\begin{split}
\ti R_{IJ}(\ti g^{[k]})= O(\rho^{k})\quad I,J\neq\infty\,,\qquad
\ti R_{I\infty}(\ti g^{[k]})= O(\rho^{k-1})\,.
\end{split}
\end{align}
To carry out the induction, we assume that $\ti g^{[m-1]}$ has been uniquely determined from the condition \eqref{eq:Ricciflatm} with $k=m-1$. We have seen this is true for $m=2$ above by explicit calculation. Now we want to show that $\ti g^{[m]}$ then can be uniquely determined from the condition \eqref{eq:Ricciflatm} with $k=m$. Set $\ti g^{[m]}_{IJ}= \ti g^{[m-1]}_{IJ}+ \Phi_{IJ}$, with 
\begin{equation}
\label{eq:g+phi}
\Phi_{IJ}:=
\left(\begin{array}{ccc}
\Phi_{00}&\Phi_{0j}&0\\
\Phi_{i0}&\Phi_{ij}&\Phi_{i\infty}\\
0&\Phi_{j\infty}&0
\end{array}\right)
= \rho^m 
\left(\begin{array}{ccc}
\phi_{00}(x)&t\phi_{0j}(x)&0\\
t\phi_{0i}(x)&t^2 \phi_{ij}(x)&t^{2}a^{(m)}_{i}(x)\\
0&t^{2}a^{(m)}_{j}(x)&0
\end{array}\right)\,,
\end{equation}
where $a_{i}^{(m)}(x)$ is the $m^{th}$ order term of $a_{i}(x,\rho)$ [see \eqref{eq:aexpan}], and we have considered the fact that $\ti g^{[m]}_{IJ}$ satisfies \eqref{lemma_conditions}. All we have to show is that $\phi_{00}$, $\phi_{0i}$ and $\phi_{ij}$ can all be uniquely determined. From \eqref{Ricci} one finds that 
\begin{equation}\label{Ricci_2}
\begin{split}
\ti R^{[m]}_{IJ}={}&\ti R^{[m-1]}_{IJ}+  \frac{1}{2}\ti g_{[m]}^{KL}\left(\p^{2}_{IL} \Phi_{JK}+ \p^{2}_{JK} \Phi_{IL}- \p^{2}_{KL} \Phi_{IJ}- \p^{2}_{IJ} \Phi_{KL}\right) \\&
+ \ti g_{[m]}^{KL}\ti g_{[m]}^{PQ}\left(\ti \Gamma^{[m]}_{ILP}\Gamma^{\Phi}_{JKQ}+  \Gamma^{\Phi}_{ILP}\ti\Gamma^{[m]}_{JKQ}- \ti\Gamma^{[m]}_{IJP} \Gamma^{\Phi}_{KLQ}- \Gamma^{\Phi}_{IJP}\ti \Gamma^{[m]}_{KLQ}\right)+ O(\rho^{m})\,,
\end{split}
\end{equation}
where $\ti g_{[m]}^{KL}$ and $\ti \Gamma_{IJK}^{[m]}$ are the inverse and Christoffel symbols of $\ti g^{[m]}_{KL}$, respectively, and $\Gamma^{\Phi}_{IJK}\equiv\frac{1}{2}( \p_{J}\Phi_{IK}+ \p_{I}\Phi_{JK}- \p_{K}\Phi_{IJ})$. The components of $\Gamma^{\Phi}_{IJK}$ can be expressed as follows:
\begin{equation}\label{GammaPhi}
\begin{split}
2\Gamma^{\Phi}_{IJ0}&= 
\left(\begin{array}{ccc}
0&0&\p_\rho\Phi_{00}\\
0&0&\p_\rho\Phi_{i0}\\
\p_\rho\Phi_{00}&\p_\rho\Phi_{0j}&0
\end{array}\right)+O(\rho^m)\,,\\
2\Gamma^{\Phi}_{IJk}&= 
\left(\begin{array}{ccc}
0&0&\p_\rho\Phi_{0k}\\
0&0&\p_\rho\Phi_{ik}\\
\p_\rho\Phi_{0k}&\p_\rho\Phi_{jk}&2\p_\rho\Phi_{\infty k}
\end{array}\right)+O(\rho^m)\,,\\
2\Gamma^{\Phi}_{IJ\infty}&= 
\left(\begin{array}{ccc}
-\p_\rho\Phi_{00}&-\p_\rho\Phi_{0j}&0\\
-\p_\rho\Phi_{i0}&-\p_\rho\Phi_{ij}&0\\
0&0&0
\end{array}\right)+O(\rho^m)\,.
\end{split}
\end{equation}

Substituting \eqref{GammaPhi} and the leading order of $\tilde\Gamma^{[m]}_{IJK}$ and $\tilde g_{[m]}^{IJ}$ [i.e.,\ the leading order of $\tilde g^{IJ}$, $\tilde\Gamma_{IJK}$ in \eqref{inverse_metric_formal},\eqref{Christoffel_Initial_value}] into \eqref{Ricci_2}, one finds 
\begin{equation}\label{Ricci_inductive}
\begin{split}
t^2 \ti R^{[m]}_{00}&=t^2 \ti R^{[m-1]}_{00} + m\rho^{m-1}\left(m- 1 - \frac{d}{2}\right)\phi_{00}+O(\rho^{m})\,,\\
t\ti R^{[m]}_{0i} &= t\ti R^{[m-1]}_{0i}+ m\rho^{m-1}\left[\frac{1}{2}\p_{i}\phi_{00}+ \left(m-1-\frac{d}{2}\right)\phi_{0i}\right]+ O(\rho^{m})\,,\\
\ti R_{ij}^{[m]}&= \ti R_{ij}^{[m-1]}+ m\rho^{m-1}\left[(m- \frac{d}{2})\phi_{ij}- \frac{1}{2}g_{ij}g^{km}\phi_{km}+ \mathring\nabla_{(i}\phi_{j)0}+ \mathring P_{ij}\phi_{00}\right] + O(\rho^{m})\,,\\
t\ti R^{[m]}_{0\infty}&=t\ti R^{[m-1]}_{0\infty}+ \frac{1}{2}m(m-1)\rho^{m-2}\phi_{00} + O(\rho^{m-1})\,,\\
\ti R^{[m]}_{i\infty}&=\ti R^{[m-1]}_{i\infty}+ \frac{1}{2}m(m-1)\rho^{m-2}\phi_{i0}+ O(\rho^{m-1})\,,\\
\ti R^{[m]}_{\infty \infty}&=\ti R^{[m-1]}_{\infty\infty} -m(m-1)\rho^{m-2}\left(\frac{1}{2}a^{2}\phi_{00}- a^{k}\phi_{k0} +\frac{1}{2}g^{km}\phi_{km}\right)+ O(\rho^{m-1})\,,
\end{split}
\end{equation}
where $\mathring P_{ij}$, $\mathring\nabla$ are the LC Schouten tensor and LC connection associated with the metric $g_{ij}(x)$. Although the Weyl connection $a^{(0)}_{i}(x)$ appears throughout the calculation, it cancels itself out rather unexpectedly, except for the terms in $\ti R^{(m)}_{\infty\infty}$. The inductive argument then proceeds in the same way as \cite{Fefferman:2007rka}. First we consider the Ricci components with $I,J\neq \infty$. From the first two equations in \eqref{Ricci_inductive} one can uniquely determine $\phi_{00}$ and $\phi_{0i}$ such that $\ti R^{[m]}_{00}$ and $\ti R^{[m]}_{0i}$ both vanish up to order $O(\rho^{m})$. Then, from the third equation in \eqref{Ricci_inductive}  one can uniquely solve for $\phi_{ij}$ such that the order $O(\rho^{m-1})$ of $\ti R^{[m]}_{ij}$ vanishes. Therefore, $\tilde g^{[m]}$ will be uniquely determined by $\ti R^{[m]}_{IJ}=O(\rho^m)$ for $I,J\neq \infty$ once $\tilde g^{[m-1]}$ is determined, and hence the unknown components of $\ti g_{IJ}$ can be determined to infinite order. 
\par
Note that when $d=2m$, the situation becomes subtle because the term $\phi_{ij}$ vanishes in $\tilde R^{[m]}$. In \cite{Fefferman:2007rka}, this is attributed to the obstruction of the Ricci-flatness condition at $O(\rho^{d/2-1})$ when $d$ is an even integer, and one has to carefully consider even and odd $d$ separately. Nevertheless, since we consider the dimension $d$ as a continuous parameter, we can always solve for $\phi_{ij}$ from the Ricci-flatness condition for any $d$, and the information regarding these obstructions is not lost but takes the form of poles in $\phi_{ij}$ at $d=2m$. As is shown in Proposition \ref{prop:obspole}, since $\phi_{ij}$ represents the order $O(\rho^m)$ of $g_{ij}(x,\rho)$ in the Weyl-ambient metric \eqref{Weyl_ambient}, this pole represents exactly the Weyl-obstruction tensor.
\par
So far we have proved that the unknown components of $\ti g$ are determined to infinite order by the Ricci-flatness condition for $I,J\neq \infty$. To finish the analysis we also need to show that the remaining Ricci components $\ti R_{I\infty}$ also vanish to infinite order when we plug in the solution for $\tilde g$ obtained from $\ti R_{IJ}=0$ for $I,J\neq \infty$. Consider the Bianchi identity $\ti g^{JK}\nabla_{I}\ti R_{JK}= 2\ti g^{JK} \nabla_{J}\ti R_{IK}$. Expanding the covariant derivative in terms of the Christoffel symbols we get
\begin{equation}\label{Bianchi}
2\ti g^{JK}\p_{J}\ti R_{IK}- \ti g^{JK}\p_{I}\ti R_{JK}- 2 \ti g^{JK}\ti g^{PQ}\ti\Gamma_{JKP}\ti R_{QI}=0\,.
\end{equation}
Since $\ti R_{I\infty}= {\cal O}(\rho^{m-2})$ is trivially true for $m=2$, now we want to show that $\ti R_{I\infty}= {\cal O}(\rho^{m-2})$ leads to $\ti R_{I\infty}= {\cal O}(\rho^{m-1})$ by means of the Bianchi identity. Expanding \eqref{Bianchi} for $I=0,i,\infty$ and making use of the homogeneity property of the metric we get
 \begin{equation}\label{Bianchi_2}
 \begin{split}
& \left(d -2 -2 \rho\p_{\rho}\right)\ti R_{0\infty} = {\cal O}(\rho^{m-1})\\&
 (d-2 -2 \rho\p_{\rho})\ti R_{i\infty}- t\p_{i}\ti R_{0\infty}={\cal O}(\rho^{m-1})\\&
 a^{2}\left( t^{-1}d \ti R_{\infty 0}+2\p_{0}\ti R_{\infty 0}\right)- 2 t ^{-1}a^{m}\left( \p_{m}\ti R_{\infty 0}-(2-d)t^{-1}\ti R_{\infty m}\right) \\&
+2 t^{-2}\left(d-2 - \rho\p_{\rho}\right)\ti R_{\infty\infty}+2 t^{-2}g^{mk}\mathring \nabla_{m}\ti R_{\infty k}+ 2t^{-1}\mathring{P}\ti R_{\infty 0} ={\cal O}(\rho^{m-1})\,.
 \end{split}
 \end{equation}
We can see that the Weyl connection appears only in the last equation of \eqref{Bianchi_2}. Note that all the Ricci terms $\ti R_{IJ}$ with $I,J \neq \infty$ has been dropped from \eqref{Bianchi_2} since they vanish to infinite order. Suppose $\ti R_{I\infty}= \gamma_{I}\rho^{m-2}$. The first equation in \eqref{Bianchi_2} gives $(d+2 -2m)\gamma_{0}= {\cal O}(\rho)$, and thus $\ti R_{0\infty}={\cal O}(\rho^{m-1})$. The second equation in \eqref{Bianchi_2}  gives  $(d+2 -2 m)\gamma_{i}= {\cal O}(\rho)$, and thus $\ti R_{i\infty}= {\cal O}(\rho^{m-1})$. The last equation then gives $(d-m)\gamma_{\infty}= {\cal O}(\rho)$, so $\ti R_{\infty\infty}={\cal O}(\rho^{m-1})$. This completes the inductive argument and thus $\ti R_{I\infty}$ can also be made to vanish to infinite order.
\par
To summarize, we have shown by an inductive argument that there exists a Weyl-ambient space $(\tilde M,\tilde g)$ for $(M,[g,a])$ in Weyl-normal form with acceleration $\un{\cal A}$. Some components of $\tilde g$ have the form in \eqref{lemma_conditions}, and all the unknown components are determined uniquely to infinite order of $\rho$ at ${\cal P}_W\times \{0\}$ by the Ricci-flatness condition.
\end{proof}

\chapter{Weyl-Obstruction Tensors}
\label{chap:WOT}
In Section \ref{sec:FG} we saw that the poles of asymptotic expansion of the ALAdS bulk in even dimensions give rise to obstruction tensors, which are covariant quantities on conformal manifolds $(M,[g])$. The goal of chapter is to carry over this concept to Weyl manifolds $(M,[g,a])$. First we introduce Weyl-obstruction tensors as the poles of the ALAdS bulk metric in the WFG gauge. Then we provide the precise definitions of Weyl-obstruction via the Weyl-ambient construction in first and second formalisms, respectively, and show that they are equivalent. Notice that in Section~\ref{sec:WOTbulk}, $\gamma^{(2k)}_{ij}$ will stand for terms in the $z$-expansion \eqref{hex} of the ALAdS bulk metric, while in Section~\ref{sec:WOTambient}, $\gamma^{(k)}_{ij}$ will stand for terms in the $\rho$-expansion \eqref{eq:gexpan} of the Weyl-ambient metric.
\section{Poles of the Metric Expansion}
\label{sec:WOTbulk}
In the previous chapters we saw that the WFG gauge in the bulk induces a Weyl geometry on the boundary. Now we would like to determine the higher order terms in the $z$-expansion \eqref{hex} and find the obstruction tensors with the Weyl connection turned on. The method is exactly analogous to that in Section \ref{sec:WFG} for the FG gauge. By solving the bulk Einstein equations order by order in the WFG gauge, we find that $\gamma^{(2k)}_{ij}$ still has the same form as \eqref{gamma2k}, except that the obstruction tensor ${\cal O}^{(2k)}_{ij}$ is now promoted to the \emph{Weyl-obstruction tensor} $\hat {{\cal O}}^{(2k)}_{ij}$. Unlike ${\cal O}^{(2k)}_{ij}$, which is only Weyl-covariant in $2k$-dimension, the Weyl-obstruction tensors $\hat {{\cal O}}^{(2k)}_{ij}$ are Weyl-covariant with a weight $2k-2$ in any dimension; that is, under a Weyl transformation \eqref{WT} it transforms in any $d$ as $\hat {{\cal O}}^{(2k)}_{ij} \to {\cal B}(x)^{2k-2}\hat {{\cal O}}^{(2k)}_{ij}$.
\par
In principle, $\gamma^{(2k)}_{ij}$ at any order can be obtained from the Einstein equations by iteration. In this section, we will show solutions of $\gamma^{(2k)}_{ij}$ obtained from Einstein equations up to $k=3$, and read off the corresponding Weyl-obstruction tensors from them. The details of the expansions of Einstein equations can be found in Appendix \ref{app:B0}. 
\par
First, the leading order of the $ij$-components of the Einstein equations gives
\begin{align}
\label{g2}
\frac{\gamma^{(2)}_{ij}}{L^2}&=-\frac{1}{d-2}\bigg(\hat R^{(0)}_{(ij)}-\frac{1}{2(d-1)}\hat R^{(0)}\gamma^{(0)}_{ij}\bigg)\,.
\end{align}
We notice that this is the symmetric part of the Weyl-Schouten tensor defined in \eqref{WP} with a minus sign, i.e.\
\begin{align}
\label{Pgf0}
\frac{\gamma^{(2)}_{ij}}{L^2}&=-\hat P_{(ij)}=-\hat P_{ij}-\frac{1}{2}f^{(0)}_{ij}\,.
\end{align}
Similar to the FG gauge, one can check that the residue of the pole in \eqref{g2} vanishes identically when $d=2$. Hence, there is no Weyl-obstruction tensor for $d=2$ and so no logarithmic term will appear in the metric expansion in the $d\to 2^{-}$ limit.
\par
Then, solving the $O(z^2)$-order of the $ij$-components of the Einstein equations yields
\begin{align}
\label{g40}
\frac{\gamma^{(4)}_{ij}}{L^4}&=-\frac{1}{4(d-4)}\hat{\cal O}^{(4)}_{ij}+\frac{1}{4}\hat P^{k}{}_{ i}\hat P_{k j}-\frac{1}{2L^2}\hat\nabla^{(0)}_{( i} a_{ j)}^{(2)}\,,
\end{align}
where $\hat{\cal O}^{(4)}_{ij}$ is the Weyl-obstruction tensor for $d=4$, namely the Weyl-Bach tensor $\hat B_{ij}$, given by
\begin{align}
\hat{\cal O}^{(4)}_{ij}=\hat B_{ij}=\hat\nabla^{(0)}_ k\hat\nabla_{(0)}^ k \hat P_{ij}-\hat\nabla^{(0)}_ k\hat\nabla^{(0)}_{ j} \hat P_{ i}{}^{ k}-\hat W^{(0)}_{l j i k}\hat P^{ kl}\,.
\end{align}
If we compare \eqref{g4} with the corresponding result \eqref{gamma4} in the FG case, we see that the form of the expression stays almost the same, with all the LC quantities now being promoted to the corresponding Weyl quantities. Besides, in the WFG gauge $\gamma^{(4)}_{ij}$ also has an additional term involving $a^{(2)}_{ i}$, which does not contribute to the pole at $d=4$.
\par
Moving on to the $O(z^4)$-order of the Einstein equations we get
\begin{equation}
\label{g60}
\begin{split}
\frac{\gamma^{(6)}_{ij}}{L^6}=&-\frac{1}{24(d-6)(d-4)}\hat{\cal O}^{(6)}_{ij}+\frac{1}{6(d-4)}\hat B_{k( i}\hat P^k{}_{ j)} -\frac{1}{3L^4}\hat\nabla^{(0)}_{( i}a^{(4)}_{ j)}\\
&-\frac{1}{L^4}a^{(2)}_{ i}a^{(2)}_{ j}+\frac{1}{6L^2}a^{(2)}\cdot a^{(2)}\gamma^{(0)}_{ij}+\frac{1}{6L^2}\hat\nabla^{(0)}_{( i}(\hat P^{k}{}_{ j)}a^{(2)}_k)+\frac{1}{2L^4}\hat\gamma_{(2)}^{k}{}_{ij}a^{(2)}_k\, ,
\end{split}
\end{equation}
where $\hat\gamma_{(2)}^{k}{}_{ij}\equiv-\frac{L^2}{2}(\hat\nabla^{(0)}_ i\hat P^k{}_{ j}+\hat\nabla^{(0)}_ j\hat P_ i{}^k-\hat\nabla_{(0)}^k\hat P_{ij})$, and $\hat{\cal O}^{(6)}_{ij}$ is the Weyl-obstruction tensor for $d=6$:
\begin{equation}
\label{WO6}
\begin{split}
\hat{\cal O}^{(6)}_{ij}={}&\hat\nabla_{(0)}^ k\hat\nabla^{(0)}_ k\hat  B_{ij}-2\hat W^{(0)}_{ l j i k}\hat B^{ k l}-4\hat P\hat B_{ij}+2\hat P_{ k( j}\hat B^{ k}{}_{ i)}-2\hat B^{ k}{}_{( i}\hat P_{ j) k}\\
&+2(d-4)\bigg(\hat\nabla_{(0)}^ k\hat  C_{ k l( i}\hat P^{ l}{}_{ j)} -\hat P^{ k l}\hat\nabla^{(0)}_{( i}\hat  C_{ j) l k}+2\hat P^{( l k)}\hat\nabla^{(0)}_ k \hat C_{(ij) l}+\hat\nabla^{(0)}_ k\hat P^{ l k}\hat C_{(ij) l}\\
&\qquad\qquad\qquad-\hat C^{ l}{}_{ i}{}^{ k}\hat C_{ k j l}+ \hat\nabla_{(0)}^ k\hat P^ l{}_{( i}\hat C_{ j) l k}-\hat W^{(0)}_{ l( j i) k}\hat P^{ k}{}_m\hat P^{m l}\bigg)\,.
\end{split}
\end{equation}
It is easy to verify that  \eqref{g6} and \eqref{WO6} go back to the FG expressions \eqref{gamma6} and \eqref{O6} when we turn off the Weyl structure $a_ i$. Note that when the Weyl connection is turned off, the first term inside the parentheses of \eqref{WO6} vanishes due to \eqref{divC}, and the second term there vanishes since the LC Schouten tensor $\mathring P_{ij}$ is symmetric. Once again, we observe that all the $a^{(2)}_ i$ and $a^{(4)}_ i$ terms that appear in $\gamma^{(6)}_{ij}$ do not contribute to the pole at $d=6$ and thus are not part of the obstruction tensor $\hat{\cal O}_{ij}^{(6)}$. 
\par
Just as ${\cal O}_{ij}^{(2k)}$ derived in the FG gauge, all the $\hat{\cal O}_{ij}^{(2k)}$ are also symmetric traceless tensors, and they are divergence-free when $d=2k$. These properties can either be verified by using the result from the $ij$-components of the Einstein equations (``evolution equations"), or read off from the $zz$- and $z i$-components of the Einstein equations (``constraint equations"). More specifically, plugging $\gamma^{(2k)}_{ij}$ into the $zz$-component of the Einstein equations we can see that $\mathcal{\hat O}^{(2k)}_{ij}$ is traceless in any dimension, and the same result can also be obtained by taking the trace of the $ij$-components of the Einstein equations. To see that $\hat{\cal O}_{ij}^{(2k)}$ is divergence-free when $d=2k$, we can plug $\gamma^{(2k)}_{ij}$ into the $z i$-components of the Einstein equations. For instance, the $O(z^4)$-order of the $z i$-equations gives
\begin{align}
\label{divWB}
\hat\nabla_{(0)}^j\hat B_{ji}=(d-4)\hat P^{jk}(\hat C_{kji}+\hat C_{ijk})\,,
\end{align}
and so the divergence of $\hat B_{ij}$ vanishes when $d=4$. In the FG gauge where the Schouten tensor is symmetric, the second term in the bracket vanishes and so \eqref{divWB} goes back to \eqref{divB}. On the other hand, the divergence of $\hat{\cal O}_{ij}^{(2k)}$ can also be derived from a direct calculation by using repeatedly the Weyl-Bianchi identity
\begin{align}
\label{divWP}
\hat\nabla_{(0)}^i\hat P_{ij}=\hat\nabla^{(0)}_j\hat P\,,
\end{align}
which can be read off from the $O(z^2)$-order of the $z i$-equation. The above discussion indicates that the $zz$- and $z i$-components of the Einstein equations do not contain more information about $\gamma^{(2k)}_{ij}$ than the $ij$-components of  Einstein equations. Note that here we only talk about the equations of motion for $\gamma^{(2k)}_{ij}$. At $O(z^d)$-order the $zz$- and $z i$-equations do provide new constraints on $\pi^{(0)}_{ij}$, while the $ij$-equations on $\pi^{(0)}_{ij}$ become trivial.
\par
It is also convenient to define the \emph{extended Weyl-obstruction tensor} $\hat{\Omega}^{(k)}_{ij}$ as the Weyl-covariant version of the extended obstruction tensor defined in \eqref{extO}. For example, for $k=1$ and $k=2$ we have
\begin{align}
\label{extWO}
\hat\Omega^{(1)}_{ij}=-\frac{1}{d-4}\hat B_{ij}\,,\qquad\hat\Omega^{(2)}_{ij}=\frac{1}{(d-6)(d-4)}\hat{\mathcal O}^{(6)}_{ij}\,.
\end{align}
\par
Similar to the FG case, the Weyl-obstruction tensor $\hat{\cal O}_{ij}^{(2k+2)}$ is also proportional to the residue of the extended Weyl-obstruction tensor $\hat{\Omega}^{(k)}_{ij}$. Both the Weyl-obstruction tensors and the extended Weyl-obstruction tensors can be defined following \cite{graham2005ambient,graham2009extended} by promoting the ambient metric to the ``Weyl-ambient metric". We will discuss this in detail in the next section.

\section{Weyl-Obstruction Tensors from the Ambient Construction}
\label{sec:WOTambient}
A very useful property of the ambient metric introduced in \cite{2001math.....10271F} in the context of conformal geometry is the ability to construct conformal-covariant tensors from the ambient Riemann tensor, including the (extended) obstruction tensors. In the last section we saw that these tensors can be generalized to (extended) Weyl-obstruction tensors on Weyl manifolds $(M,[\gamma^{(0)},a^{(0)}])$ by evaluating the poles of the metric expansion of $\gamma_{ij}$ in the ALAdS bulk. However, defining them as poles lead to an ambiguity since a pole has the freedom of being shifted by finite terms. In this section we will see that the (extended) Weyl-obstruction tensors can be defined in a more explicit way from the Weyl-ambient space $(\tilde M,\tilde g)$. 

\subsection{First-Order Formalism}
\label{sec:firstorder}
First, we would like to demonstrate how the Weyl-obstruction tensors on $M$ can be derived from $(\tilde M,\tilde g)$ in the first order formalism using the frame introduced in \eqref{e+-def}. 
\par
Starting from the metric \eqref{eq:metricnull}, one can solve $\ti Ric(\ti g)=0$ order by order to find the $\gamma_{ij}^{(k)}$ in the $\rho$-expansion \eqref{eq:gexpan}, which is equivalent to solving the Einstein equations in the ALAdS bulk shown in Section \ref{sec:WOTbulk}.\footnote{Note again that the $\gamma^{(k)}_{ij}$ and $a^{(k)}_{i}$ defined here correspond to $(-2)^k\gamma^{(2k)}_{ij}/L^{2k}$ and $(-2)^ka^{(2k)}_{ i}/L^{2k}$ in the $z$-expansion \eqref{hex}, respectively.} The results are 
\begin{align}
\label{Pgf}
\gamma^{(1)}_{ij}={}&2\hat P_{(ij)}=2\hat P_{ij}-f^{(0)}_{ij}\,.\\
\label{g4}
\gamma^{(2)}_{ij}={}&\hat{\Omega}^{(1)}_{ij}+\hat P^{k}{}_{i}\hat P_{kj}+\hat\nabla^{(0)}_{(i} a_{j)}^{(1)}\,,\\
\gamma^{(3)}_{ij}={}&\tfrac{1}{3}\hat{\Omega}^{(2)}_{ij}+\tfrac{4}{3}\hat\Omega^{(1)}_{k(i}\hat P^k{}_{j)} +\tfrac{2}{3}\hat\nabla^{(0)}_{(i}a^{(2)}_{j)}\nn+2a^{(1)}_{i}a^{(1)}_{j}-\tfrac{1}{3}a^{(1)}\cdot a^{(1)}\gamma^{(0)}_{ij}\\
\label{g6}
&+\tfrac{1}{3}P^{k}{}_{(i}\hat\nabla^{(0)}_{j)}a^{(1)}_k-\tfrac{1}{3}a_{(1)}^k(\hat\nabla^{(0)}_i\hat P_{kj}+\hat\nabla^{(0)}_i\hat P_{jk}-\hat\nabla^{(0)}_k\hat P_{ji}+2\hat\nabla^{(0)}_j\hat P_{ik}-2\hat\nabla^{(0)}_k\hat P_{ij})\,,\\
\cdots\nn
\end{align}
where $f^{(0)}_{ij}=\p_ia^{(0)}_j-\p_ja^{(0)}_i$, and $\hat P_{ij}$ is the Weyl-Schouten tensor on $(M,[\gamma^{(0)},a^{(0)}])$. Treating $d$ as an continuous complex variable, the solution for each $\gamma^{(k\geqslant2)}_{ij}$ has a pole at $d=2k$ (see Proposition \ref{prop1}) represented by $\hat{\Omega}^{(k-1)}_{ij}$. For now one should simply regard $\hat{\Omega}^{(k-1)}_{ij}$ in the above equations as denoting the pole terms of $\gamma^{(k)}_{ij}$ at $d=2k$ ($\hat P_{ij}$ also represents the ``pole'' of $\gamma^{(1)}_{ij}$ at $d=2$, which identically vanishes in $2d$). Later in this subsection we will recognize them as extended Weyl-obstruction tensors through a precise definition. In terms of $\gamma^{(0)}_{ij}$, these quantities can be written as
\begin{align}
\label{eq:Wsch}
\hat P_{ij}={}&\frac{1}{d-2}\bigg(\hat R^{(0)}_{ij}-\frac{\hat R^{(0)}}{2(d-1)}\gamma^{(0)}_{ij}\bigg)\,,\\
\label{eq:Omega1}
\hat\Omega^{(1)}_{ij}={}&\frac{1}{d-4}\Big(-\hat\nabla^{(0)}_k\hat\nabla_{(0)}^k \hat P_{ij}+\hat\nabla^{(0)}_k\hat\nabla^{(0)}_{j} \hat P_{i}{}^{k}+\hat W^{(0)}_{kjil}\hat P^{lk}\Big)\,,\\
\label{eq:Omega2}
\hat\Omega^{(2)}_{ij}={}&\frac{1}{d-6}\Big(-\hat\nabla_{(0)}^k\hat\nabla^{(0)}_k\hat\Omega^{(1)}_{ij}+2\hat W^{(0)}_{kjil}\hat\Omega_{(1)}^{lk}+4\hat P\hat\Omega^{(1)}_{ij}-2\hat P_{k(j}\hat\Omega_{(1)}^{k}{}_{i)}+2\hat\Omega_{(1)}^{k}{}_{(i}\hat P_{j)k}\nn\\
&\qquad\quad+2\hat\nabla_{(0)}^k\hat  C_{kl(i}\hat P^{l}{}_{j)} -2\hat P^{kl}\hat\nabla^{(0)}_{(i}\hat  C_{j)lk}+4\hat P^{(kl)}\hat\nabla^{(0)}_l\hat C_{(ij)k}+2\hat\nabla^{(0)}_l\hat P^{kl}\hat C_{(ij)k}\nn\\
&\qquad\quad-2\hat C^{k}{}_{i}{}^{l}\hat C_{ljk}+2 \hat\nabla_{(0)}^l\hat P^k{}_{(i}\hat C_{j)kl}-2\hat W^{(0)}_{k(ji)l}\hat P^{l}{}_m\hat P^{mk}\Big)\,,
\end{align}
where $\hat W_{(0)}^i{}_{jkl}$ is the Weyl curvature tensor and {$\hat C_{ijk}\equiv\hat\nabla_{k}^{(0)}\hat P_{ij}-\hat\nabla_{j}^{(0)}\hat P_{ik}$ is the Weyl-Cotton tensor}. Note that indices are lowered with $\gamma^{(0)}_{ij}$ as necessary.
\par
We first look at how the Weyl-Schouten tensor $\hat P_{ij}$ is derived from the Weyl-ambient geometry. Consider the expansion of $\gamma_{ij}$. At $\rho=0$ and $t=1$, the ambient connection 1-form \eqref{eq:conn1form} becomes
\begin{align}
\tilde{\bm\omega}_{(0)}^{M}{}_{N}=&\left(\begin{array}{ccc}a^{(0)}_k & -\hat P_{jk} & 0  \\ \delta^i{}_k & \Gamma_{(0)}^i{}_{kj}& \hat P^i{}_k  \\0 & -\gamma^{(0)}_{jk} & -a^{(0)}_k \end{array}\right)\bm e^k
+\left(\begin{array}{ccc}0 & 0& 0  \\0 &\delta_j{}^i& 0  \\0& 0 & 0 \end{array}\right)\bm e^++\left(\begin{array}{ccc}0 & 0 & 0 \\ 0&\psi_j{}^i & 0  \\0  &0& 0 \end{array}\right)\bm e^- \,.
\end{align}
Notice that the first term, which is the pullback of $\tilde{\bm\omega}_{(0)}^{M}{}_{N}$ from $T^*\tilde M$ to $T^*M$, can be recognized as the Cartan normal conformal connection \cite{10.2996/kmj/1138845392,kobayashi2012transformation}. From here we can see that the Weyl-Schouten tensor of the boundary appears in the leading order ($\rho=0$) of the ambient connection. 
\par
From the connection 1-form \eqref{eq:conn1form}, we can also find the ambient curvature 2-form in the frame $\{\bm e^+,\bm e^i,\bm e^-\}$ using Cartan's second structure equation \cite{Wald,Liang1} (see Appendix \ref{app:Null} for details):
\begin{align}
\label{eq:curv2form}
\tilde{\bm R}^{M}{}_{N}=&\left(\begin{array}{ccc}0 & -t\bm{\mathcal C}_j & 0  \\-\frac{\rho}{t}\bm{\mathcal C}^i & {\bm{\mathcal W}}^i{}_{j} & \frac{1}{t}\bm{\mathcal C}^i \\0 & \rho t\bm{\mathcal C}_j & 0 \end{array}\right)+\left(\begin{array}{ccc}0 &\bm{\mathcal B}_j & 0 \\\frac{\rho}{t^2}\bm{\mathcal B}^i &\frac{1}{t}{\cal C}_{kj}{}^i\bm e^k & -\frac{1}{t^2}\bm{\mathcal B}^i\\0 & -\rho\bm{\mathcal B}_j & 0 \end{array}\right)\wedge(\bm e^--\rho\bm e^+)\,.
\end{align}
Here we defined $\bm{\mathcal B}_i={\cal B}_{ij}\bm e^j$, $\bm{\mathcal C}_i=\frac{1}{2}{\cal C}_{ikj}\bm e^j\wedge \bm e^k$, $\bm{\mathcal W}^i{}_j={\cal W}^i{}_{jkl}\bm e^k\wedge \bm e^l$, with
\begin{align}
\label{eq:Bij}
{\cal B}_{ij}&\equiv\p_\rho\psi_{ij}-\psi_{ik}\psi_j{}^k-\hat\nabla_i\varphi_j-2\rho\varphi_i\varphi_j\,,\\
{\cal C}_{ikj}&\equiv\hat\nabla_j\psi_{ki}-\hat\nabla_k\psi_{ji}-2\rho\varphi_if_{jk}\,,\\
{\mathcal W}^i{}_{jkl}&\equiv\bar R^i{}_{jkl}+\delta_j{}^if_{kl}-\delta_k{}^i\psi_{lj}-\psi_k{}^i\gamma_{lj}+\delta_l{}^i\psi_{kj}+\psi_l{}^i\gamma_{kj}+2\rho(\psi_k{}^i\psi_{lj}-\psi_l{}^i\psi_{kj}-\psi_j{}^if_{kl})\,,
\end{align}
where $\hat\nabla$ is the metricity free connection on the distribution $\{\un D_i\}$ introduced in \eqref{eq:hatnabla}, and
\begin{align}
\bar R^{i}{}_{jkl}=&D_{k}\ti \Gamma^{i}{}_{lj}-D_{l}\ti \Gamma^{i}{}_{kj}+\ti \Gamma^{i}{}_{km}\ti \Gamma^{m}{}_{lj}-\ti \Gamma^{i}{}_{lm}\ti \Gamma^{m}{}_{kj}\,.
\end{align}
Plugging in \eqref{Pgf} and \eqref{g4} from the $\rho$-expansion of $\gamma_{ij}$, one obtains at the leading order
\begin{align}
\label{eq:BCW}
{\cal B}_{ij}^{(0)}&=\hat\Omega_{ij}^{(1)}\,,\qquad{\cal C}_{ikj}^{(0)}=\hat C_{ijk}\,,\qquad
{\mathcal W}_{(0)}^i{}_{jkl}=\hat W_{(0)}^i{}_{jkl}\,.
\end{align}
Therefore, when pulled back from $\tilde M$ to $M$ the Riemann curvature of the Weyl-ambient space gives us on $M$ the Weyl tensor $\hat W_{(0)}^i{}_{jkl}$, Weyl-Cotton tensor $\hat C_{ijk}$ and the tensor $\hat\Omega_{ij}^{(1)}$ we obtained in \eqref{eq:Omega1} as follows:
\begin{align}
\label{eq:ambRiem}
\tilde R_{-ij-}|_{\rho=0,t=1}=\hat{\Omega}^{(1)}_{ij}\,,\qquad \tilde R_{-ijk}|_{\rho=0,t=1}=\hat C_{ijk}\,,\qquad\tilde R_{ijkl}|_{\rho=0,t=1}=\hat{W}^{(0)}_{ijkl}\,.
\end{align}
The corresponding curvature 2-form at $\rho=0, t=1$ can be expressed as
\begin{align}
\label{ea:ambRiem0}
\tilde{\bm R}_{(0)}^{M}{}_{N}=&\left(\begin{array}{ccc}0 & -\hat{\bm{C}}_j & 0  \\0 & \hat{\bm{W}}_{(0)}^i{}_{j} & \hat{\bm{C}}^i \\0 & 0 & 0 \end{array}\right)+\left(\begin{array}{ccc}0 &\hat{\bm\Omega}^{(1)}_j & 0 \\0 &\hat{C}_{kj}{}^i\bm e^k & -\hat{\bm\Omega}_{(1)}^i\\0 & 0 & 0 \end{array}\right)\wedge\bm e^-\,,
\end{align}
where $\hat{\bm\Omega}^{(1)}_i ={\hat\Omega}^{(1)}_{ij}\bm e^j$, $\hat{\bm C}_i=\frac{1}{2}\hat{C}_{ikj}\bm e^j\wedge \bm e^k$, $\hat{\bm W}_{(0)}^i{}_j={\hat W}^i{}_{jkl}\bm e^k\wedge \bm e^l$. As expected, the first matrix in \eqref{ea:ambRiem0}, which represents the components of $\tilde{\bm R}_{(0)}^{M}{}_{N}$ in the $\bm e^i\wedge\bm e^j$ directions, is the curvature 2-form of the Cartan normal connection. The $\bm e^i\wedge\bm e^-$ components, on the other hand, give rise to the tensor $\hat \Omega^{(1)}_{ij}$ on $M$, which is expected to be the first extended Weyl-obstruction tensor. This implies that we can define the extended Weyl-obstruction tensors on the $d$-dimensional manifold $M$ by means of the $(d+2)$-dimensional Weyl-ambient space. Before getting to that, we first provide the following proposition, which shows that diffeomorphism-covariant tensors in the Weyl-ambient space are Weyl-covariant tensors when pulled back to $M$.

\begin{prop}
\label{prop1}
Let $IJKLM_{1}\dots M_{r}$ be a list of indices, $s_{+}$ of which are $+$, $s_{M}$ of which correspond to $x^i$, and $s_{-}$ of which are $-$, then under the ambient Weyl diffeomorphism \eqref{eq:Weyldiff}, we have
\begin{align}
\label{eq:prop1}
\tilde\nabla_{M_1}\cdots\tilde\nabla_{M_r} \tilde R'_{IJKL}|_{\rho'=0,t'=1}={\cal B}(x)^{2s_{-}-2} \tilde\nabla_{M_1}\cdots\tilde\nabla_{M_r} \tilde R_{IJKL}|_{\rho=0,t=1}\,.
\end{align}
\end{prop}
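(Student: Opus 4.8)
The plan is to split the transformation \eqref{eq:prop1} into a change of coframe at a fixed point of $\tilde M$ followed by a fibre dilatation $t\mapsto st$, using that under \eqref{eq:Weyldiff}--\eqref{eq:Weyldiff2} the Weyl-ambient metric $\tilde g$ keeps the form \eqref{eq:metricnull}: it is then literally the same metric in the two descriptions, so its Levi-Civita connection, its Riemann tensor and every iterated covariant differential $\tilde\nabla^{(r)}\tilde R$ coincide, and only their frame components change. I read $\tilde\nabla_{M_1}\cdots\tilde\nabla_{M_r}\tilde R_{IJKL}$ --- written $\Theta_{M_1\cdots M_r IJKL}$ below --- as the components of the genuine $(0,r{+}4)$-tensor $\tilde\nabla^{(r)}\tilde R$ in the frame $\{\un D_I\}$ of \eqref{eq:D+-i}, so that the whole statement is kinematical.

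First I would compute how the coframe \eqref{e+-def} responds to \eqref{eq:Weyldiff}--\eqref{eq:Weyldiff2}. From $t'={\cal B}t$, $\rho'={\cal B}^{-2}\rho$, $a'_i=a_i-\p_i\ln{\cal B}$ and the identity ${\cal B}\,\p_i\ln{\cal B}=\p_i{\cal B}$, a one-line computation gives $\bm e'^{+}={\cal B}\,\bm e^{+}$, $\bm e'^{\,i}=\bm e^{i}$ and $\bm e'^{-}={\cal B}^{-1}\bm e^{-}$, i.e.\ $\bm e'^{I}={\cal B}^{\epsilon_I}\bm e^{I}$ with $(\epsilon_{+},\epsilon_{i},\epsilon_{-})=(1,0,-1)$; this is the frame covariance already invoked below \eqref{eq:metricnull}. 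Since $\tilde\nabla^{(r)}\tilde R$ is the same tensor in both frames, comparing its frame components at one and the same point of $\tilde M$ gives
\[
\Theta'_{M_1\cdots M_r IJKL}={\cal B}^{-(\epsilon_{M_1}+\cdots+\epsilon_{M_r}+\epsilon_I+\epsilon_J+\epsilon_K+\epsilon_L)}\,\Theta_{M_1\cdots M_r IJKL}={\cal B}^{\,s_{-}-s_{+}}\,\Theta_{M_1\cdots M_r IJKL}\,,
\]
because each $+$ index contributes $+1$, each $-$ index $-1$, and each $x^i$ index (there are $s_M$ of them) $0$ to the exponent.

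Then I would use homogeneity. Under the dilatation $\delta_s:(t,x,\rho)\mapsto(st,x,\rho)$ one finds $\delta_s^{*}\bm e^{\pm}=s\,\bm e^{\pm}$ and $\delta_s^{*}\bm e^{i}=\bm e^{i}$ (recall $a_i$ and $\gamma_{ij}$ are $t$-independent); since $\delta_s^{*}\tilde g=s^{2}\tilde g$ and the Levi-Civita connection is unaffected by a constant rescaling, $\tilde\nabla^{(r)}\tilde R$ is homogeneous of degree $2$. In the frame this reads $\Theta_{M_1\cdots M_r IJKL}(st,x,\rho)=s^{\,2-(s_{+}+s_{-})}\,\Theta_{M_1\cdots M_r IJKL}(t,x,\rho)$, where now a $+$ and a $-$ index contribute equally and an $x^i$ index not at all. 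Combining this with the previous display, and using that the point with primed coordinates $(t'=1,x,\rho'=0)$ is the point with unprimed coordinates $(t={\cal B}(x)^{-1},x,\rho=0)$, I apply the coframe-change identity there and then slide from $t={\cal B}^{-1}$ to $t=1$ with $s={\cal B}^{-1}$:
\[
\Theta'_{M_1\cdots M_r IJKL}\big|_{\rho'=0,\,t'=1}={\cal B}^{\,s_{-}-s_{+}}\,{\cal B}^{\,(s_{+}+s_{-})-2}\,\Theta_{M_1\cdots M_r IJKL}\big|_{\rho=0,\,t=1}={\cal B}^{\,2s_{-}-2}\,\Theta_{M_1\cdots M_r IJKL}\big|_{\rho=0,\,t=1}\,,
\]
which is \eqref{eq:prop1}.

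No step is technically hard; the ``main obstacle'' is purely bookkeeping: one must keep the two rescalings conceptually distinct, since $\bm e^{+}$ and $\bm e^{-}$ carry \emph{opposite} ${\cal B}$-weights under the ambient Weyl diffeomorphism but the \emph{same} $s$-weight under the fibre dilatation, and one must track that $t'=1$ corresponds to $t={\cal B}^{-1}$ rather than $t={\cal B}$. It is also worth underlining that the claim is about components of the honest tensor $\tilde\nabla^{(r)}\tilde R$: reading $\tilde\nabla_{M_1}\cdots\tilde\nabla_{M_r}$ as nested directional derivatives along frame vectors would break covariance, because $\un D_i{\cal B}=\p_i{\cal B}\neq0$ while $\un D_{+}{\cal B}=\un D_{-}{\cal B}=0$.
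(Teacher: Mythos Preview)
Your proof is correct and follows essentially the same two-step strategy as the paper: first exploit that the null frame rescales under \eqref{eq:Weyldiff}--\eqref{eq:Weyldiff2} as $\bm e'^{\pm}={\cal B}^{\pm1}\bm e^{\pm}$, $\bm e'^{i}=\bm e^{i}$ (the paper phrases this dually as $\un D'_{\pm}={\cal B}^{\mp1}\un D_{\pm}$, $\un D'_i=\un D_i$), and then use the degree-$2$ homogeneity of $\tilde g$ in $t$. The only cosmetic difference is the order of operations: the paper changes frame at the physical point $(t=1,\rho=0)\leftrightarrow(t'={\cal B},\rho'=0)$ and then dilates in the primed chart from $t'={\cal B}$ to $t'=1$, whereas you change frame at $(t'=1,\rho'=0)\leftrightarrow(t={\cal B}^{-1},\rho=0)$ and dilate in the unprimed chart; the exponents combine identically either way.
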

\begin{proof}
Under the ambient Weyl diffeomorphism \eqref{eq:Weyldiff}, the vector basis $\{\un D_P\}$ transforms as
\be
\un D'_+={\cal B}(x)^{-1}\un D_+\,,\qquad\un D'_i=\un D_i\,,\qquad\un D'_-={\cal B}(x)\un D_-\,,
\ee
where
\begin{align}
\label{eq:D+-i'}
\un D'_+&=\un\p'_t-\frac{\rho'}{t'}\un\p'_\rho\,,\qquad\un D'_i=\un\p'_i-t'a'_i(x',\rho')\un\p'_t+2\rho' a'_i(x',\rho')\un\p'_\rho\,,\qquad\un D'_-=\frac{1}{t'}\un\p'_\rho\,.
\end{align}
Hence, 
\begin{align}
\label{eq:R'R}
\tilde\nabla_{M_1}\cdots\tilde\nabla_{M_r} \tilde R'_{IJKL}|_{\rho'=0,t'={\cal B}(x)}={\cal B}(x)^{s_{-}-s_+} \tilde\nabla_{M_1}\cdots\tilde\nabla_{M_r} \tilde R_{IJKL}|_{\rho=0,t=1}\,.
\end{align}
Noticing the fact that $\tilde g$ is homogeneous in $t$ with degree 2, and considering the $t$-dependence of $\un D_+$ and $\un D_-$ in \eqref{eq:D+-i}, we have
\begin{align}
\label{eq:R'R'}
\tilde\nabla_{M_1}\cdots\tilde\nabla_{M_r} \tilde R'_{IJKL}|_{\rho'=0,t'=1}={\cal B}(x)^{s_{-}+s_+-2} \tilde\nabla_{M_1}\cdots\tilde\nabla_{M_r} \tilde R'_{IJKL}|_{\rho'=0,t'={\cal B}(x)}\,.
\end{align}
Combining \eqref{eq:R'R} and \eqref{eq:R'R'} we obtain \eqref{eq:prop1}.
\end{proof}
Since diffeomorphism-covariant tensors can be constructed out of the Riemann tensor and its covariant derivatives \cite{Graham2007jet}, this proposition implies that the pullback of an ambient tensor $\tilde T_{M_1\cdots M_k}$ to $M$:
\be
T_{i_1\cdots i_{s_M}}\equiv \tilde T_{M_1\cdots M_k}|_{\rho=0,t=1},
\ee
is Weyl covariant with Weyl weight $2s_{-}-2$, where among the indices $M_1\cdots M_k$, $s_-$ of which are $-$, and $s_M$ of which correspond to $x^i$. For instance, from Proposition \ref{prop1} we can see that the tensors obtained in \eqref{eq:ambRiem} are all Weyl-covariant tensors on $M$, and the Weyl weights of $\hat{\Omega}^{(1)}_{ij}$, $\hat C_{ijk}$ and $\hat{W}^{(0)}_{ijkl}$ can be read off to be 2, 0, and $-2$, respectively, which are indeed the correct Weyl weights (see Table~\ref{table1} in Section~\ref{sec:Weyl}).
\par
As a special kind of Weyl-covariant tensor, we introduce the extended Weyl-obstruction tensors as follows.
\begin{defn}\label{def1}
Suppose $k$ is a positive integer. The $k^{th}$ extended Weyl-obstruction tensor $\hat \Omega^{(k)}_{ij}$ is defined as
\begin{align}
\label{eq:def2}
\hat \Omega^{(k)}_{ij}= \underbrace{\tilde\nabla_-\cdots\tilde\nabla_-}_{k-1} \tilde R_{- ij- }|_{\rho=0,t=1}.
\end{align}
\end{defn}
Some properties of Weyl-obstruction tensors can be readily seen from the above definition. From the symmetry of the Riemann tensor we can see that $\hat \Omega^{(k)}_{ij}$ is a symmetric tensor. It follows from Proposition~\ref{prop1} that $\hat \Omega^{(k)}_{ij}$ is Weyl covariant with Weyl weight $2k$. Also, from the Ricci-flatness condition we obtain that $\tilde g^{IJ}\tilde\nabla_{M_1}\cdots\tilde\nabla_{M_r}\tilde R_{IKJL}=0$, which gives rise to $\gamma_{(0)}^{ij}\hat \Omega^{(k)}_{ij}=0$, i.e.\ $\hat \Omega^{(k)}_{ij}$ is traceless.
\par
We have seen in \eqref{eq:BCW} that when $k=0$, this definition gives the $\hat\Omega^{(1)}_{ij}$ in \eqref{eq:Omega1}. By computing $\tilde\nabla_-\tilde R_{-ij-}$, one also finds that $\hat \Omega^{(2)}_{ij}$ defined in this way gives exactly the expression in \eqref{eq:Omega2} (see Appendix \ref{app:Null}). Notice again that before introducing Definition \ref{def1}, although we referred to $\hat\Omega^{(k)}_{ij}$ as the $k^{th}$ extended Weyl-obstruction tensor (especially in Section~\ref{sec:WOTbulk}), we should simply regard it as denoting the pole of $\gamma^{(k+1)}_{ij}$ at $d=2k+2$. Since there is an ambiguity when the pole is shifted by a finite term, that should not be treated as a precise definition for extended Weyl-obstruction tensors. Now the $\hat\Omega^{(k)}_{ij}$ defined through the Weyl-ambient space is uniquely determined. The proposition below will show that each $\hat\Omega^{(k)}_{ij}$ defined through the Weyl-ambient space indeed has a pole at $d=2k+2$, whose residue is the same as the pole in $\gamma^{(k+1)}_{ij}$. Therefore, the ambiguity of the pole in $\gamma^{(k+1)}_{ij}$ can be fixed by taking it to be the extended Weyl-obstruction tensor in Definition \ref{def1}. See the following proposition:

\begin{prop}
\label{prop:obspole}
Let $k\geqslant 2$ be an integer. Both the extended Weyl-obstruction tensor $\hat\Omega^{(k-1)}_{ij}$ and $\gamma^{(k)}_{ij}$ in the expansion \eqref{eq:gexpan} have a simple pole at $d=2k$. The residues satisfy
\be
\textnormal{Res}_{d=2k}\hat\Omega^{(k-1)}_{ij}=\frac{k!}{2}\textnormal{Res}_{d=2k}\gamma^{(k)}_{ij}\,.
\ee
More specifically, $\hat\Omega^{(k-1)}_{ij}$ has the following form:
\be
\hat\Omega^{(k-1)}_{ij}=\frac{(-1)^{k-1}\Gamma(d/2-k)}{2^{k-1}\Gamma(d/2-1)}(\Delta_{(0)}^{k-1}\hat P_{ij}-\Delta_{(0)}^{k-2}\hat\nabla^{(0)}_i\hat\nabla^{(0)}_kP_{j}{}^k+\cdots)\,,
\ee
where $\Delta_{(0)}\equiv\hat\nabla^{(0)}_k\hat\nabla_{(0)}^k$ and the ellipsis represents the terms with fewer number of $\hat\nabla^{(0)}$. The terms inside the brackets represent the Weyl-obstruction tensor.
\end{prop}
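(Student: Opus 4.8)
The plan is to trace both quantities back to the recursive solution of $\ti Ric(\ti g)=0$ set up in the proof of Theorem~\ref{Weyl_ambient_existence}, and then to compare $\hat\Omega^{(k-1)}_{ij}$ with $\gamma^{(k)}_{ij}$ order by order in $\rho$ and as meromorphic functions of $d$. First I would establish the pole of $\gamma^{(k)}_{ij}$ at $d=2k$. Recall from \eqref{Ricci_inductive} that, once $\phi_{00}$ and $\phi_{0i}$ are known to vanish (Theorem~\ref{thm:Weyl_ambient}), the $O(\rho^{k-1})$ part of the $ij$-component of Ricci-flatness reads $k\big[(k-\tfrac{d}{2})\gamma^{(k)}_{ij}-\tfrac{1}{2}\gamma^{(0)}_{ij}\gamma_{(0)}^{lm}\gamma^{(k)}_{lm}\big]+\mathcal I^{(k)}_{ij}=0$, where $\mathcal I^{(k)}_{ij}$ is built entirely from $\gamma^{(n)}_{ij}$ and $a^{(n)}_i$ with $n<k$. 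Contracting with $\gamma_{(0)}^{ij}$ determines $\gamma_{(0)}^{lm}\gamma^{(k)}_{lm}$ through the factor $k(k-d)$, which is regular at $d=2k$, so it suffices to track the trace-free part, whose coefficient is $k(k-\tfrac{d}{2})$. By induction on $k$, starting from $\gamma^{(1)}_{ij}=2\hat P_{(ij)}$ (whose only ``pole'', at $d=2$, has vanishing residue), each $\gamma^{(n)}_{ij}$ with $n<k$ is meromorphic in $d$ with poles only at $d=2n$; hence $\mathcal I^{(k)}_{ij}$ is regular at $d=2k$ and $\gamma^{(k)}_{ij}$ acquires a simple pole there. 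This is exactly the mechanism of \cite{Fefferman:2007rka,graham2009extended}, and the Weyl connection $a^{(0)}_i$ is harmless because, as noted below \eqref{Ricci_inductive}, it drops out of the $d$-dependent coefficients — the analysis is the conformal one with $\mathring\nabla,\mathring P$ replaced by $\hat\nabla^{(0)},\hat P$.

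Next I would relate $\hat\Omega^{(k-1)}_{ij}$ to $\gamma^{(k)}_{ij}$ through Definition~\ref{def1}. Using $\hat\Omega^{(k-1)}_{ij}=\tilde\nabla_-^{\,k-2}\tilde R_{-ij-}|_{\rho=0,t=1}$, the curvature component $\tilde R_{-ij-}$ is, by \eqref{eq:curv2form}--\eqref{eq:BCW}, a $t$-independent multiple of $\mathcal B_{ij}(x,\rho)$ from \eqref{eq:Bij}, whose only piece containing a second $\rho$-derivative of $\gamma$ is $\p_\rho\psi_{ij}=\tfrac{1}{2}\p_\rho^2\gamma_{ij}+\tfrac{1}{2}\p_\rho f_{ij}$. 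Since $\un D_-=t^{-1}\un\p_\rho$ and every connection coefficient $\tilde\Gamma^{\bullet}{}_{-\bullet}$ carries at most one $\p_\rho$ of $\gamma$ or $a$ (they involve only $\psi_{ij}$ and $\varphi_i$, cf.\ \eqref{eq:conn1form}), a Leibniz expansion of $\tilde\nabla_-^{\,k-2}\mathcal B_{ij}$ shows that the unique term carrying $\p_\rho^{\,k}\gamma_{ij}$ is $\tfrac{1}{2}\p_\rho^{\,k}\gamma_{ij}$, while every other term involves at most $\p_\rho^{\,k-1}$ of $\gamma$ or $a$, i.e.\ $\gamma^{(n\le k-1)}_{ij}$ or $a^{(n\le k-1)}_i$, all regular at $d=2k$. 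Evaluating at $\rho=0$ then gives $\hat\Omega^{(k-1)}_{ij}=\tfrac{k!}{2}\gamma^{(k)}_{ij}+(\text{regular at }d=2k)$, which yields at once the simple pole of $\hat\Omega^{(k-1)}_{ij}$ and the residue identity $\mathrm{Res}_{d=2k}\hat\Omega^{(k-1)}_{ij}=\tfrac{k!}{2}\mathrm{Res}_{d=2k}\gamma^{(k)}_{ij}$; the case $k=2$ is already visible in \eqref{g4}.

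For the explicit form I would track the top-derivative (``principal symbol'') part through the recursion. In $\mathcal I^{(k)}_{ij}$ the terms carrying two derivatives of $\gamma^{(k-1)}_{ij}$ come from the $-\tfrac{1}{2}\tilde g^{kl}\p^2_{kl}\tilde g_{ij}$ and $\tilde g^{kl}\p^2_{(i|k|}\tilde g_{j)l}-\tfrac{1}{2}\tilde g^{kl}\p^2_{ij}\tilde g_{kl}$ pieces of \eqref{Ricci}, which assemble into $-\tfrac{1}{2}\Delta_{(0)}\gamma^{(k-1)}_{ij}+\hat\nabla_{(0)}^{k}\hat\nabla^{(0)}_{(i}\gamma^{(k-1)}_{j)k}-\tfrac{1}{2}\hat\nabla^{(0)}_i\hat\nabla^{(0)}_j(\mathrm{tr}\,\gamma^{(k-1)})$ after promoting partials to $\hat\nabla^{(0)}$; dividing by $2k(k-\tfrac{d}{2})$ (the coefficient of $\gamma^{(k)}_{ij}$, including the $\tfrac{1}{2}$ from $\p_\rho^2\tilde g$ in \eqref{Ricci_inductive}) and iterating down to $\gamma^{(1)}_{ij}=2\hat P_{(ij)}$ produces $\gamma^{(k)}_{ij}=\big(\prod_{m=2}^{k}\tfrac{1}{2m(m-d/2)}\big)\,2\,\Delta_{(0)}^{\,k-1}\hat P_{ij}+\cdots$. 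Combining this with $\prod_{m=2}^{k}(m-\tfrac{d}{2})^{-1}=\Gamma(2-\tfrac{d}{2})/\Gamma(k+1-\tfrac{d}{2})=(-1)^{k-1}\Gamma(\tfrac{d}{2}-k)/\Gamma(\tfrac{d}{2}-1)$ (reflection formula) and with $\hat\Omega^{(k-1)}_{ij}=\tfrac{k!}{2}\gamma^{(k)}_{ij}+\cdots$ from the previous step gives the stated formula, the ``$\cdots$'' being precisely the lower-derivative corrections. I expect this last step to be the main obstacle: one must identify the full $d$-dependent second-order operator that propagates the principal symbol and verify that the bookkeeping of coefficients and index contractions reproduces exactly the combination $\Delta_{(0)}^{\,k-1}\hat P_{ij}-\Delta_{(0)}^{\,k-2}\hat\nabla^{(0)}_i\hat\nabla^{(0)}_kP_j{}^k+\cdots$ with the advertised normalization. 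The pole and residue statements, by contrast, follow cleanly from the two preceding paragraphs.
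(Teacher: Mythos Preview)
Your proposal is correct and follows essentially the same strategy as the paper's proof. Both arguments (i) extract the simple pole of $\gamma^{(k)}_{ij}$ at $d=2k$ from the Ricci-flatness recursion, (ii) relate $\hat\Omega^{(k-1)}_{ij}$ to $\tfrac{k!}{2}\gamma^{(k)}_{ij}$ modulo terms regular at $d=2k$ by observing that $\tilde\nabla_-^{\,k-2}\tilde R_{-ij-}|_{\rho=0}$ picks out $\p_\rho^{k-1}\psi_{ij}|_{\rho=0}$ as the unique highest-$\rho$-derivative piece, and (iii) track the principal symbol through the recursion. The only organizational difference is that the paper works with $\psi_{ij}=\tfrac{1}{2}(\p_\rho\gamma_{ij}+f_{ij})$ rather than $\gamma_{ij}$ directly, and packages (i) and (iii) into a single inductive formula \eqref{eq:psipole} obtained by repeatedly differentiating the all-orders Ricci equation \eqref{eq:Rij0} in $\rho$; this absorbs the Weyl-connection corrections somewhat more cleanly, but is equivalent to your order-by-order use of \eqref{Ricci_inductive}. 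One small imprecision: $\gamma^{(n)}_{ij}$ for $n<k$ generally has poles at $d=4,6,\ldots,2n$ (possibly higher-order at the smaller even integers), not only at $d=2n$; what matters for your argument, and what you correctly use, is regularity at $d=2k$.
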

\begin{proof}
First, let us show that $\gamma^{(k\geqslant2)}_{ij}$ has a pole at $d=2k$, which has the form
\be
\label{eq:gammapole}
\gamma^{(k)}_{ij}=\frac{(-1)^{k-1}\Gamma(d/2-k)}{2^{k-2}k!\Gamma(d/2-1)}(\Delta_{(0)}^{k-1}\hat P_{ij}-\Delta^{k-2}_{(0)}\hat\nabla^{(0)}_i\hat\nabla^{(0)}_kP_{j}{}^k+\cdots)\,.
\ee
We have seen this previously for $k=2$ and $3$. Using mathematical induction, now we will prove the following equation for $k\geqslant2$:
\be
\label{eq:psipole}
(d-2k)\p^{k-1}_{\rho}\psi_{ji}=\frac{(-1)^{k-1}\Gamma(d/2-k+1)}{2^{k-2}\Gamma(d/2-1)}(\Delta^{k-1}\psi_{ji}-\Delta^{k-2}\hat\nabla_i\hat\nabla_k\psi^k{}_{j}+{\cdots})+2\rho\p^{k}_{\rho}\psi_{ij}+O(\rho)\,,
\ee
where $\Delta\equiv\hat\nabla_k\hat\nabla^k$. This relation leads to \eqref{eq:gammapole} when $\rho=0$ since $\psi_{ij}=\frac{1}{2}(\p_\rho\gamma_{ij}+f_{ij})$ (the $f_{ij}$ in the left-hand side are combined in the ellipsis). Differentiating the Ricci-flatness condition of the form \eqref{eq:Rij0} with respect to $\rho$ and use the expression \eqref{eq:Rijprho} we can see that
\be
\label{eq:psipole2}
(d-4)\p_{\rho}\psi_{ji}=-(\Delta\psi_{ji}-\hat\nabla_i\hat\nabla_k\psi^k{}_{j}+{\cdots})+2\rho\p^2_{\rho}\psi_{ij}+O(\rho)\,,
\ee
which is \eqref{eq:psipole} in the case $k=2$. Now we assume \eqref{eq:psipole} holds for $k=n$. Differentiating both sides of \eqref{eq:psipole2}  for $n-1$ times with respect to $\rho$ yields
\be
(d-2n-2)\p^{n}_{\rho}\psi_{ji}=-\p^{n-1}_\rho(\Delta\psi_{ji}-\hat\nabla_i\hat\nabla_k\psi^k{}_{j}+\cdots)+2\rho\p^{n+1}_{\rho}\psi_{ij}+O(\rho)\,.
\ee
Note that $\p_\rho$ produces two $\hat\nabla$ when acting on $\psi$, while it only produces one $\hat\nabla$ when acting on $\tilde\Gamma^i{}_{jk}$, and thus when we commute $\p_\rho$ with $\hat\nabla$, the new terms only contribute to the ellipsis. Hence,
\begin{align}
&(d-2n-2)\p^{n}_{\rho}\psi_{ji}=-(\Delta\p^{n-1}_\rho\psi_{ji}-\hat\nabla_i\hat\nabla^k\p^{n-1}_\rho\psi_{kj}+\cdots)+2\rho\p^{n+1}_{\rho}\psi_{ij}+O(\rho)\nn\\
={}&\frac{(-1)^{n}\Gamma(d/2-n)}{2^{n-1}\Gamma(d/2-1)}(\Delta^{n}\psi_{ji}-\Delta^{n-1}\hat\nabla_i\hat\nabla_k\p_\rho\psi^k{}_{j}+\cdots)+2\rho\p^{n+1}_{\rho}\psi_{ij}+O(\rho)\,,
\end{align}
where {we used \eqref{eq:R+i0} and} the assumption that \eqref{eq:psipole} holds for $k=n$. This is exactly \eqref{eq:psipole} for $k=n+1$, and thus \eqref{eq:psipole} is proved for any $k\geqslant2$. Therefore, at $\rho=0$ we have
\be
\p^k_{\rho}\psi_{ji}|_{\rho=0}=\frac{(-1)^{k-1}\Gamma(d/2-k-1)}{2^{k}\Gamma(d/2-1)}(\Delta_{(0)}^{k}\hat P_{ij}-\Delta_{(0)}^{k-1}\hat\nabla^{(0)}_i\hat\nabla^{(0)}_k\hat P_{j}{}^k+\cdots)\,.
\ee
From \eqref{eq:curv2form} we can read off that
\be
\tilde R_{-ij-}={\cal B}_{ij}=\p_\rho\psi_{ij}-\psi_{ik}\psi_j{}^k-\hat\nabla_i\varphi_j-2\rho\varphi_i\varphi_j\,.
\ee
Hence, the Weyl-obstruction tensor $\hat\Omega^{(k)}_{ij}$ has the form
\begin{align}
\hat\Omega^{(k-1)}_{ij}&=\underbrace{\tilde\nabla_-\cdots\tilde\nabla_-}_{k-2}\tilde R_{-ij-}|_{\rho=0,t=1}=\p^{k-1}_\rho\psi_{ij}|_{\rho=0}+\cdots\nn\\
\label{eq:poleOmega}
&=\frac{(-1)^{k-1}\Gamma(d/2-k)}{2^{k-1}\Gamma(d/2-1)}(\Delta_{(0)}^{k-1}\hat P_{ij}-\Delta_{(0)}^{k-2}\hat\nabla^{(0)}_i\hat\nabla^{(0)}_kP_{j}{}^k+{\cdots})\,,
\end{align}
where finite terms at $d=2k$ are shifted into the pole. On the other hand, from \eqref{eq:poleOmega} we also have
\begin{align}
\text{Res}_{d=2k}\hat\Omega^{(k-1)}_{ij}&=\text{Res}_{d=2k}\p^k_\rho\psi_{ij}|_{\rho=0}=\frac{k!}{2}\text{Res}_{d=2k}\gamma^{(k)}_{ij}\,,
\end{align}
where in the second equality we considered that $f_{ij}$ does not contribute to the pole.
\end{proof}
This proposition indicates that both the extended Weyl-obstruction tensor $\hat\Omega^{(k-1)}_{ij}$ and $\gamma^{(k)}_{ij}$ are meromorphic functions, which are holomorphic in the whole complex plane except at even integers $d=4,6,\cdots,2k$. We have seen that the pole at $d=2k$ is a simple pole, while the pole at a lower even dimension could be of higher order. These two tensors only differ by terms that are finite at $d=2k$. Therefore,  we can express $\gamma^{(k)}_{ij}$ in terms of $\hat\Omega^{(k-1)}_{ij}$ plus finite terms as we have seen for $k=1,2$ in \eqref{g4} and \eqref{g6}.
\par
In the next subsection, we will introduce the extended Weyl-obstruction tensors in the second order formalism \`a la \cite{Fefferman:2007rka} and show that the two definitions are equivalent.

\subsection{Second-Order Formalism}
\label{sec:secondorder}
In Subsection \ref{sec:topdown} we have seen that Weyl-obstruction tensors can be defined as the derivatives of the ambient Riemann tensor in the first order formalism. In this subsection we will follow the setup of the present section in the second order formalism and show that appropriate ambient tensors constructed from the Weyl-ambient Riemann tensor on $\tilde M$ behave as Weyl-covariant tensors on $M$, through which Weyl-obstruction tensors can again be defined as a special case.   Then we will show that the Weyl-obstruction tensors defined in this way agree with the Weyl-obstruction tensors we defined previously in Definition \ref{def1}. 
\par

We have proven in Subsection \ref{sec:bottomup} that for any pair of $(g,a)$ on $M$, there exists a unique Weyl-ambient space $(\tilde M,\tilde g)$ for the Weyl manifold $(M,[g,a])$ where $\tilde g$ has the form of \eqref{Weyl_ambient}. In Subsection \ref{sec:topdown} we saw that the ambient Weyl diffeomorphism
\be
\label{eq:ambientdiff}
(t',x'^i,\rho')=({\cal B}(x)t,x^i,{\cal B}^{-2}(x)\rho)
\ee
induces a Weyl transformation on $M$. Therefore, to find a Weyl-covariant tensor on $(M,[g,a])$, we can find an ambient tensor which is covariant under an ambient Weyl diffeomorphism, and its pullback on $M$ will be Weyl covariant.
\par
The first main result of this subsection is the following proposition. This  provides  the Weyl transformations of tensors constructed from covariant derivatives of the Riemann tensor of a Weyl-ambient metric, from which we can see which tensors are Weyl covariant when pulled back to $M$.

\begin{prop}\label{Proposition_6.5_Graham}
Suppose $(\tilde M,\tilde g)$ is the Weyl-ambient space for $(M,[g,a])$, and let $(g,a)$ and $(g',a')$ be two representatives of $[g,a]$, with $g'_{ij}={\cal B}^{-2}g_{ij}$ and $a'_i=a_i-\p_i\ln{\cal B}$.
Let $IJKLM_{1}\dots M_{r}$ be a list of indices, $s_{0}$ of which are $0$, $s_{M}$ of which are $x^i$ on $M$, and $s_{\infty}$ of which are $\infty$. Then, the following components of the covariant derivatives of the Riemann tensor $\tilde R_{ABCD}$ of $\ti g$ in the trivialization defined by $g$ satisfy the transformation law
\begin{equation}\label{Proposition_6.5_Graham_main}
\ti R'_{IJKL;M_{1}\cdots M_{r}}|_{\rho'=0,t'=1}={\cal B}(x)^{2(s_{\infty }-1)} \ti R_{ABCD;F_{1}\cdots F_{r}}|_{\rho=0,t=1}p^{A}{}_{I}\cdots p^{F_{r}}{}_{M_{r}}
\end{equation}
under an ambient Weyl diffeomorphism \eqref{eq:ambientdiff}, where $p^{A}{}_{I}$ is the matrix 
\begin{equation}\label{p_matrix}
p^{I}{}_{J}= \begin{blockarray}{cccc}
 & 0 & j & \infty  \\
\begin{block}{c(ccc)}
  0 & 1& \Upsilon_{j} &0\\
   i & 0  & \delta^{i}{}_{j}& 0 \\
  \infty & 0&0& 1\\
\end{block}
\end{blockarray}\,\,,
\end{equation}
and $\Upsilon(x)\equiv - \ln {\cal B}(x)$, $\Upsilon_{j}\equiv \p_{j}\Upsilon(x)$.
$\ti R'_{IJKL;M_{1}\dots M_{r}}$ denotes covariant derivatives of the Riemann tensor of $\tilde g$ in the coordinates $X'^{I}=(t', x'^{i},\rho')$ given by the trivialization provided by $g'$.
\end{prop}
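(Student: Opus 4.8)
The plan is to read \eqref{Proposition_6.5_Graham_main} as a transformation law for coordinate components. As explained after \eqref{eq:Weyldiff2} (and consistent with Theorems \ref{diff_Weyl_normal_form} and \ref{Weyl_ambient_existence}), passing from the trivialization of ${\cal P}_W\times\bb R$ induced by $g$ to the one induced by $g'$ is implemented by the ambient Weyl diffeomorphism \eqref{eq:ambientdiff}: in the $X^I=(t,x^i,\rho)$ chart the metric is the Weyl-ambient metric \eqref{Weyl_ambient} with data $(g,a)$, and in the $X'^I=(t',x'^i,\rho')$ chart it is the same metric with data $(g',a')$. Since $\tilde\nabla$ is the Levi-Civita connection of the fixed metric $\tilde g$, the iterated covariant derivatives $\tilde\nabla^{r}\tilde R$ are genuine $(0,4+r)$-tensors on $\tilde M$; hence $\tilde R'_{IJKL;M_1\cdots M_r}$ and $\tilde R_{ABCD;F_1\cdots F_r}$ are the components of one tensor in the two charts, related by the Jacobian of \eqref{eq:ambientdiff}. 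The argument then proceeds in the two steps used for Proposition \ref{prop1}, now in the coordinate basis: (a) compare the two sets of components at a fixed point of $\tilde M$ via the Jacobian; (b) slide the evaluation point along the dilatation orbit from $t=1$ to the point where $t'=1$, using the homogeneity $\delta^*_s\tilde g=s^2\tilde g$.

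For (a), I would first compute the inverse Jacobian of \eqref{eq:ambientdiff} restricted to $\rho=0$. A short calculation gives
\begin{equation}
\frac{\p t}{\p t'}=\mathcal B^{-1}\,,\qquad \frac{\p t}{\p x'^j}=t\,\Upsilon_j\,,\qquad \frac{\p x^i}{\p x'^j}=\delta^i{}_j\,,\qquad \frac{\p \rho}{\p\rho'}=\mathcal B^2\,,
\end{equation}
with all remaining entries vanishing at $\rho=0$; in particular the entry $\p\rho/\p x'^j=-2\rho\,\Upsilon_j$, which would otherwise spoil the triangular structure, is killed by $\rho=0$. Thus on the locus $\{\rho=0,\ t=1\}$ the inverse Jacobian factorizes as $\p X^A/\p X'^I = p^A{}_K\,m^K{}_I$, where $p$ is the matrix \eqref{p_matrix} and $m$ is diagonal with entries $\mathcal B^{-1}$ in the $t$-slot, $\delta^i{}_j$ in the $x^i$-block, and $\mathcal B^2$ in the $\rho$-slot. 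Evaluating the tensor transformation law at the point $\tilde p_1$ whose $X$-coordinates are $(t=1,x,\rho=0)$ — equivalently $(t'=\mathcal B(x),x,\rho'=0)$ in the primed chart — and noting that each $\infty$-index in the list $IJKLM_1\cdots M_r$ then carries a factor $\mathcal B^2$ and each $0$-index a factor $\mathcal B^{-1}$ from $m$, one gets
\begin{equation}
\tilde R'_{IJKL;M_1\cdots M_r}\big|_{\rho'=0,\,t'=\mathcal B(x)} = \mathcal B(x)^{2s_\infty-s_0}\,p^A{}_I\cdots p^{F_r}{}_{M_r}\,\tilde R_{ABCD;F_1\cdots F_r}\big|_{\rho=0,\,t=1}\,.
\end{equation}

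For (b), I would observe that $\tilde R_{ABCD;F_1\cdots F_r}$ is built polynomially from $\tilde g_{IJ}$, $\tilde g^{IJ}$, the Christoffel symbols of $\tilde g$ and their coordinate derivatives, and that $\delta^*_s\tilde g=s^2\tilde g$ forces each such building block to have a definite homogeneity degree in $t$: from \eqref{WFG_ambient} one reads off that $\tilde g_{IJ}$ has $t$-weight $2-n_0$ and $\tilde g^{IJ}$ has $t$-weight $n_0-2$, where $n_0$ counts the $0$'s among the displayed indices, whence $\tilde R_{IJKL;M_1\cdots M_r}$ has $t$-weight $2-s_0$. Consequently $\tilde R'_{IJKL;M_1\cdots M_r}|_{t'=1}=\mathcal B^{\,s_0-2}\,\tilde R'_{IJKL;M_1\cdots M_r}|_{t'=\mathcal B(x)}$, and substituting the identity from (a), the powers of $\mathcal B$ carrying the exponent $s_0$ cancel, leaving $\mathcal B(x)^{2(s_\infty-1)}\,p^A{}_I\cdots p^{F_r}{}_{M_r}\,\tilde R_{ABCD;F_1\cdots F_r}|_{\rho=0,t=1}$, which is exactly \eqref{Proposition_6.5_Graham_main}. (As a consistency check, under the identification of the $\bm e^-$-direction of \eqref{e+-def} with the $\rho$-coordinate direction at $\rho=0$, this matches the exponent $2s_--2$ of Proposition \ref{prop1}.)

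The step I expect to be the main obstacle is the homogeneity bookkeeping in (b): one must verify that every factor entering $\tilde R_{ABCD;F_1\cdots F_r}$ — including the $\tilde g^{KL}\tilde g^{PQ}\tilde\Gamma\tilde\Gamma$ pieces of \eqref{Ricci} and the extra Christoffel terms produced by each additional $\tilde\nabla$ — carries precisely the advertised $t$-weight and that these weights are simply additive, so that the total degree is $2-s_0$ with no stray cross-terms. This rests on the fact that the only $t$-dependence of $\tilde g$ sits in the explicit prefactors displayed in \eqref{WFG_ambient}, since $a_i(x,\rho)$ and $g_{ij}(x,\rho)$ are $t$-independent and the homogeneity vector field is $\un{\cal T}=t\un\p_t$; granting this, the remainder of the argument is forced and runs parallel to the proof of Proposition \ref{prop1} and to Proposition 6.5 of \cite{Fefferman:2007rka}.
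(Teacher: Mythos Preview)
Your proposal is correct and follows essentially the same approach as the paper: compute the Jacobian of the ambient Weyl diffeomorphism at $\rho=0$, factorize it as the matrix $p$ times a diagonal piece carrying the $\mathcal B$-powers, apply the tensorial transformation law at the point $(t=1,\rho=0)\leftrightarrow(t'=\mathcal B,\rho'=0)$, and then use the degree-$2$ homogeneity $\delta_s^*\tilde g=s^2\tilde g$ to slide from $t'=\mathcal B$ to $t'=1$. The only cosmetic difference is that the paper writes the Jacobian as a three-factor product $d_1\,p\,d_2$ before evaluating (so that $d_1$ becomes the identity at $t'=\mathcal B$), whereas you evaluate first and obtain directly the two-factor form $p\cdot m$; the homogeneity step you flag as the ``main obstacle'' is simply asserted in the paper without further detail.
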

\begin{proof}
The logic for the proof of this Proposition follows the proof of Proposition 6.5 in \cite{Fefferman:2007rka} closely. We start by observing that the ambient Weyl diffeomorphism $\psi:(t',x'^{i},\rho')\mapsto(t,x^{i},\rho)$ has the following properties:
\begin{equation}\label{psi_requirements}
\psi(t',x'^{i},0)= \big(t' \E^{\Upsilon (x)},x'^{i},0 \big)\,,\qquad\psi^{*}\ti g|_{\rho'=0}= 2t'\td\rho'\td t'+ t'^2 g'_{ij}\td x'^{i}\td x'^{j}+ 2t'^2 a'_{i}\td x'^{i}\td\rho'\,,
\end{equation}
where the Weyl-ambient metric $\ti g$ has the form of \eqref{Weyl_ambient}, and $g'_{ij}= {\cal B}(x)^{-2}g_{ij}$, $a'_{i}= a_{i}+\Upsilon_i$. The Jacobian $(\psi)^{A}{}_{I}= \left(\frac{\p X^{I}}{\p X'^{J}}\right)$ of this diffeomorphism is
\begin{equation}\label{Jacobian_Weyl_diff}
(\psi)^{I}{}_{J}\equiv 
\left(\begin{array}{ccc}
\psi^{t}{}_{t'} & \psi^{t}{}_{j'} & \psi^{t}{}_{\rho'}\\
\psi^{i}{}_{t'} & \psi^{i}{}_{j'} & \psi^{i}{}_{\rho'}\\
\psi^{\rho}{}_{t'}& \psi^{\rho}{}_{j'}& \psi^{\rho}{}_{\rho'}
\end{array}\right)
=
\left(\begin{array}{ccc}
\E^{\Upsilon(x) } &t' \E^{\Upsilon (x)}  \Upsilon_{j} & 0\\
0 & \delta^{i}{}_{j}& 0\\
0&-2\rho' \E^{-2\Upsilon(x)}\Upsilon_{j}& \E^{-2\Upsilon(x)}
\end{array}\right)\,,
\end{equation}
where $\Upsilon(x)\equiv -\text{ln}{\cal B}(x)$ and $\Upsilon_{i}\equiv \p_{i}\Upsilon(x)$.
At $\rho'=0$ the Jacobian matrix \eqref{Jacobian_Weyl_diff} reads
\begin{equation}\label{Jacobian_Weyl_diff_rho=0}
(\psi)^{A}{}_{I}|_{\rho' =0}= 
\left(\begin{array}{ccc}
\E^{\Upsilon(x) } & t' \E^{\Upsilon (x)}  \Upsilon_{j} & 0\\
0 & \delta^{i}{}_{j}& 0\\
0&0& \E^{-2 \Upsilon(x)}
\end{array}\right)\,.
\end{equation}
The above matrix can be written as the following matrix product:
\begin{equation}
(\psi)^{A}{}_{I}|_{\rho' =0} =d_{1}p d_{2}\,,
\end{equation}
with 
\begin{equation}\label{pd_matrix}
p^{I}{}_{J}=
\left(\begin{array}{ccc}
1& \Upsilon_{j} &  0\\
0 & \delta^{i}{}_{j}& 0\\
0&0& 1
\end{array}\right),\quad
d_{1}=
\left(\begin{array}{ccc}
 t' \E^{\Upsilon(x)}& 0 &  0\\
0 & \delta^{i}{}_{j}& 0\\
0&0& 1
\end{array}\right),\quad
d_{2}=
\left(\begin{array}{ccc}
 t'^{-1} & 0 &  0\\
0 & \delta^{i}{}_{j}& 0\\
0&0& \E^{-2\Upsilon(x)}
\end{array}\right).
\end{equation}

Since the Weyl-ambient metric is homogeneous of degree 2 under dilatations $\delta_{s}^{*}\ti g= s^{2}\ti g$, it follows that the left-hand side of \eqref{Proposition_6.5_Graham_main} satisfies
\begin{equation}\label{homog_Riemann}
\ti R'_{IJKL;M_{1}\cdots M_{r}}|_{\rho'=0,t'=1}= {\cal B}(x)^{s_{0}-2}\ti R'_{IJKL;M_{1}\cdots M_{r}}|_{\rho'=0,t'={\cal B}(x)}\,.
\end{equation}
Under the ambient Weyl diffeomorphism \eqref{eq:ambientdiff} the covariant derivatives of the ambient Riemann curvature components transform tensorially as
\begin{equation}\label{WFG_tensorially}
\ti R'_{IJKL;M_{1}\cdots M_{r}}|_{\rho',t'}= \ti R_{ABCD;F_{1}\cdots F_{r}}|_{\rho,t} (\psi)^{A}{}_{I}\cdots (\psi)^{F_{r}}{}_{M_{r}}\,.
\end{equation}
Evaluating both sides of \eqref{WFG_tensorially} at $\rho'=0$, $t'=e^{-\Upsilon(x)}$ and using \eqref{pd_matrix} we have
\begin{equation}\label{Riemann_tensor_tensorially} 
\ti R'_{IJKL;M_{1}\cdots M_{r}}|_{\rho'=0,t'=e^{-\Upsilon(x)}}={\cal B}(x)^{2s_{\infty}- s_{0}} \ti R_{ABCD;F_{1}\cdots F_{r}}|_{\rho=0,t=1}p^{A}{}_{I}\cdots p^{F_{r}}{}_{M_{r}}\,.
\end{equation}
Plugging \eqref{homog_Riemann} into \eqref{Riemann_tensor_tensorially}, we  obtain \eqref{Proposition_6.5_Graham_main}.
\end{proof}

Theorem \ref{Proposition_6.5_Graham} helps us to find  Weyl-covariant tensors on $(M,[g,a])$. First let us look at the case without derivatives. In the coordinate basis, the nonvanishing components of the Weyl-ambient Riemann tensor $\tilde R_{IJKL}$ are $\ti R_{\infty jk\infty}$, $\ti R_{\infty jkl}$ and $\ti R_{ijkl}$.  Evaluating at $\rho=0$ and $t=1$, they are

\begin{align}
\label{eq:Rijkl}
\ti R_{\infty jk\infty}|_{\rho=0,t=1}= \hat\Omega^{(1)}_{jk}\,,\qquad\ti R_{\infty jkl}|_{\rho=0,t=1}= \hat C_{jkl}\,,\qquad\ti R_{ijkl}|_{\rho=0,t=1}= \hat W_{ijkl}\,.
\end{align}
Here $\hat C_{jkl}$ and $\hat W_{ijkl}$ are the Weyl-Cotton tensor and the Weyl curvature tensor on $M$, respectively, and $\hat\Omega^{(1)}_{jk}$ for now simply denotes the tensor defined in \eqref{eq:Omega1}. Then, applying \eqref{Proposition_6.5_Graham_main} we get $\hat C'_{jkl}= \hat C_{jkl}$ and $\hat W'_{ijkl}= {\cal B}^{-2}(x) \hat W_{ijkl}$  under Weyl transformation as expected, we can also read off from \eqref{Proposition_6.5_Graham_main} that the Weyl weight of $\hat\Omega^{(1)}_{jk}$ is $+2$.

\par
Now we will define Weyl-obstruction tensors as the derivatives of $\ti R_{\infty jk\infty}$. 
 \begin{defn}\label{Weyl_ext_Obst}
Suppose $k$ is a positive integer. The $k^{th}$ extended Weyl-obstruction tensor $\hat \Omega^{(k)}_{ij}$ is defined as
\begin{equation}
\label{eq:def1}
\hat \Omega^{(k)}_{ij}= \ti R_{\infty ij\infty ;\underbrace{\scriptstyle \infty\cdots\infty}_{k-1}}|_{\rho=0,t=1}.
\end{equation}
For $k=1$ we can see from \eqref{eq:Rijkl} that $\ti R_{\infty jk\infty}|_{\rho=0,t=1}= \hat\Omega^{(1)}_{jk}$ is indeed the first extended Weyl-obstruction tensor.
\end{defn}

From the symmetry of the Weyl-ambient Riemann tensor we can immediately see that $\hat \Omega^{(k)}_{ij}$ given by Definition \ref{Weyl_ext_Obst} is symmetric. From the Ricci-flatness condition $\ti Ric(\ti g)=0$ and the fact that $\ti R_{0 IJK}=0$, we can see that $\hat \Omega^{(k)}_{ij}$ is traceless. Now we will show another important property of the extended Weyl-obstruction tensors defined in this way, namely that they are Weyl covariant. 
\begin{lemma}
\label{lem:lemmaR0}
The components of the Riemann tensor of the Weyl-ambient metric $\tilde g$ satisfy
\be
\label{eq:lemmaR0}
\tilde R_{IJK0;M_1\cdots M_r}=-\frac{1}{t}\sum^r_{s=1}\tilde R_{IJKM_s;M_1\cdots\hat{M}_s\cdots M_r}\,,
\ee
where $\hat{M}_s$ means to remove $M_s$ from the indices.
\end{lemma}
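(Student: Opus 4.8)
The plan is to exploit the homogeneity of degree $2$ of the Weyl-ambient metric, which forces strong relations on the Riemann tensor and its covariant derivatives whenever a $0$-index (the $t$-direction) appears. Recall that the Euler vector field $\un{\cal T} = t\un\p_t = \un{\cal T}$ satisfies ${\cal L}_{\un{\cal T}}\tilde g = 2\tilde g$; in the ambient coordinates $\{t,x^i,\rho\}$ this means $\tilde g_{IJ}$ is homogeneous of degree $2$ in $t$, and correspondingly $\tilde g^{IJ}$ is homogeneous of degree $-2$. A standard computation (this is the ambient-metric analogue of the statements in Section~3 of \cite{Fefferman:2007rka}) shows that $\un{\cal T}$ is in fact a homothety, so that $\tilde\nabla_I \un{\cal T}_J = \tilde g_{IJ}$, equivalently $\tilde\nabla_I \un{\cal T}^J = \delta_I{}^J$. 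From this one derives the contraction identity $\un{\cal T}^L \tilde R_{IJKL} = 0$, i.e.\ contracting the Riemann tensor of $\tilde g$ on any slot with $\un{\cal T}$ gives zero. In the trivialization, $\un{\cal T} = t\,\un\p_0$, so $\un{\cal T}^L = t\,\delta^L{}_0$ and this identity reads $\tilde R_{IJK0} = 0$ in the coordinate basis — which is exactly the $r=0$ case of the claimed formula (the right-hand side being an empty sum), and is also the content already invoked in the text just above as ``the fact that $\tilde R_{0IJK}=0$''.

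First I would establish the homothety property $\tilde\nabla_I\un{\cal T}^J = \delta_I{}^J$, either by direct computation from the Christoffel symbols of $\tilde g$ in the form \eqref{Weyl_ambient} (using that all of $c$, $b_i$, $g_{ij}$, $a_i$ are $t$-independent and $\tilde g$ is quadratic in $t$), or by a coordinate-free argument: ${\cal L}_{\un{\cal T}}\tilde g = 2\tilde g$ together with torsion-freeness of $\tilde\nabla$ gives $2\tilde\nabla_{(I}\un{\cal T}_{J)} = 2\tilde g_{IJ}$, and then the fact that $\un{\cal T}$ is also a gradient (namely $\un{\cal T}_I = \tilde\nabla_I(\tfrac12\tilde g(\un{\cal T},\un{\cal T}))$ up to the relevant normalization, or more simply that $\tilde\nabla_{[I}\un{\cal T}_{J]}=0$ because $\un{\cal T}_I\,\td X^I = t\,\td t + t^2 a_i\,\td x^i + \ldots$ is checked to be closed at the relevant order) promotes the symmetrized identity to the full one $\tilde\nabla_I\un{\cal T}_J = \tilde g_{IJ}$. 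Then I would commute $\un{\cal T}$ through a covariant derivative using the Ricci identity to deduce how $\tilde\nabla^r\tilde R$ behaves under contraction with $\un{\cal T}$.

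The key step is the inductive identity. Write $\un{\cal T}^L \tilde R_{IJKL} = 0$ and apply $\tilde\nabla_{M_1}\cdots\tilde\nabla_{M_r}$ to the scalar equation $\un{\cal T}^L \tilde R_{IJKL;M_1\cdots M_r} + (\text{terms where a derivative hits }\un{\cal T}^L) = 0$. Using $\tilde\nabla_{M}\un{\cal T}^L = \delta_M{}^L$, each of the $r$ derivatives that lands on the $\un{\cal T}$-factor converts $\un{\cal T}^L\tilde R_{IJKL;\cdots}$ into a term $\tilde R_{IJKM_s;M_1\cdots\widehat{M_s}\cdots M_r}$ (after relabelling the contracted index), while the term where all derivatives hit the Riemann tensor gives $\un{\cal T}^L\tilde R_{IJKL;M_1\cdots M_r}$. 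Since $\un{\cal T}^L = t\,\delta^L{}_0$ in coordinates, this last term is $t\,\tilde R_{IJK0;M_1\cdots M_r}$, and rearranging yields precisely
\[
t\,\tilde R_{IJK0;M_1\cdots M_r} = -\sum_{s=1}^r \tilde R_{IJKM_s;M_1\cdots\widehat{M_s}\cdots M_r}\,,
\]
which is the assertion after dividing by $t$. I would present this as an induction on $r$ to keep the bookkeeping of the commutators clean, the base case $r=0$ being $\tilde R_{IJK0}=0$.

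The main obstacle is organizing the combinatorics of ``which derivative hits the Euler field,'' i.e.\ making precise that $\tilde\nabla_{M_1}\cdots\tilde\nabla_{M_r}(\un{\cal T}^L\tilde R_{IJKL})$ expands into the $\un{\cal T}$-contracted term plus exactly the $r$ terms listed, with no extra curvature corrections surviving. The subtlety is that $\tilde\nabla_{M_a}$ and $\tilde\nabla_{M_b}$ do not commute, so one must either symmetrize carefully or argue that the commutator terms are themselves of the form $\tilde R\cdot(\tilde\nabla^{r-2}\tilde R)$ contracted with $\un{\cal T}$ and hence already covered by the inductive hypothesis. Because $\tilde\nabla_M\un{\cal T}^L = \delta_M{}^L$ exactly (no curvature correction, as the second covariant derivative $\tilde\nabla_N\tilde\nabla_M\un{\cal T}^L = \tilde R^L{}_{? ? ?}$-type term which again contracts back into the controlled structure via $\un{\cal T}^L\tilde R_{IJKL}=0$), these commutator terms vanish or collapse; I would verify this explicitly once and then the induction runs smoothly. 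This is genuinely the only nonroutine point; everything else is the homogeneity bookkeeping that is already standard from \cite{Fefferman:2007rka} and carries over verbatim to the Weyl-ambient setting because the homothety $\un{\cal T} = t\un\p_t$ is unchanged.
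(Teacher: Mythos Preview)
Your approach is correct and matches the paper's proof: establish $\tilde\nabla_J\un{\cal T}^I = \delta_J{}^I$ from the Christoffel symbols of $\tilde g$, then expand $\tilde\nabla_{M_r}\cdots\tilde\nabla_{M_1}(\un{\cal T}^L\tilde R_{IJKL}) = 0$ via the Leibniz rule. Your ``commutator obstacle'' is a non-issue once you observe that $\tilde\nabla_K\tilde\nabla_J\un{\cal T}^I = \tilde\nabla_K\delta_J{}^I = 0$ identically (this is exactly what the paper uses), so peeling off derivatives one at a time never generates any second-or-higher-derivative-of-$\un{\cal T}$ terms and no commutation of covariant derivatives is ever needed.
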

\begin{proof}
Computing the Christoffel symbols of the Weyl-ambient metric $\tilde g$ in \eqref{Weyl_ambient}, one finds $\tilde\Gamma^i{}_{j0}=\frac{1}{t}\delta^i{}_j$ and $\tilde\Gamma^\infty{}_{\infty0}=\frac{1}{t}$. Differentiating $\un{\cal T}=t\un\p_t$ we have ${\cal T}^I{}_{;J}=\delta^I{}_{J}$ and ${\cal T}^I{}_{;JK}=0$, then
\begin{align*}
({\cal T}^L\tilde R_{IJKL})_{;M_1\cdots M_r}&=\tilde R_{IJKM_1;M_2\cdots M_r}+({\cal T}^L\tilde R_{IJKL,M_1})_{;M_2\cdots M_r}\\
&=\tilde R_{IJKM_1;M_2\cdots M_r}+\tilde R_{IJKM_2;M_2\cdots M_r}+({\cal T}^L\tilde R_{IJKL,M_1M_2})_{;M_3\cdots M_r}\\
&=\cdots\\
&=\tilde R_{IJKM_1;M_2\cdots M_r}+\cdots+\tilde R_{IJKM_r;M_1\cdots M_{r-1}}+{\cal T}^L\tilde R_{IJKL;M_1\cdots M_r}\,.
\end{align*}
The left-hand side of this equation vanishes since $R_{IJK0}=0$, and thus the above equation leads to \eqref{eq:lemmaR0}.
\end{proof}

\begin{prop}
The extended Weyl-obstruction tensor $\hat \Omega^{(k)}_{ij}$ defined in \eqref{eq:def1} is a Weyl-covariant tensor with Weyl weight $2k$.
\end{prop}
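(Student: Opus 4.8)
The plan is to prove the Weyl covariance of $\hat\Omega^{(k)}_{ij}$ defined in \eqref{eq:def1} by combining Proposition~\ref{Proposition_6.5_Graham} with Lemma~\ref{lem:lemmaR0}. The subtlety is that Proposition~\ref{Proposition_6.5_Graham} does not immediately give a homogeneous transformation law: the matrix $p^I{}_J$ in \eqref{p_matrix} is not diagonal, so contracting the indices $I,J,K,L,M_1,\dots,M_r$ of $\tilde R_{IJKL;M_1\cdots M_r}$ with copies of $p$ mixes the $0$-component into every slot that carries an $x^i$ index. For the tensor $\hat\Omega^{(k)}_{ij} = \tilde R_{\infty ij\infty;\infty\cdots\infty}|_{\rho=0,t=1}$ the only slots that could receive such contamination are the two $x^i$ slots; the $\infty$ slots are unaffected since $p^\infty{}_\infty = 1$ and $p^\infty{}_J = 0$ otherwise. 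So the first step is to write out \eqref{Proposition_6.5_Graham_main} explicitly for the index list $(\infty,i,j,\infty,\infty,\dots,\infty)$, keeping track only of the two $x^i$ slots, and observe that the right-hand side equals ${\cal B}(x)^{2(k-1)}$ times a sum of $\tilde R_{\infty ab\infty;\infty\cdots\infty}|_{\rho=0,t=1}\,p^a{}_i\,p^b{}_j$, i.e.\ of terms in which $a$ (resp.\ $b$) is either the $x^i$ index or else $0$ (with coefficient $\Upsilon_i$, resp.\ $\Upsilon_j$).

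The second step is to kill those unwanted $0$-index terms. This is exactly what Lemma~\ref{lem:lemmaR0} is for: setting $t=1$ there, it says $\tilde R_{IJK0;M_1\cdots M_r} = -\sum_s \tilde R_{IJKM_s;M_1\cdots\hat M_s\cdots M_r}$, and the analogous identity holds for a $0$ in any of the first three slots by the symmetries of the Riemann tensor together with the vanishing $\tilde R_{0IJK}=0$ (which follows from Lemma~\ref{Lemma_Ricci_Weyl_ambient}, or directly, since $\un{\cal T}$ is a homothety whose curvature contraction vanishes). In our case a $0$ can only appear in one of the two $x^i$ slots, and the remaining free indices are all $\infty$ apart from the other $x^i$ slot; so Lemma~\ref{lem:lemmaR0} rewrites $\tilde R_{\infty 0 j\infty;\infty\cdots\infty}$ as $-\sum_s \tilde R_{\infty\infty j\infty;\infty\cdots}$ (one term for each of the $k-1$ derivative $\infty$'s and one for the trailing $\infty$ slot). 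But $\tilde R_{\infty\infty j\infty} = 0$ by antisymmetry of the Riemann tensor in its first pair, and differentiating a vanishing quantity still gives zero once one commutes derivatives past the metric-compatible connection — more carefully, one argues inductively that every term produced has two $\infty$'s in an antisymmetric pair, hence vanishes. Therefore all the $\Upsilon_i$-terms drop out and \eqref{Proposition_6.5_Graham_main} collapses to $\hat\Omega'^{(k)}_{ij}|_{\rho'=0,t'=1} = {\cal B}(x)^{2(k-1)}\,\delta^a{}_i\delta^b{}_j\,\hat\Omega^{(k)}_{ab}$.

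The third step is bookkeeping on the weight. The factor from \eqref{Proposition_6.5_Graham_main} is ${\cal B}(x)^{2(s_\infty - 1)}$ with $s_\infty$ the number of $\infty$'s among all of $IJKLM_1\cdots M_r$; for $\hat\Omega^{(k)}_{ij}$ we have two $\infty$'s in $IJKL$ and $k-1$ more among the $M$'s, so $s_\infty = k+1$ and the prefactor is ${\cal B}(x)^{2k}$. Combined with the trivial action on the surviving $x^i$ indices, this says precisely that under the induced Weyl transformation $\gamma^{(0)}_{ij}\to{\cal B}^{-2}\gamma^{(0)}_{ij}$, $a^{(0)}_i\to a^{(0)}_i-\p_i\ln{\cal B}$ one has $\hat\Omega^{(k)}_{ij}\to{\cal B}^{2k}\hat\Omega^{(k)}_{ij}$, i.e.\ Weyl weight $2k$, which matches the first-order formalism result recorded after Definition~\ref{def1} and Proposition~\ref{prop:obspole}. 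The one place that needs genuine care — and which I expect to be the main obstacle — is justifying rigorously that the $0$-slot terms vanish after $r=k-1$ covariant derivatives: one must check that commuting $\tilde\nabla_\infty$ past the pair of antisymmetric $\infty$-indices (using $\tilde\Gamma^I{}_{\infty\infty}={\cal A}^I$ from \eqref{Gamma_condition_rho_lines}, whose only nonzero component sits in an $x^i$ slot) never produces a term without an antisymmetric $\infty$-pair; this is a short induction but it is the crux, since everything else is linear algebra with the matrices $p$, $d_1$, $d_2$.
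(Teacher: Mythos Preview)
Your approach is essentially the paper's: apply Proposition~\ref{Proposition_6.5_Graham} with the index list $(\infty,i,j,\infty;\infty,\ldots,\infty)$, then use Riemann symmetries together with Lemma~\ref{lem:lemmaR0} to kill the $\Upsilon$-contaminated terms, leaving the homogeneous law with weight $2k$. A couple of minor slips: the prefactor in your first paragraph should be ${\cal B}^{2k}$, not ${\cal B}^{2(k-1)}$ (you correct this later); and Lemma~\ref{Lemma_Ricci_Weyl_ambient} concerns the Ricci tensor, not $\tilde R_{0IJK}=0$ (though your homothety remark is the right reason).

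Your anticipated ``main obstacle'' is not one. After Lemma~\ref{lem:lemmaR0} and symmetry you are left with components like $\tilde R_{\infty i\infty\infty;\infty\cdots\infty}$, which carry two identical indices in an antisymmetric Riemann slot-pair. Covariant differentiation is a tensor operation and therefore preserves every algebraic symmetry of the undifferentiated tensor in its original slots; in particular $(\tilde\nabla_{M_r}\!\cdots\tilde\nabla_{M_1}\tilde R)_{IJKL}$ is still antisymmetric in $IJ$ and in $KL$, so any such component vanishes identically for all $r$. No induction on Christoffel symbols is needed---this is exactly how the paper disposes of the term in one line, writing $R_{\infty i 0\infty;\infty\cdots\infty}=\tfrac{k-1}{t}\,R_{\infty i\infty\infty;\infty\cdots\infty}=0$.
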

\begin{proof}
According to Proposition \ref{Proposition_6.5_Graham}, if we choose $(IJKL;M_{1}\dots M_{r})= (\infty, i,j, \infty;\underbrace{\infty\dots \infty}_{(k-1)})$, then $s_{\infty}= k+1$ and under a Weyl transformation we have
\begin{align}
\label{eq:obsWeylcov}
\ti R'_{\infty ij\infty ;\underbrace{\scriptstyle \infty\cdots\infty}_{k-1}}|_{\rho'=0,t'=1}={\cal B}(x)^{2k}\big(\ti R_{\infty ij\infty ;\underbrace{\scriptstyle \infty\cdots\infty}_{k-1}}+\Upsilon_{i}\ti R_{\infty 0j\infty ;\underbrace{\scriptstyle \infty\cdots\infty}_{k-1}}+\Upsilon_{j}\ti R_{\infty i0\infty ;\underbrace{\scriptstyle \infty\cdots\infty}_{k-1}}\big)|_{\rho'=0,t'=1}\,.
\end{align}
It follows from Lemma \ref{lem:lemmaR0} that
\begin{align}
R_{\infty i 0\infty;\underbrace{\scriptstyle \infty\cdots\infty}_{k-1}}=\frac{k-1}{t}R_{\infty i \infty\infty;\underbrace{\scriptstyle \infty\cdots\infty}_{k-2}}=0\,.
\end{align}
Therefore, we obtain from \eqref{eq:obsWeylcov} that $\hat \Omega'^{(k)}_{ij}={\cal B}(x)^{2k}\hat \Omega^{(k)}_{ij}$ under a Weyl transformation, i.e.\ $\hat\Omega^{(k)}_{ij}$ is a Weyl-covariant tensor with Weyl weight $2k$.
\end{proof}

Finally, we would like to show that Definition \ref{Weyl_ext_Obst} and Definition \ref{def1} are equivalent; that is, the Weyl-obstruction tensors defined by the derivatives of the ambient Riemann tensor in the frame $\{\bm e^+,\bm e^i,\bm e^-\}$ and the coordinate basis $\{\td t,\td x^i,\td\rho\}$ are equivalent. To start, let us look at the transformation between $\{\bm e^+,\bm e^i,\bm e^-\}$ and the coordinate basis $\{\td t,\td x^i,\td\rho\}$:
\begin{align}
\label{eq:basistrans}
\left(\begin{array}{c}\bm e^+ \\\bm e^j \\\bm e^-\end{array}\right)=\left(\begin{array}{ccc}1 &ta_i & 0 \\0 & \delta^j{}_i & 0 \\ \rho & -\rho t a_i & t\end{array}\right)\left(\begin{array}{c}\td t \\\td x^i \\ \td\rho\end{array}\right)\,.
\end{align}
Denote the transformation matrix as $\Lambda$, i.e.\ $\bm e^{J}=\Lambda^{J}{}_{I'}\td x^{I'}$ ($J=\{+,i,-\}$, $I'=\{0,i,\infty\}$), then the inverse matrix reads
\begin{align}
\Lambda^{-1}=\left(\begin{array}{ccc}1 &-ta_j & 0 \\0  & \delta^i{}_j & 0\\ -\frac{\rho}{t} &2\rho a_j & \frac{1}{t}\end{array}\right)\,.
\end{align}

Comparing \eqref{eq:ambRiem} and \eqref{eq:Rijkl}, we can see that the components $R_{ijkl}$, $R_{-ijk}$ and $R_{-ij-}$ in the null frame match the corresponding components $R_{ijkl}$, $R_{\infty ijk}$ and $R_{\infty ij\infty}$ in the coordinate basis when $\rho=0$ and $t=1$. Now let us show that any Weyl-obstruction tensor defined in \eqref{eq:def2} is equivalent to that in \eqref{eq:def1}. First, notice that although the components $\tilde R_{-+MN}$ of $\tilde R_{IJKL}$ in the frame $\{+,i,-\}$ vanish, the components $\ti\nabla_{P} \ti R_{-+MN}$ are not necessarily zero. (Using the notation in Subsection \ref{sec:topdown}, here we denote $\tilde\nabla_{\un D_P}$ as $\tilde\nabla_P$ for $P=+,i,-$.) The following lemma will be used in the proof of Proposition \ref{prop:obsequiv}.
\begin{lemma}
\label{lem:Riem}
$\tilde\nabla_{P}\underbrace{\tilde\nabla_-\cdots\tilde\nabla_-}_{n}\tilde R_{-+MN}=-\frac{1}{t}\delta^{i}{}_{P}\underbrace{\tilde\nabla_-\cdots\tilde\nabla_-}_{n}\tilde R_{-iMN}$ for any integer $n\geqslant0$.
\end{lemma}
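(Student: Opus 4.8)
The plan is to prove Lemma~\ref{lem:Riem} by exploiting the homogeneity of the Weyl-ambient metric with respect to $t$, exactly in the spirit of Lemma~\ref{lem:lemmaR0}. The key observation is that the vector field $\un{\cal T}=t\un\p_t$ can be re-expressed in the null frame introduced in \eqref{eq:D+-i}: since $\un D_+=\un\p_t-\tfrac{\rho}{t}\un\p_\rho$ and $\un D_-=\tfrac{1}{t}\un\p_\rho$, one has $\un\p_t=\un D_++\rho\,\un D_-$ and hence $\un{\cal T}=t(\un D_++\rho\,\un D_-)=t\,\un D_++\rho t\,\un D_-$. In the null frame the components of $\un{\cal T}$ are therefore ${\cal T}^+=t$, ${\cal T}^i=0$, ${\cal T}^-=\rho t$. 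First I would compute the covariant derivatives of $\un{\cal T}$ in this frame; from the homogeneity identity ${\cal L}_{\un{\cal T}}\tilde g=2\tilde g$ together with the fact that $\un{\cal T}$ generates the dilatations under which $\tilde g$ scales with weight $2$, one gets $\tilde\nabla_{\un{\cal U}}\un{\cal T}=\un{\cal U}$ for every $\un{\cal U}$, i.e.\ ${\cal T}^M{}_{;N}=\delta^M{}_N$ and ${\cal T}^M{}_{;NP}=0$ (this is the same input used in Lemma~\ref{lem:lemmaR0}, now read in the null frame rather than the coordinate basis).

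The main computation then runs as follows. Contracting the Riemann tensor with $\un{\cal T}$ in the last slot, ${\cal T}^L\tilde R_{IJKL}={\cal T}^+\tilde R_{IJK+}+{\cal T}^-\tilde R_{IJK-}=t\,\tilde R_{IJK+}+\rho t\,\tilde R_{IJK-}$, and using the antisymmetry of $\tilde R$ in the last pair this is proportional to $\tilde R_{IJ-+}$-type components when we set $IJ$ appropriately. Differentiating the identity $({\cal T}^L\tilde R_{-+MNL})_{;P_1\cdots P_r}$ and commuting $\tilde\nabla$ past ${\cal T}$ using ${\cal T}^M{}_{;N}=\delta^M{}_N$, ${\cal T}^M{}_{;NP}=0$, the Leibniz expansion produces exactly one ``undifferentiated'' term $\delta^{P_s}{}_L$ for each derivative slot plus the fully contracted term ${\cal T}^L\tilde R_{-+MNL;P_1\cdots P_r}$; the left-hand side vanishes because $\tilde R_{-+MN}=0$ identically (the $-+$ block of the Riemann $2$-form vanishes, as one reads off from \eqref{eq:curv2form}). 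Specializing all the $P_s$ to $-$ and the free index to $P$, and noting that among the frame components of $\un{\cal T}$ only ${\cal T}^+=t$ survives the contraction against $\tilde R_{-+M N\, -\cdots}$ once the $-$ components are killed by antisymmetry and the homogeneity block structure, one is left precisely with the stated relation $\tilde\nabla_P\underbrace{\tilde\nabla_-\cdots\tilde\nabla_-}_{n}\tilde R_{-+MN}=-\tfrac{1}{t}\delta^i{}_P\underbrace{\tilde\nabla_-\cdots\tilde\nabla_-}_{n}\tilde R_{-iMN}$, the $\delta^i{}_P$ arising because only the $i$-frame directions contribute a nonzero Kronecker contraction $\tilde R_{-iMN}$ after the $-$ and $+$ possibilities are eliminated by $\tilde R_{-+MN}=\tilde R_{--MN}=0$.

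The step I expect to be the main obstacle is bookkeeping the commutators carefully: $\tilde\nabla_-$ does not commute with the contraction against $\un{\cal T}$ in the naive way one might hope, because in the null frame $\un D_-$ and $\un D_+$ are not coordinate vector fields and their structure functions (the $f_{ij}$, $\varphi_i$ terms in $[\un D_i,\un D_j]$ and $[\un D_z,\un D_i]$, here $[\un D_\pm,\un D_i]$-type brackets) enter. One must check that these frame-anholonomy contributions either vanish or get absorbed, which should follow from ${\cal T}^M{}_{;NP}=0$ and the fact that contracting an anholonomy term against $\un{\cal T}$ reproduces lower-order instances of the same identity; an induction on $n$ is the clean way to organize this. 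I would therefore structure the proof as an induction on $n$: the base case $n=0$ is the contraction argument above, and the inductive step applies $\tilde\nabla_-$ to the $n$ case, commutes it through using the already-established $n=0$ and $n-1$ relations together with ${\cal T}^M{}_{;N}=\delta^M{}_N$, and collects terms. The homogeneity of $\tilde g$ in $t$ and the explicit form $\tilde\Gamma^i{}_{j0}=\tfrac1t\delta^i{}_j$, $\tilde\Gamma^\infty{}_{\infty 0}=\tfrac1t$ already recorded in the proof of Lemma~\ref{lem:lemmaR0} (translated to the null frame via \eqref{eq:basistrans}) supply the factor $-\tfrac1t$.
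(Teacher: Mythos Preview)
Your route is sound and differs from the paper's. The paper proves the lemma by direct induction on $n$, treating $P=i$, $+$, $-$ separately at each step and tracking cancellations by hand using the explicit null-frame connection coefficients together with the structural identities $\tilde R_{+jMN}=-\rho\,\tilde R_{-jMN}$ and $\tilde\Gamma^i{}_{M+}=-\rho\,\tilde\Gamma^i{}_{M-}$. Your idea of transplanting the Leibniz argument of Lemma~\ref{lem:lemmaR0} to the null frame is cleaner: since ${\cal T}^L\tilde R_{-LMN}=t\,\tilde R_{-+MN}+\rho t\,\tilde R_{--MN}=0$, differentiating $r$ times and using ${\cal T}^M{}_{;N}=\delta^M{}_N$, ${\cal T}^M{}_{;NP}=0$ yields a single identity from which the lemma follows by specializing all but one derivative index to $-$.

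Two corrections to your writeup. First, your anholonomy worry is a red herring: ${\cal T}^M{}_{;N}=\delta^M{}_N$ and ${\cal T}^M{}_{;NP}=0$ are \emph{tensor} equations and hold in any frame, so no structure functions enter the Leibniz expansion and there is nothing to ``absorb''. Second, your dismissal of the $P=+$ case by invoking ``$\tilde R_{-+MN}=0$'' is too quick: what actually appears there is the \emph{differentiated} quantity $\underbrace{\tilde\nabla_-\cdots\tilde\nabla_-}_{n}\tilde R_{-+MN}$, and this does not vanish just because the undifferentiated tensor does. It vanishes because of your own identity applied with \emph{all} derivative indices set to $-$ (the right-hand sum then contains only covariant derivatives of $\tilde R_{--MN}\equiv 0$). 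Once you note this, no separate induction on $n$ is needed at all.
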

\begin{proof}
See Appendix \ref{app:Lemproof}.
\end{proof}

\begin{prop}
\label{prop:obsequiv}
$\tilde R_{\infty ij\infty;\underbrace{\scriptstyle\infty\cdots\infty}_{n}}=t^{n+2}\underbrace{\tilde\nabla_-\cdots\tilde\nabla_-}_{n}\tilde R_{-ij-}$ for any integer $n\geqslant0$.
\end{prop}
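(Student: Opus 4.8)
The plan is to prove the identity $\tilde R_{\infty ij\infty;\underbrace{\scriptstyle\infty\cdots\infty}_{n}}=t^{n+2}\underbrace{\tilde\nabla_-\cdots\tilde\nabla_-}_{n}\tilde R_{-ij-}$ by induction on $n$, translating the null-frame covariant derivatives $\tilde\nabla_-$ into coordinate-frame covariant derivatives $\tilde\nabla_\infty$ one at a time. The base case $n=0$ is exactly the already-established match $\tilde R_{-ij-}|=\tilde R_{\infty ij\infty}|$ between the two expressions in \eqref{eq:ambRiem} and \eqref{eq:Rijkl}; more precisely, since $\bm e^-=t\td\rho+\rho\td t-t\rho a_i\td x^i$, at general points we have $\un D_-=\frac{1}{t}\un\p_\rho$, so $\tilde R_{-ij-}=\tilde R(\un D_i,\un D_j$-components contracted appropriately$)$ and the factor of $t^{-2}$ from the two $\un D_-$ legs combines with the $t^2$ in $t^{n+2}$ for $n=0$. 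I would open with a careful bookkeeping of how the frame index $-$ expands in terms of the coordinate indices $0,i,\infty$ via the matrix $\Lambda^{-1}$ from \eqref{eq:basistrans}: the key facts are that $\un D_-$ has only a $\frac{1}{t}\un\p_\rho$ piece (i.e.\ $\Lambda^{-1}$ has the $-$ row equal to $(-\rho/t,\,2\rho a_j,\,1/t)$, which at $\rho=0$ is just $(0,0,1/t)$), and that $\un D_i$ mixes $\un\p_i,\un\p_t,\un\p_\rho$.

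For the inductive step, assume the identity for $n-1$ and apply $\tilde\nabla_-$ to both sides. On the right I get $\underbrace{\tilde\nabla_-\cdots\tilde\nabla_-}_{n}\tilde R_{-ij-}$ directly (since $\tilde\nabla_-$ annihilates $t$ only up to the $\tilde\Gamma$ terms, which must be tracked). On the left I need to relate $\tilde\nabla_-$ acting on a tensor with $\infty$-indices to $\tilde\nabla_\infty$ acting on it: writing $\tilde\nabla_-=\tilde\nabla_{\un D_-}=\frac{1}{t}\tilde\nabla_\infty$ (using $\un D_-=\frac{1}{t}\un\p_\rho$ away from the Christoffel corrections), the $\frac{1}{t}$ supplies one factor toward promoting $t^{n+1}$ to $t^{n+2}$. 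The subtlety is that $\frac1t\tilde\nabla_\infty$ acting on the coordinate-component tensor $\tilde R_{\infty ij\infty;\infty\cdots\infty}$ is \emph{not} the same as the coordinate component $\tilde R_{\infty ij\infty;\infty\cdots\infty\infty}$, because $\tilde\nabla_-$ in the null frame also sees the connection coefficients $\tilde\Gamma^K{}_{-L}$ (equivalently $\tilde\Gamma^K{}_{\infty L}$ up to $1/t$), and one must check that the extra terms generated either vanish or cancel. This is precisely where Lemma~\ref{lem:Riem} and Lemma~\ref{lem:lemmaR0} enter: Lemma~\ref{lem:lemmaR0} kills the $\tilde R_{\cdots 0}$-type terms (covariant derivatives of components with a bare $0$ index reduce to lower-order data and, when all remaining free indices are $\infty$ or on $M$, ultimately drop out at $\rho=0,t=1$), and Lemma~\ref{lem:Riem} controls the $\tilde R_{-+MN}$-type terms that appear when $\tilde\nabla_-$ hits a $+$ index produced by the frame, showing they reduce to $-\frac1t\delta^i{}_P$ times an already-controlled $\tilde R_{-iMN}$ string.

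The main obstacle — and the step I expect to consume most of the work — is the commutator/connection bookkeeping in the inductive step: expanding $\tilde\nabla_-(\underbrace{\tilde\nabla_-\cdots}_{n-1}\tilde R_{-ij-})$ and $\frac1t\tilde\nabla_\infty(t^{n+1}\,\text{coordinate component})$ and matching them requires knowing the Christoffel symbols $\tilde\Gamma^I{}_{\infty J}$ of the Weyl-ambient metric \eqref{Weyl_ambient} explicitly (these are read off from \eqref{Christoffel_Weyl_ambient} or \eqref{Christoffel_Initial_value}), verifying that $\tilde\Gamma^i{}_{\infty 0}$, $\tilde\Gamma^0{}_{\infty\infty}$, etc.\ either vanish or produce only terms proportional to $\tilde R_{\cdots 0\cdots}$ or $\tilde R_{-+\cdots}$ that the two lemmas dispose of, and that the surviving term rebuilds exactly $t^{n+2}\underbrace{\tilde\nabla_-\cdots\tilde\nabla_-}_{n}\tilde R_{-ij-}$. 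Since the statement is evaluated at $\rho=0,t=1$ at the very end but the identity as written holds at general $(t,x,\rho)$ (the $t^{n+2}$ weight is meaningful only off the slice), I would carry the computation at general points, using the homogeneity of $\tilde g$ in $t$ (degree $2$) to organize powers of $t$, and only at the end restrict to $\rho=0,t=1$ to conclude $\hat\Omega^{(k)}_{ij}$ agrees in Definition~\ref{def1} and Definition~\ref{Weyl_ext_Obst}. I would relegate the brute-force index chase to an appendix (as the paper does with the proof of Lemma~\ref{lem:Riem}), presenting in the main text only the inductive skeleton and the two places where Lemmas~\ref{lem:lemmaR0} and~\ref{lem:Riem} are invoked.
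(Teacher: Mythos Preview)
Your inductive approach could be made to work, but the paper's proof is much more direct and bypasses induction entirely. The key insight you are missing is that $\tilde\nabla_{M_1}\cdots\tilde\nabla_{M_n}\tilde R_{KIJL}$ is a \emph{tensor}, so passing from coordinate indices $\{0,i,\infty\}$ to null-frame indices $\{+,i,-\}$ is just contraction with the change-of-basis matrix $\Lambda$ on every slot---no step-by-step differentiation is needed. The crucial observation from \eqref{eq:basistrans} is that the $\infty$-column of $\Lambda$ has only a single nonzero entry, $\Lambda^{M}{}_{\infty}=t\,\delta^{M}{}_{-}$. Hence all $n+2$ of the $\infty$-indices convert immediately to $-$ indices with an overall factor $t^{n+2}$, giving \eqref{eq:Riem1} in one stroke. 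Only the two $M$-indices $i,j$ are nontrivial: $\Lambda^{I}{}_{i}$ has both a $\delta^{I}{}_{i}$ and a $+$ component, producing cross terms like $\underbrace{\tilde\nabla_-\cdots\tilde\nabla_-}_{n}\tilde R_{-+j-}$, and Lemma~\ref{lem:Riem} (with $P=-$) kills exactly these.

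Two smaller points about your sketch: first, the identity $\tilde\nabla_{-}=\frac{1}{t}\tilde\nabla_{\infty}$ holds \emph{exactly} as directional covariant derivatives, since $\tilde\nabla$ is $C^\infty(M)$-linear in its direction argument; there are no ``Christoffel corrections'' from that substitution. The only genuine subtlety is that the tensor being differentiated carries indices in different bases, which is precisely what the tensorial transformation handles at once. Second, Lemma~\ref{lem:lemmaR0} is not used here---it concerns coordinate components $\tilde R_{IJK0}$ and is invoked in a separate proposition---so your plan overcounts the required inputs.
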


\begin{proof}
For $n=0$ one can see this readily from \eqref{eq:Rijkl}. Since $\un\p_{\tilde N'}=\Lambda^{M}{}_{N'}\un D_{M}$, for $n\geqslant1$ the left-hand side of the above equation can be written as (primes are dropped for simplicity)
\begin{align}
\label{eq:Riem1}
\tilde R_{\infty ij\infty;\underbrace{\scriptstyle\infty\cdots\infty}_{n}}&=\Lambda^{M_1}{}_{\infty}\cdots\Lambda^{M_n}{}_{\infty}\Lambda^{K}{}_{\infty}\Lambda^{I}{}_{i}\Lambda^{J}{}_{j}\Lambda^{L}{}_{\infty}\tilde\nabla_{M_1}\cdots\tilde\nabla_{M_n}\tilde R_{KIJL}\nn\\
&=t^{n+2}\Lambda^{I}{}_{i}\Lambda^{J}{}_{j}\underbrace{\tilde\nabla_{-}\cdots\tilde\nabla_{-}}_{n}\tilde R_{-IJ-}\,,
\end{align}
where $\Lambda^{M}{}_\infty=t\delta^{M}{}_-$ [see \eqref{eq:basistrans}] is used in the second equality. Using the symmetries of the Riemann tensor, we have
\begin{align}
\label{eq:Riem2}
\Lambda^{I}{}_{i}\Lambda^{J}{}_{j}\underbrace{\tilde\nabla_{-}\cdots\tilde\nabla_{-}}_{n}\tilde R_{-IJ-}={}&\underbrace{\tilde\nabla_{-}\cdots\tilde\nabla_{-}}_{n}\tilde R_{-ij-}+\Lambda^{+}{}_{i}\underbrace{\tilde\nabla_{-}\cdots\tilde\nabla_{-}}_{n}\tilde R_{-+j-}\nn\\
&+\Lambda^{+}{}_{j}\underbrace{\tilde\nabla_{-}\cdots\tilde\nabla_{-}}_{n}\tilde R_{-i+-}+\Lambda^{+}{}_{i}\Lambda^{+}{}_{j}\underbrace{\tilde\nabla_{-}\cdots\tilde\nabla_{-}}_{n}\tilde R_{-++-}\nn\\
={}&\underbrace{\tilde\nabla_{-}\cdots\tilde\nabla_{-}}_{n}\tilde R_{-ij-}\,,
\end{align}
where $\Lambda^{i}{}_j=\delta^{i}{}_j$ is used in the first equality and Lemma \ref{lem:Riem} is used in the second equality. Plugging \eqref{eq:Riem2} into \eqref{eq:Riem1} completes the proof.
\end{proof}
From Proposition \ref{prop:obsequiv} we can directly see that the $\hat\Omega^{(k)}_{ij}$ defined in \eqref{eq:def2} is equivalent to \eqref{eq:def1}. Therefore, the descriptions of the Weyl-obstruction tensors in the first order and second order formalisms are equivalent. Each of these two formalisms have their own advantages. The first order formalism is suited for the top-down approach as the metric $\tilde g$ has a simple form in the dual frame $\{\bm e^I\}$. It is also more convenient to construct Weyl-covariant tensors in the first order formalism since \eqref{eq:prop1} gives a covariant transformation while \eqref{Proposition_6.5_Graham_main} has the matrix $p$ with an off-diagonal element. On the other hand, the second order formalism is designed for the bottom-up approach, as one can evaluate the initial value problem more naturally in the coordinate basis.

\section{Discussion}
\label{sec:conclu}
So far in this thesis we have generalized the ambient construction for conformal manifolds to that for Weyl manifolds. Inspired by the WFG gauge for ALAdS \cite{Ciambelli:2019bzz}, we introduced the Weyl-ambient metric $\tilde g$ in \eqref{Weyl_ambient}. From a top-down perspective we showed how the Weyl-ambient space $(\tilde M,\tilde g)$ induces a Weyl geometry on a codimension-2 manifold $M$. The metric $\tilde g$ and the LC connection on $\tilde M$ give rise to a Weyl class $[\gamma^{(0)},a^{(0)}]$ on $M$, in which a representative includes an induced metric $\gamma^{(0)}_{ij}$ together with a Weyl connection $a^{(0)}_i$. The ambient Weyl diffeomorphisms on $\tilde M$ act as Weyl transformations on the $M$. This enhances the codimension-2 conformal geometry in the usual ambient construction to a Weyl geometry $(M,[\gamma^{(0)},a^{(0)}])$. 
\par
From a bottom-up perspective, we formulated the $(d+2)$-dimensional Weyl-ambient space from a $d$-dimensional Weyl manifold $(M,[g,a])$. We first introduced a Weyl structure ${\cal P}_W$ on $M$ together with a Weyl connection. We then generalized the definition of ambient spaces to Weyl-ambient spaces, and proved that any Weyl-ambient space can be put in Weyl-normal form by a diffeomorphism. Besides assigning the Weyl connection $a_{i}$ on ${\cal P}_W$, the $\rho$-coordinate lines of a Weyl-ambient space in Weyl-normal form are not required to be geodesics but can acquire an acceleration $\un{\cal A}$. By taking the Weyl structure as an initial surface, we have shown that there exists a unique Weyl-ambient space in Weyl-normal form for any given Weyl manifold provided the data $(g_{ij},a_i,\un{\cal A})$ is given. The metric generated order by order from the initial value problem is exactly the $\tilde g$ we introduced in \eqref{Weyl_ambient} from the top-down approach, where $g_{ij}$ corresponds to $\gamma^{(0)}_{ij}$, and $(a_i,\un{\cal A})$ corresponds to $a_i(x,\rho)$. 
\par
We provided a detailed analysis of Weyl-obstruction tensors, the counterparts of obstruction tensors in Weyl geometry. By solving the bulk Einstein equations, we explicitly demonstrated how the Weyl-obstruction tensors in $4d$ (i.e., the Weyl-Bach tensor) and $6d$ are derived from the poles of the on-shell metric expansion in the WFG gauge.
Then, building on the Weyl-ambient construction, we investigated Weyl-covariant quantities induced by the ambient tensors in both first and second order formalisms. As an important example, the extended Weyl-obstruction tensor $\hat\Omega^{(k)}_{ij}$ is defined through covariant derivatives of the ambient Riemann tensor, and its definition in the first and second order formalisms are shown to be equivalent. We also proved that $\hat\Omega^{(k-1)}_{ij}$ corresponds to the pole of $\gamma^{(k)}_{ij}$ at $d=2k$ in the ambient metric expansion, which justifies the description of Weyl-obstruction tensors in \cite{Jia:2021hgy}. Compared with the extended obstruction tensor $\Omega^{(k-1)}_{ij}$, whose residue is only conformally covariant in $d=2k$, the extended Weyl-obstruction tensor $\hat\Omega^{(k-1)}_{ij}$ is Weyl covariant in any dimension. 
\par
Before moving on to the investigation of the holographic Weyl anomaly, we now remark on possible extensions and applications of our construction. The Weyl-ambient space induces the $\text{Diff}(M)\ltimes\text{Weyl}$ symmetry on the codimension-2 manifold $M$, which can be regarded as an asymptotic corner symmetry \cite{Ciambelli:2022vot,Ciambelli:2022cfr}. The algebra of corner symmetries and their Noether charges have been studied in \cite{Ciambelli:2021vnn,Ciambelli:2022cfr} (see also \cite{Freidel:2021cjp}), it is possible to apply the results therein to the Weyl-ambient space and study the asymptotic corner symmetries of the Weyl-ambient space. Moreover, since the surface $\cal N$ at $\rho=0$ of the Weyl-ambient space is null, there is an induced Carroll structure  \cite{Ciambelli:2018xat, Ciambelli:2019lap}. This is evident from the fact that the ambient Weyl diffeomorphism acts on the null surface as (a special case of) a Carrollian diffeomorphism. 
\par
One also expects intriguing holographic applications of the Weyl-ambient construction, for example in the context of celestial holography \cite{Pasterski:2016qvg,Raclariu:2021zjz,Pasterski:2021raf} and codimension-2 holography \cite{Akal:2020wfl,Ogawa:2022fhy}. In particular, the $\text{Diff}(M)\ltimes\text{Weyl}$ symmetry on $M$ corresponds to the Weyl-BMS symmetry on $\tilde M$ \cite{Freidel:2021fxf} (with supertranslations turned off). Therefore, we expect that the Weyl-ambient construction will provide a new arena for realizing the holographic principle.
\par
The symmetry correspondence between $M$ and the ambient space $\tilde M$ can also be applied to construct solutions of conformal hydrodynamics on $M$. For example, the Gubser flow \cite{Gubser:2010ze,Gubser:2010ui}, which is relevant for heavy-ion collisions, can be generalized by considering different symmetry constraints of the conformal group, which can be conveniently organized in the ambient space \cite{JLN}. By imposing different possible constraints coming from different subgroups of the conformal group, solutions of conformal hydrodynamics are generated systematically.
\par
The Weyl-ambient metric construction is part of a bigger program of introducing the Weyl connection back into physics. Viewed as an ordinary gauge symmetry, the Weyl symmetry can provide an organizing principle for constructing effective field theories (e.g.,\ for conformal hydrodynamics). Weyl manifolds would be the proper geometric setup for such future explorations. More recently, the ambient construction was used to study correlators of CFTs on general curved backgrounds \cite{Parisini:2022wkb,Parisini:2023nbd}. We hope the Weyl-ambient geometries can be utilized in similar contexts. 

\chapter{Holographic Weyl Anomaly}
\label{chap:HWA}
Utilizing the WFG formalism, in this chapter we will evaluate the Weyl anomaly for a holographic theory and demonstrate how Weyl-obstruction tensors play an important role in the expression of the Weyl anomaly in higher dimensions. We first discuss the anomalous Weyl-Ward identity for a general field theory on a background Weyl geometry, and then we focus on holographic theories in the WFG gauge. Then, we will compute the holographic Weyl anomaly explicitly in the WFG gauge up to $d=8$ and lay out the pattern for the results in general dimensions. In this Chapter, we will work in the Euclidean signature. We also adopt natural units where $c=\hbar=1$.
\section{Weyl-Ward Identity}
\label{sec:WWI}
Essentially, for a $d$-dimensional field theory coupled to a background metric $\gamma^{(0)}_{ij}$ and a Weyl connection $a^{(0)}_{i}$, the Weyl anomaly comes from an additional exponential factor arising in the path integral after applying a Weyl transformation:
\begin{align}
\label{Z}
Z[\gamma^{(0)},a^{(0)}]= \E^{-{\cal A}[{\cal B}(x);\gamma^{(0)},a^{(0)}]} Z[\gamma^{(0)}/{\cal B}(x)^{2},a^{(0)}-\text d\ln {\cal B}(x)]\,.
\end{align}
The anomaly ${\cal A}[{\cal B}(x);g,a]$ should satisfy the 1-cocycle condition \cite{Manvelyan:2001pv,BONORA1983305}
\begin{align}
{\cal A} [{\cal B''} {\cal B'};\gamma^{(0)},a^{(0)}]= {\cal A} [{\cal B'};\gamma^{(0)},a^{(0)}] + {\cal A} [{\cal B''}; \gamma^{(0)}/({\cal B'})^{2},a^{(0)}- \td\ln {\cal B'}]\,.
\end{align}
For any non-exact Weyl-invariant $d$-form $\bm A[\gamma_{(0)},a_{(0)}]$, one can check that ${\cal A}[{\cal B}(x);\gamma^{(0)},a^{(0)}]= \int (\ln {\cal B})\bm{A}$ satisfies the cocycle condition, and thus it is a possible candidate for the Weyl anomaly. However, if $\bm A$ is exact, ${\cal A}$ would be cohomologically trivial since it can be written as the difference of a Weyl-transformed local functional. The linearly independent choices of $\bm A$ in non-trivial cocycles correspond to different central charges.
\par
It follows from \eqref{Z} that the quantum effective action $S\equiv-\ln Z$ of a theory with Weyl anomaly satisfies
\begin{align}
\label{QFTWeylAnom}
-{\cal A}[{\cal B};\gamma^{(0)},a^{(0)}]=S[\gamma^{(0)}/{\cal B}(x)^{2},a^{(0)}-\text d\ln {\cal B}(x)]-S[\gamma^{(0)},a^{(0)}] \,
\end{align}
under the Weyl transformation. For infinitesimal $\ln{\cal B}$, the above equation gives to the first order
\begin{align}
\label{dS}
-\int \td^d x \frac{\delta \cal A}{\delta \ln{\cal B}(x)}\ln{\cal B}(x)=\int \td^d x \frac{\delta S}{\delta a^{(0)}_i(x)}\p_i\ln{\cal B}(x)
+\int \td^d x \frac{\delta S}{\delta \gamma^{(0)}_{ij}(x)}\Big(-2\ln{\cal B}(x)\gamma^{(0)}_{ij}(x)\Big)\,.
\end{align}
In general, the background fields $\gamma^{(0)}_{ij}$ and $a^{(0)}_i$ are the sources of the energy-momentum tensor operator $T^{ij}$ and the Weyl current operator $J^i$, respectively. The variations of the action with respect to them gives the following 1-point functions:
\begin{align}
\langle T^{ij}(x)\rangle=\frac{2}{\sqrt{-\det\gamma^{(0)}}}\frac{\delta S}{\delta \gamma^{(0)}_{ij}(x)}\,,\qquad \langle J^i(x)\rangle=-\frac{1}{\sqrt{-\det\gamma^{(0)}}}\frac{\delta S}{\delta a^{(0)}_i(x)}\,.
\end{align}
Integrating \eqref{dS} by parts and noticing that the ${\cal B}(x)$ is arbitrary, we obtain the anomalous Weyl-Ward identity
\begin{align}
\label{QFTWeylWard}
\frac{1}{\sqrt{-\det\gamma^{(0)}}}\frac{\delta \cal A}{\delta \ln{\cal B}(x)}=\big\langle T^{ij}(x)\gamma^{(0)}_{ij}(x)+\hat{\nabla}^{(0)}_i  J^i(x)\big\rangle\,.
\end{align}
As we can see, besides the trace of the energy-momentum tensor that appears in the usual case, the divergence of the Weyl current also contributes to the Ward identity when the Weyl connection is turned on.
\par
Let us now focus on a holographic field theory dual to the vacuum Einstein theory in the $(d+1)$-dimensional bulk. The holographic dictionary provides the relation between the on-shell classical bulk action $S_{bulk}$ and quantum effective action $S_{bdr}$ of the field theory on the boundary \cite{Witten:1998qj}:
\begin{align}
\label{dict}
\exp\left(-S_{bulk}[g;\gamma_{(0)},a_{(0)}]\right)=\exp\left(-S_{bdr}[\gamma_{(0)},a_{(0)}]\right)\,,
\end{align}
where $\gamma_{(0)}$ and $a_{(0)}$ are the boundary values of $h$ and $a$ as shown in \eqref{hex} and \eqref{aex}. When the bulk action transforms under a Weyl diffeomorphism, the corresponding boundary theory undergoes a Weyl transformation. However, the diffeomorphism invariance of the bulk Einstein theory does not imply the Weyl invariance on the boundary when there is anomaly, since it follows from \eqref{QFTWeylAnom} that
\begin{align}
\label{Weyltrans}
0=S_{bulk}[g|z',x']-S_{bulk}[g|z,x]=S_{bdr}[\gamma'_{(0)},a'_{(0)}|x]-S_{bdr}[\gamma_{(0)},a_{(0)}|x]+ {\cal A}[{\cal B}]\,,
\end{align}
where $(z',x')=(z/{\cal B},x)$ for the bulk and $\gamma'_{(0)}=\gamma_{(0)}/{\cal B}^2$, $a'_{(0)}=a_{(0)}-\td\ln{\cal B}$ for the boundary.
\par
Since $a_{i}$ is pure gauge in the bulk, $a^{(0)}_i$ could be gauged away and hence it is not expected to source any current on the boundary. The role of the $a^{(0)}_{i}$, however, is important since it makes the energy-momentum tensor along with all the geometric quantities on the boundary Weyl-covariant. On the other hand, the $p^{(0)}_i$ also plays a role in the Weyl-Ward identity. In the FG gauge, $\pi^{(0)}_{ij}$ corresponds to the expectation value of $T_{ij}$; the Ward identity for the Weyl symmetry shows that the trace of $\pi^{(0)}_{ij}$ vanishes, which can be read off from the $O(z^d)$-order of the $zz$-component of the Einstein equations \cite{Leigh}. In the WFG gauge, this equation now gives
\begin{align}
\label{boundaryWI}
0=\frac{d}{2L^2}\gamma_{(0)}^{ij}\pi^{(0)}_{ij}+\hat\nabla^{(0)}\cdot p_{(0)}\,.
\end{align}
Besides $\pi^{(0)}_{ij}$, there is an additional term $\hat\nabla^{(0)}\cdot p_{(0)}$ which represents a gauge ambiguity of $a_i$. This suggests that the energy-momentum tensor in the WFG gauge acquires an extra piece, which now can be considered as an ``improved" energy-momentum tensor $\tilde T_{ij}$ (\`a la \cite{BELINFANTE1940449,Callan1970}):
\begin{equation}
\label{improvedT}
\langle \kappa^2\tilde   T_{ij}\rangle=\frac{d}{2 L^2}\pi^{(0)}_{ij}+ \hat\nabla^{(0)}_{(i} p^{(0)}_{j)}\,,
\end{equation}
where $\kappa^2 =8\pi G$.\footnote{The energy-momentum tensor \eqref{improvedT} in the WFG gauge can be verified using the prescription introduced in \cite{deHaro:2000vlm}.}
It is easy to see that the trace of this energy-momentum tensor gives the right-hand side of \eqref{boundaryWI}. One can also find that the $zi$-components of the Einstein equations at the $O(z^d)$-order give exactly the conservation law $\langle \hat\nabla_{(0)}^i\tilde T_{ij}\rangle =0$ [see the last line of \eqref{emz}], which is the Ward identity corresponding to the boundary diffeomorphisms. Therefore, in the holographic case we can write the anomalous Weyl-Ward identity \eqref{QFTWeylWard} as
\begin{equation}
\label{holoWeylWard}
\frac{1}{\sqrt{-\det\gamma^{(0)}}}\frac{\delta \cal A}{\delta \ln{\cal B}(x)}=\big\langle\tilde T^{ij}(x)\gamma^{(0)}_{ij}(x)\big\rangle\,.
\end{equation}
Notice that one should distinguish $p^{(0)}_i$ and the Weyl current $J_i$. Unlike $\pi_{ij}^{(0)}$ which is sourced by $\gamma^{(0)}_{ij}$, $p^{(0)}_i$ is not sourced by $a^{(0)}_i$ since $a_i$ is pure gauge in the bulk. In the boundary field theory, the Weyl current  $J_{i}$ vanishes identically, while $p^{(0)}_{i}$ contributes to the expectation value of $\tilde T_{ij}$ as an ``improvement". In a generic non-holographic field theory defined on the background with Weyl geometry, there may exist a nonvanishing $J_{i}$ sourced by the Weyl connection $a^{(0)}_i$.
\par
Using the basis $\{\bm e^z,\bm e^i=\td x^i\}$ in \eqref{basis}, the bulk on-shell Einstein-Hilbert action with negative cosmological constant can be written as
\begin{align}
\label{SEH}
S_{bulk}&=\frac{1}{2\kappa^2}\int_M\sqrt{-\det g}\,(R-2\Lambda)\bm{e}^z\wedge \td x^1\wedge\cdots\wedge \td x^d\,.
\end{align}
To evaluate this, we first notice that the trace of the vacuum Einstein equation in the bulk gives
\begin{align}
\label{RL}
R=\frac{2(d+1)}{d-1}\Lambda=-\frac{d(d+1)}{L^2}\,,
\end{align}
where we have considered $\Lambda=-\frac{d(d-1)}{2L^2}$. Also, noticing that $\sqrt{-\det g}=\sqrt{-\det h}$, we can expand $\sqrt{-\det h}$ as
\begin{align}
\label{sqrth}
\sqrt{-\det h}&=\left(\frac{L}{z}\right)^d\sqrt{-\det\gamma^{(0)}}\left(1+\frac{1}{2}\left(\frac{z}{L}\right)^2 X^{(1)}+\frac{1}{2}\left(\frac{z}{L}\right)^4X^{(2)}+\cdots+\frac{1}{2}\left(\frac{z}{L}\right)^dY^{(1)}+\cdots\right)\,.
\end{align}
Plugging \eqref{RL} and \eqref{sqrth} into \eqref{SEH} yields
\begin{align}
\label{Sbulk}
S_{bulk}&=-\frac{L^{-2}}{\kappa^2}\int_M \left(\frac{L}{z}\right)^d\left(d+\frac{d}{2}\left(\frac{z}{L}\right)^2 X^{(1)}+\frac{d}{2}\left(\frac{z}{L}\right)^4X^{(2)}+\cdots+\frac{d}{2}\left(\frac{z}{L}\right)^dY^{(1)}+\cdots\right)\bm{e}^z\wedge vol_\Sigma\,,
\end{align}
where we defined $vol_\Sigma\equiv \sqrt{-\det\gamma^{(0)}}\td x^1\wedge\cdots\wedge\td x^d$. 
\par
The above integral is not well-defined since it has divergences. To handle these divergences one should regularize the bulk on-shell action. In the FG gauge, it is common to introducing a cutoff surface at some small value of $z=\epsilon$, and then add counterterms to cancel the divergent parts when $\epsilon\to 0$. This is essentially how the Weyl anomaly arises since the regulator breaks the Weyl symmetry and causes the appearance of a logarithmically divergent term. However, in the WFG gauge since we do not assume that we have an integrable distribution when $a_i$ is turned on, we cannot naively introduce a cutoff surface and go through this procedure. Nevertheless, one can still extract the divergences using dimensional regularization. Suppose $d$ is not an even integer ($2k-2<d<2k$), then the divergent terms in \eqref{Sbulk} are those from the $O(z^{-d})$-order to the $O(z^{2k-2-d})$-order; once they get canceled by the counterterms, the renormalized bulk action, denoted by $S_{bulk}^{re(k-1)}$, will be analytic and thus no anomaly arises. Now if we let $d$ approach an even integer $2k$ from below, the $O(z^{2k-d})$-order of $S_{bulk}^{re(k-1)}$ will encounter a pole at $d=2k$, which corresponds to the logarithmic divergence that appears in the cutoff procedure. This is similar to the discussion at the end of Section \ref{sec:FG} for the bulk metric expansion. After this pole term is removed by a counterterm, one gets the renormalized action $S_{bulk}^{re(k)}$ for $2k\leqslant d<2k+2$, i.e.
\begin{align}
S^{re(k-1)}_{bulk}[z,x]&=S_{bulk}^{re(k)}[z,x]+S^{(k)}_{pole}[z,x]\,,
\end{align}
where $S_{pole}^{(k)}$ is the $O(z^{2k-d})$-order in the expansion of $S_{bulk}$. $S^{re(k-1)}_{bulk}$ being invariant under a Weyl diffeomorphism gives, 
\begin{align}
\label{Weylk-1}
0&=S^{re(k-1)}_{bulk}[z',x]-S^{re(k-1)}_{bulk}[z,x]=S^{(k)}_{pole}[z',x]-S^{(k)}_{pole}[z,x]+S^{re(k)}_{bulk}[z',x]-S^{re(k)}_{bulk}[z,x]\,.
\end{align}
When we take the limit $d\to2k$ from below, the difference of the divergent $S^{(k)}_{pole}$ will have a finite result, and $S^{re(k)}_{bulk}$ corresponds to the renormalized boundary action $S_{bdy}$ by holographic dictionary, which will not be Weyl invariant at $d=2k$. Comparing \eqref{Weylk-1} with \eqref{Weyltrans}, we can see that the Weyl anomaly can be extracted from the difference of $S^{(k)}_{pole}$ under a Weyl diffeomorphism \cite{Mazur2001}:\footnote{Although the previous counterterms make finite contributions to the $O(z^{2k-d})$-order, they do not affect the pole. So the difference of the $O(z^{2k-d})$-order of the $S^{reg}_{bulk}$ is the same as that of the bare on-shell action \eqref{Sbulk} in the limit $d\to2k^-$.} 
\begin{align}
&\lim_{d\to2k^-}S^{(k)}_{pole}[z',x]-S^{(k)}_{pole}[z,x]\nn\\
={}&\frac{d}{2\kappa^2L}\int \td\left(\frac{1}{d-2k}\left(\frac{L}{z{\cal B}}\right)^{d-2k}\right)\wedge X^{(k)}vol_\Sigma-\frac{d}{2\kappa^2L}\int \td\left(\frac{1}{d-2k}\left(\frac{L}{z}\right)^{d-2k}\right)\wedge X^{(k)}vol_\Sigma\nn\\
={}&\frac{k}{\kappa^2L}\int\ln {\cal B} X^{(k)}_{d=2k}vol_\Sigma\,.
\end{align}
This result gives rise to the Weyl anomaly ${\cal A}_k$ of the $2k$-dimensional boundary theory, i.e.
\begin{align}
\label{Ak}
{\cal A}_k=\frac{k}{\kappa^2L}\int\ln {\cal B} X^{(k)}_{d=2k}vol_\Sigma\,.
\end{align}
Therefore, to find the Weyl anomaly in $2k$-dimension, we only have to compute $X^{(k)}$ coming from the expansion of $\sqrt{-\det h}$. 

\section{Holographic Weyl Anomaly}
\label{sec:HWA}
\subsection{Weyl Anomaly in $2d$ and $4d$}
Now let us apply \eqref{Ak} to $2d$ and $4d$. To find the holographic Weyl anomaly in $2d$ and $4d$ all we have to do is plug in the expressions of $X^{(1)}$ and $X^{(2)}$ obtained from the $zz$-components of the Einstein equations (see Appendix \ref{app:B0}); that is,
\begin{align}
\label{2d4d}
X^{(1)} =-\frac{L^2}{2(d-1)}\hat R\,,\qquad X^{(2)}=-\frac{L^4}{4(d-2)^2}\bigg(\hat R_{ij}\hat R^{ji}-\frac{d}{4(d-1)}\hat R^2\bigg)-\frac{L^2}{2}\hat\nabla\cdot a^{(2)}\,.
\end{align}
[From now on we will drop the label ``(0)" for the boundary curvature quantities and derivative operator when there is no confusion.] First we look at the Weyl anomaly in $d=2$:
\begin{align}
\label{2dA}
{\cal A}_1&=\frac{1}{\kappa^2L}\int\ln {\cal B} X^{(1)}_{d=2}vol_\Sigma =-\frac{L}{16\pi G}\int \ln {\cal B}\hat R\sqrt{-\det\gamma^{(0)}}\td^2x\,,
\end{align}
where in the second equality we used \eqref{2d4d}. Then, it follows from \eqref{holoWeylWard} that the Weyl-Ward identity now reads
\begin{align}
\langle{\tilde T}^i{}_i\rangle=-\frac{L}{16\pi G}\hat R\,.
\end{align}
We can see that the right-hand side of this result has exactly the same form as what we get from the standard calculation in the FG gauge, except that the curvature scalar now is Weyl-covariant. Similarly, plugging \eqref{2d4d} into \eqref{Ak}, we find that the Weyl anomaly in $d=4$ can be written as
\begin{align}
\label{4dA}
{\cal A}_2&=\frac{2}{\kappa^2L}\int\ln {\cal B} X^{(2)}_{d=4}vol_\Sigma=-\frac{L}{8\pi G}\int\bigg[\frac{L^2}{8}\Big(\hat R_{ij}\hat R^{ji}-\frac{1}{3}\hat R^2\Big)+\hat\nabla\cdot a^{(2)}\bigg]\ln {\cal B}\sqrt{-\det\gamma^{(0)}}\td^4x\,.
\end{align}
Again, one can immediately tell that the right-hand side of this result matches the standard FG result (e.g. \cite{Henningson:1998gx}) if we turn off the Weyl structure. 
\par
There are a few things worth paying attention to: first, in the $2d$ Weyl anomaly \eqref{2dA}, the Weyl-Ricci scalar is also the Weyl-Euler density $E^{(2)}$ in $2d$, i.e.\ the Euler density Weyl-covariantized by the Weyl connection. Furthermore, we can rewrite the $4d$ Weyl anomaly \eqref{4dA} as
\begin{align}
\label{A2}
{\cal A}_2&=-\frac{L}{8\pi G}\int\bigg[\frac{L^2}{16}\Big(\hat W_{ijkl}\hat W^{klij}-\hat E^{(4)}\Big)+\hat\nabla\cdot a^{(2)}\bigg]\ln {\cal B}\sqrt{-\det\gamma^{(0)}}\td^4x\,,
\end{align}
where $\hat E^{(4)}$ is the Weyl-Euler density in $4d$:
\begin{align}
\hat E^{(4)}=\hat R_{ijkl}\hat R^{klij}-4\hat R_{ij}\hat R^{ji}+\hat R^2\,.
\end{align}
Traditionally, the Euler density $E^{(2k)}$ without the Weyl connection is called the type A Weyl anomaly, which is topological in $2k$-dimension and not Weyl-invariant, while the type B Weyl anomaly is the Weyl-invariant part of the anomaly \cite{Deser1993}. Here we find that in the WFG gauge, this classification of the Weyl anomaly is still available, with the Weyl-Euler density now Weyl-invariant since the curvature quantities in this setup are endowed with Weyl covariance.
\par
Also, notice that the subleading term $a^{(2)}_i$ of $a_i$ only makes an appearance in the anomaly through a cohomologically trivial term, i.e.\ we can express it as a Weyl-transformed local functional as follows:
\begin{align}
\int\td^4x\sqrt{-\det\gamma_{(0)}}\ln {\cal B}\,\hat\nabla_i a_{(2)}^i=\int\td^4x\sqrt{-\det\gamma'_{(0)}}\,a'^{(0)}_i a'^i_{(2)}-\int\td^4x\sqrt{-\det\gamma_{(0)}}\,a^{(0)}_i a_{(2)}^i\,,
\end{align}
where $a'^i_{(2)}={\cal B}^4a_{(2)}^i$, and the boundary term due to integrating by parts is ignored. We will see that this is a generic feature of the Weyl anomaly in the WFG gauge for any dimension.
\par
Although in \eqref{2dA} and \eqref{4dA} we expressed the holographic Weyl anomaly in $2d$ and $4d$ in terms of curvature to match the corresponding familiar results in the FG gauge, we can also express them alternatively in terms of the Weyl-Schouten tensor:
\begin{align}
\label{X1X2}
\frac{X^{(1)}}{L^2}=-\hat P\,,\qquad \frac{X^{(2)}}{L^4}=-\frac{1}{4}\tr(\hat P^2)+\frac{1}{4}\hat P^2-\frac{1}{2L^2}\hat\nabla\cdot a^{(2)}\,.
\end{align}
Then \eqref{2dA} and \eqref{4dA} can be written as
\begin{align}
\label{2dAP}
{\cal A}_1&=-\frac{L}{\kappa^2}\int\td^2x\sqrt{-\det\gamma^{(0)}}\ln {\cal B} \hat P\,,\\
\label{4dAP}
{\cal A}_2&=-\frac{L^3}{\kappa^2}\int\td^4x\sqrt{-\det\gamma^{(0)}}\ln {\cal B} \bigg(\frac{1}{2}\tr(\hat P^2)-\frac{1}{2}\hat P^2+\frac{1}{L^2}\hat\nabla\cdot a^{(2)}\bigg)\,.
\end{align}
In higher dimensions, $X^{(k)}$ can be expressed in terms of $\gamma_{ij}^{(0\leqslant j\leqslant2k)}$ (see Appendix \ref{app:expansion}). By solving the Einstein equations we have seen that these terms can all be expressed in terms of $\hat P_{ij}$ and $\hat{\cal O}^{(2<j<2k)}_{ij}$. Therefore, we will use the Weyl-Schouten tensor and Weyl-obstruction tensors as the building blocks for the Weyl anomaly in even dimensions.

\subsection{Weyl Anomaly in $6d$}\label{Weyl6d}
After revisiting the results in $2d$ and $4d$, we will now present our computations for $6d$ and $8d$. In principle, $X^{(k)}$ can be obtained by solving Einstein equations as we have done for $2d$ and $4d$. However, as the dimension goes higher, computing the curvature will become extremely tedious. To facilitate the computation in higher dimensions, we can use a more efficient way of organizing the Einstein equations which helps us avoid the curvature tensors, namely to use the Raychaudhuri equation of the congruence generated by $\un D_z$. The details of the Raychaudhuri equation and its expansions are given in Appendix \ref{app:expansion}.
\par
To solve for $X^{(3)}$, we need to expand $\sqrt{-\det h}$ to the order $O(z^{6-d})$.  Using \eqref{Ray6d} and plugging the results we have got for $\gamma^{(2)}_{ij},\gamma^{(4)}_{ij}$ and $X^{(1)},X^{(2)}$ into \eqref{X3}, we obtain
\begin{align}
\frac{X^{(3)}}{L^6}=&-\frac{1}{12}\tr(\hat P^3)+\frac{1}{8}\tr(\hat P^2)\hat P-\frac{1}{24}\hat P^3+\frac{1}{12}\tr(\hat\Omega^{(1)}\hat P)\nn\\
\label{X3P}
&+\frac{1}{6L^4}(d-6)a^2_{(2)}-\frac{1}{3L^4}\hat\nabla\cdot a^{(4)}-\frac{1}{12L^2}\hat\nabla_i\big[a^{(2)}_j(3\hat P^{ij}+\hat P^{ji}
-3\hat P \gamma_{(0)}^{ij})\big]\,,
\end{align}
where we used the extended Weyl-obstruction tensor $\hat\Omega^{(1)}_{ij}$ defined in \eqref{extWO}. Notice first that the $a^{(2)}_i$ quadratic term in $X^{(3)}$ vanishes in $6d$, and thus does not contribute to the Weyl anomaly. Then, it follows from \eqref{Ak} that the Weyl anomaly in $6d$ is
\begin{align}
\label{6dA}
{\cal A}_3={}&\frac{3}{\kappa^2L}\int\ln {\cal B} X^{(3)}_{d=6}vol_\Sigma\nn\\
={}&-\frac{L^5}{\kappa^2}\int \td^6x\sqrt{-\det\gamma^{(0)}}\ln {\cal B} \bigg(\frac{1}{4}\tr(\hat P^3)-\frac{3}{8}\tr(\hat P^2)\hat P+\frac{1}{8}\hat P^3-\frac{1}{4}\tr(\hat\Omega^{(1)}\hat P)\nn\\
&+\frac{1}{L^4}\hat\nabla\cdot a^{(4)}+\frac{1}{4L^2}\hat\nabla_i\big[a^{(2)}_j(3\hat P^{ij}+\hat P^{ji}
-3\hat P \gamma_{(0)}^{ij})\big]\bigg)\,.
\end{align}
Just as what we have shown for the $4d$ case, the subleading terms in the expansion of $a_i$ appear only in total derivatives and thus only contribute to cohomologically trivial terms in the $6d$ Weyl anomaly. When we turn off $a_i^{(0)}$ and $a_i^{(2)}$, this result agrees with the holographic Weyl anomaly in the FG gauge computed in \cite{Henningson:1998gx}. \par
Usually, the Weyl anomaly in $6d$ is written as a linear combination of the $6d$ Euler density and three conformal invariants in $6d$ (see \cite{Henningson:1998gx,Deser1993,Henningson:1998gx}), which represents the four central charges in $6d$. The result we obtained can also be written in this way, which means the classification of type A and type B anomalies still holds for the WFG gauge in $6d$. However, as we will discuss shortly, the expression we have in \eqref{X3P} in terms of $\hat P_{ij}$ and $\hat\Omega^{(1)}_{ij}$ reveals some interesting aspects of the Weyl anomaly.

\subsection{Weyl Anomaly in $8d$}\label{Weyl8d}
Expanding $\sqrt{-\det h}$ to the order $O(z^{8-d})$, we have $X^{(4)}$ in \eqref{X4}. Using \eqref{Ray8d} and plugging the results up to $\gamma^{(6)}_{ij}$ and $X^{(3)}$ into \eqref{X4}, we have
\begin{align}
\label{X4P}
\frac{X^{(4)}}{L^8}=&-\frac{1}{32}\tr(\hat P^4)+\frac{1}{24}\tr(\hat P^3)\hat P+\frac{1}{64}(\tr (\hat P^2))^2-\frac{1}{32}\tr(\hat P^2)\hat P^2+\frac{1}{192}\hat P^4\nn\\
&-\frac{1}{24}\tr(\hat\Omega^{(1)}\hat P)\hat P+\frac{1}{24}\tr(\hat\Omega^{(1)}\hat P^2)-\frac{1}{96}\tr(\hat\Omega^{(1)}\hat\Omega^{(1)})-\frac{1}{96}\tr(\hat\Omega^{(2)}\hat P)\nn\\
&+\frac{d-8}{4L^6}a^{(4)}\cdot a^{(2)}+\frac{d-8}{12L^4}a^{(2)}_i a^{(2)}_j(\hat P^{ij}-\hat P\gamma_{(0)}^{ij})+\text{total derivatives}\,.
\end{align}
As expected, all the terms in \eqref{X4P} that involve $a^{(2)}_i$, $a^{(4)}_i$, $a^{(6)}_i$ either vanish when $d=8$ or contribute only to the total derivatives. The details of the total derivatives are given in \eqref{X8t}. Plugging \eqref{X4P} into \eqref{Ak}, we obtain the holographic Weyl anomaly in $8d$:

\begin{align}
\label{8dA}
{\cal A}_4={}&\frac{4}{\kappa^2L}\int\ln {\cal B} X^{(4)}_{d=8}vol_\Sigma\nn\\
={}&-\frac{L^7}{\kappa^2}\int\td^8x\sqrt{-\det\gamma^{(0)}}\ln {\cal B} \bigg(\frac{1}{8}\tr(\hat P^4)-\frac{1}{6}\tr(\hat P^3)\hat P-\frac{1}{16}(\tr (\hat P^2))^2+\frac{1}{8}\tr(\hat P^2)\hat P^2-\frac{1}{48}\hat P^4\nn\\
&+\frac{1}{6}\tr(\hat\Omega^{(1)}\hat P)\hat P-\frac{1}{6}\tr(\hat\Omega^{(1)}\hat P^2)+\frac{1}{24}\tr(\hat\Omega^{(1)}\hat\Omega^{(1)})+\frac{1}{24}\tr(\hat\Omega^{(2)}\hat P)+\text{total derivatives}\bigg)\,.
\end{align}
Once again, we can see that the subleading terms in $a_i$ only have cohomologically trivial contributions. If we go back to the FG gauge, then this result agrees with the renormalized volume coefficient for $k=4$ shown in \cite{graham2009extended}. One can also write the FG version of the above result in the traditional way as a linear combination of the type A and type B anomalies, i.e.\ the Euler density and Weyl invariants (the list of Weyl invariants in $8d$ can be found in \cite{Boulanger:2004zf}). We naturally expect that this classification can also be applied to the holographic Weyl anomaly in the WFG gauge for higher dimensions.

\subsection{Building Blocks of the Weyl Anomaly}
As we have seen, if we ignore the total derivatives that depend on the subleading terms of the $a_i$ expansion, $X^{(1)}$ corresponds to the Weyl-Ricci scalar (i.e.\ the $2d$ Weyl-Euler density) and $X^{(2)}$ corresponds to the classic ``$a=c$" result. For the Weyl anomaly in $6d$ and $8d$ both $X^{(3)}$ and $X^{(4)}$ can also be written as linear combinations of the Weyl-Euler density and type B anomalies. This is true for both the FG and WFG cases, just the quantities in the latter are Weyl-covariant. One just needs to substitute the Weyl quantities with their LC counterparts (i.e.\ set $a_i$ to zero) to get the Weyl anomaly in the FG case. However, when expressing them in terms of the Weyl-Schouten tensor and extended Weyl-obstruction tensors (or Schouten tensor and extended obstruction tensors in the FG case), we observe that the polynomial terms of $X^{(k)}/L^{2k}$ (without the total derivative terms) in $2k$-dimensions, denoted by $\bar X^{(k)}$, have the following structures:
\begin{align}
\label{X1D}
\bar X^{(1)}&=-\delta^i_j\hat P^j{}_i\,,\\
2\bar X^{(2)}&=\frac{1}{2}\delta^{i_1i_2}_{j_1j_2}\hat P^{j_1}{}_{i_1}\hat P^{j_2}{}_{i_2}\,,\\
6\bar X^{(3)}&=-\frac{1}{4}\delta^{i_1i_2i_3}_{j_1j_2j_3}\hat P^{j_1}{}_{i_1} \hat P^{j_2}{}_{i_2}\hat P^{j_3}{}_{i_3}-\frac{1}{2}\delta^{i_1i_2}_{j_1j_2}\hat\Omega_{(1)}^{j_1}{}_{i_1}\hat P^{j_2}{}_{i_2}\,,\\
\label{X4D}
24\bar X^{(4)}&=\frac{1}{8}\delta^{i_1i_2i_3i_4}_{j_1j_2j_3j_4}\hat P^{j_1}{}_{i_1} \hat P^{j_2}{}_{i_2}\hat P^{j_3}{}_{i_3}\hat P^{j_4}{}_{i_4}+\frac{1}{2}\delta^{i_1i_2i_3}_{j_1j_2j_3}\hat\Omega_{(1)}^{j_1}{}_{i_1}\hat P^{j_2}{}_{i_2}\hat P^{j_3}{}_{i_3}\nn\\
&\quad+\frac{1}{4}\delta^{i_1i_2}_{j_1j_2}\hat\Omega_{(1)}^{j_1}{}_{i_1}\hat\Omega_{(1)}^{j_2}{}_{i_2}+\frac{1}{4}\delta^{i_1i_2}_{j_1j_2}\hat\Omega_{(2)}^{j_1}{}_{i_1}\hat P^{j_2}{}_{i_2}\,,
\end{align}
where the Kronecker $\delta$ symbol is defined as 
\begin{align}
\delta^{i_1\cdots i_s}_{j_1\cdots j_s}=s!\delta^{i_1}{}_{[j_1}\cdots\delta^{i_s}{}_{j_s]}\,.
\end{align}
From \eqref{X1D}--\eqref{X4D} we can see that $\bar X^{(k)}$ contains all kinds of possible combinations of $\hat P_{ij}$ and $\hat\Omega^{(2<j<2k)}_{ij}$ whose Weyl weights add up to be $2k$, i.e.\ the Weyl weight of $X^{(k)}$. Using this pattern, one can directly write down the terms in the holographic Weyl anomaly in any dimension. For instance, we can easily predict without explicit calculation that $\bar X^{(5)}$ is the linear combination of the following terms:
\begin{align*}
&\delta^{i_1i_2i_3i_4i_5}_{j_1j_2j_3j_4j_5}\hat P^{j_1}{}_{i_1} \hat P^{j_2}{}_{i_2}\hat P^{j_3}{}_{i_3}\hat P^{j_4}{}_{i_4}\hat P^{j_5}{}_{i_5}\,,\qquad\delta^{i_1i_2i_3i_4}_{j_1j_2j_3j_4}\hat\Omega_{(1)}^{j_1}{}_{i_1} \hat P^{j_2}{}_{i_2}\hat P^{j_3}{}_{i_3}\hat P^{j_4}{}_{i_4}\,,\\
&\delta^{i_1i_2i_3}_{j_1j_2j_3}\hat\Omega_{(2)}^{j_1}{}_{i_1}\hat P^{j_2}{}_{i_2}\hat P^{j_3}{}_{i_3}\,,\qquad\delta^{i_1i_2i_3}_{j_1j_2j_3}\hat\Omega_{(1)}^{j_1}{}_{i_1}\hat\Omega_{(1)}^{j_2}{}_{i_2}\hat P^{j_3}{}_{i_3}\,,\qquad\delta^{i_1i_2}_{j_1j_2}\hat\Omega_{(2)}^{j_1}{}_{i_1}\hat\Omega_{(1)}^{j_2}{}_{i_2}\,,\qquad\delta^{i_1i_2}_{j_1j_2}\hat\Omega_{(3)}^{j_1}{}_{i_1}\hat P^{j_2}{}_{i_2}\,.
\end{align*}
These terms represent the independent central charges that appear in the holographic Weyl anomaly in $d=10$. 
\par
Based on the above pattern, it is natural to expect a general expression that can generate the holographic Weyl anomaly in any dimension, which is an analog of the exponential structure given by the Chern class that generates the chiral anomaly in any dimension (see, e.g.\ \cite{Frampton:1983nr,Zumino:1983rz,Bertlmann:1996xk}). It has been shown that the type A Weyl anomaly can be generated by a mechanism similar to that for the chiral anomaly \cite{Deser1993,Boulanger:2007st,Boulanger:2007ab,Jia:2023tki}. The expressions for the Weyl anomaly in terms of the (Weyl-) Schouten tensor and the extended (Weyl-) obstruction tensors suggest a similar mechanism for the holographic Weyl anomaly.

\section{Role of the WFG Gauge}
\label{sec:roleWFG}
Now that we have discussed the Weyl-obstruction tensors and Weyl anomaly, let us provide some observations on how the $a_{i}$ mode \eqref{aex} is involved. We have already mentioned that according to the FG theorem, this mode is pure gauge in the bulk. Now we have a few clear manifestations of this from our calculations. 
\par
The first one is that the subleading terms $a^{(2k)}_{i}$ with $k>0$ in the expansion of $a_i$ cannot be determined from the Einstein equations when $a^{(0)}_{i}$ is given. This is different from the expansion of $h_{ij}$ where the subleading terms $\gamma_{ij}^{(2k)}$ can be solved (on-shell) in terms of $\gamma_{ij}^{(0)}$.
\par
The second one is that $a_{i}$ appears only inside total derivatives in $X^{(k)}$, and thus represents cohomologically trivial modifications of the boundary Weyl anomaly. For $a_{i}^{(2k)}$ with $k\geqslant 2$, this can easily be seen from the expressions \eqref{4dAP}, \eqref{6dA} and \eqref{8dA}. What is not explicit in these formulas is that $a^{(0)}_{i}$ also appears inside a total derivative. This can be verified by separating the LC quantities out of the Weyl quantities in $X^{(k)}$. For instance, denote the LC Schouten tensor as $\mathring P_{ij}$ and the LC connection as $\LCnabla$, and then $X^{(1)}$ in $2d$ and $X^{(2)}$ in $4d$ can be written as
\begin{align}
L^{-2}X^{(1)}_{d=2}={}& L^{-2}\mathring X^{(1)}_{d=2}+\LCnabla\cdot a^{(0)}\,,\\
L^{-4}X^{(2)}_{d=4}={}&L^{-4}\mathring X^{(2)}_{d=4}-\frac{1}{2}\LCnabla_i(\mathring P^{ij}a^{(0)}_j-\mathring Pa_{(0)}^i)\nn\\
&-\frac{1}{4}\LCnabla_i(a^{(0)}_j\LCnabla^j a_{(0)}^i-a_{(0)}^i\LCnabla\cdot a_{(0)})-\frac{1}{4}\LCnabla_i(a_{(0)}^i a_{(0)}^2)-\frac{1}{2L^2}\LCnabla\cdot a^{(2)}\,,
\end{align}
where $L^{-2}\mathring X^{(1)}=-\mathring{P}$ and $L^{-4}\mathring X^{(2)}=\frac{1}{4}\mathring{P}^2-\frac{1}{4}\tr(\mathring{P}^2)$.\footnote{Note that $\LCnabla\cdot a^{(2)}$ is equivalent to $\hat\nabla\cdot a^{(2)}$ in $4d$, since in $2k$-dimension $\hat\nabla$ and $\LCnabla$ give the same result when acting on a vector with Weyl weight $+2k$ [which follows directly from \eqref{div_v}].}
Notice that although the terms involving $a^{(0)}_{i}$ are total derivatives, they are not Weyl-covariant and so one cannot naively assume that they are trivial cocycles. However, by finding suitable local counterterms, we have checked that all the terms involving $a_{i}^{(0)}$ are indeed part of a trivial cocycle  for $2d$ and $4d$. As $a_{i}$ is pure gauge, we expect this to be generally true.
\par
In principle, the Weyl connection $a^{(0)}_i$ on the boundary brings new Weyl-invariant objects, such as $\tr(f_{(0)}^2)$, which could lead to new central charges in the Weyl anomaly. However, up to $d=8$ we find the classification of type A and type B anomalies is still available, and in such a basis the nonvanishing central charges are still the same as those in the FG case. Once this can be carried over to higher dimensions, then $a^{(0)}_i$ appearing in total derivatives in $X^{(k)}$ can also be deduced by considering the Weyl anomaly as the sum of the type A and type B anomalies. In the FG gauge, under a Weyl transformation the type B anomaly is invariant while the type A anomaly, i.e.\ the Euler density, gets an extra total derivative involving $\ln{\cal B}$. Since the Weyl connection makes the Weyl anomaly in the WFG gauge Weyl-invariant, the terms with $a^{(0)}_i$ in the Weyl-Euler density should exactly compensate the extra total derivative, and hence they must form a total derivative. 
\par
Another observation we have mentioned is that although the subleading terms in the expansion of $a_{i}$ make an appearance in $\gamma^{(2k)}_{ij}$, they do not appear in the Weyl-obstruction tensors. Up to $k=3$, we have seen explicitly in \eqref{g2}, \eqref{g40} and \eqref{g60} that the terms with $a_{i}^{(2)}$ and $a_{i}^{(4)}$ do not contribute to the pole at $d=2k$ in $\gamma^{(2k)}_{ij}$. What is also true but not as obvious, is that the terms with $a^{(0)}_i$ do not contribute to the pole at $d-2$ in the Weyl-Schouten tensor and are proportional to $d-2k$ in Weyl-obstruction tensors. For instance, one can separate the $a^{(0)}_i$ from $\hat P_{ij}$ and get
\begin{align}
\hat P_{ij}=\mathring P_{ij}+\LCnabla_j a^{(0)}_{i}+a^{(0)}_{i}a^{(0)}_{j}-\frac{1}{2}a_{(0)}^2\gamma^{(0)}_{ij}\,,
\end{align}
while the only pole on the right-hand side is in the LC Schouten tensor $\mathring P_{ij}$. Similarly, expressing the Weyl-Bach tensor in terms of LC quantities we have
\begin{align}
\hat B_{ij}=\mathring{B}_{ij}+(d-4)(a_{(0)}^k\mathring{C}_{kji}-2a_{(0)}^k\mathring{C}_{ijk}+a_{(0)}^k a_{(0)}^l\mathring W_{likj})\,.
\end{align}
Thus, when $d=4$, $a^{(0)}_i$ does not contribute to the pole in $\gamma^{(4)}_{ij}$, and the Weyl-Bach tensor $\hat{B}_{ij}$ is equivalent to the LC Bach tensor $\mathring{B}_{ij}$. One should naturally expect that this is also true for any Weyl-obstruction tensors, i.e.\ $\hat{\cal O}^{(2k)}_{ij}$ is equivalent to the LC obstruction tensor $\mathring{\cal O}^{(2k)}_{ij}$ when $d=2k$. Note that when $d>2k$, the $a^{(0)}_i$ terms are included in the Weyl-obstruction tensor so that $\hat{\cal O}^{(2k)}_{ij}$ is always Weyl-covariant.
\par
The statement that any term in the expansion of $a_i$ does not appear in the pole of $\hat\gamma^{(2k)}_{ij}$ is consistent with the following claim: when $d=2k$, the Weyl-obstruction tensor $\hat{\cal O}_{(2k)}^{ij}$ satisfies
\begin{align}
\label{varX}
\hat{\cal O}_{(2k)}^{ij}=\frac{1}{\sqrt{-\det\gamma^{(0)}}}\frac{\delta}{\delta\gamma^{(0)}_{ij}}\int\td^d x\sqrt{-\det\gamma^{(0)}}X^{(k)}\,.
\end{align}
The FG version of this relation for $\mathring{\cal O}_{(2k)}^{ij}$ was proved in \cite{graham2005ambient} (see also \cite{deHaro:2000vlm}). If the claim above can be proved for the WFG gauge, then the reason that none of the terms in the expansion of $a_i$ contributes to $\hat{\cal O}^{(2k)}_{ij}$ at $d=2k$ will be straightforward: as they only appear in  total derivative terms in $X^{(k)}$, they will be dropped in the variation above. Hence, this can be viewed as another manifestation of $a_i$ being pure gauge in the bulk. We have verified by brute force that for $k=2$ the variation in \eqref{varX} indeed gives the Weyl-Bach tensor when $d=4$, and a rigorous proof for any $k$ is worth further study. 
\par
Based on the FG version of relation \eqref{varX}, there is another approach of finding the (LC) obstruction tensors and Weyl anomaly in even dimensions called the dilatation operator method \cite{Anastasiou:2020zwc}. This method is briefly introduced in Appendix~D of \cite{Jia:2021hgy}, where the $8d$ Weyl anomaly was computed in the FG gauge. As a consistency check, the $8d$ FG result in \cite{Jia:2021hgy} agrees with \eqref{8dA} when the $a_i$ is turned off.

\section{Discussion}
\label{sec:disc1}
As the main result of Part I from the physics side, we computed the Weyl anomaly up to $8d$ in the WFG gauge and showed that they can be expressed using Weyl-Schouten tensor and extended Weyl-obstruction tensors as the building blocks. These results indeed go back to the corresponding FG results when the Weyl structure $a_\mu$ is turned off, but now they become Weyl-covariant. By observing the pattern of the Weyl anomaly in different dimensions, we suspect there exists a general formulation that can generate the holographic Weyl anomaly in any dimension, which will be explored in future work.
\par
In the boundary field theory, both the induced metric $\gamma^{(0)}_{\mu\nu}$ and the Weyl connection $a^{(0)}_\mu$ are non-dynamical background fields. However, only $\gamma^{(0)}_{\mu\nu}$ is sourcing a current operator, namely the energy-momentum tensor, while $a^{(0)}_\mu$ does not source any current since $a_\mu$ is pure gauge in the bulk. From the Weyl-Ward identity \eqref{holoWeylWard}, we can see that the trace of the energy-momentum tensor obtains a contribution from $p^{(0)}_\mu$ due to the gauge freedom of WFG. Together we can regard it as an improved energy-momentum tensor $\tilde T_{\mu\nu}$. For non-holographic field theories with background Weyl geometry the corresponding Weyl current $J^\mu$ of the Weyl connection does not need to vanish. The Weyl current in the general case deserves further investigation.
\par
An important corollary in our analysis is that the Weyl structure $a_{\mu}$ only appears as a trivial cocycle in the Weyl anomaly, and thus only contributes cohomologically trivial modifications. From the Weyl anomaly up to $8d$ we can directly see this for the subleading terms of $a_{\mu}$ as they appear only in total derivative terms in $X^{(k)}$. For the leading term $a_{\mu}^{(0)}$ this is less obvious since it plays the role of the boundary Weyl connection, but one can verify that by writing the anomaly in terms of the boundary LC connection, the terms involving $a_{\mu}^{(0)}$ also represent trivial cocycles. This indicates a striking feature of the WFG gauge, namely $a^{(0)}_\mu$ manages to make the expressions Weyl-covariant without introducing new central charges, which, once again, is consistent with the fact that $a_\mu$ is pure gauge in the bulk. Nonetheless, these cohomologically trivial terms might have significant effects in the presence of corners, i.e.\ spacelike codimension-2 surfaces. This may be analyzed using the construction proposed in \cite{Ciambelli:2021vnn,Freidel:2021cjp,Ciambelli:2022cfr}.
\par
Finally, although this part of the thesis focuses on the holographic Weyl anomaly, we believe that the (Weyl-) Schouten tensor and extended (Weyl-) obstruction tensors can also be used as the building blocks for the Weyl anomaly of other theories in general. How can these building blocks arise in a non-holographic context requires a deep understanding of the Lorentz-Weyl structure of a frame bundle, which encodes all the local Lorentz and Weyl transformations. 
Furthermore, the pattern we have observed for the holographic Weyl anomaly in different dimensions is reminiscent of the structure of the chiral anomaly across various dimensions, with the latter being understood as derived from the Chern class in different dimensions. This similarity suggests the potential for a cohomological interpretation of the Weyl anomaly. These observations motivate Part II of this thesis. In Subsection \ref{sec:LW}, we will revisit these issues and formulate the Weyl and Lorentz anomalies in a geometric fashion.

\part{Lie Algebroid Cohomology and Quantum Anomalies}

\chapter{Introduction}
\label{chap:intro2}
\section{An Overview on Anomalies}
\label{sec:anomalyoverview}
Symmetry has always been central to modern physics. Two monumental moments of symmetry in physics are when Emmy Noether \cite{Noether:1918zz} established the connection between symmetry and conservation laws in classical physics and when Eugene Wigner \cite{Wigner1931} and Hermann Weyl \cite{Weyl1928} introduced group theory to quantum physics. Since then, research on symmetry has played a prominent role in all areas of physics and continues to thrive today. For example, spacetime symmetries, including the Weyl symmetry discussed in Part I, are significant in relativity and gravity; internal symmetries, such as isospin, color, and flavor symmetries, are crucial in particle and nuclear physics; crystal symmetries are essential in solid state physics, particularly in the study of band structures, etc. Over the past decade, the concept of symmetry has further expanded in various directions, leading to the development of generalized symmetries, including higher form symmetries \cite{Gaiotto:2014kfa}, subsystem symmetries \cite{Lawler_2004,Vijay:2016phm,Seiberg:2020bhn}, and non-invertible symmetries \cite{Bhardwaj:2017xup,Chang:2018iay,Shao:2023gho}.
\par
While the fundamental laws of nature exhibit a high degree of symmetry, the observable world is remarkably asymmetric and diverse. Thus, it is crucial to study both the symmetries inherent in nature and the various mechanisms by which these symmetries are broken. There are three major types of symmetry breaking, each providing rich physics to explore: \ding{172} explicit symmetry breaking (and approximate symmetries), \ding{173} spontaneous symmetry breaking, and \ding{174} quantum anomalies. In this thesis we will focus on the study of quantum anomalies, which is the phenomenon when the symmetry of a classical theory fail to be hold for the corresponding quantum theory.
\par
Quantum anomalies were first discovered through the violation of chiral symmetry in quantum electrodynamics (QED), manifested by the non-conservation of the axial current \cite{Schwinger:1951nm,Johnson:1963vz}. This phenomenon, known as the \emph{chiral anomaly} or \emph{Adler–Bell–Jackiw (ABJ) anomaly}, resolved the discrepancy between the theoretical calculations and experimental results of the decay rate of the neutral pion ($\pi^0\to\gamma\gamma$) \cite{adler1969axial,bell1969pcac}. This indicates that symmetry violations in quantum theory are not flaws but essential features that reveal the fundamental quantum nature of the theory.
\par
The chiral anomaly was computed perturbatively from the 1-loop Feynman diagram (in $4d$ it is the famous triangle diagram) of the fermionic theory, where the symmetry breaking is caused by the regularization process. Equivalently, it can also be derived from the transformation of the path integral measure \cite{fujikawa1979path}. Despite that the classical action is invariant under the symmetry transformation, the path integral measure of the fermion fields will acquire a nontrivial Jacobian, which under regularization gives rise to a phase factor to the transformed path integral.\footnote{Although anomalies were originally understood in the context of fermionic theories under regularization, it was later realized that anomalies also occur in bosonic theories and in cases even without the introduction of a regulator (see, e.g., \cite{Cheng:2022sgb}).} 
\par
To be precise, consider a fermionic theory defined on a $d$-dimensional manifold $M$ with a continuous symmetry described by a Lie group $G$ (we will also refer to the symmetry as $G$ for convenience)
\be
Z[A]=\E^{\I W[A]}=\int D\psi D\bar\psi \E^{\I S[\psi,\bar\psi,A]}\,,
\ee 
where we introduced a background field $A$ for the symmetry, and $W[A]$ represents the quantum effective action. Under an infinitesimal transformation parametrized by $\epsilon$, the path integral measure is not invariant, which leads to
\be
Z[A+\delta_\epsilon A]=\E^{\I\mathfrak{a_{\rm con}}}Z[A]=\E^{\I\int_M \epsilon(x) a_{\rm con}(x)}Z[A]\,,
\ee 
where the anomaly density $a_{\rm con}(x)$ is a $d$-form. In terms of the quantum effective action $W[A]$, this can be written as
\be
\label{deltaW}
\delta_\epsilon W[A]=\int D\epsilon\wedge^* \frac{\delta W[A]}{\delta A}=\mathfrak{a}_{\text{con}}=\int_M \epsilon(x) a_{\rm con}(x)\,,
\ee 
Recognizing the current $\langle J^\mu\rangle=\delta W[A]/\delta A_\mu$ (the index $\mu$ denotes the coordinate components), we have the anomalous Ward identity
\be
\langle D^*J\rangle=-a_{\rm con}(x)\,,
\ee
which can be viewed as the quantum version of the Noether's theorem, where now the right-hand side can be non-vanishing due to the quantum effect. For chiral anomaly in $2d$ we have $a_{\rm con}(x)=-\td A$. 
\par
It is important to note that $ Z[A] $ can always be modified by local counterterms defined on $M$, reflecting different choices of regularization schemes. Therefore, we only consider the anomalous phases of $ Z[A] $ that cannot be removed by local counterterms. This statement can be encapsulated by the Wess-Zumino consistency condition \cite{wess1971consequences}, and hence $ a_{\text{con}} $ represents the so-called \emph{consistent anomaly}. As we will see shortly, this signifies the cohomological nature of anomalies. However, for a non-Abelian symmetry, $ a_{\text{con}} $ is not covariant under gauge transformations. One can covariantize the consistent anomaly by adding the Bardeen-Zumino polynomials to the anomalous current and obtain the \emph{covariant anomaly} \cite{bardeen1984consistent}, as we will review in Section \ref{sec:BRST}. For chiral anomaly in $2d$, the covariant anomaly reads $ a_{\text{cov}}(x) = -2 F $, where $ F = \td A +\frac{1}{2} [A, A] $ is the curvature of $ A $. 
\par
The physical meaning of the anomaly derived from the above algebra can have different interpretations. If we treat the symmetry $ G $ as a global symmetry and turn on a non-dynamical background field $ A $ to probe the anomaly, the resulting anomaly is called a \emph{'t Hooft anomaly} \cite{tHooft:1979rat,}. The presence of 't Hooft anomalies does not cause any inconsistency in the quantum theory, and the symmetry is still preserved as long as we do not turn on the background field and make it local. On the other hand, if $ G $ is a gauge symmetry, the same algebra still applies, but $ A $ becomes a dynamical gauge field that gets integrated in the path integral, resulting in a \emph{gauge anomaly}. Since gauge symmetries represent redundancies in the theory, breaking gauge symmetry leads to inconsistencies in the path integral. Thus, gauge anomalies must not be present in a consistent quantum theory. There is also a third case, namely the mixed anomaly between global and gauge symmetries.\footnote{A mixed anomaly arises when two subgroups of $ G $ cannot be non-anomalous simultaneously. This concept also applies when both subsymmetries are either global or gauge symmetries.} In this case, while the global symmetry is broken, the theory remains well-defined. The ABJ anomaly is an example of this, where $ G = U(1)_A \times U(1)_V $, and the current for the axial symmetry $ U(1)_A $ is anomalous due to the gauge field of the vector symmetry $ U(1)_V $.
\par
From a theoretical perspective, quantum anomalies have two key utilities. First, an important property of the 't Hooft anomaly is that it is preserved under an RG flow as long as the symmetry is maintained \cite{tHooft:1979rat,}. That is, the anomaly we find for the same symmetry in the UV theory must also be present in the IR theory, and vice versa. This concept, known as \emph{'t Hooft anomaly matching}, provides an important handle for studying the IR dynamics of quantum field theory, which is typically inaccessible through analytical methods. The existence of an anomaly prevents the IR theory from being trivially gapped, constraining it to one of three possibilities \cite{Cordova:2019bsd}: \ding{172} spontaneous symmetry breaking, \ding{173} a gapless theory (CFT), or \ding{174} topologically ordered (TQFT). This approach has proved to be powerful for understanding the phase diagram of the Yang-Mills theory and quantum chromodynamics (QCD) \cite{Gaiotto:2017yup,Cordova:2019bsd,Gomis:2017ixy,}, as well as the Lieb-Schultz-Mattis (LSM) theorem in condensed matter systems \cite{Lieb:1961fr,Cheng:2015kce,Cho:2017fgz}.
\par
The second utility of anomalies is that in any physical theory, gauge anomalies must be completely canceled. This imposes crucial constraints for model building. For example, in the Standard Model, the hypercharges of leptons and quarks are constrained by the anomaly cancellation condition, and the numbers of quark and lepton generations are restricted to be equal \cite{Weinberg:1996kr}. Another famous example is the Green–Schwarz mechanism in superstring theory, where anomaly cancellation restricts type I string theory to have specific gauge groups such as $SO(32)$ or $E_8 \times E_8$ \cite{Green:1984sg}.
\par
Although anomalies cannot be removed by local counterterms on the $ d $-dimensional manifold $ M $, they can generally be canceled by local counterterms in one higher dimension (which are nonlocal on $ M $) through \emph{anomaly inflow}. This mechanism was first observed by Callan and Harvey for the chiral anomaly of domain wall fermions and bulk Chern-Simons theory \cite{callan1985anomalies}, and was soon recognized as essential for understanding the quantum Hall effect and topological phases \cite{Fradkin:1986pq,Stone:1990iw}. Based on this bulk-boundary correspondence picture, in the modern description, anomalies on $M$ are characterized as an invertible topological quantum field theory (TQFT) on a $(d+1)$-dimensional manifold $ \tilde{M} $ with boundary $ \partial \tilde{M} = M $ \cite{Freed:2014iua,Freed:2016rqq,Monnier:2019ytc}. Invertible field theories are the low-energy effective theories of \emph{symmetry protected topological (SPT) phases} \cite{Gu:2009dr,Chen:2011pg,Senthil:2014ooa}. This understanding of anomalies highlights a profound interplay between quantum field theory, condensed matter physics, and mathematical physics.

In Part II of this thesis, one of our main goals is to explore the topological aspects of anomalies. The appropriate mathematical framework for studying anomalies is cohomology. The intersection of gauge theory and cohomology arises through Chern-Weil theory, which establishes a correspondence between characteristic classes, symmetric invariant polynomials in curvature, and cohomology classes \cite{WeilLetter,zbMATH03070474}. Chern demonstrated in \cite{chern1946characteristic} that characteristic classes quantify obstructions to the existence of global sections on a principal bundle $P(M,G)$, providing access to topological data about the base manifold $M$. Then, the topological nature of an anomaly can be captured by a characteristic class in $(d+2)$-dimension, known as the anomaly polynomial, whose integral is an integer known as the \emph{Atiyah-Singer index} \cite{Atiyah:1968mp,atiyah1984dirac}. Historically, this was considered the mathematical description of anomalies, as the geometric and topological structure of anomalies stems from those of the gauge fields \cite{alvarez1984topological,witten1983global}, which are connections on principal bundles (see the next subsection).
\par
However, the formulation of anomalies as characteristic classes of principal bundles is not quite appropriate. A key observation is that the exterior algebra of the principal bundle can be organized into a bi-complex combining the de Rham cohomology of the base manifold and the Chevalley-Eilenberg cohomology of the Lie algebra \cite{chevalley1948cohomology}. A main issue of this is that the Lie algebra alone does not capture the local nature of gauge symmetry. The resolution of this issue is achieved through the BRST cohomology. As will be outlined in Section \ref{sec:BRST}, the possible algebraic forms of anomalies are successfully derived from the Wess-Zumino consistency condition as part of the descent equations \cite{Stora1977,Stora:1983ct,zumino1985chiral}.
\par
In this thesis, we emphasize that the BRST cohomology is not yet the complete picture of characterizing anomalies, as this approach only determines the consistent anomaly. Additional manipulations are necessary to obtain the covariant anomaly. Therefore, we would like to develop a suitable framework that generalizes the BRST cohomology and incorporates the cohomology of the covariant anomaly. In the next subsection, we will elucidate that the BRST formalism can be naturally geometrized by a mathematical structure called the Lie algebroid, and the cohomology within this framework precisely serves our purpose. Motivated by Part I, we will also investigate the cohomological interpretation of the Weyl anomaly in this framework.
\par
Finally, we would like to emphasize that the anomalies we consider in this part all correspond to violations of the conservation law of a symmetry current, which are referred to as \emph{perturbative anomalies} as they can be derived from a given QFT using perturbative methods. However, this is not the end of the story of anomalies. There are two kinds of anomalies that do not correspond to any symmetry current and are intrinsically non-perturbative. One type is known as global anomalies,\footnote{By ``global" it refers to the global structure of the gauge group, rather than saying that the symmetry is a global symmetry.} which are anomalies of large gauge transformations (e.g., the $SU(2)$ anomaly \cite{Witten:1982fp,Wang:2018qoy}), and the other type is anomalies for discrete symmetries (e.g., the parity anomaly \cite{Niemi:1983rq,Redlich:1983kn,Redlich:1983dv,Alvarez-Gaume:1984zst}), these anomalies are also relevant in subjects such as particle physics, string theory and topological insulators. In the modern description of anomalies, it has been proposed that non-perturbative anomalies are also characterized by SPT phases in one higher dimension and can be unified with perturbative anomalies \cite{Witten:2015aba,Witten:2019bou}. Since non-perturbative anomalies do not correspond to characteristic classes in $ d+2 $ dimensions, in the unified picture the Atiyah-Singer index is upgraded to the \emph{Atiyah-Patodi-Singer $\eta$-invariant} \cite{Atiyah:1975jf,Dai:1994kq}. The mathematical framework for classifying anomalies in this general context is called \emph{cobordism} \cite{Kapustin:2014tfa,Kapustin:2014dxa,Yonekura:2018ufj}. There are still many open questions in the general study of anomalies, and we will leave them as future directions, building on insights from our construction.

\section{Geometric Formulation of Gauge Theories}
Yang-Mills theory \cite{Yang:1954ek} is the cornerstone of modern theoretical physics, providing a profound framework for understanding the fundamental interactions in Nature. At the core of the Yang-Mills theory lies the concept of gauge fields, which transform nonlinearly under gauge transformations, ensuring the gauge invariance of the theory. The background field $A$ we introduced in the last subsection for a symmetry $G$ plays precisely this role. In the Yang-Mills theory, one also includes the kinetic term of the field $A$, and in the quantized theory $A$ is integrated over in the path integral (with further gauge-fixing procedures to be discussed later). Physically, these quantized gauge fields mediate the interactions between elementary particles.
\par
For the classical Yang-Mills theory, principal bundles and their associated bundles offer an elegant geometric formulation \cite{Atiyah:1957,Wu:1975es,Atiyah:1978,RevModPhys.52.175,Jordan:2014uza}. Given a principal $G$-bundle $P(M,G)$, the base manifold $M$ represents the physical spacetime and structure group $G$ describes the gauge symmetry. Then a gauge field $A$ on $M$ corresponds to a connection $\bb A$ on $P$, a gauge choice corresponds to a local section on $P$, the gauge strength $F$ of $A$ corresponds to the curvature $\bb F$ of $\bb A$, a local section $\un\psi$ on an associate bundle $E$ corresponds to a matter field, and the induced connection $\nabla$ on the associate bundle corresponds to the covariant derivative of the matter field, etc. This beautiful correspondence, first published by Wu and Yang in \cite{Wu:1975es} and dubbed the Wu-Yang dictionary, is one of the most striking examples of how physical theories and mathematical structures, despite being developed independently, can be seamlessly interwoven into each other.
\par
The situation for quantum gauge theory, however, is more subtle. The path integral over the gauge field includes an infinite amount of gauge redundancy, and we should only count the physically distinct configurations of the gauge field. This is achieved through the Faddeev-Popov procedure \cite{Faddeev:1967fc}, which fixes the gauge at the cost of introducing unphysical degrees of freedom called \emph{ghosts}. These ghost fields have the ``wrong" statistics: they are scalars on the spacetime $M$ but anticommute. Naturally, one might ask if there is a geometric interpretation for ghosts in the language of principal bundles.

The historical approach to the geometric analysis of quantum gauge theories involves the \emph{Becchi-Rouet-Stora-Tyutin (BRST) formalism}, which was originally introduced to formalize the Faddeev-Popov approach of gauge quantization \cite{becchi1974abelian, becchi1976renormalization, Tyutin:1975qk}. In this formalism, the gauge transformation is extended to the BRST transformation, which acts not only on matter fields and gauge fields but also on ghost fields. Under the BRST transformation, the theory remains invariant even after fixing the gauge and introducing ghosts. The action of the BRST transformation is realized by a nilpotent BRST operator. The physical Hilbert space is then constructed from the cohomology of this BRST operator.
\par
It was subsequently realized that the BRST formalism gives rise to an exterior bi-algebra, later dubbed the BRST complex \cite{Bandelloni:1986wz, Henneaux:1989rq, Henneaux:1995ex, Dragon:1996md, DuboisViolette:1991is, Brandt:1996mh, Barnich:2000zw}, which can be used to calculate the cohomology classes relevant to quantum anomalies \cite{adler1969axial,bell1969pcac,t1976symmetry,fujikawa1979path,bonora1983some,alvarez1984topological,witten1983global,zumino1985chiral,Nelson:1984gu}. Starting from a principal bundle $P(M,G)$, the basic objective of the BRST complex is to design an exterior algebra that combines the de Rham cohomology of the base manifold $M$ with the cohomology of the local gauge algebra associated with the structure group $G$. The BRST complex accomplishes this task in a series of steps. First, it takes a local section of $P(M,G)$ to define the gauge field $A$, which descends from a bona-fide principal connection. In this way, it forgets about the vertical sub-bundle of $TP$, and restricts its attention only to the de Rham cohomology of the base manifold. Next, the vacuum left behind by the vertical sub-bundle is filled by introducing a graded algebra generated by a set of Grassmann valued fields $c^A(x)$ representing the ghosts (encoding its anticommuting nature). In this way, one obtains the BRST complex as an exterior bi-algebra consisting of $p$-forms on $M$ contracted with $q$ factors of the ghost field, where the number $q$ is referred to as the ghost number. 

Now we return to the geometric interpretation of ghosts. A priori, ghost fields have no geometric interpretation, rather being introduced as a computational device in the gauge quantization. However, it has been argued that a geometric interpretation for the ghost fields exists as the ``vertical components" of an extended gauge field \cite{thierry1980explicit,thierry1980geometrical,quiros1981geometrical,Thierry-Mieg:1985zmg,Neeman:1979cvl, Baulieu:1981sb, ThierryMieg:1982un, ThierryMieg:1987um,Bonora:1981rw,Hirshfeld:1980te, Hoyos:1981pb, Baulieu:1985md,Hull:1990qg,Bonora:2021cxn}. The basic idea behind this interpretation is to contract the ghost fields with the set of Lie algebra generators $c = c^A \otimes \un{t}_A$ and define the extended ``connection" form  $\econn = A + c$ by appending the ghost field to the gauge field. Viewing $\econn$ as a connection, it is natural to define an associated curvature $\ecurv = \td_{\rm BRST}\econn + \frac{1}{2}[\econn,\econn]$, where the coboundary operator of the BRST complex is identified as $\td_{\rm BRST} = \td + \ts$, which is simply the combination of the de Rham differential $\td$ and the BRST operator $\ts$. Enforcing the extra condition that the curvature should have extent only in the de Rham part of the BRST complex, one arrives at a pair of equations defining the action of the BRST operator which can be identified with the Chevalley-Eilenberg differential appearing in Lie algebra cohomology \cite{chevalley1948cohomology,Zumino:1984ws,Brandt:1989gv}. 

With the ``connection" $\econn$, ``curvature" $\ecurv$, and coboundary operator $\td_{\rm BRST}$ in hand, one can construct ``characteristic classes" in the BRST complex by naively following the Chern-Weil theorem \cite{WeilLetter,zbMATH03070474}. Due to the fact that $\ecurv$ was manufactured to have zero ghost number, the Chern-Simons form associated with a given characteristic class in the BRST complex can be shown to satisfy a series of equations known as the descent equations \cite{Zumino:1984ws,Stora1977,zumino1984chiral,Sorella:1992dr}. One of the resulting equations is the Wess-Zumino consistency condition \cite{wess1971consequences}, which ultimately determines the algebraic form of candidates for quantum anomalies. 

The success of the BRST approach is undeniable. However, it motivates a series of questions. Why should the Grassmann valued fields $c^A(x)$, which started their life in the BRST quantization procedure have an interpretation as the generators of a local gauge transformation? Why is it reasonable to combine the de Rham complex and the ghost algebra into a single exterior bi-algebra? On a related note, why is it reasonable to consider the combination $\econn = A + c$ as a ``connection", and moreover what horizontal distribution does it define? Why should the ``curvature" $\ecurv$ be taken to have ghost number zero, and why does enforcing this constraint turn the BRST operator $\ts$ into the Chevalley-Eilenberg operator for the Lie algebra of the structure group? These are the questions that we will answer in Part II of this thesis. In fact, we will show that there is not an answer to each of these questions individually, but rather each of these individual questions are resolved by the answer to a single question: what is the appropriate geometric interpretation for the BRST complex? Indeed, our main objective will be to demystify the BRST complex once and for all, and in doing so provide a unified geometric picture of quantum anomalies. The mathematical language capable of this task extends beyond that of principal bundles and is found in the framework of Lie algebroids. \cite{MR0214103, MR0216409, zbMATH01186367, kosmannschwarzbach2002differential, fern2007lie, mackenzie_1987,mackenzie2005general}.
\par
Lie algebroids is a generalization of the more familiar Lie algebras to the setting of smooth manifolds, which also captures the algebraic structure of tangent bundles. They were first formally introduced in \cite{MR0216409} as the infinitesimal generating algebras for Lie groupoids, which are a categorical generalization of Lie groups. Although they may not be well-known to the majority of physicists, Lie algebroids have already found a variety of applications in mathematical physics \cite{Roytenberg:2002nu,Kosmann-Schwarzbach:2003skv,Blohmann:2010jd,Blumenhagen:2012nt,Ramandi:2014qoa}. In particular, in the context of formulating gauge theories, discussions can be found in, e.g., \cite{Ramandi:2014qoa,marle2008differential,lazzarini2012connections, Fournel:2012uv, Carow-Watamura:2016lob, Kotov:2016lpx, Attard:2019pvw,Ciambelli:2021ujl,Klinger:2023auu} and the citations therein. In \cite{Ciambelli:2021ujl}, it was argued that the exterior algebra of an Atiyah Lie algebroid derived from a principal $G$-bundle is a geometrization of the physicist's BRST complex. In this thesis, we will provide a novel perspective on this correspondence by elaborating on the concept of the \emph{Lie algebroid trivialization}, which extends the discussion in \cite{Ciambelli:2021ujl} further, and base on the this framework have a geometric understanding of quantum anomalies. Building on this framework, we seek to achieve a geometric understanding of the BRST complex and quantum anomalies.

\par

At the end of this introduction, we supply two important concepts which we will encounter frequently throughout Part II of this thesis, namely exact sequences and the curvature of a map.
\bdefi
Suppose $A_i$ ($i=0,1,2\cdots$) is a series of sets and $\phi_i:A_i\to A_{i+1}$ is a series of maps, together they can be expressed as a sequence
\be
\label{exactseq}
\begin{tikzcd}
A_0 \arrow[r, "\phi_0"] & A_1 \arrow[r, "\phi_1"] & A_2 \arrow[r, "\phi_2"] & \cdots \arrow[r, "\phi_{i-1}"] & A_i \arrow[r, "\phi_{i+1}"] & \cdots 
\end{tikzcd}
\ee
This sequence is called an \emph{exact sequence} if $\text{im}(\phi_i)=\ker(\phi_{i+1})$ $\forall i=0,1,2,\cdots$. An exact sequence of the form
\be
\begin{tikzcd}
0 \arrow[r, ] & A_1 \arrow[r, "\phi_1"] & A_2 \arrow[r, "\phi_2"] & A_3 \arrow[r, "\phi_3"] & 0
\end{tikzcd}
\ee
is called a \emph{short exact sequence}. In this case we have $A_3=A_2/A_1$.
\edefi

\bdefi
Suppose $A$ and $B$ are spaces with algebra structures defined by brackets $[\cdot,\cdot]_{A}$ and $[\cdot,\cdot]_{B}$, respectively. Then, given any $a_1,a_2\in A$, the \emph{curvature} of a map $f: A \rightarrow B$ is defined as
\begin{equation}
	R^{f}(a_1,a_2) = [f(a_1),f(a_2)]_{A} - f([a_1,a_2]_{B})\,.
\end{equation} 
The map $f$ is called a \emph{morphism} if $R^{f}$ vanishes $\forall a_1,a_2\in A$. In other words, the curvature of a map measures the failure of the map to be a morphism. The morphism that is a bijection is called an \emph{isomorphism}.\footnote{This is not generally true in category theory, where a bijective morphism is called a \emph{bimorphism}, which is weaker than an isomorphism. However, in this thesis we will not need to worry about this subtlety. }
\par 
For clarity, later on we will always denote the bracket structure of a space $A$ by $[\cdot,\cdot]_{A}$. 
\edefi

\section{Organization of Part II}
The rest of Part II is organized as follows. 

In Chapter \ref{chap:topological}, we introduce the traditional cohomological approach to quantum anomalies using the language of principal bundles. To make this thesis self-contained, in Section \ref{sec:primer} we provide a primer on principal bundles, formulated to facilitate the discussion of Lie algebroids in later chapters. Following that, Section \ref{sec:cohomology} offers a crash course on the basics of algebraic topology, aiming to explain necessary notions relevant to the later analysis. We then construct cohomology classes by utilizing the Chern-Weil theorem, which relates cohomology classes with characteristic classes as invariant polynomials of curvature. In Section \ref{sec:BRST}, we review the description of anomalies from the BRST complex and demonstrate its deficiencies. We also briefly review the consistent and covariant anomalies and their anomaly inflow pictures.

In Chapter \ref{chap:algebroid}, we provide a general pedagogical introduction to Lie algebroids, paving the way for our discussions on gauge theory and anomalies. We discuss in detail various equivalent descriptions of connections and curvatures on Lie algebroids. Through the representation of Lie algebroids, we introduce a coboundary operator $\hatd$, which defines the Lie algebroid cohomology. Given the unfamiliarity of physicists with Lie algebroids, we aim to provide step-by-step derivations of the formulas in elucidating the core properties of relevant notions. Some lengthy calculations in explaining the properties of Lie algebroids are presented in Appendices \ref{app:dhat}, \ref{app:ROmega}, and \ref{app:commutation}.

After the abstract discussion of Lie algebroids, Chapter \ref{chap:trivialized} focuses on the Lie algebroids derived from principal bundles, namely Atiyah Lie algebroids. In Section \ref{sec:ALA} we begin by reviewing the construction of Atiyah Lie algebroids derived from a principal bundles, and then introduce their local trivializations. In Section \ref{sec:morphisms} we discuss the role of Lie algebroid isomorphisms between Atiyah Lie algebroids and demonstrate how they can be interpreted as implementing both gauge transformations and diffeomorphisms in physical contexts. In Subsection \ref{sec:local} we apply Lie algebroid isomorphisms as a tool for studying Lie algebroid trivializations in a global context. In Subsection \ref{sec:BRSTcohomology} we study trivializations of the exterior algebra associated with an Atiyah Lie algebroid, and demonstrate that the resulting cohomology is equivalent to that of the BRST complex. Appendix \ref{app:LAT} includes some calculation details.

Finally, in Chapter \ref{chap:anomaly} we apply the lessons from the previous Chapters to study quantum anomalies. Section \ref{sec:Chernalg} carries over the Chern-Weil construction of characteristic classes in Section \ref{sec:cohomology} to the Lie algebroid context. In this framework, the Atiyah Lie algebroid cohomology can directly quantify both the consistent and covariant anomaly polynomials, which will be demonstrated in Sections \ref{sec:consistent} and \ref{sec:covariant}, respectively. Then, as concrete examples, we apply this machinery to computing chiral anomaly and the Lorentz-Weyl anomaly explicitly in Section \ref{sec:examples}. We conclude in Section \ref{sec:disc2} in which we provide answers to the questions posed in this introduction, address directions for follow up work, and comment on the overall lessons regarding Weyl anomaly from both parts of this thesis.

The results presented in Part II sourced mostly from the joint research work \cite{Jia:2023tki} with the author's advisor Robert~G.~Leigh, and collaborator Marc~S.~Klinger. The review sections on Lie algebroids in Chapter \ref{chap:algebroid} and Section \ref{sec:ALA} are mainly inspired by \cite{Ciambelli:2021ujl}.

\section{Notation}
We use lowercase Greek letters $\mu,\nu,\cdots$ for the indices on a base manifold $M$, underscored Latin letters $\un M,\un N,\cdots$ for the indices on the Lie algebroid $A$, uppercase Latin letters $A,B,\cdots$ for the indices of Lie algebra $\mg$ and the isotropy bundle $L$ of $A$, and lowercase letters $a,b,\cdots$ for the indices on a vector bundle $E$. In a split basis of $A$, the indices for the horizontal sub-bundle $H$ is denote by underscored Greek letters $\un\alpha,\un\beta,\cdots$, and the indices for vertical sub-bundle $V$ is denote by underscored Latin letters $\un A,\un B,\cdots$.
\par
On a principal bundle $P$, we denote the connection and curvature forms as $\bb A$ and $\bb F$. On a Lie algebroid $A$, we will denote the connection and curvature reforms as $\omega$ and $\Omega$. The local gauge field in a open set $U\in M$ defined in a local trivialization $T_U$ of principal bundle is denoted by $A_U$ and that defined in a local trivialization $\tau_U$ of principal bundle is denoted by $b_U$. The curvature for local gauge field in both cases is denoted by $F_U$. The label $U$ will be omitted in some sections for brevity.
\par 
In Chapter \ref{chap:topological}, we denote the exterior algebra on $M$ using the standard notation $\Omega^p(M)\equiv\Gamma(\wedge^p T^*M)$. Starting from Chapter \ref{chap:algebroid}, as we will mainly focus on vector bundles, we will adopt the notation $\Omega^p(A)\equiv\Gamma(\wedge^p {}^*A)$; for example, $\Omega^p(M)$ will then be denoted by $\Omega^p(TM)$.
\par
The notation for various bundles including their sections, basis and dual basis is summarized in Table \ref{t1}.

\begin{table}[!h]
\centering
\caption{Notation for Part II}
\begin{tabularx}{\textwidth}{c|c|c|c|X}
\toprule
Bundle & Sections & Basis & Dual basis & Indices \\
\midrule
$TM$ & $\un X,\un Y$ & $\{\un\p_\mu\}$ & $\{\td x^\mu\}$ & $ i=1,\cdots,\dim M$\\
\midrule
$TP$ & $\un u,\un v$ &  &  & \\
\midrule
$A$ & $\umX,\umY$ & $\{\un E_{\un M}\}$ or $\{\un E_{\ualpha},\un E_{\un A}\}$ & $\{E^{\un M}\}$ or $\{E^{\ualpha},E^{\un A}\}$ &  $\un M=1,\cdots,\dim M+\dim G$,\newline$\ualpha=1,\cdots, \dim M$, $\un A=1,\cdots,\dim G$  \\
\midrule
$L$ & $\un\mu,\un\nu$ & $\{\un t_A\}$ & $\{t^A\}$ & $A=1,\cdots,\dim G $\\
\midrule
$E$ & $\un\psi$ & $\{\un e_a\}$ &$\{f^a\}$ & $a=1,\cdots,\text{rank}\,E$\\
\bottomrule
\end{tabularx}
\label{t2}
\end{table}

\chapter{Topological Obstructions and Anomalies}
\label{chap:topological}

Characteristic classes on principal bundles quantify topological obstructions to defining a global section, and anomalies arising in quantum field theories with auxiliary background fields quantify the obstructions to global gauge fixing. Since a section on a principal bundle corresponds to a gauge choice in gauge theory, this implies that characteristic classes capture the deep topological foundation of anomalies. After reviewing the relevant geometric and topological setup, in this chapter we will introduce how the topological nature of anomaly is formulated in terms of the BRST cohomology in the principal bundle picture and discuss its limitations. For a more detailed discussion on the theory of principal bundles and their applications in physics, see \cite{nakahara2018geometry} or \cite{Liang3}. For an in-depth introduction to algebraic topology, see \cite{Hatcher}.

\section{Geometry of Principal Bundles}
\label{sec:primer}
\subsection{Principal Bundles and Connections}
A \emph{principal $G$-bundle} consists of a bundle manifold $P$ called the \emph{total space}, a group $G$ called the \emph{structure group} and a \emph{base manifold} $M$. It is equipped with the following pair of maps:
\begin{equation} \label{P(M,G)}
R: P\times G \rightarrow P\,,  \qquad  \pi: P \rightarrow M\,,
\end{equation}
where $R$ is a free right action, and $\pi$ is a projection map. The map $R$ being a right action means that given $g\in G$, $R_g:P\to P$ is a diffeomorphism such that $R_{gh}=R_g\circ R_h$, $\forall g,h\in G$. Denote the image  $R_g(p)$ as $pg$ for short. $R$ being a free action means that $pg\neq p$ $\forall p\in P$ if $g\neq e$, where $e$ is the identity of $G$. $\pi$ being a projection map satisfies $\pi^{-1}[\pi(p)]=\{pg|g\in G\}$, $\forall p\in P$. Given $p\in P$, $R$ also gives rise to $R_p:G\to P$, which is an embedding of $G$ in $P$. We will refer to such a principal bundle as $P(M,G)$, or by the sequence of maps $G \rightarrow P \rightarrow M$. 

Locally, i.e.\ in a subregion $P\rvert_U = \pi^{-1}[U]$ over an open subset $U \subset M$, we require that $P\rvert_U \simeq U \times G$. More precisely, for any open subset $U\subset M$ there exists a diffeomorphism $T_U:P\rvert_U \to U \times G$, called a \emph{local trivialization}, such that $T_U(p)=(\pi(p),g_U(p))$, where $g_U:P\to G$ satisfies $g_U(ph)=g_U(p)h$, $\forall h\in G$. Suppose $\dim M=d$ and $\dim G=r$, it is natural to assign coordinates on the principal bundle through a pair of atlases consisting of charts, $\phi: U \rightarrow \mathbb{R}^d$ defining coordinates on $U$, and $\psi: G \rightarrow \mathbb{R}^r$ specifying coordinates in a connected open subset of $G$. For simplicity, we will refer to these coordinates on $P|_U$ as $(x,g)$, with $x = (x^1, \cdots, x^d)$ coordinates for $U$, and $g = (g^1, \cdots, g^r)$ fiber coordinates for $G$. Given two local trivializations $T_U:P\rvert_U \to U \times G$ and $T_V:P\rvert_V \to V \times G$ with $U\cap V\neq\varnothing$, one needs to define a map $t_{UV}:U\cap V\to G$ called a \emph{transition function} as $t_{UV}(x)=g_U(p)g_V^{-1}(p)$, $\forall x=\pi(p)\in U\cap V$, so that any point in $\pi^{-1}[U\cap V]$ will be map to the same point on $U\times G$ by $T_U$ and $T_V$. In this sense, the local trivialization is globally well-defined on $P$.
\par
Given an open subset $U\subset M$, a map $s_U:U\to P$ satisfying $\pi(s_U(x))=x$ $\forall x\in U$ is called a \emph{local section}. Once a local trivialization $T_U:P_U\to U\times G$ is given, each fiber has a special point $q$ such that $g_U(q)=e$. This naturally gives rise to a local section $s_U$. On the other hand, once a local section $s_U:U\to P$ is given, for any point $p$ on a fiber $\pi^{-1}[x]$ over $x\in U$ there exists a unique $g\in G$ such that $p=s_U(x)g$, which gives rise to a local trivialization $T_U(p)=(x,g)$. Therefore, this establishes a canonical correspondence between a local section and a local trivialization.
\par
The tangent space $T_pP$ at any $p\in P$ has a vertical subspace $V_p$ satisfying
\begin{equation}
V_p=\{\un v_p\in T_pP\,|\,\pi_*(\un v_p)=0\}\,.
\end{equation}
Since the group $G$ can be considered as generated from its Lie algebra $\mathfrak{g}$ by the exponential map: $\text{exp}: \mathfrak{g} \rightarrow G$, by means of the right action $R$, we can define a map $j_p: \mathfrak{g} \rightarrow V_p$ for any $p\in P$ as follows:
\begin{equation} \label{Isotropy Map}
    j_p(\underline{\mu}) := (R_{p})_*\un\mu=\frac{\td}{\td t}\bigg|_{t=0}[R_p\text{exp}(t\underline{\mu})] \,,\qquad\forall\un\mu\in\mathfrak{g}\,,
\end{equation}
which provides a canonical isomorphism between the Lie algebra $\mathfrak{g}$ and $V_p$. If we let $p$ run all over $P$, the resulting objects will become sections of $TP$, which defines a vector bundle over $P$, namely the \emph{vertical sub-bundle} $V_P$ of $TP$:
\begin{equation}
V_P=\{\un v\in \Gamma(TP)\,|\,\pi_*(\un v)=0\}\,.
\end{equation}
The map $j_p$ can subsequently be extended to a map $j:P\times\mathfrak{g} \rightarrow VP$. In the case we have the same $\un\mu\in\mathfrak{g}$ at each point of $P$, the resulting section of $V_P$ under $j$ is called the \emph{fundamental vector field} induced by $\un\mu$. It is important to notice that $\un\mu$ does not have the information of $M$, and hence this isomorphism identifies the Lie algebra of the structure group globally with the fundamental vector fields as sections of $VP$. 

A horizontal subspace is defined at each $p\in P$ as a distribution of vector fields such that: $T_pP = V_p \oplus H_p$, and $H_{pg} = R_{g*}[ H_p]$, $\forall g\in G$. Unlike the vertical subspace, there is no canonical definition of the horizontal subspace. Rather, by defining $H_p$ smoothly for all $p\in P$ we obtain a \emph{horizontal sub-bundle} $H_
P$ of $TP$, which is also referred to as a choice of \emph{connection} on $P$. There are several seemingly different but equivalent ways of defining a connection on $P$, i.e., specifying a choice of horizontal sub-bundle of $P$. First, a connection can be defined as a $\mathfrak{g}$-valued 1-form field on $P$ denoted by $\mathbb A\in\Omega^1(P;\mg)$, which is also a map $\mathbb{A}: TP\times P \rightarrow \mathfrak{g}$, satisfying
\begin{enumerate}
\item[(A1)] $\mathbb{A}|_{p}( j_p(\underline{\mu}) )= -\underline{\mu}$, $\forall\underline{\mu} \in \mathfrak{g}$; 
\par
\item[(A2)] $\mathbb{A}\rvert_{pg}((R_{g})_* \underline{v}\rvert_p) = \text{Ad}_{g^{-1}}(\mathbb{A}\rvert_p(\underline{v}_p))$, $\forall p\in P$, $\underline{v}_p \in T_p P$, $g \in G$.
\end{enumerate}
The horizontal subspace $H_p$ at $p$ associated with such a principal connection is then simply defined by its kernel,
\begin{equation}
	H_p P := \{ \un{v}_p \in T_p P \; | \; \mathbb{A}_p(\un{v}_p) = 0\}\,.
\end{equation}
As $p$ runs all over $P$, we obtain the horizontal sub-bundle $H_P$ of $P$:
    \begin{equation}
        H_P\equiv\{\un{v}\in\Gamma(TP)\,|\,\mathbb A(\un{v})=0\}\,.
    \end{equation}
On the other hand, specifying a horizontal sub-bundle also uniquely corresponds to defining a map $\sigma: TM \rightarrow TP$ such that
\begin{enumerate}
\item[(B1)] $\pi_* \circ \sigma(\underline{X}\rvert_x) = \underline{X}\rvert_x$, $\forall x\in M$, $\underline{X}\rvert_x \in T_xM$;
        \par
\item[(B2)] $\sigma(\underline{X}\rvert_{\pi(p)})\in H_p$, $\forall p\in P$.
\end{enumerate}
The map $\mathbb A$ is called a \emph{vertical projection} or $\emph{Ehresmann connection}$, and $\sigma$ can be referred to as a \emph{horizontal lift} or \emph{covariant derivative}. The Ehresmann connection and the horizontal lift are related in the sense that define the same horizontal distribution. One can easily deduce that image of $\sigma$ coincides with the kernel of $\mathbb A$, i.e.\
\begin{equation}
\mathbb{A} \circ \sigma(\underline{X}) = 0\,, \qquad\forall \un X\in \Gamma(TM)\,.
\end{equation}
Finally, there is a third equivalent way to characterize a connection on $P$. Suppose $s_U: U \rightarrow P$ is a local section of $P$, we can define a local connection as  a $\mathfrak g$-valued 1-form on $U$ by pulling back the Ehresmann connection:
\begin{equation}
    A_U = s_U^* \mathbb{A} \in \Omega^1(U;g)\,.
\end{equation}
In physical contexts, this object is the familiar local gauge field on $M$. Suppose $U$ and $V$ are two open subsets with $U\cup V\neq\varnothing$, and $s_U: U \rightarrow P$ and $s_V: V \rightarrow P$ are two local sections, whose corresponding local trivializations are $T_U$ and $T_V$ with the transition function $t_{UV}$. Then, the local gauge fields $A_U$ and $A_V$ are related by the following equation:
\begin{fleqn}
\begin{align}
\label{condC}
\text{(C)}\,\,\, A_V|_x (\un X|_x) = \text{Ad}_{t^{-1}_{UV}(x)}(A_U|_x) (\un X|_x)  + t^{-1}_{UV }\td_M t_{UV}|_x(\un X|_x) \,,\quad \forall x\in U\cap V\,, \,\,\un X|_x\in T_xM\,,
\end{align}
\end{fleqn}
where $\td_M$ is the exterior derivative on $M$. Conversely, given such a local gauge field on $M$, one can construct the Ehresmann connection $\mathbb{A}_U$ on $P_U$ over the subset $U \subset M$ by means of the trivialization $T_U(p)=(x,g)$ as follows:
\begin{align} \label{Global Connection}
    \mathbb{A}_U|_p(\un v|_p)& = \text{Ad}_{g^{-1}} \big(A_U|_x(\pi_*(\un v|_p)) + g^{-1}\td_G g\big)\nn\\
    &= \text{Ad}_{g^{-1}} A_U|_x(\pi_*(\un v|_p)) +w\,,\qquad \forall p\in P_U\,,\quad \un v|_p\in T_pP\,.
\end{align}
where $\td_G$ is the exterior derivative on $G$, and $w = g^{-1} \td_G g$ is called the Maurer-Cartan form of $G$. It can be shown that $\mathbb{A}_U$ indeed satisfy conditions (A1) and (A2) above, and condition (C) guarantees that $\mathbb{A}_U$ and $\mathbb{A}_V$ obtained via two local trivializations satisfy $\mathbb{A}_U = \mathbb{A}_V$ on $U \cap V$ for any $U, V \subset M$. That is, despite its local appearance in \eqref{Global Connection}, $\mathbb{A}$ is a globally well-defined Ehresmann connection on $P$ which ensures that the vertical-horizontal splitting is well-defined everywhere on $TP$. For a detailed proof of the equivalence of the above three descriptions of principal connections, see \cite{Liang3}. 

To summarize, the geometry structure of a principle $G$-bundle $P(M,G)$ described by the sequence $G\xrightarrow{R_g}P\xrightarrow{\pi}M$ can be illustrated by the following exact sequence:
\begin{equation}
\label{Psequence}
\begin{tikzcd}
0
\arrow{r} 
& 
V_P
\arrow{r}{j} 
& 
TP
\arrow[bend left]{l}{\mathbb A}
\arrow{r}{\pi_*}
& 
TM
\arrow[bend left]{l}{\sigma}
\arrow{r} 
&
0\,.
\end{tikzcd}
\end{equation}
The principle connection can be defined by $\bb A$ satisfying conditions (A1) and (A2), $\sigma$ satisfying conditions (B1) and (B2), or, in each trivialization $T_U$, a gauge field $A_U$ on the base manifold satisfying condition (C).

\subsection{Exterior Algebra and Curvature}
\label{sec:principal}
Given a principal $G$-bundle $P(M,G)$. The local statement that $P|_U\simeq U\times G$ for an open subset $U\subset M$ is sufficient to identify the exterior algebra of $P$ with the exterior bi-algebra consisting of both the exterior algebras of the manifolds $M$ and $G$. In particular we can express the exterior algebra on $P$, denoted by $\Omega(P)$, as
\begin{equation}
\label{bialgebra}
	\Omega(P) = \bigoplus_{p = 1}^{\dim P} \Omega^{p}(P)\,,\qquad\Omega^p(P) = \bigoplus_{r + s = p} \Omega^{(r,s)}(M,G)\,.
\end{equation}
Now we explain how $\Omega^p(P)$, the collection of the $p$-forms on $P$, is decomposed into $\Omega^{(r,s)}(M,G)$. Since the total space of the principal bundle is locally given by the product $M \times G$, the exterior derivative on $\Omega(P)$, denoted by $\td_P$, locally splits as $\td_P = \td_M + \td_G$ given a suitable choice of local frame, where $\td_M$ and $\td_G$ are the exterior derivatives on $M$ and $G$, respectively. When interpreting this splitting one must be careful in specifying the appropriate generators for the exterior bi-algebra. In a coordinate basis, the exterior algebra of $P$ is generated by a dual basis $\{\td_M x^{\mu}, \td_G g^A\}$, where $x^{\mu}$ are coordinates on $M$ and $g^A$ are coordinates on $G$, and hence we should concede that $\td_G x^{\mu} = \td_M g^A = 0$. Then, any $\bb M\in\Omega^p(P)$ can be expanded in this basis as
\be
\bb M=\sum_{r+s=p}\bb M^{(r,s)}_{\mu_1\cdots\mu_r A_1\cdots A_s}\td_Mx^{\mu_1}\wedge\cdots\wedge\td_Mx^{\mu_r}\wedge\td_Gg^{A_1}\cdots\wedge\td_Gg^{A_s}\,,
\ee
where each $\bb M^{(r,s)}$ can be regarded as a form of degree $r$ on $M$ and degree $s$ on $G$, and the collection of such forms is denoted as $\Omega^{(r,s)}(M,G)$, which defines the exterior bi-algebra in \eqref{bialgebra}.

Now let us introduce the \emph{curvature} of a connection on a principal bundle. Recall that a connection specifies a horizontal distribution $HP \subset TP$. The role of curvature is that it measures the failure of this horizontal distribution to be integrable. Similar to the connection, it can be quantified in three ways. Firstly, the curvature form as a $\mathfrak g$-valued 2-form on $P$ is defined as\footnote{We have introduced the graded Lie bracket of $\mg$-valued differential forms. On a manifold, for any forms $\alpha \in \Omega^m(M; \mg)$ and $\beta \in \Omega^n(M; \mg)$, $[\alpha,\beta]_{\mg}$ is defined as
\begin{equation*}
	[\alpha,\beta]_{\mg}(\un X_1, \ldots, \un X_{m+n}) = \sum_{\sigma }\text{sgn}(\sigma) [\alpha(\un X_{\sigma(1)}, \ldots, \un X_{\sigma(m)}), \beta(\un X_{\sigma(m+1)}, \ldots, \un X_{\sigma(m+n)})]_{\mg}\,,
\end{equation*}
where $\un X_1,\ldots ,\un X_{m+n}$ are arbitrary sections on $TM$, and $\sigma$ denotes the permutations of $(1,\ldots,m+n)$, with $\text{sgn}(\sigma)=1$ for even permutations and $\text{sgn}(\sigma)=-1$ for odd permutations.} 
\begin{equation} \label{Curvature on P(M,G)}
    \mathbb{F} = \td_P \mathbb{A} + \frac{1}{2}[\mathbb{A}, \mathbb{A}]_{\mathfrak g} \in \Omega^2(P;g)\,.
\end{equation}
This equation is referred to as the Cartan's second equation of structure. As a geometric object, the curvature 2-form $\mathbb{F}$ transforms in the adjoint representation of the group $G$, namely $R_g^*\, \mathbb{F} = \text{Ad}_{g^{-1}} \mathbb{F}$. Alternatively, the curvature can be quantified as the failure of the horizontal lift $\sigma$ to be a morphism:
\begin{equation}
\label{RsigmaP}
    R^{\sigma}(\underline{X}, \underline{Y}) = [\sigma(\underline{X}), \sigma(\underline{Y})]_{TP} - \sigma([\underline{X}, \underline{Y}]_{TM}) \in TP\,.
\end{equation}
The relationship between these two notions of curvature is given algebraically as
\begin{equation}
\label{jFRsigma}
    j(\mathbb{F}(\underline{u}, \underline{v}) )= R^{\sigma}(\pi_*\underline{u}, \pi_*\underline{v})\qquad\forall \un u, \un v\in TP\,.
\end{equation}
Lastly, in terms of the local gauge field $A_U$ in each local trivialization $T_U$, we can define the local curvature $F_U$ as the following 2-form on each open subset $U$:
\begin{equation}
\label{Ftransform}
    F_U = \td_M A_U + \frac{1}{2}[A_U, A_U]_\mg \in \Omega^2(U;\mathfrak{g})\,,
\end{equation}
Physically, this is recognize as the gauge field strength. As was the case with the connection form, we can define the curvature globally on $M$ by patching together local gauge field strengths. It follows from condition (C) of the local gauge field that on the overlap $U\cap V$ we have
\begin{equation} \label{Transformation of Gauge Field Strength}
    F_V = \text{Ad}_{t_{UV}^{-1}} F_U\,.
\end{equation}
Similar to the relation between $A_U$ and $\bb A$, the curvature 2-form $F_U$ defined on the base manifold is related to the previously defined $\bb F$ is 
\be
\label{Fbase}
F_U = s_U^* \mathbb{F}\,,
\ee
where $s_U$ is the local section associated with the local trivialization $T_U$.

It is straightforward to show from the definition \eqref{Curvature on P(M,G)} that the exterior derivative of the curvature satisfies the Bianchi identity
\begin{equation}
	\td_P \mathbb{F} = -[\mathbb{A}, \mathbb{F}]_{\mg}\,.
\end{equation}
which follows from the nilpotency of $\td_P$. We can observe that the connection and curvature generate a closed exterior subalgebra of $\Omega(P)$ on account of the algebraic relations:
\begin{equation}
\label{AFrelation}
	\td_P \mathbb{A} = \mathbb{F} - \frac{1}{2}[\mathbb{A},\mathbb{A}]_{\mg}\,,\qquad \td_P \mathbb{F} = -[\mathbb{A},\mathbb{F}]_{\mg}. 
\end{equation}
In the next section, we will demonstrate how the curvature on the principal bundle can be utilized to define the cohomology classes on $P$ explicitly. 

\section{Cohomology and Topological Obstructions}
\label{sec:cohomology}
Topological invariants are a key element in studying the global structure of a differentiable manifold. Two effective tools for constructing these invariants are homotopy and homology. Homotopy concerns the continuous deformation between topological objects, while homology studies the equivalence classes of these objects. These two approaches are closely related. Although homotopy may be more intuitive, its mathematical computation is often quite complex. Therefore, the seemingly more abstract homology is in fact more practical, and homotopy analysis is frequently conducted by means of homology. For physicists, usually an even more convenient approach is to study the dual of homology, namely cohomology, since it directly relates to the familiar differential forms.

\subsection{Homology and Cohomology}
\label{sec:homology}
A basic idea of analyzing the global property of a manifold is to divide it into cells and study how they are pieced together. 
\par
Suppose $n,k\in\bb{Z}$ and $n\geqslant k>0$. Points $v_0,v_1,\cdots,v_k\in\bb{R}^n$ are said to be \emph{affinely independent} if a set of vectors $\{v_1-v_0,\cdots,v_k-v_0\}$ is linearly independent. This assures that these points do not lie on a $(k-1)$-plane. Any single point $v_0\in\bb{R}^n$ is affinely independent. Suppose points $v_0,v_1,\cdots,v_k\in\bb{R}^n$ are affinely independent, then they define a \emph{$k$-simplex} as
\be
\langle v_0,\cdots,v_k\rangle=\left\{\sum^k_{i=0}x^iv_i\bigg|\sum^k_{i=0}x^i=1,x^i\geqslant0\right\}\,.
\ee
$v_0,\cdots,v_k$ are called the \emph{vertices} of the simplex. A simplex formed by some of these vertices is called a \emph{face} of the simplex. Suppose $K$ is a set formed by a finite number of simplices, then $K$ is called a \emph{simplicial complex}, or \emph{complex} for short, if
\par
(a) $\forall\sigma\in K$, each face of $\sigma$ belongs to $K$;
\par
(b) $\forall\sigma_1,\sigma_2\in K$, we have $\sigma_1\cap\sigma_2=\varnothing$ or $\sigma_1\cap\sigma_2$ is a face of both $\sigma_1$ and $\sigma_2$.
\par
Suppose $K$ is a simplicial complex in $\bb{R}^n$, then $|K|\equiv\bigcup_{\sigma\in K}\sigma$ as a subspace of $\bb{R}^n$ is called a \emph{polyhedron}. $K$ is called a \emph{simplicial subdivision} or \emph{triangulation} of $|K|$.
For a $k$-simplex $\sigma=\langle v_0,\cdots,v_k\rangle$, any even permutation $j:(0,\cdots,k)\mapsto(j_0,\cdots,j_k)$ of the vertices is said to be equivalent, i.e., $\langle v_{0},\cdots,v_{k}\rangle\sim\langle v_{j_0},\cdots,v_{j_k}\rangle$. It can be proved that there are two equivalent classes, each one is called an \emph{orientation} of $\sigma$. A simplex $\langle v_0,\cdots,v_k\rangle$ together with an orientation is called a \emph{oriented simplex}, denoted by $[v_0,\cdots,v_k]$. Given any permutation $i:(0,\cdots,k)\mapsto(i_0,\cdots,i_k)$, we have $[v_{i_0},\cdots,v_{i_k}]=\text{sgn}(i)[v_0,\cdots,v_k]$.
\par
Since a smooth manifold $M$ is locally diffeomorphic to an open subset of $\bb{R}^n$, we can use the triangulation of $\bb{R}^n$ as the triangulation of $M$. A linear combination of $k$-simplices of $M$, $c_k=\sum_ia_i\sigma_i^k$, with $a_i\in\bb{Z}$ is called a \emph{$k$-chain} on $M$. The collection of all $k$-chains on $M$, $C_k(M)=\{c_k\}$, is a free Abelian group generated by all oriented $k$-simplices, called the \emph{$k$-chain group}. Since the number of generators can be infinite, the practical way to study them is construct the equivalent classes by means of the homomorphisms between the groups of chains. Now we introduce an operator $\p_k$ that maps each $k$-simplex to a $(k-1)$-simplex on its boundary:
\be
\p_k\sigma^k=\p[v_0,\cdots,v_k]=\sum_{i=0}^k(-1)^i[v_0,\cdots,v_{i-1},v_{i+1},\cdots,v_k]=\sum_{i=0}^k(-1)^i\sigma_i^{k-1}\in C_{k-1}(M)\,.
\ee
When $\p_k$ acts on a $k$-chain, we have
\be
\p_kc_k=\p_k(\sum_ia_i\sigma_i^k)=\sum_ia_i(\p_k\sigma_i^k)\in C_{k-1}(M)\,,
\ee
which preserves the addition of the chain group. Thus, $\p_k:C_k(M)\to C_{k-1}(M)$ is indeed a homomorphism, called the $k^{th}$ \emph{boundary operator}. We also stipulate that the boundary of a $0$-chain is zero.
\par
Given an $d$-dimensional manifold $M$, the $k$-chain groups $C_k(M)$ with $k=0,\cdots,d$ and the boundary operators $\p_k$ give rise to the following sequence:
\be
\label{chaincomplex}
\begin{tikzcd}
0 \arrow[r] & C_d(M) \arrow[r, "\partial_d"] & C_{d-1}(M) \arrow[r, "\partial_{d-1}"] & \cdots \arrow[r, "\partial_2"] & C_1(M) \arrow[r, "\partial_1"] & C_0(M) \arrow[r] & {0\,.}
\end{tikzcd}
\ee
This sequence of chain groups is called a \emph{chain complex}, denoted by $(C_{\bullet}(M),\p_{\bullet})$.
\par
From a boundary operator $\p_k$, we can obtain two important subgroups of $C_k(M)$. One is the kernel of a boundary operator:
\be
Z_k=\{c_k\in C_k(M)|\p_k c_k=0\}\,,
\ee
called a \emph{${k}$-cycle group}, where each $c_k$ is called a $k$-cycle. The other is the image of a boundary operator:
\be
B_k=\{b_k=\p_{k+1} c_{k+1}|c_{k+1}\in C_{k+1}(M)\}\,,
\ee
called a \emph{${k}$-boundary group}, where each $b_k$ is called a $k$-boundary. It can be proved that the boundary of a boundary chain is zero, i.e. $\p_k\cdot\p_{k+1}=0$, and hence $B_k\subset Z_k$. 
\par
Since $B_k$ and $Z_k$ are Abelian groups, $B_k$ must be a normal subgroup of $Z_k$. Then, we can define the quotient group
\be
H_k(M)=Z_k(M)/B_k(M)
\ee
as the $k^{th}$ \emph{homology group} of $M$, which is the set of equivalent classes of $k$-cycles. Two $k$-cycles $c_k$ and $d_k$ are said to be \emph{homologous} if their difference is a $k$-boundary chain, i.e., $c_k-d_k\in B_k(M)$. A non-trivial $k$-cycle in $H_k(M)$ can be thought of as a $k$-dimensional manifold with a $(k+1)$-dimensional hole, and any $k$-cycle without a hole is homologous to a $0$-chain. 
\par
Having the homology group, now we consider the collection of homomorphisms from the chain group $C_k(M)$ to $\bb Z$, denoted by $C^k(M)$. This can be regarded as the dual of the chain group, called the \emph{k-cochain group}. The boundary operator $\p_k:C_k(M)\to C_{k+1}(M)$ also induces a homomorphism $\td^k:C^{k-1}(M)\to C^{k}(M)$, called the $k^{th}$ \emph{coboundary operator} defined as follows:
\be
(\td^kc^{k-1})(c_k)\equiv c^{k-1}(\p_kc_k)\,,\qquad\forall c_k\in C_k(M)\,,\quad c^{k-1}\in C^{k-1}(M)\,.
\ee
The cochain groups $C^k(M)$ with $k=0,\cdots,d$ together with the coboundary operators $\td_k$ give rise to the \emph{cochain complex} $(C^{\bullet}(M),\td^{\bullet})$, which is represented by the following sequence:
\be
\label{cochaincomplex}
\begin{tikzcd}
0 \arrow[r] & C^0(M) \arrow[r, "\text{d}^1"] & C^{1}(M) \arrow[r, "\text{d}^{2}"] & \cdots \arrow[r, "\text{d}^{d-1}"] & C^{d-1}(M) \arrow[r, "\text{d}^{d}"] & C^d(M) \arrow[r] & {0\,.}
\end{tikzcd}
\ee
Similar to the case of a chain group, we can define the kernel of the coboundary operator $\td^k$ as the \emph{${k}$-cocycle group}
\be
Z^k=\{c^k\in C^k(M)|\td^k c^k=0\}\,,
\ee
where each $c_k$ is called a $k$-cocycle. And we define the image of $\td^k$ as the \emph{${k}$-coboundary group}
\be
B^k=\{b^k=\td^{k+1} c^{k+1}|c^{k+1}\in C^{k+1}(M)\}\,,
\ee
where each $b_k$ is called a $k$-coboundary. It can also be proved that $\td^k\cdot\td^{k+1}=0$, and $B^k\subset Z^k$ is a normal subgroup. Then, we can define the $k^{th}$ \emph{cohomology group} as
\be
H^k(M)=Z^k(M)/B^k(M)\,,
\ee
which is the set of equivalent classes of $k$-cocycles. Two $k$-cocycles $c^k$ and $d^k$ are said to be \emph{cohomologous} if their difference is a $k$-coboundary, i.e., $c^k-d^k\in B^k(M)$. Note that the sequences \eqref{chaincomplex} and \eqref{cochaincomplex} are not exact sequences, and $H_k(M)$ and $H^k(M)$ can be considered as a measurement of their ``non-exactness''.
\par
So far we only considered the chain and cochain groups with integer coefficients, and thus the homology and cohomology groups may be denoted as $H_k(M,\bb Z)$ and $H^k(M,\bb Z)$, respectively. In general, $\bb Z$ can be replaced by any group $G$, and $H_k(M,G)$ are vector spaces on $G$ while $H^k(M,G)$ are their dual vector spaces. If we take $G=\bb R$, the resulting cohomology group $H^k(M,\bb R)$ is isomorphic to the \emph{de Rham cohomology group} $H^k_{\text{dR}}(M)$, where the $k$-cocycles are the closed $k$-forms on $M$, the $k$-coboundaries are the exact $k$-forms on $M$, and the coboundary operator is the exterior derivative operator $\td$ on $M$ (the label $k$ is omitted). Furthermore, the wedge product $\wedge:H^p_{\text{dR}}(M)\times H^q_{\text{dR}}(M)\to H^{p+q}_{\text{dR}}(M)$ also defines the de Rham cohomology ring:
\be
H_{\text{dR}}(M)\equiv\bigoplus_{k=1}^{d}H^k_{\text{dR}}(M)\,.
\ee 
Such a ring structure can also be defined for any cohomology class, where for a general cohomology ring $H(M,G)=\oplus_{k=1}^{d}H^k(M,G)$ the wedge product is replaced by the cup product $\cup$ (see, e.g., \cite{Hatcher}). This is a property that homology classes do not generally enjoy. Together with many other properties, this provides advantages in the analysis of cohomology over homology. Since the operations of differential forms are much more familiar to physicists, de Rham cohomology is a convenient implement for studying the global topology of a manifold in physics contexts.

\subsection{Characteristic Classes and the Chern-Weil Theorem}
\label{sec:Chern}
A principle bundle $P(M,G)$ in general cannot be globally trivialized as $P\simeq M\times G$ due to its nontrivial topology. This deviation from the trivial bundle can also be manifested as the obstructions of constructing a global section on $P$ or lifting certain structures or fields globally from $M$ to $P$. Characteristic classes are cohomology classes that measure these topological obstructions. After assigning a connection 1-form $\bb A$ on $P$, the Chern-Weil theorem allows us to express a characteristic class as a polynomial of the corresponding curvature 2-form $\bb F$ on $P$, which we will now introduce \cite{WeilLetter,zbMATH03070474,chern1946characteristic,
chern1966geometry}.
\par
Suppose $\mg$ is the algebra of the structure group $G$ of $P$. Let $Q^{(l)}: \mg^{\otimes l} \rightarrow \mathbb{R}$ correspond to a symmetric order-$l$ polynomial function on $\mg$ which is invariant under the adjoint action of the group $G$. Such an object can be represented by a symmetric $l$-linear map in the tensor algebra of $\mg$. That is, given a basis $\{ t^A\}$ of the dual space $\mg^*$ with $A=1,\cdots,\dim G$, we can write 
\be
\label{Ql}
Q^{(l)} = Q_{A_1 ... A_l} \bigotimes_{j = 1}^l t^{A_j}\,. 
\ee
Then, the $l^{th}$ \emph{characteristic class} $\lambda_{Q}$ defined by $Q^{(l)}$ is
\begin{equation}
	\lambda_{Q}(\mathbb{F}) = Q^{(l)}(\underbrace{\mathbb{F}, ..., \mathbb{F}}_l) = Q_{A_1 ... A_l} \bigwedge_{j = 1}^{l} \mathbb{F}^{A_j} \in \Omega^{2l}(P)\,.
\end{equation}
Note that later we will use the $\lambda_Q(\cdot)$ to define the characteristic classes in different exterior algebras. The exterior algebra in which the particular characteristic class takes values should then be made clear by the argument of $\lambda_Q(\cdot)$.

The essence of the Chern-Weil theorem is the existence of a homomorphism from the invariant polynomial ring on $\mg$ to the cohomology ring of $P$.\footnote{Technically, the Chern-Weil homomorphism maps the invariant polynomial ring on $\mg$ to the cohomology ring of $M$. Here we consider the characteristic classes as living in the equivariant cohomology of $P$, which can be identified with the cohomology of $M$.} Specifically, it establishes that each $\lambda_Q(\mathbb{F})$ gives an element of the cohomology class of degree $2l$ in the exterior algebra of $P$. Here we make no attempt to prove the Chern-Weil theorem in any generality, but will only introduce the following two statements it consists of (see also \cite{nakahara2018geometry}):
\begin{enumerate}
	\item Characteristic classes are closed $2l$-forms in $\Omega(P)$:
\begin{equation}
	\td_P \lambda_Q(\mathbb{F}) = l! Q^{(l)}(\td_P \mathbb{F}, \underbrace{\mathbb{F}, \cdots, \mathbb{F}}_{l-1}) = l! Q^{(l)}(\td_P \mathbb{F} + [\mathbb{A}, \mathbb{F}]_{\mg}, \underbrace{\mathbb{F}, \cdots, \mathbb{F}}_{l-1}) = 0\,,
\end{equation}
which follows from the symmetry of $Q^{(l)}$ and the Bianchi identity. 
	\item Given two different principal connections $\mathbb{A}_1$ and $\mathbb{A}_2$, with respective curvatures $\mathbb{F}_1$ and $\mathbb{F}_2$, we have that $\lambda_Q(\mathbb{F}_2) - \lambda_Q(\mathbb{F}_1) \in \Omega^{2l}(P)$ is exact. The relevant $(2l-1)$-form potential is defined by introducing a one-parameter family of connections $\mathbb{A}_t = \mathbb{A}_1 + t(\mathbb{A}_2 - \mathbb{A}_1)$ which interpolates between $\mathbb{A}_1$ and $\mathbb{A}_2$ as $t$ goes from $0$ to $1$. Then,
\begin{equation} \label{Transgression Formula0}
	\lambda_Q(\mathbb{F}_2) - \lambda_Q(\mathbb{F}_1) = \td_P \left[ Q_{A_1 ... A_l} \int_{0}^{1} \td t (\mathbb{A}_2 - \mathbb{A}_1)^{A_1} \bigwedge_{j = 2}^{l} \left(\td_P \mathbb{A}_t + \frac{1}{2}[\mathbb{A}_t , \mathbb{A}_t]_{\mg}  \right)^{A_j} \right]\,.
\end{equation}
\end{enumerate}

An immediate corollary of the Chern-Weil theorem is that the characteristic class $\lambda_Q(\mathbb{F})$ will be globally exact if there exists a one-parameter family of connections for which $\mathbb{A}_2 = \mathbb{A}$ and $\mathbb{A}_1$ is any connection that has zero curvature. This inspires the topological interpretation of the characteristic class which will be cohomologically trivial if and only if any connection $\mathbb{A}$ can be homotopically connected to the trivial connection. Nonetheless, it will always be true locally that any characteristic class can be written as $\td_P$ acting on a $(2l-1)$-form defined using \eqref{Transgression Formula0}. That is,
\begin{equation}
	\lambda_Q(\mathbb{F}) = \td_P \mathscr{C}_Q(\mathbb{A})\,,
\end{equation} 
where
\begin{equation}
\label{CStransg}
	\mathscr{C}_Q(\mathbb{A}):= Q_{A_1 ... A_l} \int_{0}^{1} \td t\, \mathbb{A}^{A_1} \bigwedge_{j = 2}^{l} \left(t\td_P \mathbb{A} + \frac{1}{2}t^2[\mathbb{A} , \mathbb{A}]_{\mg}  \right)^{A_j}
\end{equation} 
is the \emph{Chern-Simons form} associated with the symmetric invariant polynomial $Q^{(l)}$, which plays a very central role in the cohomological approach to anomalies as will will review shortly. Eq.~\eqref{CStransg} is called the \emph{transgression formula} for the Chern-Simons form.
\par
Finally, characteristic classes satisfy an important property called \emph{naturality}. Suppose $M$ and $N$ are manifolds, $f:N\to M$ is a differentiable map. Let $P(G,M)$ and $P'(G,N)$ be principle bundles over $M$ and $N$ with the same structure group $G$, then a characteristic classes $\lambda_Q(\bb F)$ on $P$ can be pulled back to a characteristic classes on $P'$ as
\be
\label{natural}
f^*\lambda_Q(\bb F)=\lambda_Q(f^*\bb F)\,.
\ee
In other words, characteristic classes are natural as they commute with the pullback of $f$.  

\section{The Cohomology of the BRST Complex and Anomalies}
\label{sec:BRST}
\subsection{BRST Complex}
The topological interpretation of characteristic classes on $P(M,G)$ has led many to expect that the same tools can be used to describe the gauge anomaly which is also a topological effect. Ultimately however, this is misguided for reasons we have mentioned: the cohomology of the principal bundle encodes data associated with the global algebra of the structure group, not the local gauge algebra. In order to let it acquire some explicit relationship with gauge transformations, one needs to require some refinement of the principal bundle language. The historical resolution to this problem is the BRST complex. Before introducing the BRST complex, let us briefly recall how infinitesimal gauge transformations are implemented. 

A local gauge transformation is represented by a map $g: M \rightarrow G$. Under a local gauge transformation, the gauge field and its field strength transforms as
\begin{equation}
\label{localgaugetrans}
	A \rightarrow A^g = \text{Ad}_{g^{-1}}(A) + g^{-1} \td g\,,\qquad F \rightarrow F^g = \text{Ad}_{g^{-1}}(F)\,.
\end{equation}
This is what we have seen in \eqref{condC} and \eqref{Ftransform} for the connection and its curvature defined in a local trivialization, and $g$ now plays the role of the transition function. One should notice that here $g$ is not just a group element, but a pointwisely defined field of group element $g(x)$ on $M$. Each local gauge transformation given by $g$ is generated by $\un{\mu}: M \rightarrow \mg$, which is no longer an element of $\mg$, but a field of element of $\mg$ on $M$. The generator $\un\mu$ acting on the gauge field gives rise to an infinitesimal gauge transformation
\begin{equation} \label{Infinitesimal Gauge Transform of A}
	A \rightarrow A^{\un{\mu}} = A + D\un{\mu} \equiv A + \td\un{\mu} + [A, \un{\mu}]_{\mg},
\end{equation}
where $D$ represents the covariant derivative associated with the gauge field $A$. Besides, we can also introduce a matter field $\un\psi$ in a representation $R$, which is a section on a vector bundle $E$. Then, under the infinitesimal gauge transformation generated by $\mu$, we have
\be
\label{inftranspsi}
\un\psi\to\un\psi^{\un\mu}=\un\psi-R(\un\mu)\un\psi\,,
\ee
where the representation $R$ maps $\un\mu$ to an endomorphism $R(\un\mu)$ on $E$. This is the infinitesimal version of the transformation $\psi\to\psi^g=R(g^{-1})\psi$.
\par
The geometric construction of the BRST formalism considers a principle bundle $\scr  P (M,\scr G)$, whose structure group is $\scr{G}=\{g:M\to G\}$ with the group multiplication $g_1g_2(x)=g_1(x)g_2(x)$ inherited from that of $G$ pointwisely. Unlike $P(M,G)$, $\scr P(M,\scr G)$ has an infinite dimensional structure group $\scr G$, where each element is a choice of $g(x)$ that gives rise to a gauge transformation in \eqref{localgaugetrans}. Then, the exterior algebra of $\scr P$ can be decomposed similar to \eqref{bialgebra} as
\begin{equation}
\label{bialgebra2}
	\Omega(\scr P) = \bigoplus_{k = 1} \Omega^{k}(\scr P)\,,\qquad\Omega^k(\scr P) = \bigoplus_{p + q = k} \Omega^{(p,q)}(M,\scr G)\,.
\end{equation}
Note that the form degree $p$ on $M$ is bounded by the $d=\dim M$, while the degree $q$ on $\scr G$ is unbounded since $\dim\scr G$ is infinite. Denote the exterior derivative on $M$ and $\scr G$ as $\td$ and $\ts$, respectively. Then, $\Omega^{(p,q)}(M,\scr G)$ with $\td$ and $\ts$ form cochain complexes in two directions, namely they form a cochain bi-complex, called the \emph{BRST complex}, which can be represented by the following diagram:
\begin{equation}
\begin{tikzcd}
&[-1em]\cdots                                                    & \cdots                                                     & \cdots                                                     &   \cdots                           & \cdots   & [-1em]                       \\
0 \arrow[r]&[-1em]{\Omega^{(0,1)}(M,\scr G)} \arrow[r, "\text{d}"] \arrow[u, "\text{s}"] & {\Omega^{(1,1)}(M,\scr G)} \arrow[r, "\text{d}"] \arrow[u, "\text{s}"] & {\Omega^{(2,1)}(M,\scr G)} \arrow[r, "\text{d}"] \arrow[u, "\text{s}"] & \cdots \arrow[r, "\text{d}"] & {\Omega^{(d,1)}(M,\scr G)} \arrow[u, "\text{s}"]  \arrow[r]&[-1em] 0\\
0 \arrow[r]&[-1em]{\Omega^{(0,0)}(M,\scr G)} \arrow[r, "\text{d}"] \arrow[u, "\text{s}"] & {\Omega^{(1,0)}(M,\scr G)} \arrow[r, "\text{d}"] \arrow[u, "\text{s}"] & {\Omega^{(2,0)}(M,\scr G)} \arrow[r, "\text{d}"] \arrow[u, "\text{s}"] & \cdots \arrow[r, "\text{d}"] & {\Omega^{(d,0)}(M,\scr G)} \arrow[u, "\text{s}"]  \arrow[r]&[-1em] 0\\
&[-1em]0 \arrow[u]                                                   & 0  \arrow[u]                                                    & 0 \arrow[u]                                                      &        \cdots                      & 0 \arrow[u]    &[-1em]
\end{tikzcd}
\end{equation}
The coboundary operator $\td: \Omega^{(p,q)}(M,\mathscr{G}) \rightarrow \Omega^{(p+1,q)}(M,\mathscr{G})$ is de Rham differentiation on $M$ and the coboundary operator $\ts: \Omega^{(p,q)}(M,\mathscr{G}) \rightarrow \Omega^{(p+1,q)}(M,\mathscr{G})$ in the vertical direction is called the \emph{BRST operator}. Then the exterior derivative on $\scr P$ can be recognized by the coboundary operator $\td_{\rm BRST} = \td + \ts$ on the BRST complex. The nilpotency of these operators means $\td^2=\ts^2=\td\ts+\ts\td=0$.

The next step in the BRST construction is to introduce a graded algebra generated by Grassmann valued fields $c^A(x)$ with $A=1,\cdots,\dim G$, which form a $\mg$-valued 1-form $c=c^A\otimes\un t_A \in \Omega^{(0,1)}(M,\mathscr{G};\mg)$. The fields $c^A$ are referred to as ``ghosts", and play a significant role in the quantization of gauge theories. Thus, later on we will refer to the degrees $p$ and $q$ for any $\alpha^{(p,d)}\in\Omega^{(p,q)}(M,\scr G)$ as the form degree and ghost number, respectively. Then, $c$ is added to the gauge field $A$ on $M$ to define an ``extended form'':
\begin{equation} \label{BRST Connection}
	\hat{A} \equiv A + c\,.
\end{equation}
$\hat A$ has form degree $1$ and ghost number 1, which is regarded as a ``connection" in the context of the BRST analysis. Subsequently, we can define its ``curvature" $\hat{F}$ by
\begin{equation}
	\hat{F} \equiv \td_{\rm BRST} \hat{A} + \frac{1}{2}[\hat{A}, \hat{A}]_{\mg}\,.
\end{equation}
Notice that it is not immediately clear that the $\hat{A}$ and $\hat{F}$ on the BRST complex should be interpreted geometrically as a connection and curvature, although they share the same algebraic relations as that of the connection and curvature on a principal bundle given in \eqref{AFrelation}.
\par
In the BRST analysis, one makes a particular choice which makes it an effective device for the quantization of gauge theory. That choice goes by the name of the Russian Formula, which stipulates that the extended curvature $\hat{F}$ should be completely horizontal, i.e.,\ have zero ghost number. Computing $\hat{F}$ explicitly, we find
\begin{equation} \label{Russian Formula 1}
	\hat{F} = \td A + \frac{1}{2}[A , A]_{\mg} + \Big(\ts A + \td c + [A, c]_{\mg} \Big) + \Big(\ts c + \frac{1}{2}[c, c]_{\mg}  \Big)=F\,.
\end{equation}
To uphold the Russian formula, the terms in the last two parentheses must both vanish identically. This in turn defines the action of the operator $\ts$ through the equations\footnote{Note that there is a relative sign difference between the equation for $\ts A$ and \eqref{Infinitesimal Gauge Transform of A}, this is because the conversion between $\delta_{\un\mu}A=A^{\un\mu}-A$ and $\ts A$ is $\delta_{\un\mu}A=i_{V_{\un\mu}}\ts A$, where $V_{\un\mu}$ is an infinitesimal vector field on $\scr G$. Since $\ts A$ is a bi-form in $\Omega^{(1,1)}(M,\scr G)$, the contraction of $V_{\un\mu}$ with the dual basis on $\scr G$ will pick up a minus sign when crossing the dual basis on $M$. The result in \eqref{Russian Formula 2} does not have this issue since $\ts\psi\in\Omega^{(0,1)}(M,\scr G)$ and so the contraction does not need to cross any dual basis on $M$.}
\begin{equation} \label{Russian Formula 2}
	\ts A = -(\td c + [A, c]_{\mg}) = -Dc\,,\qquad \ts c = -\frac{1}{2}[c ,c]_{\mg}\,.
\end{equation}
Comparing the first equation with \eqref{Infinitesimal Gauge Transform of A}, we can interpret $\ts$ as performing an infinitesimal gauge transformation; the second equation can be interpreted as the action of the Chevalley-Eilenberg operator on the generators of an exterior algebra associated with the Lie group $G$. Furthermore, we can also require that the ``extended covariant derivative'' on a matter field $\un\psi$ in a representation $R$ is horizontal, i.e.,
\be
\hat D\un\psi\equiv \td_{\text{BRST}}\un\psi+R(\hat A)\un\psi=D\un\psi\,.
\ee
This requirement gives
\be
\label{Dhorizontal}
\ts\un\psi=-R(c)\un\psi\,,
\ee
which can be recognized as the infinitesimal gauge transformation of a matter field in \eqref{inftranspsi}. In light of \eqref{Russian Formula 2} and \eqref{Dhorizontal}, one obtains the interpretation that the ghost fields $c^A$ should be regarded as the generators of the local gauge algebra, and the complex $\Omega(\mathscr{G})$ should be interpreted as the Chevalley-Eilenberg algebra of the infinite dimensional gauge group whose elements are $g(x)$. We emphasize, however, that these interpretations follow from the Russian formula, rather than precede it.
\par
Before moving on to the analysis of anomalies, we now introduce the cohomology of the BRST complex. On this cochain bi-complex with coboundary operators $\td$ and $\ts$, define the $(p,q)$-cocycle group
\be
Z^{p,q}(\td|\ts)\equiv\{\alpha^{(p,q)}\in\Omega^{(p,q)}(M,\scr G)|\,\ts\alpha^{(p,q)}+\td\alpha^{(p-1,q+1)}=0\}\,.
\ee
and the $(p,q)$-coboundary group
\be
B^{p,q}(\td|\ts)\equiv\{\alpha^{(p,q)}\in\Omega^{(p,q)}(M,\scr G)|\alpha^{(p,q)}=\ts\alpha^{(p,q-1)}+\td\alpha^{(p-1,q)}\}\,.
\ee
One can easily see that any element $\alpha^{(p,q)}\in B^{p,q}(\td|\ts)$ trivially satisfies the condition for $Z^{p,q}(\td|\ts)$, where the corresponding $\alpha^{(p-1,q+1)}$ is $\ts\alpha^{(p-1,q)}$. Then, the quotient group
\be
H^{p,q}(\td|\ts)=Z^{p,q}(\td|\ts)/B^{p,q}(\td|\ts)\,
\ee
defines the \emph{BRST cohomology group}. The BRST cohomology represents the cohomology of $\Omega^{k}(\scr P)$ defined by $\td_{\text{BRST}}$, as one can show that $H^{p,q}(\td|\ts)\simeq H^{p+q}(\td_{\text{BRST}})$ \cite{Brandt:1996mh}. In fact, substituting $P$ by the infinite dimensional bundle $\scr P$ is in some sense a prototype of the Atiyah Lie algebroid construction. In later chapters, we will see that the Atiyah Lie algebroid provides a natural geometric formulation for the BRST complex and BRST cohomology.

\subsection{Anomalies from Characteristic Classes}
\label{sec:anomalyBRST}
In Subsection \ref{sec:Chern} we introduced a characteristic class $\lambda_Q(\bb F) $ on $P$ as a polynomial of the curvature $\mathbb{F}$, which locally can be expressed as the exterior derivative of the Chern-Simons form $\mathscr{C}_Q(\bb A)$ on $P$. Since the triple $(\td, A, F)$ is characterized by the same algebraic data as the triple $(\td_P, \mathbb{A}, \mathbb{F})$, the same construction can be carried over to $M$, and it remains true that the characteristic classes in the gauge field strength $F$ are always closed and locally we have
\begin{equation}
	\lambda_Q(F) = \td\mathscr{C}_Q(A)\,,
\end{equation} 
where $\mathscr{C}_Q(A)$ is the Chern-Simons form on $M$, which can be expressed in terms of the transgressive formula as
\begin{equation}
	\mathscr{C}_Q(A):= Q_{A_1 ... A_l} \int_{0}^{1} \td t\, A^{A_1} \bigwedge_{j = 2}^{l} \left(t\td A + \frac{1}{2}t^2[A, A]_{\mg}  \right)^{A_j}\,.
\end{equation} 
A consequence of the Russian formula is that it ensures that the triple $(\td_{\rm BRST}, \hat{A}, \hat{F})$ are also characterized by the same algebraic relations as the triple $(\td_P, \mathbb{A}, \mathbb{F})$. Notice that the BRST complex now explicitly containing the cohomology of $\scr G$ representing the local gauge transformations, whereas the exterior algebra of the principal bundle only has access to the cohomology of the structure group $G$, which does not have the information of the local gauge algebra. In this way, one can make use of the Chern-Weil homomorphism and the Chern-Weil theorem to construct characteristic classes on the BRST complex, which leads to the topological interpretation of quantum anomalies. 

To introduce the BRST interpretation of anomalies, we start from the characteristic class $\lambda_Q(\hat{F})$ in the BRST complex. From the Chern-Weil theorem, we have
\begin{equation} \label{Descent Derivation 1}
	\lambda_Q(\hat{F}) = \td_{\rm BRST} \mathscr{C}_Q(\hat{A}) = (\td+\ts)\mathscr{C}_Q(A + c)\,.
\end{equation}
On the other hand, the Russian formula tells us that this should be equivalent to the characteristic class $\lambda_Q(F)$ on the base manifold
\begin{equation} \label{Descent Derivation 2}
	\lambda_Q(\hat{F}) = \lambda_Q(F) = \td\mathscr{C}_Q(A)\,.
\end{equation}
Thus, comparing \eqref{Descent Derivation 1} and \eqref{Descent Derivation 2} yields
\begin{equation} \label{Descent Derivation 3}
	(\td+\ts) \mathscr{C}_Q(A + c) = \td \mathscr{C}_Q(A)\,.
\end{equation}
Next, we can expand $\mathscr{C}_Q(A + c)$ in the bi-complex $\Omega^{(p,q)}(M,\mathscr{G})$ and write
\begin{equation}
\label{CSexpansion}
	\mathscr{C}_Q(A + c) = \sum_{p+q = 2l-1} \alpha^{(p,q)}(A,c)\,,
\end{equation}
where $\alpha^{(p,q)}(A,c) \in \Omega^{(p,q)}(M,\mathscr{G})$. It is easy to see that $\alpha^{(2l-1,0)}(A,c) = \mathscr{C}_Q(A)$. Hence, it follows from \eqref{Descent Derivation 3} that 
\begin{equation} \label{Descent Derivation 4}
	(\td + \ts) \sum_{p+q = 2l-1, p \neq 2l-1} \alpha^{(p,q)}(A,c) = 0\,.
\end{equation}
Enforcing (\ref{Descent Derivation 4}) order by order in $(p,q)$, we arrive at a series of equations called the \emph{descent equations}:
\begin{equation} 
	\td\alpha^{(p,q)}(A,c) + \ts\alpha^{(p+1,q-1)}(A,c) = 0\,,\qquad p + q = 2l-1, p \neq 2l-1\,.
\end{equation}
In particular, the equation for $p=2l-2$ is the well-known \emph{Wess-Zumino consistency condition} \cite{wess1971consequences}:
\begin{equation} \label{WZ}
	\td\alpha^{(2l-3,2)}(A,c) + \ts\alpha^{(2l-2,1)}(A,c) = 0\,.
\end{equation}
Physically, a nontrivial solution $\alpha^{(2l-2,1)}(A,c)$ of \eqref{WZ} will be a candidate for the anomaly density of a $(2l-2)$-dimensional theory provided that it is also not exact in the exterior bi-algebra associated with the BRST complex, i.e.,
\be
\alpha^{(2l-2,1)} \neq \td \gamma^{(2l-3,1)} + \ts \gamma^{(2l-2,0)}\,,
\ee 
for any $\gamma^{(2l-3,1)} \in \Omega^{(2l-3,1)}(M,\mathscr{G})$ and $\gamma^{(2l-2,0)} \in \Omega^{(2l-2,0)}(M,\mathscr{G})$. In other words, the anomaly lives in $H^{2l-2,1}(\td|\ts)$, the ghost number 1 sector of the BRST cohomology. To be precise, for a theory defined on a closed $(2l-2)$-dimensional manifold $M$, the anomaly can be obtained by integrating the BRST variation of $\scr C_Q$ over a $(2l-1)$-dimensional manifold $\tilde M$ with boundary $\p\tilde M=M$:
\begin{equation}
 \label{Aconsistent0}
	\mathfrak{a}_{\rm con} = \int_{\tilde M} \ts\scr C_Q(A+c)= \int_{\tilde M} \ts\alpha^{(2l-1,0)}(A)=-\int_{\tilde M} \td \alpha^{(2l-2,1)}(A,c)= -\int_M \alpha^{(2l-2,1)}(A,c)\,.
\end{equation}
where the terms with higher ghost numbers are dropped since they do not have supports on $M$. The anomaly $\mathfrak{a}_{\rm con}$ is called a \emph{consistent anomaly} since it satisfies the consistency condition \eqref{WZ}. 

To explain the reason why anomalies live in $H^{2l-2,1}(\td|\ts)$, now we give a physical interpretation of the BRST cohomology (see \cite{alvarez1984topological,zumino1985chiral,zumino1984chiral,Gockeler:1987an}). Recall that the quantum effective action $W(A)=-\I\ln Z(A)$ can be written as the integral
\be
W(A)=\int_M\scr L(A)\,, 
\ee
where the effective Lagrangian $\scr L(A)$ is a form in $\Omega^{(2l-2,0)}(M,\scr G)$. Noticing that $W(A)$ only determines $\scr L(A)$ up to a total derivative, i.e., $\ts\scr L(A)=\td\gamma^{(2l-1,0)}$ with $\gamma^{(2l-1,0)}\in\Omega^{(2l-1,0)}(M,\scr G)$. This indicates that $\scr L(A)$ is an element in $H^{2l-2,0}(\td|\ts)$. As we have seen in the last subsection, the action of $\ts$ can be viewed as an infinitesimal gauge transformation, then the corresponding anomaly can be read off from the nonvanishing result of $\ts W(A)$. More precisely, the anomaly defined in \eqref{deltaW} can be recasted in the BRST language as
\be
\ts W(A)=\mathfrak{a}_{\rm con}=\int_M a_{\rm con}(A,c)\,,
\ee
where $a_{\rm con}\in\Omega^{(2l-2,1)}(M,\scr G)$ represents the anomaly density. The nilpotency of $\ts$ gives $s^2W(A)=0$, which means
\be
\ts a_{\rm con}=\td m(A,c)\,,
\ee
where $m(A,c)\in\Omega^{(2l-3,2)}(M,\scr G)$. Therefore, the anomaly density satisfies the Wess-Zumino consistency condition \eqref{WZ}, and hence a solution $\alpha^{(2l-2,1)}(A,c)$ to \eqref{WZ} is a candidate of $a_{\rm con}$. On the other hand, if $a_{\rm con}=  \ts \gamma^{(2l-2,0)}+\td \gamma^{(2l-3,1)}$, then one can shift $W(A)$ by a local counterterm $-\gamma^{(2l-2,0)}$ and remove the anomaly. This is synonymous with the fact that $a_{\rm con}\in H^{2l-2,1}(\td|\ts)$ with non-exactness ensuring that it cannot be canceled by a local counterterm. The consistent anomaly being the gauge variation of the Chern-Simons term on the one higher dimension as shown in \eqref{Aconsistent0} is interpreted as the anomaly inflow for the consistent anomaly.

The BRST analysis from constructing the characteristic classes provides a systematic way of deriving anomaly from the topological perspective. Once a characteristic class $\lambda_Q(\hat F)$ is given, the ghost number 1 term in the expansion \eqref{CSexpansion} will be a possible anomaly for some quantum field theory. For example, when the polynomial $Q$ is taken to be the symmetrized trace of $F=F^{A}\otimes\un t_A$:
\begin{equation}
	\text{str}(F,\cdots,F)= \bigwedge_{j = 1}^{l} F^{A_j}\otimes\frac{1}{l!}\sum_{\pi}\tr(\un t_{A_1}\cdots\un t_{A_l})\,,
\end{equation}
then the corresponding characteristic class $\lambda_Q(\hat F)=\text{ch}(F)$ is the Chern class. In this case, $\alpha^{(2l-2,1)}(A,c)$ gives rise to the chiral anomaly of the $(2l-2)$-dimensional Yang-Mills theory. For $l=2$, we have
\begin{equation}
	\text{ch}(F)=\text{tr}(\hat F\wedge\hat F)= \td_{\rm BRST}\scr C_Q(\hat A)\,,
\end{equation}
with
\begin{equation}
\scr C_Q(\hat A)=\tr\Big(\hat A\wedge \hat F-\frac{1}{6}\hat A\wedge[\hat A,\hat A]_\mg\Big)=\delta_{AB}\Big(\hat A^A\wedge \hat F^B-\frac{1}{6}\hat A^A\wedge[\hat A,\hat A]_\mg^B\Big)\,.
\end{equation}
Then from the decomposition in \eqref{CSexpansion} we have
\begin{align}
\label{CS3d}
\alpha^{(3,0)}(A,c)&=\scr C_Q(A)=\delta_{AB}\Big(A^A\wedge F^B-\frac{1}{6}A^A\wedge[A,A]^B_\mg\Big)\,,\\
\label{YMchiral}
\alpha^{(2,1)}(A,c)&=\delta_{AB}\Big(c^A\wedge F^B-\frac{1}{2}c^A\wedge[A,A]^B_\mg\Big)=\delta_{AB}c^A\wedge\td A^B\,,\\
\alpha^{(1,2)}(A,c)&=-\frac{1}{2}\delta_{AB}A^A\wedge[c,c]^B_\mg\,,\qquad
\alpha^{(0,3)}(A,c)=-\frac{1}{6}\delta_{AB}c^A\wedge[c,c]^B_\mg\,.
\end{align}
We can see that $\alpha^{(3,0)}(A)$ is the standard Chern-Simons form in $3d$, and from $\alpha^{(2,1)}(A,c)$ we obtain the anomaly density. Dropping the ghost, we can read off from \eqref{YMchiral} the familiar expression $\delta_{AB}\td A^B$ for the chiral anomaly of a $2d$ Yang-Mills theory.

However, for a non-Abelian gauge group $\mathfrak{a}_{\rm con}$ is not covariant under a gauge transformation. The notion of anomaly that preserves the gauge covariance is the \emph{covariant anomaly}, which cannot be derived directly from the BRST complex as the BRST operator only behaves as the variation along the gauge orbits. Rather, one needs to perform a free variation of the Chern-Simons form $\scr C_Q(A)$ on the $(2l-1)$-dimensional manifold $\tilde M$ with respect to the gauge field $A$, and the result is \cite{bardeen1984consistent,stone2012gravitational}:
\begin{equation}
\label{CSfreevar}
	\delta\scr{C}_Q(A)=lQ^{(l)}(\underbrace{F,\cdots,F}_{l-1},\delta A)+\td\Theta(A,\delta A)\,.
\end{equation}
The first term on the right-hand side of the above equation represents the covariant anomaly:
\begin{equation} \label{Aconvariant}
\mathfrak{a}_{\rm cov} \equiv -\int_{M} \frac{\delta}{\delta A}\scr{C}_Q(A)=- l \int_M G_Q(F)\,,
\end{equation}
where $G_Q(F)$ is a polynomial which can be read off directly from \eqref{CSfreevar}, and the $\Theta$ in the second term on the right-hand side of \eqref{CSfreevar} is a symplectic potential which provides the Bardeen-Zumino polynomial as a current that covariantizes the consistent anomaly. Again, take $Q^{(l)}$ to be the symmetrized trace for example. The $\mathfrak{a}_{\rm cov}$ in \eqref{Aconvariant} gives the covariant chiral anomaly of the $(2l-2)$-dimensional Yang-Mills theory. Let us demonstrate for the $l=2$ case, where $\scr C_Q(A)$ has the form in \eqref{CS3d}, the free variation of which reads
\begin{align}
\label{freevarCS}
\delta \scr C_Q(A)&=\tr\Big(2F\delta A-\td(A\delta A)\Big)=\delta_{AB}\Big(2F^A\delta A^B-\td(A^A\delta A^B)\Big)\,,
\end{align}
and we find that
\begin{align}
Q^{(2)}(F,\delta A)=\delta_{AB}F^A\delta A^B\,,\qquad\Theta(A,\delta A)=-\delta_{AB} A^A\delta A^B\,.
\end{align}
Then, the covariant anomaly $\mathfrak{a}_{\rm cov}$ in \eqref{Aconvariant} can be read off as
\begin{equation}
\mathfrak{a}_{\rm cov}=-\int_M2F\,.
\end{equation}
Now we explain the physical picture of covariant anomaly. Integrating \eqref{freevarCS} over $\tilde M$ and applying the Stokes theorem for the exact term, we have
\begin{align}
\delta\int_{\tilde M} \scr C_Q(A)&=\delta_{AB}\int_{\tilde M} \delta A^A\wedge{}^*J_{\rm bulk}^B+\delta_{AB}\int_{M} \delta A^A\wedge{}^*X^B\,,
\end{align}
where $^*J_{\rm bulk}=2F$ represents the bulk current sourced by $A$, and $^*X=A$ is the Bardeen-Zumino current on the boundary. Recall that the consistent anomaly of theory on $M$ derived above is the covariant divergence of the consistent anomalous current $J_{\rm con}$:
\begin{align}
\label{DJcon}
D^*J_{\rm con}^A=\td A^A\,.
\end{align}
Define the covariant anomalous current on $M$ as $J_{\rm cov}=J_{\rm con}+X$, then its covariant divergence becomes
\begin{align}
\label{DJcov}
D^*J_{\rm cov}^A=\td A^A+\td A^A+[A,A]^A=2F^A\,.
\end{align}
This is the covariant chiral anomaly of the $2d$ Yang-Mills theory. Comparing \eqref{DJcon} and \eqref{DJcov}, we can see that adding the Bardeen-Zumino current covariantizes the consistent anomaly.\footnote{We will present the general proof of this in Appendix \ref{app:covanomaly}, where the connection and curvature are defined in the Lie algebroid context but the algebra follows in the same way.} On the other hand, the charge injected by the bulk current $J_{\rm bulk}$ into $M$ is
\be
Q=\int_{M}{}^*J_{\rm bulk}=\int_M 2F\,,
\ee
which is again the covariant anomaly. Therefore, besides covariantizing the consistent anomaly, the free variation of the Chern-Simons term also provides a physical interpretation for the covariant anomaly: the conservation of the boundary covariant anomalous current $J_{\rm cov}$ is broken because there are bulk charges flowing into the boundary. Thus, the system of bulk plus boundary is anomaly free. This is the anomaly inflow picture for the covariant anomaly. See \cite{Hughes:2012vg,Parrikar:2014usa} for a discussion on its relation to the Hall viscosity of a Chern insulator.

So far we have seen that consistent anomalies can be derived from the BRST cohomology, while to obtain covariant anomalies one needs some additional manipulations. In Chapter~\ref{chap:trivialized} we will see that, after formulating the BRST complex in terms of an Atiyah Lie algebroid, the covariant anomaly and the consistent anomaly can actually be integrated into a unified framework. 
\chapter{Backgrounds on Lie Algebroids}
\label{chap:algebroid}
Our interest in Lie algebroids arises from the fact that the Atiyah Lie algebroid associated with a principal $G$-bundle precisely encodes the algebra of infinitesimal gauge transformations in a manifestly geometric fashion. This allows us to achieve the objective of the infinite-dimensional principal bundle $\mathscr{P}$ introduced for the BRST analysis without having to engage in any of the ad-hoc procedures therein. Before delving into the Atiyah Lie algebroid and gauge theory in the next chapter, we provide in this chapter a general introduction to transitive Lie algebroids following \cite{Ciambelli:2021ujl}. For a more comprehensive discussion on Lie groupoids and Lie algebroids, see, for example, \cite{mackenzie2005general}.

\section{Basics of Lie Algebroids}
\subsection{Transitive Lie Algebroids and Connections}
\label{sec:transitive}
\bdefi
\label{LA}
A vector bundle $A$ over a manifold $M$ together with a map $\rho:A\to TM$ is called a \emph{Lie algebriod} if
\par
(a) $\rho[\umX,\umY]_A=[\rho(\umX),\rho(\umY)]_{TM}\,.\qquad\forall\umX,\umY\in\Gamma(A)$;
\par
(b) $[f\umX,g\umY]_A=fg[\umX,\umY]_{TM}+f(\rho(\umX)g)\umY-g(\rho(\umY)f)\umX\,.\qquad\forall\umX,\umY\in\Gamma(A),\quad f,g\in C^{\infty}(M)$.\\
where $[\umX,\umY]_A$ is the Lie bracket defined on $A$. The map $\rho$ is called the \emph{anchor map}. For vector fields $\uX,\uY$ on $M$, $[\uX,\uY]_{TM}$ is the usual Lie bracket defined on $TM$. $\rho(\umX)g$ is the ordinary derivative of $g$ along $\rho(\umX)\in TM$.
\edefi
Condition (a) above states that $\rho$ is a morphism. Equivalently, the curvature of the map $\rho$ defined as follows vanishes:
\begin{equation}
R^\rho(\umu,\unu)\equiv[\rho(\umX),\rho(\umY)]_{TM}-\rho([\umX,\umY]_A)=0\,.
\end{equation}
If $\rho$ is surjective, then the Lie algebroid is said to be \emph{transitive}. In this case, we have the following short exact sequence 
\begin{equation}
\begin{tikzcd}
0
\arrow{r} 
& 
L
\arrow{r}{j} 
& 
A
\arrow{r}{\rho} 
& 
TM
\arrow{r} 
&
0\,.
\end{tikzcd}
\end{equation}
where $j$ is an inclusion map of a vector bundle $L$ called the \emph{isotropy bundle}, whose image is the kernel of $\rho$, i.e., $\rho\circ j(\umu)=\rho(j(\umu))=0$, $\forall\umu\in\Gamma(L)$. The kernel of $\rho$ is referred to as the vertical sub-bundle $V\subset A$. For sections $\umu$ and $\unu$ on $L$, it is natural to require that $j$ is a morphism, i.e.\
\begin{equation}
\label{jmorphism}
R^j(\umu,\unu)\equiv[j(\umu),j(\unu)]_A-j([\umu,\unu]_L)=0\,.
\end{equation}
\par
Now that we have the vertical sub-bundle $V\subset A$, we would like to define a horizontal sub-bundle $H\subset A$ such that $A=H\oplus V$ globally. From the exact sequence above we can see that the tangent bundle of $M$ can be considered as the quotient $TM=A/V$. However, there is no canonically defined horizontal sub-bundle on the Lie algebroid. Similar to the concept of connections on a principle bundle, choosing a horizontal sub-bundle $H$ of $A$ introduces a connection on $A$.
\bdefi
A map $\sigma:TM\to A$ is called a \emph{connection} (or a \emph{split}) if $\rho\circ\sigma:TM\to TM$ is the identity on $TM$, i.e.\
\begin{equation}
\rho\circ\sigma(\uX)=\rho(\sigma(\uX))=\uX\,,\qquad\forall\uX\in TM\,.
\end{equation}
The map $\sigma\circ\rho:A\to A$ is a projection on $A$, whose image space is the \emph{horizontal sub-bundle} $H\subset A$. 
\edefi
Unlike $\rho$, $\sigma$ is not necessarily a morphism, the curvature of $\sigma$ can be expressed as
\begin{equation}
R^\sigma(\uX,\uY)=[\sigma(\uX),\sigma(\uY)]_A-\sigma([\uX,\uY]_{TM})\,,\qquad\forall\uX,\uY\in\Gamma(TM)\,.
\end{equation}
One can easily verify that $R^\sigma$ is vertical, i.e., lives in the kernel of $\rho$:
\begin{equation}
\rho(R^\sigma(\uX,\uY))=\rho([\sigma(\uX),\sigma(\uY)]_A)-\rho\circ\sigma([\uX,\uY]_{TM})=[\rho\circ\sigma(\uX),\rho\circ\sigma(\uY)]_A-[\uX,\uY]_{TM}=0\,.
\end{equation}
where we used the fact that $\rho$ is a morphism and $\rho\circ\sigma$ is the identity on $TM$. Thus, $R^\sigma(\uX,\uY)\in\Gamma(V)$.
\par
\bdefi
A map $\omega:A\to L$ is called a \emph{connection reform} if it satisfies
\begin{equation}
\text{ker}(\omega)=\text{im}(\sigma)=H\subset A\,.
\end{equation}
For future convenience, we take $\omega\circ j:L\to L$ to be the minus of the identity on $L$, i.e., $\omega(j(\umu))=-\umu$. This will make the definition of curvature align with the familiar form. The map $-j\circ\omega:A\to A$ is a projection on $A$ whose image space is $V$
\edefi
Having a connection on the Lie algebroid characterized by the map $\omega$ and $\sigma$ defines a second short exact sequence in the direction opposite to the first one:
\begin{equation}
\begin{tikzcd}
0
\arrow{r} 
& 
L
\arrow{r}{j} 
& 
A
\arrow[bend left]{l}{\omega}
\arrow{r}{\rho}
& 
TM
\arrow[bend left]{l}{\sigma}
\arrow{r} 
&
0\,.
\end{tikzcd}
\end{equation}
Note that $\omega$ and $\sigma$ are two equivalent ways of defining the Lie algebroid connection, as one will be determined once the other is specified. Later will we also see that there is a third way of characterizing the connection by means of the trivialization. This is exactly what we have seen for connections on a principal bundle in Section~\ref{sec:primer}. The short exact sequence above is also reminiscent of that of a principal bundle in \eqref{Psequence}; however, in the Lie algebroid case each term in the sequence is now a vector bundle over $M$, which brings a lot convenience in implementing maps between vector bundles.
\par
With the two projection maps on $A$ we defined above, Any section $\umX$ of $A$ can be decomposed into its horizontal and vertical parts:
\begin{equation}
\label{XHXV}
\umX=\sigma\circ\rho(\umX)-j\circ\omega(\umX)\equiv \umX_H+\umX_V\,,
\end{equation}
where $\umX_H\equiv\sigma\circ\rho(\umX)$ and $\umX_V\equiv-j\circ\omega(\umX)$. It is useful to keep in mind that 
\begin{equation}
\label{omegaXH}
\omega(\umX_H)=\rho(\umX_V)=0\,.
\end{equation}
The Lie brackets of the horizontal and vertical components of $\umX$ satisfy
\begin{align*}
\rho([\umX_H,\umY_H]_A)&=[\rho(\umX_H),\rho(\umY_H)]_{TM}\,,\\
\rho([\umX_H,\umY_V]_A)&=[\rho(\umX_H),\rho(\umY_V)]_{TM}=0\,,\\
\rho([\umX_V,\umY_V]_A)&=[\rho(\umX_V),\rho(\umY_V)]_{TM}=0\,,
\end{align*}
and hence $[\umX_H,\umY_V]_A$ and $[\umX_V,\umY_V]_A$ are purely vertical, while $[\umX_H,\umY_H]_A$ may have both horizontal and vertical components. This indicates that $V$ is an ideal of $A$ with respect to the Lie bracket of $A$. According to Frobenius's theorem, $[\umX_H,\umY_H]_A$ being purely horizontal means that $H$ is an integrable distribution in $A$. 

\subsection{Exterior Algebra and Coboundary Operators}
Before we introduce the exterior algebra of a Lie algebroid $A$, we first introduce the representation of $A$, namely the action of $A$ on a vector bundle. Suppose $E$ is an arbitrary bundle over $M$, we can introduce a series of bundles representing different operations on $E$. First, the collection of all the endomorphisms on $E$ is denoted by $\End(E)$. An endomorphism is a linear transformation of the section of $E$, whose linearity can be expresses as
\begin{equation}
\label{End}
\varphi(f\un\psi)=f\varphi(\un\psi)\,,\qquad\forall\varphi\in\End(E),\quad f\in C^\infty(M),\quad \psi\in\Gamma(E)\,.
\end{equation}
The bundle of first-order differentiation on $E$ is denoted by $\text{Diff}(E)$, in which $D\in\Gamma(\text{Diff}(E))$ is a first order differential operator on $E$ satisfying the following Leibniz rule\footnote{Technically, one can introduce the bundle of $n^{th}$-order differentiation on $E$, denoted by $\text{Diff}^n(E)$. The bundle $\text{Diff}(E)$  of first-order differentiation is $\text{Diff}^1(E)$, and the bundle $\End(E)$ of endomorphisms on $E$ is $\text{Diff}^0(E)$.}
\begin{equation}
\label{DL}
D(f\un\psi)=fD(\un\psi)+\varphi_f(\un\psi)\,.\qquad f\in C^\infty(M),\quad\varphi_f\in\End(E)\,,
\end{equation}
To introduce the representation of the Lie algebroid, we focus on the following sub-bundle of $\text{Diff}(E)$:
\bdefi
Consider a sub-bundle $\Der (E)$ of $\text{Diff}(E)$ such that $\forall\mD\in\Gamma(\Der(E))$, $\rho_E(\mD)$ is an ordinate derivative on functions, where $\rho_E:\Der(E)\to TM$ is a morphism. In this case, the $\varphi_f$ in \eqref{DL} is a derivative on $f$ associated to $\mD$, i.e.\
\begin{equation}
\label{Der}
\mD(f\un\psi)=f\mD\un\psi+(\rho_E(\mD)f)\un\psi\,.\qquad f\in C^\infty(M),\quad\mD\in\Der(E)\,.
\end{equation}
Each $\mD$ is called a \emph{derivation} on $E$
\edefi
Now we will see that $\Der(E)$ as a vector bundle over $M$ is itself a Lie algebroid. Consider the Lie bracket on $\Der(E)$ given by
\begin{equation}
[\mD,\mD']_{\Der(E)}\un\psi=\mD(\mD'\un\psi)-\mD'(\mD\un\psi)\,.
\end{equation}
Since $\rho_E$ is a morphism, it can be taken as the $\rho$ in the condition (a) of Definition \ref{LA}, and it is straightforward to verify that condition (b) is satisfied. One can also check that
\begin{align*}
[\mD,\mD']_{\Der(E)}(f\un\psi)&=\mD(\mD'(f\un\psi))-\mD'(\mD(f\un\psi))=\mD(f\mD'\un\psi)+\mD((\rho_E(\mD')f)\un\psi)-\mD'(f\mD\un\psi)-\mD'((\rho_E(\mD)f)\un\psi)\\
&=f\mD(\mD'\un\psi)+(\rho_E(\mD)f)\mD'\un\psi+(\rho_E(\mD')f)\mD\un\psi+(\rho_E(\mD)f)(\rho_E(\mD')f)\un\psi\\
&\quad-f\mD'(\mD\un\psi)-(\rho_E(\mD')f)\mD\un\psi-(\rho_E(\mD)f)\mD'\un\psi-(\rho_E(\mD')f)(\rho_E(\mD)f)\un\psi\\
&=f[\mD,\mD']_{\Der(E)}\un\psi+([\rho_E(\mD),\rho_E(\mD')]_{TM}f)\psi=f[\mD,\mD']_{\Der(E)}\un\psi+(\rho_E([\mD,\mD']_{\Der(E)})f)\psi\,,
\end{align*}
which means that $[\mD,\mD']_{\Der(E)}$ is indeed a derivation. Therefore, as a vector bundle over $M$, $\Der(E)$ possesses an anchor map $\rho_E$ and a Lie bracket, which is a well-defined Lie algebroid. Note that when $\rho_E(\mD)=0$, the second term in \eqref{Der} vanishes, and $\mD$ becomes an endomorphism. Hence, the kernel of $\rho_E$ is $\End (E)$ which can be identified as a sub-bundle of $\Der(E)$ by an inclusion map $j_E$. Then, $\Der(E)$ as a Lie algebroid has the following exact sequence:
\begin{equation}
\begin{tikzcd}
0
\arrow{r} 
& 
\End(E)
\arrow{r}{j_E} 
& 
\Der(E)
\arrow{r}{\rho_E}
& 
TM
\arrow{r} 
&
0\,.
\end{tikzcd}
\end{equation}
\par
Now we can introduce a morphism $\phi_E$ between $A$ and $\Der(E)$ that is compatible with the anchor, i.e., $\rho_E\circ\phi_E=\rho$. The morphism condition simply means that $\phi_E$ has a vanishing curvature:
\begin{equation} 
\label{phiEmor}
	R^{\phi_E}(\un\mX,\un\mY) = [\phi_E(\un\mX),\phi_E(\un\mY)]_{\text{Der}(E)} - \phi_E([\un\mX,\un\mY]_A) = 0\,,\qquad\forall  \un\mX,\un\mY \in A\,,
\end{equation}
and the compatibility condition ensures that $\phi_E$ maps a section $\un\mX$ on $A$ into a derivation $\phi(\un\mX)$ satisfying the Leibniz-like identity enforced by \eqref{Der}:
\begin{equation}
\label{phiEf}
	\phi_E(\un\mX)(f \un\psi) = f \phi_E(\un\mX)(\un\psi) + \rho(\un\mX)(f) \un\psi\,,\qquad \forall  \un\mX \in A\,,\quad f \in C^{\infty}(M)\,,\quad \un\psi \in \Gamma(E)\,.
\end{equation}
Then, $\phi_E$ provides a representation of $A$; that is, each section of $A$ corresponds to an action on $E$. Also, we can introduce a morphism $v_E:L\to\End(E)$ satisfying $\phi_E\circ j=j_E\circ v_E$, making $\End(E)$ the representation of $L$. The diagram of the two Lie algebroids $A$ and $\Der(E)$ can be illustrated as follows:
\begin{equation}
\label{graph}
\begin{tikzcd}
0
\arrow{r} 
& 
L
\arrow{dd}{v_E}
\arrow{r}{j} 
& 
A
\arrow{dr}{\rho}
\arrow{dd}{\phi_E}
& 
&
\\
&&&TM\arrow{r} &0\,.
\\
0
\arrow{r} 
& 
\End(E)
\arrow{r}{j_E} 
& 
\Der(E)
\arrow{ur}{\rho_E}
& 
&
\end{tikzcd}
\end{equation}
The above diagram is a commutative diagram in the sense that the square part satisfies $\phi_E\circ j=j_E\circ v_E$ and the triangle part satisfies $\rho_E\circ\phi_E=\rho$.
\par
Suppose $\{\un e_a\}$ is a basis of $\Gamma(E)$, and $\{f^b\}$ is a dual basis, namely a basis of $\Gamma(E^*)$, then $\{\un e_a\otimes f^b\}$ will be a basis of $\Gamma(\End(E))$. For any $\un\psi\in\Gamma(E)$ and $\varphi\in\End(E)$, we have
\begin{equation}
\varphi(\un\psi)=\varphi^a{}_b\un e_a\otimes f^b(\psi^c\un e_c)=(\varphi^a{}_b\psi^b)\un e_a\,.
\end{equation}
Let $\{\un t_A\}$ be a basis of $\Gamma(L)$. For any $\un\mu=\mu^A\un t_A\in\Gamma(L)$, the representation of $L$ offered by $v_E$ gives
\begin{equation}
v_E(\un\mu)=\mu^Av_E(\un t_A)=\mu^A(t_A)^a{}_b\un e_a\otimes f^b\equiv\mu^a{}_b\un e_a\otimes f^b\,.
\end{equation}
In this matrix representation, we can also have the following commutators:
\begin{align}
[v_E(\un t_A),v_E(\un t_B)]_{\End(E)}&=((t_A)^a{}_c(t_B)^c{}_b-(t_B)^a{}_c(t_A)^c{}_b)\un e_a\otimes f^b\,,\\
v_E([\un t_A,\un t_B]_L)&=v_E(f_{AB}{}^C\un t_C)=f_{AB}{}^C(t_C)^a{}_b\un e_a\otimes f^b\,,
\end{align}
where $f_{AB}{}^C$ can be interpreted the structure constants of $L$. Since $v_E$ is a morphism, comparing the above commutators yields
\begin{align}
\label{bracketL}
[t_A,t_B]^a{}_b=f_{AB}{}^C(t_C)^a{}_b\,.
\end{align}
Note that since $t_A$ are sections on $L$, the ``structure constants'' $f_{AB}{}^C$ are actually functions on the base manifold $M$. From \eqref{jmorphism} and the condition (b) of Definition \ref{LA} we have see that the Lie bracket on $L$ is linear, i.e.
\begin{equation}
[f\umu,g\unu]_L=fg[\umu,\unu]_L\,.
\end{equation}
Thus, evaluating at each point $x\in M$, the Lie bracket on the isotropy bundle $L$ defines the fiber over $x$ as a Lie algebra, called the \emph{isotropy Lie algebra} at $x$, then $f_{AB}{}^C(x)$ will be the structure constants of this Lie algebra.
\par
Now we come to the main focus of this subsection, the exterior algebra of the Lie algebroid $A$, which will be crucial in later chapters. The exterior algebra (cochain complex) $\Omega(A) $ of $A$ is defined as\footnote{Previously, we used the standard notation for the exterior algebra by writing $\Omega^p(M)\equiv\Gamma(\wedge^p T^*M)$. Since in the algebroid context we will be mainly dealing with vector bundles, from now on we will switch the notation to $\Omega^p(A)\equiv\Gamma(\wedge^p {}^*A)$; for example, $\Omega^p(M)$ will be denoted by $\Omega^p(TM)$.}
\be 
\Omega(A) = \bigoplus_{p=0}^{\text{rank}\,A} \Omega^p(A)\,,
\ee
where each $\Omega^p(A) \equiv \Gamma(\wedge^p A^*)$ consists of totally antisymmetric $p$-linear maps from $\Gamma(A^{\otimes p})$ to $C^\infty(M)$. The exterior algebra $\Omega(A)$ has a well-defined coboundary operator $\hatd: \Omega^p(A) \rightarrow \Omega^{p+1}(A)$ determined by the anchor map $\rho$ and the bracket on $A$, which acts as the exterior derivative on the forms on $A$.
\bdefi
The map $\hat\td:\Omega^p(A)\to\Omega^{p+1}(A)$ is called a \emph{coboundary operator} or \emph{exterior derivative operator} on $A$ if $\forall\un\eta\in\Omega^{p}(A)$,
\begin{align} \label{dhat for trivial bundle}
    \hatd\eta(\un{\mX}_1, \ldots, \un{\mX}_{p+1}) ={}& \sum_{i} (-1)^{i+1} \rho(\un{\mX}_i) \eta(\un{\mX}_1, \ldots, \widehat{\un{\mX}_i}, \ldots, \un{\mX}_{p+1})\nn \\
   & + \sum_{i < j} (-1)^{i + j} \eta([\un{\mX}_i, \un{\mX}_j]_A, \un{\mX}_1, \ldots, \widehat{\un{\mX}_i}, \ldots, \widehat{\un{\mX}_j}, \ldots, \un{\mX}_{p+1}) \,,
\end{align}
where $\un{\mX}_1,\ldots,\un{\mX}_{p+1}$ are arbitrary sections on $A$, and the hats on $\un{\mX}_i$ stands for omission. This equation is called the \emph{Koszul formula}.
\edefi
\par
By means of a Lie algebroid representation $\phi_E$, the exterior algebra $\Omega(A)$ can be extended to $\Omega(A;E)$, namely the exterior algebra on $A$ with values in the vector bundle $E$. Denote the collection of $E$-valued $p$-forms on $A$ as $\Omega^{p}(A;E)\equiv\Gamma(\wedge^nA^*\times E)$. Then, we define 
\begin{equation}
\Omega(A;E)=\bigoplus_{p=0}^{\text{rank}\,A}\Omega^p(A;E)\,.
\end{equation}
The corresponding coboundary operator can be defined via a generalized Koszul formula follows:
\bdefi
\label{def:hatd}
The map $\hat\td_E:\Omega^p(A;E)\to\Omega^{p+1}(A;E)$ is called a \emph{coboundary operator} or \emph{exterior derivative operator} if $\forall\un\psi_p\in\Omega^{p}(A;E)$,
\begin{align}
(\hat\td_E\un\psi_p)(\umX_1,\cdots,\umX_{p+1})&\equiv\sum_{i=1}^{p+1}(-1)^{i+1}\phi_E(\umX_i)(\un\psi_p(\umX_1,\cdots,\widehat{\umX_i},\cdots,\umX_{p+1}))\nn\\
\label{dhat}
&\quad+\sum_{i<j}^{p+1}(-1)^{i+j}\un\psi_n([\umX_i,\umX_j]_A,\umX_1,\cdots,\widehat{\umX_i},\cdots,\widehat{\umX_j},\cdots,\umX_{p+1})\,.
\end{align}
\edefi
For simplicity, we will later refer to the coboundary operator as simply $\hatd$, leaving the particular representation $E$ implicit.
\par
The operator $\hatd$ can be verified to be nilpotent as a result of \eqref{phiEmor} and the fact that the Lie bracket on $A$ satisfies the Jacobi identity. It can also be verify that the $\hat\td$ defined from the formula above is linear in the $\umX_i$ in each slot, i.e.,
\begin{align}
\label{dhatlinear}
\quad(\hatd\un\psi_p)(\umX_1,\cdots,f\umX_i,\cdots,\umX_{p+1})=f(\hatd\un\psi_p)(\umX_1,\cdots,\umX_i,\cdots,\umX_{p+1})\,,\qquad \forall i=1,\cdots,p+1\,,\quad f\in C^\infty(M)\,.
\end{align}
The proofs of these properties of $\hatd$ can be found in Appendix \ref{app:dhat}.
\par
For the $p=0$ case, the Koszul formula \eqref{dhat} reduces to
\begin{equation}
\label{phiE}
(\hat\td\un\psi)(\umX)=\phi_E(\umX)(\un\psi)\,,\qquad\un\psi\in\Gamma(E)\,.
\end{equation}
That is, the 1-from $\hatd\un\psi$ on $A$ acting on $\umX$ can be seen as the derivation $\phi_E(\umX)$ acting on $\psi$.

For the $p=1$ and $p=2$ cases, \eqref{dhat} reads
\begin{align}
\label{d2}
(\hatd\un\psi_1)(\umX_1,\umX_2)&=\phi_E(\umX_1)\un\psi_1(\umX_2)-\phi_E(\umX_2)\un\psi_1(\umX_1)-\un\psi_1([\umX_1,\umX_2]_A)\,,\\
(\hatd\un\psi_2)(\umX_1,\umX_2,\umX_3)&=\phi_E(\umX_1)\un\psi_2(\umX_2,\umX_3)-\phi_E(\umX_2)\un\psi_2(\umX_1,\umX_3)+\phi_E(\umX_3)\un\psi_2(\umX_1,\umX_2)\nn\\
&\quad-\un\psi_2([\umX_1,\umX_2]_A,\umX_3)+\un\psi_2([\umX_1,\umX_3]_A,\umX_2)-\un\psi_2([\umX_2,\umX_3]_A,\umX_1)\,.
\end{align}

\subsection{Curvature}
\label{sec:curvature}
In this subsection, we will introduce several notions of the curvature on a Lie algebroid $A$, and show that how they eventually are in fact different ways of quantifying the same curvature on $A$.

First, since the connection reform $\omega:A\to L$ can be regarded as an $L$-valued 1-form on $A$, it is natural to define the curvature as an $L$-valued 2-form on $A$ via the Cartan's second equation of structure similar to the curvature 2-form \eqref{Curvature on P(M,G)} on a principal bundle:
\begin{align}
\label{Omega2}
\Omega\equiv\hatd\omega+\frac{1}{2}[\omega,\omega]_L\in\Omega^2(A)\otimes L\,.
\end{align}
The curvature 2-form defined in this way is called the \emph{connection reform} on $A$. On the other hand, using the map $\sigma:TM\to A$, we can define the curvature following \eqref{RsigmaP} on the principal bundle:
\begin{equation}
\label{Rsigma}
    R^{\sigma}(\underline{X}, \underline{Y}) = [\sigma(\underline{X}), \sigma(\underline{Y})]_{A} - \sigma([\underline{X}, \underline{Y}]_{TM}) \in A\,.
\end{equation}
\par
Now we demonstrate how these two notions of curvature are related. Since $L$ is a vector bundle over $M$, we can take $L$ to be the vector bundle $E$ in the last subsection and construct the Lie algebroid $\Der(L)$ in the manner we introduced $\Der(E)$, which provides a representation for a Lie algebroid $A$. This representation is referred to as the \emph{adjoint representation} of $A$. Denote the morphism between $A$ and $\Der(L)$ by $\phi_L$. Given $\umX\in\Gamma(A)$ and $\un\mu\in\Gamma(L)$, we can define $\phi_L$ using the Lie bracket on $A$ as follows:
\begin{equation}
\label{phiL}
j(\phi_L(\umX)(\mu))=[\umX,j(\mu)]_A\,.
\end{equation}
Note that $\phi_L$ being a morphism give that
\begin{align*}
j(\phi_L([\umX,\umY]_A)(\mu))&=j([\phi_L(\umX),\phi_L(\umY)]_{\Der(L)}(\mu))=j(\phi_L(\umX)\phi_L(\umY)(\mu))-j(\phi_L(\umY)\phi_L(\umX)(\mu))\\
&=[\umX,j(\phi_L(\umY)(\mu))]_A-[\umY,j(\phi_L(\umX)(\mu))]_A=[\umX,[\umY,j(\mu)]_A]_A-[\umY,[\umX,j(\mu)]_A]_A\,.
\end{align*}
Then it follows from \eqref{phiL} that
\begin{align*}
[[\umX,\umY]_A,j(\mu)]_A&=[\umX,[\umY,j(\mu)]_A]_A-[\umY,[\umX,j(\mu)]_A]_A\,,
\end{align*}
which is exactly the Jacobi identity for the Lie bracket on $A$. Thus, $\phi_L$ defined in \eqref{phiL} is automatically a morphism as the Lie bracket on $A$ satisfies the Jacobi identity.
\par
Now we evaluate the curvature 2-form $\Omega$ defined in \eqref{Omega2}. Since $\hatd\omega$ is an $L$-valued $2$-form. Using \eqref{d2} and \eqref{phiL}, we have
\begin{align}
j((\hatd\omega)(\umX,\umY))&=j(\phi_L(\umX)\omega(\umY))-j(\phi_L(\umY)\omega(\umX))-j(\omega([\umX,\umY]_A))\nn\\
&=[\umX,j(\omega(\umY))]_A-[\umY,j(\omega(\umX))]_A-j(\omega([\umX,\umY]_A))\,.
\end{align}
Let $\umX_H,\umY_H$ represent the horizontal part of $\umX,\umY$, and $\umX_V,\umY_V$ represent the vertical part of $\umX,\umY$ as we defined in \eqref{XHXV}. Then, the equation above becomes
\begin{align}
j((\hatd\omega)(\umX,\umY))&=-[\umX,\umY_V]_A+[\umY,\umX_V]_A-j(\omega([\umX,\umY]_A))\nn\\
&=-[\umX_H,\umY_V]_A-[\umX_V,\umY_V]_A+[\umY_H,\umX_V]_A+[\umY_V,\umX_V]_A\nn\\
&\quad-j(\omega([\umX_H,\umY_H]_A))+[\umX_H,\umY_V]_A+[\umX_V,\umY_H]_A+[\umX_V,\umY_V]_A\nn\\
&=-[\umX_V,\umY_V]_A-j(\omega([\umX_H,\umY_H]_A))\nn\\
\label{idomega}
&=-j([\omega(\umX),\omega(\umY)]_L)-j(\omega([\umX_H,\umY_H]_A))\,,
\end{align}
where in the second equality we used the fact that $[\umX_H,\umY_V]_A$ and $[\umX_V,\umY_V]_A$ are purely vertical, and in the last equality we used the fact that $j$ is a morphism. Noticing that
\begin{align}
[\omega,\omega]_L(\umX,\umY)=[\omega(\umX),\omega(\umY)]_L-[\omega(\umY),\omega(\umX)]_L=2[\omega(\umX),\omega(\umY)]_L\,,
\end{align}
we can see from the definition of $\Omega$ that \eqref{idomega} gives
\begin{align}
\label{Omega1}
j(\Omega(\umX_H,\umY_H))&=-j(\omega([\umX_H,\umY_H]_A))=[\umX_H,\umY_H]_V\,,
\end{align}
where $[\umX_H,\umY_H]_V$ stands for the vertical part of $[\umX_H,\umY_H]_A$. Applying $\omega$ to both sides of \eqref{Omega1} yields
\begin{align}
\label{Omega2d}
\Omega(\umX_H,\umY_H)&=-\omega([\umX_H,\umY_H]_A)\,.
\end{align}
The right-hand side of \eqref{Omega1} can be further evaluated as
\begin{align*}
[\umX_H,\umY_H]_V&=[\umX_H,\umY_H]_A-\sigma(\rho[\umX_H,\umY_H]_A)=[\sigma(\rho(\umX)),\sigma(\rho(\umY))]_A-\sigma([\rho(\umX_H),\rho(\umY_H)]_{TM})\\
&=[\sigma(\uX),\sigma(\uY)]_A-\sigma([\uX,\uY]_{TM})=R^\sigma(\uX,\uY)\,,
\end{align*}
where $\uX\equiv\rho(\umX)$, $\uY\equiv\rho(\umY)$. Therefore, we have the following correspondence between the two notions of curvature introduced in \eqref{Omega2} and \eqref{Rsigma}:
\begin{align}
\label{Rsigma2}
j(\Omega(\umX,\umY))&=R^\sigma(\uX,\uY)\,,
\end{align}
which is analogous to the relation \eqref{jFRsigma} for the curvature on a principal bundle. 
\par
Beside Cartan's second equation of structure, another way to characterize the curvature through the map $\omega$ is to introduce the curvature of the map itself:\footnote{More precisely, this should be regarded as the curvature of $-\omega$ due to the plus sign of the second term.}
\begin{align}
R^\omega(\umX,\umY)\equiv[\omega(\umX),\omega(\umY)]_L+\omega([\umX,\umY]_A)\,.
\end{align}
Applying $j$ to both sides, we can verify that
\begin{align}
j(R^\omega(\umX_V,\umY_V))&=[j(\omega(\umX_V)),j(\omega(\umY_V))]_A+j(\omega([\umX_V,\umY_V]_A))\nn\\
&=[\umX_V,\umY_V]_A-[\umX_V,\umY_V]_A=0\,.
\end{align}
Since $j$ is an inclusion, this indicates that $R^\omega(\umX_V,\umY_V)=0$. Also, it follows from $\omega(\umX_H)=0$ that
\begin{align}
\label{Romega}
R^\omega(\umX_H,\umY_H)=\omega([\umX_H,\umY_H]_A)\,,\\
\label{RomegaV}
R^\omega(\umX_H,\umY_V)=\omega([\umX_H,\umY_V]_A)\,.
\end{align}
Form \eqref{Omega2d} and \eqref{Romega} we can see that 
\begin{align}
\label{OmegaR}
\Omega(\umX,\umY)=-R^\omega(\umX_H,\umY_H)\,.
\end{align} 
Together with \eqref{Rsigma2}, the curvatures we defined above are related in the following way:
\begin{align}
\label{curvature}
R^\sigma(\uX,\uY)=j(\Omega(\umX,\umY))=-j(R^\omega(\umX_H,\umY_H))\,.
\end{align}
Thus, these notions of curvature actually represent the same thing, namely the curvature of the Lie algebroid. The curvature defined in each way shown in \eqref{curvature} being nonvanishing is then the manifestation of the failure of $\sigma$ and $-\omega$ being morphisms. 
\par One can also easily see from \eqref{OmegaR} that $\Omega(\umX_V,\umY)=0$, i.e.\ the curvature reform of a transitive Lie algebroid is automatically horizontal. As we saw in Section \ref{sec:BRST}, in the geometry formulation of BRST using the principle bundle language, this is a condition added by hand. We will show in the following chapter that this result is equivalent to the Russian formula \eqref{Russian Formula 1}, which now arises naturally from the structure of Lie algebroid (more precisely, from the fact $\rho$ and $j$ are morphisms). 
\par
Later in Subsection \ref{sec:trivial} we will see that the curvature of a Lie algebroid can also be characterized in a trivialization, which also provides equivalent information as the notions of curvature introduced above.

\subsection{The Connection and Curvature Induced by a Representation}
\label{sec:reps}
Once the connection on $A$ specified by the pair of maps $\omega$ and $\sigma$ is introduced, it also induces a connection on the representation algebroid furnished by a vector bundle $E$. More precisely, the representation $\phi_E$ of a Lie algebroid with connection determines a pair of maps $\nabla^E: TM \rightarrow \text{Der}(E)$ and $\omega_E:\Der(E)\to\End(E)$, where $\nabla_E$ can be interpreted as a covariant derivative operator on $E$, and $\omega_E$ is the connection reform on the algebroid $\Der(E)$. To see how this pair of maps comes about, we split $\phi_E(\umX)\in\Der(E)$ by considering $\umX$ as the sum of its horizontal part $\umX_H=\sigma\circ\rho(\un\mX)$ and the vertical part $\umX_V=j\circ\omega(\umX)$:
\begin{align}
\Aconn{E}(\un\mX)&=\Aconn{E}(\sigma\circ\rho(\un\mX)+j\circ\omega(\umX))\nn\\
&=\phi_E\circ\sigma(\rho(\un\mX))+j_E\circ v_E\circ\omega(\umX)\,,
\end{align}
where we used the fact that $\phi_E\circ j=j_E\circ v_E$. Now we define $\nabla^E$ and $\omega_E$ by requiring that
\begin{align}
\label{nablaEomegaE}
\nabla^E_{\rho(\umX)}&=\phi_E\circ\sigma(\rho(\un\mX))=\phi_E(\umX_H)\,,\\
\label{omegaEphiE}
\omega_E\circ\phi_E(\umX)&=v_E\circ\omega(\umX)=v_E\circ\omega(\umX_V)\,.
\end{align}
Then, given any section $\umX$ on $A$, $\phi_E(\umX)\in\Der(E)$ can be split into
\begin{equation} \label{Basic A Connection}
	\Aconn{E}(\un\mX)(\un\psi)= \nabla^E_{\rho(\un\mX)}(\un\psi)-j_E\circ\omega_E\circ\phi_E(\un\mX)(\un\psi) \,,\qquad\forall \un\psi\in\Gamma(E)\,.
\end{equation}
The image of $j_E$ in the second term lives in the vertical sub-bundle of $\Der(E)$, and $\nabla^E_{\rho(\un\mX)}$ defines the horizontal sub-bundle of $\Der(E)$. This also implies that $\text{im}(\nabla^E)=\ker(\omega_E)$. The representation algebroid associated to $A$ and their connections can be expressed diagrammatically as
\begin{equation}
\label{graph2}
\begin{tikzcd}
0
\arrow{r} 
& 
L
\arrow[swap]{dd}{v_E}
\arrow{r}{j} 
& 
A
\arrow{dr}{\rho}
\arrow[swap]{dd}{\phi_E}
\arrow[bend left=20]{l}{\omega} 
& 
&
\\
&&&
TM
\arrow{r} 
\arrow[bend left=20]{ul}{\sigma} 
\arrow[bend left=20]{dl}{}{\nabla^E} &0\,.
\\
0
\arrow{r} 
& 
\End(E)
\arrow{r}{j_E} 
& 
\Der(E)
\arrow{ur}{\rho_E}
\arrow[bend left=20]{l}{\omega_E} 
& 
&
\end{tikzcd}
\end{equation}
The requirements in \eqref{nablaEomegaE} and \eqref{omegaEphiE} ensure that \eqref{graph2} is a commutative diagram in the sense that both the square and triangle parts commute as the arrows go in any directions.
\par
Recall that the representation $\phi_E$ also defines a coboundary operator $\hatd$ through \eqref{phiE}, then for any 0-form $\psi_0\in\Gamma(E)$, the 1-form $\hatd\un\psi_0$ can be obtained from \eqref{Basic A Connection} as
\begin{equation}
\label{dphiE}
(\hatd\un\psi_0)(\umX)= \nabla^E_{\rho(\un\mX)}(\un\psi_0) -\omega_E\circ\phi_E(\un\mX) (\un\psi_0)=\nabla^E_{\rho(\un\mX)}(\un\psi_0)-v_E(\omega(\umX))(\un\psi_0)\,,
\end{equation}
where we omitted $j_E$ since $\End(E)$ is the vertical sub-bundle of $\Der(E)$ and the inclusion $j_E:\End(E)\to\End(E)\subset\Der(E)$ is a trivial map. The two terms on the right-hand side of the above equation separate the action of $\hatd$ into a horizontal part and a vertical part.
\par
To further understand the geometric meaning of $\nabla^E$ as the ``horizontal part" of $\hatd$, we define its curvature as a map ${\cal R}^E:A\times A\times E\to E$:\begin{align}
\label{curvatureRE}
{\cal R}^E(\umX,\umY)(\un\psi_0)\equiv[\nabla^E_{\rho(\umX)},\nabla^E_{\rho(\umY)}]_{\Der(E)}\psi_0-\nabla^E_{\rho([\umX,\umY])}\psi_0\,,
\end{align}
Noticing that $\rho(\umX)=\rho(\umX_H)$, one can readily see that by definition ${\cal R}^E(\umX,\umY_V)=0$, and hence the map is in fact ${\cal R}^E:H\times H\times E\to E$, which is only determined by the horizontal distribution. Furthermore, from the fact that $\phi_E$ is a morphism we can show that
\begin{align}
{\cal R}^E(\umX,\umY)(\un\psi_0)=v_E(\Omega(\umX,\umY))(\un\psi_0)\,.
\end{align}
The detailed derivation will be provided in Appendix \ref{app:ROmega}. This indicates that ${\cal R}^E$ is nothing but another way of representing the curvature of the Lie algebroid, which represents $\Omega$ as an endomorphism on $E$ through $v_E$. Moreover, $\nabla^E_{\rho(\umX)}$ can be considered as a covariant derivative operator on $TM$ (an induced connection) along the $\rho(\umX)$ direction, whose curvature is defined in the familiar way:
\begin{equation}\label{Curvature of Nabla}
R^{E}(\un{X},\un{Y}) \equiv [\nabla^E_{\un{X}},\nabla^E_{\un{Y}}]_{\text{Der}(E)} - \nabla^E_{[\un{X},\un{Y}]_{TM}}\,,\qquad\forall\un{X},\un{Y}\in\Gamma(TM)\,.
\end{equation}
In other words, the curvature of $\nabla^E$ viewed as a connection on $TM$ is determined entirely by the curvature of the horizontal distribution $H$ of $A$.
\par
It is instructive to take a look a special case we encountered before, namely the adjoint representation, where $E$ is the isotropy bundle $L$. In this case $\phi_L$ can be introduced using the Lie bracket defined in \eqref{phiL}. Applying $\omega$ to both sides of \eqref{phiL} yields
\begin{equation}
\phi_L(\umX)(\un\mu)=-\omega([\umX,j(\un\mu)]_A)\,.
\end{equation}
Let us consider $\umX$ as the sum of $\umX_H$ and $\umX_V$, then using \eqref{RomegaV} we have
\begin{align}
\phi_L(\umX_H)(\un\mu)&=-\omega([\umX_H,j(\un\mu)]_A)=-R^\omega(\umX_H,j(\un\mu))\,,\\
\phi_L(\umX_V)(\un\mu)&=-\omega([\umX_V,j(\un\mu)]_A)=\omega([j(\omega(\umX_V)),j(\un\mu)]_A)=\omega(j([\omega(\umX_V),\un\mu]_A))=-[\omega(\umX_V),\un\mu]_L\,,
\end{align}
and thus
\begin{align}
\label{phiL1}
\phi_L(\umX)(\un\mu)=-\omega([\umX_H+\umX_V,j(\un\mu)]_A)=-R^\omega(\umX_H,j(\un\mu))-[\omega(\umX_V),\un\mu]_L\,.
\end{align}
In the adjoint representation, we can take $v_L:L\to\End(L)$ as follows:
\begin{align}
\label{adjrep}
(v_L(\un\mu))(\un\nu)=[\un\mu,\un\nu]_L\,,\qquad\un\mu,\un\nu\in L\,.
\end{align}
Using the above equation and \eqref{phiE}, we can further write \eqref{phiL1} as
\begin{align}
\label{vL}
(\hatd\un\mu)(\umX)=-R^\omega(\umX_H,j(\un\mu))-\omega_L(\phi_L(\umX_V))=-R^\omega(\umX_H,j(\un\mu))-v_L(\omega(\umX_V))(\un\mu)\,.
\end{align}
Comparing this with \eqref{dphiE}, we can recognize that 
\begin{align}
\nabla^L_{\rho(\umX)}\un\mu=-R^\omega(\umX_H,j(\un\mu))\,.
\end{align}
Define the curvature ${\cal R}^L:A\times A\times L\to L$ of $\nabla^L$ as follows:
\begin{align}
{\cal R}^L(\umX,\umY)(\un\mu)\equiv[\nabla^L_{\rho(\umX)},\nabla^L_{\rho(\umY)}]_{\Der(L)}\un\mu-\nabla^L_{\rho([\umX,\umY]_A)}\un\mu\,,
\end{align}
In a more direct way than the case of a general representation, the curvature defined in the above equation can be evaluated to be (see Appendix \ref{app:ROmega} for details)
\begin{align*}
{\cal R}^L(\umX,\umY)(\un\mu)=v_L(\Omega([\umX_H,\umY_H])(\un\mu)\,,
\end{align*}
which means that ${\cal R}^L$ also represents the curvature of the Lie algebroid. Therefore, in the adjoint representation, $\nabla^L$ can be interpreted as the covariant derivative on $TM$ and $\omega_L$ can be represented by the Lie bracket on $L$.

\section{Bases and Lie Brackets}
Before moving on to the discussion of Atiyah Lie algebroids, we finish off this chapter by introducing the maps between bundles in terms of bases, and summarize some useful results by means of index notation to facilitate the discussions later. 
\par
Suppose $\{\un E _\uM\}$ is a basis of $\Gamma(A)$, $\{\un\p_\mu\}$ is a basis of $\Gamma(TM)$, and $\{\un t_A\}$ is a basis of $\Gamma(L)$, where $\uM=1,\cdots,\dim A$, $\mu=1,\cdots,\dim M$, and $A=1,\cdots,\text{rank}\,L$. The maps $\rho$, $\sigma$, $j$, $\omega$ can be expressed as matrices with indices as follows:
\begin{align}
\rho(\un E _\uM)=\rho^\mu{} _\uM\un\p_\mu\,,\qquad \sigma(\un\p_\mu)=\sigma^\uM{}_\mu\un E_\uM\,,\qquad j(\un t_A)=j^\uM{}_A\un E _\uN\,,\qquad\omega(\un E _\uM)=\omega^A{} _\uM\un t_A\,.
\end{align}
Recall the following properties:
\begin{align}
\rho\circ\sigma=Id_{TM}\,,\qquad \omega\circ j=-Id_L\,,\qquad\rho\circ j=0\,,\qquad\omega\circ\sigma=0\,.
\end{align}
Using the index notation these can be written as
\begin{align}
\label{Id}
\rho^\nu{} _\uM\sigma^\uM{}_\mu=\delta^\nu{}_\mu\,,\qquad\omega^A{} _\uM j^\uM{}_B=-\delta^A{}_B\,,\qquad \rho^\mu{} _\uM j^\uM{}_A=0\,,\qquad\omega^A{} _\uM\sigma^\uM{}_\mu=0\,.
\end{align}
Given a section $\umX$ of $A$, its decomposition \eqref{XHXV} can be expressed as
\begin{align}
\umX=\umX^\uM\un E _\uM=\umX^\uM\sigma^\uN{}_\mu\rho^\mu{} _\uM\un E _\uN-\umX^{\uM} j^\uN{}_A\omega^A{} _\uM\un E _\uN\,.
\end{align}
Under a basis transformation, the components of $\umX$ transform correspondingly as
\begin{align}
\label{trans}
\un E _\uM=J^\uN{} _\uM\un E' _\uN\,,\qquad\umX^\uM=(J _\uN{}^\uM)^{-1}\umX'^\uN\,,
\end{align}
so that the vector field $\umX$ is invariant:
\begin{align}
\umX=\umX^\uM\un E _\uM=\umX'^\uN(J _\uN{}^\uM)^{-1}J^{\un P}{} _\uM\un E'_{\un P}=\umX'^\uN\un E' _\uN=\umX'\,.
\end{align}
\par
Now we consider a frame $\{\un E _\uM\}$ where $\un M$ can be separated into $\un M=(\un\alpha,\un A)$ such that $\un E_{\ualpha}$ spans $\Gamma(H)$ ($\ualpha=1,\cdots,\dim M$) and $\un E_{\un A}$ spans $\Gamma(V)$ ($\un A=1,\cdots,\text{rank}\,L$). This kind of frame is called a \emph{split frame}. The transformation matrix in \eqref{trans} between two split frames is block-diagonalized:
\begin{align}
\label{Etrans}
\un E_{\ualpha}=J^{\ubeta}{}_{\ualpha}\un E'_{\ubeta}\,,\qquad\un E_{\un A}=K^{\un B}{}_{\un A}\un E'_{\un B}\,,
\end{align}
where we denoted $J^{\un B}{}_{\un A}$ by $K^{\un B}{}_{\un A}$ for future use. By definition, the image of $\sigma$ is the horizontal sub-bundle $V\subset A$, and the image of $j$ is the vertical sub-bundle $V\subset A$, and hence $\sigma(\un\p_\mu)\in\Gamma(H)$, $j(\un t_A)\in\Gamma(V)$. Also, it follows from \eqref{omegaXH} that $\rho(\un E_{\un A})=\omega(\un E_{\ualpha})=0$. In terms of indices, these indicates that
\begin{align}
\sigma^{\un A}{}_\mu=0\,,\qquad j^{\ualpha}{}_A=0\,,\qquad\rho^\mu{}_{\un A}=0\,,\qquad\omega^A{}_{\ualpha}=0\,.
\end{align}
Then, the non-vanishing components of these maps are $\sigma^{\ualpha}{}_\mu$, $j^{\un A}{}_A$, $\rho^\mu{}_{\ualpha}$ and $\omega^A{}_{\un A}$. Thus, in the split frame we have
\begin{align}
\label{iotat}
j(\un t_A)=j^{\un A}{}_A\un E_{\un A}+j^{\ualpha}{}_A\un E_{\ualpha}=j^{\un A}{}_A\un E_{\un A}\,,\qquad\sigma(\un\p_\mu)=\sigma^{\un A}{}_\mu\un E_{\un A}+\sigma^{\ualpha}{}_\mu\un E_{\ualpha}=\sigma^{\ualpha}{}_\mu\un E_{\ualpha}\,,
\end{align}
and \eqref{Id} becomes
\begin{align}
\rho^\nu{}_{\ualpha}\sigma^{\ualpha}{}_\mu&=\delta^\nu{}_\mu\,,\qquad\omega^A{}_{\un A}j^{\un A}{}_B=-\delta^A{}_B\,.
\end{align}
\par
We can also introduce a dual basis $\{E^\uM\}$, namely a basis of $\Gamma(A^*)$ satisfying $E^\uM(\un E _\uN)=\delta^\uM{} _\uN$. When $\{\un E _\uM\}$ is a split frame $\{E_{\ualpha},E_{\un A}\}$, $\{E^\uM\}$ will be a split dual frame $\{E^{\ualpha},E^{\un A}\}$ with
\begin{align}
\label{EE}
E^{\ualpha}(\un E_{\beta})=\delta^{\ualpha}{}_{\ubeta}\,,\qquad E^{\ualpha}(\un E_{\un A})=0\,,\qquad E^{\un A}(\un E_{\un B})=\delta^{\un A}{}_{\un B}\,,\qquad E^{\un A}(\un E_{\beta})=0\,.
\end{align}
Then the forms on $A$ can be expanded in the dual basis. We also introduce the bases $\{\td x^{\mu}\}$ for $\Gamma(T^*M)$ and $\{t^A\}$ for $\Gamma(L)$, i.e., the dual bases for $\{\un{\partial}_{\mu}\}$ and $\{\un{t}_A\}$, satisfying
\begin{equation}
	\td x^{\mu}(\un{\partial}_{\nu}) = \delta^{\mu}{}_{\nu},\qquad t^A(\un{t}_B) = \delta^A{}_B\,,\qquad\td x^{\mu}(\un{t}_A) = 0\,,\qquad t^A(\un{\partial}_{\mu}) = 0\,.
\end{equation}
These bases will be useful for the discussion of the trivialization of Lie algebroids. In the dual basis on $A$, the connection and curvature reforms can be written as
\begin{align}
\label{basisTML}
\omega=\omega^A{}_{\un A}E^{\un A}\otimes t_A\,,\qquad
\Omega=\Omega^A{}_{\ualpha\ubeta}E^{\ualpha}\wedge E^{\ubeta}\otimes t_A\,,
\end{align}
where we used fact that $\omega$ is vertical ($\omega^A{}_{\ualpha}=0$) and $\Omega$ is horizontal.
\par
Now we look at the vector bundle $E$ and the covariant derivative $\nabla^E$. Suppose $\{\un{e}_a\}$ is a basis of $\Gamma(E)$. Given $\umX\in\Gamma(A)$, $\nabla^E_{\umX}\un e_a$ is a section on $E$, we can expand it using $\{\un e_a\}$: 
\begin{equation}\label{defSpinConn}
\nabla^E_{\rho(\un{\mX})} \un{e}_a = {\cal A}^b{}_a(\un{\mX}_H) \un{e}_b\,,
\end{equation}
where ${\cal A}^b{}_a$ are the connection coefficients of $\nabla^{E}$, which depends linearly on $\umX$. In this way, we can see that the representation $\Aconn{E}$ acts as
\begin{equation}
\label{defSpinConnHV}
	\Aconn{E}(\un\mX)(\un{e}_a) = \Big({\cal A}^b{}_a(\un{\mX}_H)  - (v_E(\omega(\un\mX_V)))^b{}_a\Big) \un{e}_b\,.
\end{equation}
For any $\un\psi\in\Gamma(E)$, we can derive in the basis $\{\un{e}_a\}$ that
\begin{align}
\nabla^E_{\rho(\umX)}\un\psi&=\phi_E(\umX_H)(\psi^a\un e_a)=\psi^a\phi_E(\umX_H)(\un e_a)+(\rho(\umX_H)(\psi^a))\un e_a\nn\\
\label{nablaEpsi}
&=\psi^a\nabla^E_{\rho(\umX)}\un e_a+(\rho(\umX_H)(\psi^a))\un e_a=\psi^a{\cal A}^b{}_a(\umX_H)\un e_b+(\rho(\umX_H)(\psi^a))\un e_a\,,
\end{align}
where we used \eqref{nablaEomegaE} in the first and third equalities and \eqref{phiEf} in the second equality. For the adjoint representation, the action of $\nabla^L_{\umX}\un t_A$ can be represented by:
\begin{align}
\nabla^L_{\umX_H}\un t_A={\cal A}^B{}_A(\umX_H)\un t_B\,.
\end{align}
where ${\cal A}^B{}_A$ are the connection coefficients of $\nabla^L$ in the adjoint representation. Then, for any $\un\mu=\mu^A\un t_A\in\Gamma(L)$,
\begin{align}
\label{nablaLmu}
\nabla^L_{\rho(\umX)}\un\mu=\mu^A{\cal A}^B{}_A(\umX_H)\un t_B+(\rho(\umX_H)(\mu^A))\un t_A\,.
\end{align}
The defining relation \eqref{adjrep} for $v_L$ in the adjoint representation can be written in terms of a basis $\{\un t_A\}$ as
\begin{align}
\label{adjrept}
(v_L(\un t_A))(\un t_B)=f_{AB}{}^C\un t_C\,.
\end{align}
\par
Given a basis $\{\un E _\uM\}$, we can compute the commutators of the basis vectors using the Lie bracket on the Lie algebroid $A$:
\begin{align}
\label{LieEE}
[\un E _\uM,\un E _\uN]_A\equiv C_{\uM\uN}{}^{\un P}\un E_{\un P}\,,
\end{align}
where the commutation coefficients $C_{\uM\uN}{}^{\un P}$ can be considered as encoding the algebraic data of $A$. If $\{\un E _\uM\}$ is a split basis, then \eqref{LieEE} can be decomposed into
\begin{align}
\label{LAEE}
[\un E_{\ualpha},\un E_{\ubeta}]_A&= C_{\ualpha\ubeta}{}^{\un\gamma}\un E_{\un\gamma}+C_{\ualpha\ubeta}{}^{\un A}\un E_{\un A}\,,\\
[\un E_{\ualpha},\un E_{\un A}]_A&= C_{\ualpha\un A}{}^{\un B}\un E_{\un B}\,,\\
\label{LAE}
[\un E_{\un A},\un E_{\un B}]_A&= C_{\un{AB}}{}^{\un C}\un E_{\un C}\,,
\end{align}
where we have used the fact that $[\umX_H,\umY_V]_A\in\Gamma(V)$ and $[\umX_V,\umY_V]_A\in\Gamma(V)$. These commutation coefficients can be found to be
\begin{align}
\label{Cabc}
C_{\ualpha\ubeta}{}^{\un\gamma}&=-\rho^{\mu}{}_{\ualpha}\rho^{\nu}{}_{\ubeta}(\p_{\mu}\sigma^{\un\gamma}{}_{\nu}-\p_{\nu}\sigma^{\un\gamma}{}_{\mu})\,,\\
\label{CabA}
C_{\ualpha\ubeta}{}^{\un A}&=\Omega^A{}_{\ualpha\ubeta}j^{\un A}{}_A\,,\\
\label{CaAB}
C_{\ualpha\un A}{}^{\un B}&={\cal A}_{\ualpha}{}^{B}{}_{A}j^{\un B}{}_B\omega^A{}_{\un A}-(\rho(\un E_{\ualpha})(j^{\un B}{}_A))\omega^A{}_{\un A}\,,\\
\label{CABC}
C_{\un A\un B}{}^{\un C}&=f_{AB}{}^C j^{\un C}{}_C\omega^A{}_{\un A}\omega^B{}_{\un B}\,.
\end{align}
The detailed evaluation of the commutation coefficients will be presented in Appendix \ref{app:commutation}. In a split basis, these coefficients also encode the information of the algebraic structures of the horizontal and vertical sub-bundles. As we can see, $C_{\un A\un B}{}^{\un C}$, which can be regarded as the structure constants of $V$, is directly related to the structure constants $f_{AB}{}^{C}$ of $L$ defined in \eqref{bracketL}. Besides, $C_{\ualpha\un A}{}^{\un B}$ is related to the connection coefficients of $\nabla_L$ in a manner similar to \eqref{nablaLmu}, $C_{\ualpha\ubeta}{}^{\un A}$ corresponds to the curvature of $A$, and $C_{\ualpha\ubeta}{}^{\un\gamma}$ contains the information of the ``exterior derivative'' of $\sigma$.

\chapter{Atiyah Lie Algebroids and the BRST Complex}
\label{chap:trivialized}
The canonical example of a transitive Lie algebroid to which we shall devote our attention in this thesis is the Atiyah Lie algebroid, which is defined through a principal bundle. Since a classical gauge theory already has a description in terms of principal bundles, many observations and intuitions from this framework can be naturally extended to the Atiyah Lie algebroid, which we argue to be a proper geometric formulation of quantum gauge theory. By utilizing the concept of Lie algebroid isomorphism, we can introduce the trivialized algebroid and demonstrate that this geometric framework indeed encompasses the BRST complex.

\section{Atiyah Lie Algebroids}
\label{sec:ALA}
\subsection{From Principal Bundles to Atiyah Lie Algebroids}
\label{sec:principalALA}
\bdefi
Suppose $P(M,G)$ is a principal $G$-bundle over the base manifold $M$ with the structure group $G$. The tangent bundle $TP$ of $P$ is locally described by $(p,\un v_p)$, where $p$ is a point in $P$ and $\un v_p\in T_pP$. The free right action $R_h$ of $h\in G$ on $P$ can also push forward the vector $\un v_p$ at $p$, and thus gives a free right action on $TP$, namely $(p,\un v_p)\mapsto(ph,R_{h*}(\un v_p))$. The vector bundle $TP/G$ over $M$ defined by identifying
\begin{align}
\label{TP/Gclass}
(p,\un v_p)\sim(ph,R_{h*}(\un v_p))\,,\qquad\forall h\in G\,,
\end{align}
is called an \emph{Atiyah Lie algebroid}.
\edefi
In a local trivialization $T_U$ of $P$, we have $p=(x,g)$, where $x=\pi(p)\in U\subset M$, $g\in G$. For convenience's sake, we will assume $T_U$ to be a global trivialization with $U=M$, but the discussion below does not rely on this assumption. Using the projection map $\pi:P\to M$, we can pullback a vector field $\un v$ on $P$ to $M$. Denote $\uX_{\pi(p)}\equiv\pi_*(\un v_p)\in T_{\pi(p)}M$ and $\un\gamma_p=\un v_p-\pi_*^{-1}(\uX_{\pi(p)})$, then $(p,\un v_p)\in TP$ can be expressed as $((x,g),(\uX_{\pi(p)},\un\gamma_p))$, or simply $(x,\uX_{x},\un\gamma_{(x,g)})$ since $\un\gamma_{(x,g)}$ carries the information of $g\in G$. Thus, the equivalence class \eqref{TP/Gclass} is formed by $(x,\uX_{x},\un\gamma_{(x,g)})$ with different $g\in G$, and a point in $TP/G$ corresponds to a representative in this equivalent class. For convenience, we choose $(x,\uX_{x},\un\gamma_{(x,e)})$, with $e$ the identity of $G$. Note that $\un\gamma_{(x,e)}$ can also be identified an element in the Lie algebra $\mathfrak{g}$ of $G$. Hence, a typical fiber of $TP/G$ can be regarded as the combination of $T_xM$ and $\mathfrak{g}$, and so the rank of this vector bundle is $\dim M+\dim G$.
\par
Now we will discuss the Lie algebroid structure of $TP/G$. First, while $TP$ is a bundle over $P$, $TP/G$ is importantly a vector bundle over $M$. Furthermore, $TP/G$ inherits a bracket algebra from $TP$ and possesses an anchor map in the form of the pushforward by the projection, i.e., $\pi_*: TP/G \rightarrow TM$. Moreover, the map $\pi_*$ can easily be seen to be surjective, and hence the algebroid $TP/G$ is automatically transitive. It is also obvious that the map $\pi_*:TP/G\to TM$ has a kernel $(x,0,\un\gamma_{(x,e)})$, and thus at each point $x\in M$ the kernel of $\pi_*$ is identical to the Lie algebra $\mathfrak{g}$. This forms the isotropy bundle $P\times_{\text{Ad}_G}\mathfrak{g}$ (also denoted by $P\times\mathfrak{g}/\sim$), called the \emph{adjoint bundle}, which is an associated bundle of $P$ whose typical fiber is $\mathfrak{g}$. The sections of the adjoint bundle are precisely the local gauge transformations that figured into the analysis of Section \ref{sec:BRST}. Also, there is a natural inclusion map $j:P\times_{\text{Ad}_G}\mathfrak{g}\to TP/G$ as $P\times_{\text{Ad}_G}\mathfrak{g}$ is the vertical sub-bundle of $TP/G$. Therefore, we have the following short exact sequence of vector bundles over $M$:
\begin{equation} \label{Atiyah Lie Algebroid}
\begin{tikzcd}
0
\arrow{r} 
& 
P\times_{\text{Ad}_G} \mathfrak{g}
\arrow{r}{j} 
&
TP/G
\arrow{r}{\pi_*} 
& 
TM
\arrow{r} 
&
0\,.
\end{tikzcd}
\end{equation}
We can see clearly from the above short exact sequence that a section of $TP/G$ can be identified (locally) with the direct sum of a local gauge transformation generated by $\un{\mu} \in \Gamma(L)$ and a diffeomorphism generated by $\un{X} \in \Gamma(TM)$. 

If a connection is defined on $P$, i.e.\ we have a horizontal sub-bundle $H_P$ of $P$, then $H\equiv TH_P/G$ give rise to a horizontal sub-bundle of $TP/G$, and thus we can define a map $\sigma: TM\to TP/G$ whose image is $H$ such that $\pi_*\circ\sigma$ is the identity on $TM$. Therefore, just like a connection on the principal bundle, a connection on an Atiyah Lie algebroid also represents a gauge field in physics, as will we discuss shortly in the next subsection. Having $\sigma$ defined, we can also introduce $\omega:TP/G\to P\times_{\text{Ad}_G} \mathfrak{g}$ whose kernel is $H$, which serves as the connection reform.
\par
For convenience, we will denote the Atiyah Lie algebroid $TP/G$ by $A$, the adjoint bundle $P\times_{\text{Ad}_G} \mathfrak{g}$ by $L$, and the anchor map $\pi_*:A\to TM$ by $\rho$. This will agree with our notation before.

\subsection{Local Trivializations of an Atiyah Lie Algebroid}
\label{sec:trivial}
In Section \ref{sec:principal} we have seen that the local trivialization of a principal bundle is a map $T_{U_i}:P|_{U_i}\to {U_i}\times G$, with $\{U_i\}$ an open cover of the base manifold $M$. The principal connection can be described as a local gauge field in each $U_i\in M$ satisfying the gauge transformation law in the intersection of two open subsets. Similarly, a local trivialization of a Atiyah Lie algebroid $A$ is a map $\tau_i:A^{U_i}\to TU_i\oplus L^{U_i}$, where $A^{U_i}$ and $L^{U_i}$ are the restriction of $A$ and $L$ to their sub-bundles over the local neighborhood $U_i \subset M$; in other words, $A^{U_i}$ and $L^{U_i}$ are vector bundles over $U_i$. Through $\tau_i$, the connection on the algebroid can then be expressed locally as a gauge field. In this subsection we review this notion and set up the stage for discussing the Lie algebroid formulation of BRST complex later in this chapter.

First we need to choose a basis of $\Gamma(A)$ for each coordinate patch $U_i\subset M$, and specify the transformation between two coordinate patches $U_i$ and $U_j$. For a split basis $\{\uE_{\ualpha},\uE_{\uA}\}$ (note that for Atiyah Lie algebroids $\text{rank}\,L=\dim G$), we have according to \eqref{Etrans} that
\begin{align}
\label{transE}
\uE^{U_i}_{\ualpha}=J_{ij}{}^{\ubeta}{}_{\ualpha}\uE^{U_j}_{\ubeta}\,,\qquad\uE^{U_i}_\uA=K_{ij}{}^{\uB}{}_\uA\uE^{U_j}_\uB\,,
\end{align}
where we used the subscript $U_i$ to denote the basis in the patch $U_i$. For a vector bundle $E$ associated to a representation $R$ of the structure group $G$, a basis $\un e^{U_i}_a$ of $\Gamma(E)$ in $U_i$ and the corresponding components of $\un\psi\in\Gamma(E)$ in this basis satisfy
\begin{align}
\label{transpsi}
\un e^{U_i}_a=R(g_{ij})^b{}_a\un e^{U_j}_b\,,\qquad\un\psi_i^a=R(g_{ij}^{-1})^a{}_b\un\psi_j^b\,,
\end{align}
where $g_{ij}$ assigns an element in $G$ pointwisely in $U_i\cap U_j$, which plays the role of the transition function between two local trivialization of the principal bundle $P$. Since $E$ is also the associated bundle of $P$, whose sections are matter fields, we can regard \eqref{transpsi} as the familiar gauge transformation of the matter fields.
\par
Before we discuss the connection on the algebroid directly, let us first look as the covariant derivative $\nabla^E$, namely the induced connection on the representation algebroid. When we split the action of $\hatd$ on $\un\psi\in\Gamma(E)$ into $\hatd\un\psi(\umX)=\nabla^E_{\rho(\umX)}\un\psi-v_E(\omega(\umX))(\un\psi)$, these two terms as a horizontal and a vertical vector field on $A$, respectively, should be invariant under basis transformations. That is,
\begin{align}
\label{nablavE}
(\nabla^E_{\rho(\umX)}\un\psi)_{U_i}=(\nabla^E_{\rho(\umX)}\un\psi)_{U_j}\,,\qquad (v_E\circ\omega)(\umX)(\un\psi))_{U_i}=(v_E\circ\omega)(\umX)(\un\psi))_{U_j}\,.
\end{align}
It follows from \eqref{nablaEpsi} that in two patches $U_i$ and $U_j$, the first equation in \eqref{nablavE} gives
\begin{align*}
\big(\psi_i^b{\cal A}_i^a{}_b(\umX_H)+(\rho(\umX_H)\psi_i^a)\big)\un e_a^{U_i}&=\big(\psi_j^b{\cal A}_j^a{}_b(\umX_H)+(\rho(\umX_H)\psi_j^a)\big)\un e_a^{U_j}\\
&=\big(R(g_{ij})^b{}_c\psi_i^c{\cal A}_j^a{}_b(\umX_H)+\rho(\umX_H)(R(g_{ij})^c{}_d\psi_j^d)\big)R(g_{ij}^{-1})^a{}_c\un e_a^{U_i}\,,
\end{align*}
where we used \eqref{transpsi} in the second equality and relabeled the dummy indices. Taking $\umX_H$ to be a basis vector $\un E_{\ualpha}^{U_i}$ in the above equation, we have
\begin{align*}
\psi_i^b{\cal A}_i^a{}_b(\un E_{\ualpha}^{U_i})+(\rho(\un E_{\ualpha}^{U_i})\psi_i^a)&=R(g_{ij}^{-1})^a{}_c\big(R(g_{ij})^b{}_d\psi_i^d{\cal A}_j^c{}_b(\un E_{\ualpha}^{U_i})+\rho(\un E_{\ualpha}^{U_i})(R(g_{ij})^c{}_d\psi_i^d)\big)\\
\psi_i^b{\cal A}_i^a{}_b(\un E_{\ualpha}^{U_i})+(\rho(\un E_{\ualpha}^{U_i})\psi_i^a)&=R(g_{ij}^{-1})^a{}_cR(g_{ij})^b{}_d\psi_i^d{\cal A}_j^c{}_b(\un E_{\ualpha}^{U_i})\\
&\quad+R(g_{ij}^{-1})^a{}_cR(g_{ij})^c{}_d(\rho(\un E_{\ualpha}^{U_i})\psi_j^d)+R(g_{ij}^{-1})^a{}_c(\rho(\un E_{\ualpha}^{U_i})R(g_{ij})^c{}_d)\psi_i^d\\
\psi_i^b{\cal A}_i^a{}_b(\un E_{\ualpha}^{U_i})&=R(g_{ij}^{-1})^a{}_cR(g_{ij})^d{}_b\psi_i^b{\cal A}_j^c{}_d(J_{ij}{}^{\ubeta}{}_{\ualpha}\uE^{U_j}_{\ubeta})+R(g_{ij}^{-1})^a{}_c(\rho(J_{ij}{}^{\ubeta}{}_{\ualpha}\uE^{U_j}_{\ubeta})R(g_{ij})^c{}_b)\psi_i^b\,,
\end{align*}
where \eqref{transE} is used in the last step. Hence, we obtain that
\begin{align}
\label{AEtransform}
{\cal A}_i^a{}_b(\un E_{\ualpha}^{U_i})&=J_{ij}{}^{\ubeta}{}_{\ualpha}\Big(R(g_{ij}^{-1})^a{}_c{\cal A}_j^c{}_d(\uE^{U_j}_{\ubeta})R(g_{ij})^d{}_b+R(g_{ij}^{-1})^a{}_c(\rho(\uE^{U_j}_{\ubeta})R(g_{ij})^c{}_b)\Big)\,.
\end{align}
This corresponds to the condition (C) in \eqref{condC} for describing a connection on the principal bundle as a local gauge field on the base manifold, which is exactly the familiar transformation for a gauge connection. However, note that unlike the $A_U$ in \eqref{condC}, here ${\cal A}_i^a{}_b$ are not the gauge field components pulled back from the algebroid connection directly but the connection coefficients of $\nabla^E$ on the representation algebroid. On the other hand, the second equation in \eqref{nablavE} gives
\begin{align}
\label{vEtransform}
(v_E\circ\omega)^a{}_b(\un E_{\un A}^{U_i})=K_{ij}{}^B{}_AR(g_{ij}^{-1})^a{}_c(v_E\circ\omega)^c{}_d(\un E_{\un B}^{U_i})R(g_{ij})^d{}_b\,.
\end{align}
This can be recognized as the transformation law for the Maurer-Cartan form on $L$, which is closely related to the notion of ghost as we will see later in this chapter. 
\par
Now we analyze how does a connection on the Atiyah Lie algebroid $A$ itself behave in a trivialization. As we have discussed, the vertical sub-bundle $V$ of $A$ is identical to $L$, while the horizontal sub-bundle $H$ has an ambiguity. On each coordinate patch $U_i$, we introduce the local trivialization as a morphism
\be
\tau_i:A^{U_i}\to TU_i\oplus L^{U_i}.
\ee
For any $\umX\in A$, we can write its image in this trivialization as $\tau_i(\umX)=(\uX_i,\un\mu_i)$, where in the two slots we have $\uX_i\in TU_i$ and $\un\mu_i\in L^{U_i}$. It is natural to require that $\uX_i=\rho(\umX)|_{U_i}$. This is the analogue that for $p\in P$ in a principal bundle we have $T_U(p)=(x,g)$ where $x=\pi(p)$. Then, in this trivialization the split basis vectors are mapped to
\begin{align}
\label{trivialbasis}
\tau_i(\uE^{U_i}_{\ualpha})=\tau_{i}{}^\mu{}_{\ualpha}(\un\p^{U_i}_\mu+b_i^A{}_\mu \un t^{U_i}_A)\equiv\tau_{i}{}^\mu{}_{\ualpha}\un D^{U_i}\,,\qquad\tau_i(\uE^{U_i}_\uA)=\tau_i{}^A{}_\uA \un t^{U_i}_A\,.
\end{align}
where $b_i^A{}_\mu$ is introduced to play the role of the Ehresmann connection, as they represent the ambiguity in the components in $L$ when lifting from $TM$ to $H$, and thus $b_\mu=b_i^A{}_\mu\un t^{U_i}_A$ as an $L$-valued 1-form on $M$ can be viewed as the local gauge field on $M$. In the index notation, the map $\tau_i$ can be decomposed into $\tau_i{}^\mu{}_{\ualpha}=\rho^\mu{}_{\ualpha}$, $\tau_i{}^A{}_{\ualpha}=\rho^\mu{}_{\ualpha}b^A{}_\mu$ and $\tau_i{}^A{}_{\un A}$, while $\tau_i{}^\mu{}_{\un A}=\tau_i{}^A{}_{\ualpha}=0$. One should note that $\tau_i{}^\mu{}_{\ualpha}$ and $\rho^\mu{}_{\ualpha}$ being equal does not mean they are the same map, since $\rho(\un E_{\alpha})=\rho^\mu{}_{\ualpha}\un\p_\mu$ has no component in $L$. We can also define $\tau_i^*:A^*_{U_i}\to T^*U_i\oplus L^*_{U_i}$, the dual map of $\tau_i$,  which preserves the orthogonality condition \eqref{EE}. Then we can write down the dual basis $\{E_{U_i}^{\ualpha},E_{U_i}^\uA\}$ in this trivialization as
\begin{align}
\tau^*_i(E_{U_i}^{\ualpha})=(\tau_{i}^{-1})^{\ualpha}{}_\mu \td x^\mu_i\,,\qquad\tau^*_i(E_{U_i}^\uA)=(\tau_{i}^{-1})^\uA{}_A(t^A_{U_i}-b_i^A{}_\mu \td x^\mu_i)\,,
\end{align}
where $\{\td x_i\}$ and $\{t^A_{U_i}\}$ are the bases of $\Gamma(TU_i)$ and $\Gamma(L^*_{U_i})$ introduced in \eqref{basisTML}. 
\begin{figure}[!htbp]
\center
\includegraphics[width=1.7in]{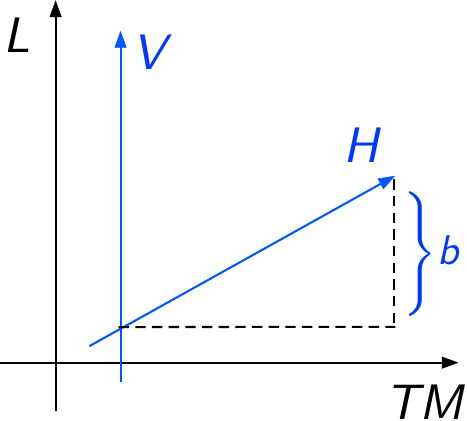}
\caption{A connection on $A$ gives a global split $A=H\oplus V$, which locally can be viewed as determined by a gauge field $b$ defined with respect to ``axes'' corresponding to sub-bundles $TM$ and $L$ \cite{Jia:2023tki}.}
\label{fig:VHcartoon.pdf}
\end{figure}
\par
We will now work in a specific coordinate patch $U_i$ and drop the labels for the patch for brevity. $\tau$ being a morphism means that it satisfies
\begin{align}
\label{taumor}
[\tau(\umX),\tau(\umY)]_{TM\oplus L}=\tau([\umX,\umY]_A)\,.
\end{align}
Evaluating the above condition in different cases gives information on the behavior of the local gauge field and its curvature in a trivialization as we will now demonstrate. For more details of the computations involved in the rest of this subsection, see Appendix \ref{app:trivial}.
\par
First, in the case where $\umX,\umY$ are both vertical, \eqref{taumor} gives
\begin{align}
\label{ttftf}
\tau^A{}_\uA j^\uA{}_D\tau^B{}_\uB j^\uB{}_E f_{AB}{}^C&=\tau^C{}_{\un C}j^{\un C}{}_Ff_{DE}{}^F\,.
\end{align}
Considering that $(\tau\circ j)^A{}_B\equiv\tau^A{}_\uA j^\uA{}_B$ is a local endomorphism on $L$, a convenient choice of $\tau$ is to set $\tau\circ j=Id_L$. In this case, for $\umX_V\in V$ we have $\tau(\umX_V)=(0,-\omega(\umX_V))$, or $\tau^A{}_\uA=-\omega^A{}_\uA$. However, one should note that \eqref{ttftf} does not require that $\tau\circ j=Id_L$ and in general $\tau$ is not related to $\omega$.
\par
Next, we consider $\umX=\umX_H$ to be horizontal and $\umY=\umY_V$ to be vertical. Then \eqref{taumor} together with the fact that $\tau^\mu{}_{\ualpha}=\rho^\mu{}_{\ualpha}$ gives
\begin{align}
\label{bArelation}
{\cal A}_{\ualpha}{}^D{}_C=((\tau\circ j)^{-1})^E{}_C(\rho^\mu{}_{\ualpha}b^A{}_\mu f_{AB}{}^C+\delta^C{}_B\rho^\mu{}_{\ualpha}\p_\mu)(\tau\circ j)^B{}_D\,.
\end{align}
This relates $b_\mu$ with the connection coefficients ${\cal A}_{\ualpha}{}^D{}_C$ of $\nabla^L$. If we make a special choice such that $\tau\circ j=Id_L$, the above equation becomes
\begin{align}
\label{bArelation1}
{\cal A}_{\ualpha}{}^D{}_C=\rho^\mu{}_{\ualpha} b^A{}_\mu f_{AC}{}^D&\,.
\end{align}
which gives a linear correspondence between $b_\mu$ and ${\cal A}_{\ualpha}{}^D{}_C$. Note that unlike the structure group $G$ of a principal bundle, $L$ is a bundle over $M$ and $\tau\circ j$ is defined for each fiber of $L$ pointwisely over $M$. Hence, a general choice of $\tau$ will generate the second term in \eqref{bArelation}, bringing an ambiguity in the relation between ${\cal A}_{\ualpha}{}^D{}_C$ and $b^A{}_\mu$. Nevertheless, if we denote the ${\cal A}_{\ualpha}{}^D{}_C$ in \eqref{bArelation1} as $\tilde{\cal A}_{\ualpha}{}^D{}_C$, then \eqref{bArelation} can be written as
\begin{align}
{\cal A}_{\ualpha}{}^D{}_C=((\tau\circ j)^{-1})^E{}_C(\tilde{\cal A}_{\ualpha}{}^D{}_C+\delta^C{}_B\rho^\mu{}_{\ualpha}\p_\mu)(\tau\circ j)^B{}_D\,,
\end{align}
which is nothing but a gauge transformation of $\tilde{\cal A}_{\ualpha}{}^D{}_C$. This indicates that for a general choice of $\tau$, the deviation of $\tau\circ j$ from the identity map can be viewed as a gauge ambiguity. 
\par
To carry over the above result from $L$ to a general vector bundle $E$, we recall that for the adjoint representation we have $v_L(\un t_A)^C{}_B=f_{AB}{}^C$, and so \eqref{bArelation} can also be expressed as
\begin{align}
\label{bArelationE2}
{\cal A}_{\ualpha}{}^D{}_C=((\tau\circ j)^{-1})^E{}_C(\rho^\mu{}_{\ualpha}b^A{}_\mu v_L(\un t_A)^C{}_B+\delta^C{}_B\rho^\mu{}_{\ualpha}\p_\mu)(\tau\circ j)^B{}_D\,.
\end{align}
And for any vector bundle $E$ we should have the coefficients of $\nabla^E$ as follows:
\begin{align}
\label{bArelationE3}
{\cal A}_{\ualpha}{}^d{}_c=(\lambda_\tau^{-1})^d{}_a(\rho^\mu{}_{\ualpha}b^A{}_\mu v_E(\un t_A)^a{}_b+\delta^a{}_b\rho^\mu{}_{\ualpha}\p_\mu)\lambda_\tau^b{}_d\,.
\end{align}
where now $v_L$ is replaced by $v_E$ and $(\tau\circ j)\in\End(L)$ is replaced an endomorphism $\lambda_\tau\in\End(E)$. Hence, $b$ introduced in a trivialization can be identified with the connection $\nabla^E$ through $\rho\circ{\cal A}=v_E(b)$ up to gauge transformation. Since we have shown that for any vector bundle $E$, the connection coefficients of $\nabla^E$ satisfies the transformation law \eqref{AEtransform}, taking $E$ to be $L$ we can see that $b_\mu$ indeed transforms as local gauge field.
\par
Finally, when $\umX=\umX_H$ and $\umY=\umY_H$ are both horizontal, \eqref{taumor} gives
\begin{align}
\label{FbaseA}
F^A{}_{\mu\nu}\rho^\mu{}_{\ualpha}\rho^\nu{}_{\ubeta}=\Omega^A{}_{\ualpha\ubeta}\,.
\end{align}
where
\begin{align}
F^A{}_{\mu\nu}\equiv\p_\mu b^A{}_\nu-\p_\nu b^A{}_\mu+ b^B{}_\mu b^C{}_\nu f_{BC}{}^A
\end{align}
is the curvature of $b^A{}_\mu$. This indicates that $F_{\mu\nu}\equiv F^A{}_{\mu\nu}\un t_A$ as an $L$-valued 2-form on $M$ also represents the curvature of the Lie algebroid. Physically, $F_{\mu\nu}$ represents the familiar gauge field strength, and \eqref{FbaseA} shows that it can be pulled back from the curvature reform on the algebroid, similar to \eqref{Fbase} for the principal bundle case.

\section{Lie Algebroid Isomorphisms}
\label{sec:morphisms}
In the previous section, we introduced the Atiyah Lie algebroid derived from a principal bundle and discussed its trivialization as an analogy to the trivialization of principal bundles. To further our understanding of Lie algebroid trivialization and to establish a connection with the BRST complex, this section introduces the concept of Lie algebroid isomorphisms for general Lie algebroids. This concept allows us to formulate many results from the previous discussion in a more formal manner.
\par
A Lie algebroid morphism is a map $\varphi: A_1 \rightarrow A_2$ between two Lie algebroids, which preserves the geometric structure of the Lie algebroids as encoded in their brackets. That is, for all $\un\mX,\un\mY \in \Gamma(A_1)$,
\begin{equation} \label{Morphism Condition}
    R^\varphi(\un\mX,\un\mY):=-\varphi([\un\mX , \un\mY]_{A_1}) + [\varphi(\un\mX), \varphi(\un\mY)]_{A_2}=0\,.
\end{equation}
In this section we focus on a subclass of Lie algebroid morphisms which are, in fact, isomorphisms of the underlying vector bundles. Consider a set of Lie algebroids that share the same base manifold and structure group. In general, two such algebroids may be topologically distinct. Our goal is to emphasize that two algebroids in this set, $A_1$ and $A_2$, will be topologically equivalent if there exists an isomorphism between them. To accomplish this goal, we seek to understand the conditions under which the set of structure maps of two Lie algebroids define a commutative diagram of the following form:
\begin{equation}
\label{Transitive Lie Algebroid Morphism}
\begin{tikzcd}
& 
& 
A_1
\arrow[left]{dd}{\varphi}
\arrow[left]{dl}{\omega_1}
\arrow[bend left]{dr}{\rho_1}
& 
&
\\
0
\arrow{r}
&
L
\arrow[bend left]{l} 
\arrow[bend left]{ur}{j_1}
\arrow[bend right, swap]{dr}{ j_2}
&
&
TM
\arrow{r}
\arrow[shift left]{ul}{\sigma_1} 
\arrow[swap]{dl}{\sigma_2} 
&
0\,.
\arrow[bend left]{l} 
\\
& 
&
A_2
\arrow[left, swap]{ul}{\omega_2}
\arrow[bend right,swap]{ur}{\rho_2}
\arrow[shift left]{uu}{\overline{\varphi}}
&
&
\end{tikzcd}
\end{equation}
Notice that with the splitting $A_1=H_1\oplus V_1$ and $A_2=H_2\oplus V_2$, $J\equiv\sigma_2 \circ \rho_1$ is a map from $H_1$ to $H_2$, while $K \equiv j_2 \circ \omega_1$ is a map from $V_1$ to $V_2$. Clearly, we can write $\varphi=J-K$. Our motivation for considering \eqref{Transitive Lie Algebroid Morphism} is that it respects the horizontal and vertical splittings of the two algebroids, and will subsequently provide a useful physical picture for general Lie algebroid isomorphisms.\footnote{Here, we are discussing isomorphisms using an {\it active} language; in the corresponding {\it passive} description, an isomorphism would be understood as a change of basis for the same algebroid.}

By commutativity, the maps $\varphi$ and $\overline{\varphi}$ in \eqref{Transitive Lie Algebroid Morphism} apparently define isomorphisms of the vector bundles $A_1$ and $A_2$. However, it is not immediately clear that these maps respect the algebras defined by the brackets on these bundles. To this end, we will now demonstrate that the map $\varphi$ will be a Lie algebroid morphism if and only if the horizontal distributions of $A_1$ and $A_2$ as defined by their respective connections $\omega_1$ and $\omega_2$ share the same curvature. Recall that the curvature of a connection reform $\omega$ is the horizontal $L$-valued form given by
\beq
\Omega = \hatd\omega + \frac{1}{2}[\omega,\omega]_L\,.
\eeq
(Note that the bracket in the above equation is the graded Lie bracket between $L$-valued forms defined in the first footnote in Subsection \ref{sec:principal}.) Suppose the curvatures of $\omega_1$ and $\omega_2$ are $\Omega_1$ and $\Omega_2$, respectively. We can compute that
\begin{align}
R^{\varphi}(\un\mX_H,\un\mY_H)&= R^{\sigma_2}(\rho_1(\un\mX_H),\rho_1(\un\mY_H))+j_2(R^{-\omega_1}(\un\mX_H,\un\mY_H)\nn\\
&= j_2(\Omega_2(\varphi(\un\mX),\varphi(\un\mY)))-j_2(\Omega_1(\un\mX,\un\mY))\,,
\end{align}
where we used $\varphi=J-K$ and \eqref{curvature}
\beqn\label{Russianformula}
R^\sigma(\rho(\un\mX),\rho(\un\mY))=j(\Omega(\un\mX,\un\mY))=-j(R^{-\omega}(\un\mX_H,\un\mY_H))\,.
\eeqn 
In this way, we see that $\varphi$ will be a morphism of the brackets if and only if
\beq\label{Curvature pullback}
\Omega_1(\un\mX,\un\mY) =\Omega_2(\varphi(\un\mX),\varphi(\un\mY))\,.
\eeq

Provided $\varphi$ is an isomorphism, it will induce a linear transformation on bundles associated to $A_1$ and $A_2$ to preserve Lie algebroid representations. Let $E_1$ and $E_2$ be isomorphic vector bundles over $M$ which are associated, respectively, to $A_1$ and $A_2$ by Lie algebroid representations $\Aconn{E_j}: A_j\to\text{Der}(E_j)$, with $j=1,2$. 
Then, accompanying the Lie algebroid isomorphism $\varphi$, there is a corresponding map on the associated bundles, which can be written as
\begin{equation}
    g_\varphi: E_1 \rightarrow E_2\,.
\end{equation}
By construction, we enforce that this map is compatible with the Lie algebroid representations of $A_1$ and $A_2$ in the sense that
\beq \label{Iso-Rep compatibility}
	\Aconn{E_2} \circ \varphi(\un\mX)(g_{\varphi}(\un{\psi})) = g_{\varphi}(\Aconn{E_1}(\un\mX)(\un{\psi}))\,, \qquad \forall \un\mX \in \Gamma(A_1)\,,\quad \un{\psi} \in \Gamma(E_1)\,.
\eeq
Let $\varphi^*: \Omega(A_2;E_2) \rightarrow \Omega(A_1;E_1)$ denote the Lie algebroid pullback map induced by $\varphi$. Explicitly, given $\eta \in \Omega^r(A_2;E_2)$ and $\un\mX_1, \ldots, \un\mX_r \in \Gamma(A_1)$ we have
\begin{equation}\label{Lie algebroid pullback}
	(\varphi^*\eta)(\un\mX_1, \ldots, \un\mX_r) =g_\varphi^{-1}\big(\eta(\varphi(\un\mX_1), \dots, \varphi(\un\mX_r))\big)\,. 
\end{equation}
Using this notation along with the morphism property \eqref{Morphism Condition} and compatibility condition \eqref{Iso-Rep compatibility}, we can establish that 
\beq\label{Chain Map condition}
\hatd_1\circ \varphi^*=\varphi^*\circ\hatd_2\,,
\eeq
which means that $\varphi$ is a \emph{Lie algebroid chain map} in the exterior algebra sense. To prove this, it is sufficient to show that this condition holds for 0-forms and 1-forms, since $\hatd$ acts as a derivation with respect to the wedge product and the full exterior algebra is generated by the set of 1-forms along with the wedge product. First we look at the 0-form case. Let $\psi \in \Omega^0(A_2;E_2)$, and $\un\mX\in\Gamma(A_1)$. Then,
\begin{align}
&(\varphi^* \hatd_2 \psi)(\un\mX) = g_{\varphi}^{-1}\big(\hatd_2\psi \circ \varphi(\un\mX)\big) = g_{\varphi}^{-1}\big(\Aconn{E_2} \circ \varphi(\un\mX)(\psi)\big) \nn \\
={}& g_{\varphi}^{-1}\big(\Aconn{E_2} \circ \varphi(\un\mX)\big(g_{\varphi}g_{\varphi}^{-1}(\psi)\big)\big) = \Aconn{E_1}(\un\mX)\big(g_{\varphi}^{-1}(\psi)\big) = (\hatd_1 \varphi^*\psi)(\un\mX)\,,
\end{align}
where in the first equality we used \eqref{Lie algebroid pullback}, in the second equality we used the definition of the Lie algebroid differential via the Koszul formula \eqref{dhat}, and in the fourth equality we used \eqref{Iso-Rep compatibility}. Now we move on to the 1-form case. Let $\eta\in\Omega^1(A_2;E_2)$, and take $\un\mX,\un\mY\in\Gamma(A_1)$. We can write
\begin{align}
&(\varphi^*\hat\td_2\eta)(\un\mX,\un\mY)=g_{\varphi}^{-1}[(\hat\td_2\eta)(\varphi(\un\mX),\varphi(\un\mY))]\nn\\
={}& g_{\varphi}^{-1}\Big[\Aconn{E_2}\circ \varphi(\un\mX) (\eta \circ \varphi(\un\mY)) - \Aconn{E_2} \circ \varphi(\un\mY) (\eta \circ \varphi(\un\mX)) - \eta([\varphi(\un\mX),\varphi(\un\mY)]_{A_2})  \Big] \nn \\
={}& \Aconn{E_1}(\un\mX)\big(\varphi^*\eta(\un\mY)\big) - \Aconn{E_1}(\un\mY)\big(\varphi^*\eta(\un\mX)\big) - \varphi^*\eta\big([\un\mX,\un\mY]_{A_1}\big) \nn \\
={}&(\hatd_1\varphi^*\un\eta)(\un\mX,\un\mY)\,,
\end{align}
where again in the first equality we used \eqref{Lie algebroid pullback}, in the second equality we used \eqref{dhat}, and in third equality we applied \eqref{Iso-Rep compatibility} and \eqref{Lie algebroid pullback}. Therefore, a Lie algebroid isomorphism $\varphi: A_1 \rightarrow A_2$ satisfying \eqref{Iso-Rep compatibility} indeed induces a chain map on the exterior algebras of $A_1$ and $A_2$ satisfying \eqref{Chain Map condition}.

Using \eqref{Lie algebroid pullback} we can rewrite \eqref{Curvature pullback} as
\beq \label{Curvature pullback 2}
	\Omega_1 = \varphi^*\Omega_2\,.
\eeq
Eq.~\eqref{Curvature pullback 2} indicates that a Lie algebroid isomorphism of the form \eqref{Transitive Lie Algebroid Morphism} involves a topological consideration about the algebroids in question. In fact, the Chern-Weil homomorphism introduced in Section \ref{sec:Chern} is applicable to Lie algebroid cohomology (see Section \ref{sec:Chernalg}). This will provide a recipe for constructing Atiyah Lie algebroid cohomology classes in terms of characteristic polynomials in curvature. Recall that a characteristic class satisfies the naturality condition \eqref{natural}, which essentially implies that the pullback commutes through the characteristic class; that is, if $\lambda(\Omega)$ is a characteristic class of a curvature $\Omega$, then 
\beq
\lambda(\varphi^*\Omega) = \varphi^*\lambda(\Omega)\,.
\eeq 
Hence, two Lie algebroids whose curvatures are related as \eqref{Curvature pullback} will possess an isomorphism between their cohomologies. Eq.~\eqref{Chain Map condition} similarly implies that isomorphic Lie algebroids possess isomorphic cohomology classes. In light of these observations, we can view the Lie algebroid isomorphism as a device for organizing the set of Atiyah Lie algebroids with connection into topological equivalence classes. Let $(A,\omega)$ denote an Atiyah Lie algebroid $A$ with connection reform $\omega$. Then,
\begin{equation}
[(A,\omega)] := \{(A',\omega') \; | \; \exists \varphi: A \rightarrow A' \text{ s.t. } \Omega = \varphi^*\Omega' \}
\end{equation}
can be regarded as the set of topologically equivalent Atiyah Lie algebroids with connection. 

From a physical perspective Eqs.~\eqref{Curvature pullback} and \eqref{Chain Map condition} establish the fact that the commutative diagram \eqref{Transitive Lie Algebroid Morphism} encodes diffeomorphisms and gauge transformations relating isomorphic Lie algebroids. In particular, it is straightforward to find that the connection coefficients of the horizontal and vertical parts in \eqref{defSpinConnHV} satisfy
\begin{align}
\label{Gauge Transformation}
({\cal A}_1)_{\un\alpha_1}{}^{a_1}{}_{b_1}&=J^{\un\alpha_2}{}_{\un\alpha_1}(g_\varphi^{-1})^{a_1}{}_{a_2}\Big(({\cal A}_2)_{\un\alpha_2}{}^{a_2}{}_{b_2}+\delta^{a_2}{}_{b_2}\rho(\un E_{\un\alpha_2})\Big)g_\varphi^{b_2}{}_{b_1}\,,\\
(v_E(\omega_1))_{\un A_1}{}^{a_1}{}_{b_1}&=K^{\un B_2}{}_{\un A_1}(g_\varphi^{-1})^{a_1}{}_{a_2}(v_E(\omega_2))_{\un B_2}{}^{a_2}{}_{b_2}g_\varphi^{b_2}{}_{b_1}\,.
\end{align}
Immediately, one can observe that the above two equations are reminiscent of the transformations \eqref{AEtransform} and \eqref{vEtransform}. In fact, the latter are indeed a special case of the former, where we consider an isomorphism from $A$ to itself restricted in the overlap of $U_i$ and $U_j$. Therefore, in this formal formulation we can see that the components of ${\cal A}$ and $\omega$ transform like a gauge field and a gauge ghost, respectively. In this respect, we can also identify the Lie algebroid isomorphism \eqref{Transitive Lie Algebroid Morphism} as encoding the data of a gauge transformation. In other words, the equivalence class $[(A,\omega)]$ can be regarded as an orbit of gauge equivalent algebroids. This remark is applied in \cite{klinger2023abc} for constructing the configuration algebroid, which can be regarded as a concise definition of the space of gauge orbits of connections that can be employed in any gauge theory formulated in terms of Atiyah Lie algebroids. Furthermore, as will be discussed in detailed shortly, the trivialization map $\tau$ can be treated as a special kind of Lie algebroid isomorphism from $A$ to the trivialized algebroid, and the results in \eqref{bArelationE2} and \eqref{bArelationE3} are nothing but manifestations of \eqref{Gauge Transformation} for this special isomorphism. 
\par
So far we have shown that there exists a Lie algebroid isomorphism of the form \eqref{Transitive Lie Algebroid Morphism} between Lie algebroids with connection whose horizontal distributions have curvatures related by \eqref{Curvature pullback}. It is worth mentioning that this very same construction was used in constructing a representation of a Lie algebroid $A$ by the Lie algebroid $\Der(E)$, for some associated vector bundle $E$. In fact, this is a slight generalization of what we presented above, in that whereas the isomorphism in question is $\phi_E:A\to \Der(E)$, these two algebroids do not share the same isotropy bundle, but instead there is a further isomorphism $v_E:L\to \End(E)$ between them. Locally this isomorphism can be thought to give a matrix representation (on the fibers of $E$) of the Lie algebra. 

\section{BRST Complex from the Lie Algebroid Trivialization}
\label{sec:BRSTalg}
Given that $\hatd$ is nilpotent on $\Omega(A,E)$, it provides a well-defined notion of cohomology, which we refer to as \emph{Lie algebroid cohomology}. In this section, our intention is to explain how this cohomology is related to the usual notion of BRST cohomology. In the previous section, we showed that two Lie algebroids with connection that are related by an isomorphism are different representatives of an equivalent class, and the cohomology of the respective $\hatd$ agree. In this sense, the $\hatd$ cohomology is invariant under isomorphism. As we have alluded to, the local trivialization can be formalized as a Lie algebroid isomorphism. We will show below that it is in this description that the usual physics notation $\hatd_{\tau} \to \td + s$ is produced, which relates the Lie algebroid cohomology to the usual physics notions of BRST cohomology.

\subsection{Covariant and Consistent Splittings}
\label{sec:local}
Having established the concept of Lie algebroid isomorphisms, now we get back to the discussion of the trivialization of a Lie algebroid. As we mentioned above, a local trivialization of a Lie algebroid can also be thought of as an example of a Lie algebroid isomorphism, with the details presented in terms of the local data in each local subset. Given an open cover $\{U_i\}$ of $M$, we have introduced the $\tau_i:A^{U_i}\to TU_i\oplus L^{U_i}$, and \eqref{trivialbasis} allows us to expresses local sections of $A$ in terms of local bases for $TM$ and $L$:
\beq
\tau_i(\un\mX_H)=\mX_{i,H}^{\ualpha}\tau_i{}^\mu{}_{\ualpha}(\un\p^{U_i}_\mu+b_{i}{}_\mu^A\un t^{U_i}_A)\,,\qquad
\tau_i(\un\mX_V)=\mX_{i,V}^{\un A}\tau_i{}^A{}_{\un A}\un t^{U_i}_A\,.
\label{deftau}
\eeq
For an Atiyah Lie algebroid $A$, we have demonstrated in Subsection \ref{sec:trivial} that the coefficients $b_{i}{}_\mu^A$ are the components of the local gauge field on $M$, which transforms on overlapping open sets as a gauge field by consequence of \eqref{Gauge Transformation}.

Since for each $U_i$ in the open cover of $M$ we realize a Lie algebroid isomorphism $\tau_i: A^{U_i} \rightarrow TU_i \oplus L^{U_i}$,\footnote{Note that here we are using the notion of isomorphism in the active sense, and hence we distinguish $A^{U_i}$ from $TU_i \oplus L^{U_i}$. In what follows, the reader may find it profitable to think from a passive perspective: indeed our use of $A^{U_i}$ versus $TU_i \oplus L^{U_i}$ can be thought of as simply corresponding to a different choice of basis, the first natural from the $H\oplus V$ split, the second natural from the local $TU\oplus L$ split.} we can sew together the aforementioned local charts to obtain a Lie algebroid atlas. Sewing the charts $\tau_i$ together requires that we also specify transition functions $t_{ij}: A^{U_i} \rightarrow A^{U_j}$, which are Lie algebroid isomorphisms with support in the intersection $U_i \cap U_j$ for each pair of $U_i$ and $U_j$. This corresponds to imposing the condition (C) in \eqref{condC}, i.e., overlapping charts in a principal bundle must agree up to a gauge transformation. The presence of non-trivial transition functions in the algebroid context ensures that topological data is preserved under trivialization. Together, the collection $\{U_i, \tau_i, t_{ij}\}$ carries the intuition of the Lie algebroid trivialization into a global context. In the following we will use the abbreviated notation $\tau: A \rightarrow A_{\tau}$ to refer to the local Lie algebroid isomorphism mapping $A$ into the trivialized Lie algebroid $A_{\tau} \simeq TU \oplus L^{U}$ for some $U \subset M$. That is, the notation $A_{\tau}$ serves to remind that $A_{\tau}$ involves restricting $A$ to an open set. We leave the open subset $U$ unspecified with the understanding that the Lie algebroid atlas allows for the algebroid $A$ to be trivialized when restricted to any open neighborhood of the base.

To be precise about details, we will work in explicit bases for the various vector bundles; although we will not indicate so, these should be understood to be valid locally on some open set of $M$. Given the bases for the bundles $TM$ and $L$ introduced in \eqref{basisTML}, we have a choice to make for a basis of sections of the trivialized Lie algebroid $A_\tau$ and we will refer to such choices as  ``splittings". Our analysis will focus on two natural choices of splittings which we refer to as the \emph{consistent splitting} and the \emph{covariant splitting}, respectively. The relevance of this nomenclature will become clear shortly. These two splittings correspond in fact to the two sets of axes shown in Figure~\ref{fig:VHcartoon.pdf}, and they are distinguished precisely because of the non-trivial connection on $(A_\tau,\omega_\tau)$.

By a covariant splitting, we mean to assign a basis on $A_\tau$ by means of a split basis on $A$. Consider an algebroid $(A,\omega)$ for which we take a split basis $\{\un E_{\un\alpha},\un E_{\un A}\}$ with $\un\alpha=1,\ldots,\dim M$, $\un A=1,\ldots,\dim G$. Recall that such a basis has the virtue that $\omega(\un E_{\un\alpha})=\un 0$ and $\rho(\un E_{\un A})=\un 0$, namely they span $\Gamma(H)$ and $\Gamma(V)$, respectively. Given the map $\tau$, it is natural to choose a basis $\{\tau(\un E_{\un\alpha}),\tau(\un E_{\un A})\}$ for $A_\tau$. Since we will now deal directly with $A_\tau$, we will for brevity denote such a basis by $\{\un {\hat E}_{\un\alpha},\un {\hat E}_{\un A}\}$. Thus  a covariant splitting corresponds to a choice of basis sections that are aligned with the global split $A_\tau=H_\tau\oplus V_\tau$. Locally, these sections can be expressed in terms of the bases for $TM$ and $L$ as
\beqn
\label{taubasis}
\un {\hat E}_{\un\alpha}=\rho_\tau^\mu{}_{\un\alpha}(\un\p_\mu+b_\mu^A\un t_A)\,,\qquad 
\un {\hat E}_{\un A}=-\omega_\tau^A{}_{\un{A}}\un t_A\,,
\eeqn
while the dual bases can be written as
\beq\label{taudualbasis}
\hat E^{\un\alpha}=\sigma_\tau^{\un{\alpha}}{}_{\mu} \td x^\mu\,,\qquad
{\hat E}^{\un A}=j_\tau^{\un{A}}{}_{A}(t^A-b^A_\mu \td x^\mu)\,.
\eeq
The coefficients in \eqref{taubasis} and \eqref{taudualbasis} are determined by the choice of $\tau$. The reason that we denote these coefficients in this way is that we can use them to constitute the maps for the trivialized algebroid and get the following diagram:
\begin{equation}
\begin{tikzcd}
& 
& 
A
\arrow[left]{dd}{\tau}
\arrow[left]{dl}{\omega}
\arrow[bend left]{dr}{\rho}
& 
&
\\
0
\arrow{r}
&
L
\arrow[bend left]{l} 
\arrow[bend left]{ur}{j}
\arrow[bend right, swap]{dr}{ j_\tau}
&
&
TM
\arrow{r}
\arrow[shift left]{ul}{\sigma} 
\arrow[swap]{dl}{\sigma_\tau} 
&
0\,.
\arrow[bend left]{l} 
\\
& 
&
A_\tau
\arrow[left, swap]{ul}{\omega_\tau}
\arrow[bend right,swap]{ur}{\rho_\tau}
\arrow[shift left]{uu}{\overline{\tau}}
&
&
\end{tikzcd}
\end{equation}
In this way, $\tau$ gives rise to a well-defined Lie algebroid $A_\tau$ with maps $\rho_\tau$, $\sigma_\tau$, $j_\tau$, $\omega_\tau$. Notice that when we introduce the trivialization map $\tau$ in Section \ref{sec:trivial}, we emphasized that $\tau\circ j:L\to L$ need not to be the identity map, and so $\tau^A{}_\uA=-\omega^A{}_\uA$ is not required. Working in the trivialized algebroid, we now have $\omega_\tau\circ j_\tau=\omega_\tau\circ\tau\circ j=Id$, and so the nontrivality of $\tau\circ j$ is exactly characterized by $\omega_\tau$. This brings us the convenience that \eqref{bArelationE3} on $A_\tau$ can be simply a linear relation:
\begin{align}
\label{bArelationE4}
{\cal A}_{\ualpha}{}^d{}_c=\rho_\tau^\mu{}_{\ualpha}b^A{}_\mu v_E(\un t_A)^a{}_b\,,
\end{align}
since the gauge ambiguity involved in $\tau$ is now put aside. 

Now we are ready to demonstrate the consistent and covariant splittings for $A_\tau$ explicitly. Suppose $\un X=X^\mu\un\p_\mu\in \Gamma(TM)$ and $\un\mu=\mu^A\un t_A\in\Gamma(L)$, then a section $\un\mX$ of $A_\tau$ with $\un X=\rho_\tau(\umX)$ and $\un\mu=\omega_\tau(\umX)$ can be expressed in the covariant splitting as
\beq\label{covariantsection}
\umX=\umX^{\ualpha}\un{\hat E}_{\ualpha}+\umX^{\un A}\un{\hat E}_{\un A}=\umX^{\ualpha}(\rho_\tau^\mu{}_{\ualpha}\un\p_\mu+\rho_\tau^\mu{}_{\ualpha}b^A_\mu\un t_A)+\umX^{\un A}\omega_\tau^A{}_{\un A}\un t_A=X^\mu(\un\p_\mu+b_\mu^A\un t_A)+\mu^A\un t_A\,.
\eeq
On the other hand, by a consistent splitting, we mean a choice of basis for $A_\tau$ that is aligned with the bases for $TM$ and $L$. That is, in the consistent splitting, we can write a section of $A_\tau$ as
\beq
\un\mX=\mX^\mu\un\p_\mu+\mX^A\un t_A\,.
\eeq
By comparing to the covariant split \eqref{covariantsection}, we see that 
\beq
\mX^\mu=\umX^{\ualpha}\rho_\tau^\mu{}_{\ualpha}=X^\mu\,,\qquad \mX^A=\umX^{\un A}\omega_\tau^A{}_{\un A}+\umX^{\ualpha}\rho_\tau^\mu{}_{\ualpha}b^A_\mu=\mu^A+X^\mu b_\mu^A\,,
\eeq
and thus in the consistent splitting, the gauge field is contained in an off-block-diagonal piece of $\sigma_\tau$. 
\par
The next example is a section $\beta$ of $A^*$, i.e., $\beta\in\Omega^1(A_\tau)$. In the covariant splitting we can write
\be
\beta=\beta_{\ualpha}{\hat E}^{\ualpha}+\beta_{\un A}{\hat E}^{\un A}=\beta_{\ualpha}\sigma_\tau^{\un{\alpha}}{}_{\mu} \td x^\mu+\beta_{\un A}j_\tau^{\un{A}}{}_{A}(t^A-b^A_\mu \td x^\mu)\,,
\ee
while in the consistent splitting we have
\be
\beta=\beta_{\mu}\td x^\mu+\beta_{A}t^A\,.
\ee
Comparing the components of $\beta$ in two splittings we can see that
\be
\label{betaspliting}
\beta^a_\mu=\sigma_\tau^{\un\alpha}{}_\mu\beta^a_{\un\alpha}-j_\tau^{\un A}{}_A\beta^a_{\un A}b_\mu^A\,,\qquad\beta^a_A=j_\tau^{\un A}{}_A\beta^a_{\un A}\,.
\ee
This also applies to any $E$-valued 1-form in $\Omega^1(A_\tau;E)$. Furthermore, One can similarly find the conversion between the consistent and covariant splittings for any higher forms in the exterior algebra $\Omega(A_\tau;E)$.

In the current setup, the connection reform $\omega_\tau$ which defines the horizontal distribution through its kernel can be written in the consistent splitting as
\begin{equation} \label{omega = b - c}
	\omega_\tau = \omega_\tau^A{}_{\un{A}} \hat E^{\un{A}} \otimes \un{t}_A = \omega_\tau^A{}_{\un{A}}j_\tau^{\un{A}}{}_B (t^B -  b^B_{\mu} \td x^{\mu}) \otimes \un{t}_A = (b^A_{\mu} \td x^{\mu} - t^A) \otimes \un{t}_A = b - \varpi\,.
\end{equation}
where we defined
\beq \label{MC on L}
\varpi = \varpi^A\otimes\un{t}_A=t^A \otimes \un{t}_A\,,
\eeq
 which can be interpreted as the Maurer-Cartan form on $L$. Recall that $L$ is a bundle of Lie algebras, which means that the $\varpi$ given in \eqref{MC on L} should be interpreted as the Maurer-Cartan form for the group $G$ pointwise on the base manifold $M$. In other words, $\varpi$ is a field of Maurer-Cartan forms, with $\varpi(x)$ being the Maurer-Cartan form for each fiber of $L$ at $x \in M$. The spatial dependence of $\varpi$ will play a significant role in defining the exterior algebra in the consistent splitting.
 
Eq.~\eqref{omega = b - c} explicitly shows that the connection reform can be understood as the sum of two pieces, the first related to the gauge field, and the second related to the Maurer-Cartan form of the gauge algebra, if we interpret it in the consistent splitting (i.e., in terms of the bases for $TM$ and $L$ and their duals). This equation should be compared with the idea of an extended ``connection" $\econn = A + c$ in the BRST complex introduced in Section \ref{sec:BRST}, where $A$ is a local gauge field and $c$ is the ghost field. However, in the algebroid formulation \eqref{omega = b - c} has an advantage over the conventional extended ``connection" defined in the principal bundle context, because now it possesses a manifestly geometric interpretation as $\omega$ is a genuine connection on the Atiyah Lie algebroid. 

\subsection{Trivialized Lie Algebroids and the BRST Complex}
\label{sec:BRSTcohomology}
We now turn our attention to the main focus of this chapter---understanding the BRST complex from the exterior algebra of the trivialized algebroid. 
Similar to the evaluation for the Lie bracket on $A$ in \eqref{LAEE}--\eqref{LAE}, the Lie bracket on $A_{\tau}$ can be written explicitly for the basis sections as
\begin{align} \label{HH}
[\un{\hat E}_{\un{\alpha}}, \un{\hat E}_{\un{\beta}}]_{A_{\tau}}& = \sigma_\tau\left([\rho_\tau(\un{\hat E}_{\un{\alpha}}), \rho_\tau(\un{\hat E}_{\un{\beta}})]_{TM}\right)
+j_\tau( \Omega_{\un{\alpha} \un{\beta}})\,,
\\
\label{HV}
[\un{\hat E}_{\un{\alpha}}, \un{\hat E}_{\un{B}}]_{A_\tau}& = 
-j_\tau\left(R^{-\omega_\tau}(\un{\hat E}_{\un\alpha},\un{\hat E}_{\un B})\right)
=j_\tau\left(\nabla^L_{\un{\hat E}_{\un\alpha}}(\omega_\tau^A{}_{\un B}\un t_A)\right)=j_\tau\left(\Aconn{L}(\un{\hat E}_{\un\alpha})(\omega_\tau^A{}_{\un B}\un t_A)\right)\,,
    \\
\label{VV}
[\un{\hat E}_{\un{A}}, \un{\hat E}_{\un{B}}]_{A_\tau} &= j_\tau\left([\omega_\tau(\un{\hat E}_{\un{A}}),\omega_\tau(\un{\hat E}_{\un{B}})]_L\right)
=-\omega_\tau^A{}_{\un A}\omega_\tau^B{}_{\un B}f_{AB}{}^C\un{\hat E}_{\un C}j_\tau^{\un C}{}_C\,.
\end{align}

The coboundary operator for the complex $\Omega(A_{\tau}; E)$, denoted by $\hatd_{\tau}$, is defined precisely by the Koszul formula \eqref{dhat}. In terms of the isomorphism $\tau:A\to A_\tau$, we have, the chain map condition $\hatd\circ\tau^*=\tau^*\circ\hatd_\tau$. Working in $A_{\tau}$, we now have two different ways of splitting $\Omega(A_{\tau};E)$ into a bi-complex. Firstly, we can use the covariant splitting of $A_{\tau}$ to identify
\begin{equation}
	\Omega^p(A_{\tau};E) = \bigoplus_{r + s = p} \Omega^{(r,s)}(H_{\tau},V_{\tau};E)\,,
\end{equation}
where $\Omega^{(r,s)}(H_{\tau},V_{\tau};E)$ consists of bi-forms of degree $r$ in the algebra of $H_{\tau}$ and degree $s$ in the algebra of $V_{\tau}$. This is certainly the most natural splitting of the exterior algebra, as it is globally defined given a connection. We will show that this is equivalent to, but not the same as, the usual splitting, where $r$ counts the de Rham form degree and $s$ counts ghost number.

Alternatively, using the consistent splitting for $A_{\tau}$ we can identify
\begin{equation}
	\Omega^p(A_{\tau};E) = \bigoplus_{r + s = p} \Omega^{(r,s)}(TM,L;E)\,,
\end{equation}
where $\Omega^p(A_{\tau};E)$ now consists of bi-forms of degree $r$ in the de Rham cohomology of $M$ and degree $s$ in the Chevalley-Eilenberg algebra of $L$.

To understand precisely how this works, we consider the action of $\hatd_{\tau}$ on sections of various bundles. We will show that the action of $\hatd_\tau$ can be interpreted as acting as $\td+\ts$ on the components of sections, which reproduces the coboundary operator $\td_{\rm BRST}$ on the BRST complex. As a first example, we consider an $E$-valued scalar $\un{\psi} = \psi^a \un{e}_a\in\Gamma(E)$. Using the Koszul formula and \eqref{phiEf} and \eqref{defSpinConnHV}, we have
\begin{align}
\label{hatdsalar}
\hatd_{\tau} \un{\psi}
=\hat E^{\un M}\otimes\phi_E(\un {\hat E}_{\un M})(\un\psi)=
 \Big(\td \psi^a
 +  v_E(\un t_A)^a{}_b\varpi^A \psi^b  \Big)\otimes\un{e}_a \,.
\end{align}
Note that the $\phi_E$ and $v_E$ here are associated with the trivialized algebroid $A_\tau$. We can identify the above equation with\footnote{It should be noted that in \cite{Ciambelli:2021ujl} this was written as $\hatd\un\psi=\nabla^E\un\psi+\ts\un\psi$. These results are consistent, given that $\hatd\un\psi=\nabla^E\un\psi+\psi^a \ts\un e_a+ \ts\psi^a\otimes\un e_a=\td\psi^a\otimes \un e_a+ \ts\psi^a\otimes\un e_a$. This is a general feature: by extracting the basis elements, the gauge fields in the covariant derivative are canceled by those coming from $\ts\un e_a$. We will see this pattern repeated in additional examples.}
\beq\label{hatddplusE}
\hatd_{\tau} \un{\psi}=  (\td+\ts)\psi^a\otimes\un{e}_a\,,
\eeq
if we interpret 
\beq 
\ts\psi^a:= v_E(\un t_A)^a{}_b\varpi^A \psi^b\,.
\eeq 
We can recognize that this matches the action of the BRST operator on a scalar shown in \eqref{Dhorizontal} where now $-\varpi$ plays the role of the ghost field $c$.
\par
As a second example, consider a $E$-valued 1-form in $\Omega^1(A_\tau;E)$, namely a section $\beta\in\Gamma(A_\tau^*\times E)$. Employing the Koszul formula (which is most easily employed by translating $\beta$ into the covariant split basis), we find 
\begin{align}
\label{dhat1form}
\hatd_\tau\beta={}&\frac{1}{2}\hat E^{\un M}\wedge \hat E^{\un N}\otimes\Big(
\phi_E(\un {\hat E}_{\un M})(\beta^a_{\un N}\un e_a)
-\phi_E(\un {\hat E}_{\un N})(\beta^a_{\un M}\un e_a)
-\beta([\un {\hat E}_{\un M},\un {\hat E}_{\un N}]_{A_\tau})\Big)\nn\\
={}&
\Big(\td \beta^a_\nu +  v_E(\un t_A)^a{}_bt^A \beta^a_\nu\Big)\wedge \td x^\nu\otimes\un e_a
+\bigg(\td \beta^a_B +  v_E(\un t_A)^a{}_bt^A \beta^b_B -\frac12f_{AB}{}^{C}\beta^a_C t^A
 \bigg)\wedge t^B\otimes\un e_a\,,
\end{align}
and thus we see that
\beq
\hatd_\tau\beta=(\td+\ts)\beta^a_\mu\wedge \td x^\mu\otimes\un e_a + (\td+\ts)\beta^a_A\wedge t^A\otimes\un e_a\,,
\eeq
if we interpret
\beq
\ts\beta^a_\nu=v_E(\un t_A)^a{}_b\varpi^A \beta^a_\nu\,,\qquad
\ts\beta^a_B=v_E(\un t_A)^a{}_b\varpi^A \beta^b_B -\frac12f_{AB}{}^{C}\beta^b_C \varpi^A\,.
\eeq
This is the 1-form version of the scalar example in \eqref{hatddplusE}. The calculation for the scalar and 1-form examples can be carried over to any $E$-valued forms in $\Omega(A_\tau;E)$. For the detailed derivation for \eqref{hatdsalar} and \eqref{dhat1form}, see Appendix \ref{app:hatd}.

As a final example, we consider the connection reform $\omega_\tau$, which we regard as an element of $\Omega^1(A_\tau,L)$. The action of $\hatd_\tau$ gives
\begin{align}
\hatd_\tau\omega_\tau&= \hatd_\tau(b-\varpi)=(\Omega_\tau^A-\frac12 f_{BC}{}^A\omega_\tau^B\wedge \omega_\tau^C)\otimes\un t_A\nn
\\
&= (\td b^A+f_{BC}{}^A\varpi^B\wedge b^C-\frac12 f_{BC}{}^A \varpi^B\wedge\varpi^C)\otimes\un t_A\,,\label{hatdtauomega}
\end{align}
where in the last line we made use of the result \eqref{omega = b - c}, writing $\varpi=\varpi^A\otimes\un t_A$. 
We note that if we identify
\beq \label{s of omega and varpi}
\ts b^A=\td\varpi^A+f_{BC}{}^A\varpi^B\wedge b^C\,,\qquad \ts\varpi^A=\frac12 f_{BC}{}^A \varpi^B\wedge\varpi^C\,,
\eeq
then we obtain
\beq\label{hatdomegads}
\hatd_\tau\omega_\tau=(\td+\ts)\omega_\tau^A\otimes\un t_A \,.
\eeq
Eq.~\eqref{s of omega and varpi} are exactly the action of the BRST operator on the local gauge field and gauge ghost we have seen in \eqref{Russian Formula 2}. To understand \eqref{hatdomegads} one must establish an interpretation for the $\td\varpi^A$ in \eqref{s of omega and varpi}. As we have alluded to below \eqref{MC on L}, $\varpi$ is not spatially constant, and therefore has a nonzero derivative under de Rham differentiation $\td$. Considering the following pair of facts: 
\beq
	\hat{i}_{-j(\un\mu)}\varpi^A=-\mu^A\,, \qquad \hat{\cal L}_{-j(\un\mu)}\varpi^A=0\,,\qquad \forall \un\mu\in\Gamma(L)\,,
\eeq
and noticing that $\hat{\cal L}_{\un\mX}=\hat{i}_{\un\mX}\hatd+\hatd\hat{i}_{\un\mX}$, we have
\beq \label{de Rham of varpi}
	\hat{i}_{-j(\un\mu)} \td \varpi^A = \td \mu^A\,.
\eeq
Then, the first equation in \eqref{s of omega and varpi} is consistent with the standard variation of the gauge field [c.f.~\eqref{Infinitesimal Gauge Transform of A}]:
\beq
	\hat{i}_{-j(\un\mu)}\ts b^A=\td\mu^A+[b,\un\mu]^A=D\un\mu^A\,.
\eeq
Therefore, starting from the formal definition \eqref{dhat} of the nilpotent coboundary operator in the algebroid exterior algebra, we established the relationship  between $\hatd_{\tau}$ and the BRST differentiation $\ts$. Again, we emphasize that this result is a natural consequence of the geometric structure of the algebroid. 

To recapitulate, we have demonstrated how the fundamental features of the BRST complex are geometrically encoded in the Atiyah Lie algebroid. Working in the consistent splitting, the exterior algebra of the trivialized algebroid is a bi-complex consisting of differential forms on the base manifold $M$ and differential forms in the exterior algebra associated to the local gauge group. This is the state of affairs described in the BRST complex but only after making a series of choices \cite{becchi1976renormalization,alvarez1984topological,zumino1985chiral,
zumino1984chiral,kanno1989weil}. We have shown why these choices are reasonable. For example, the counterpart of the extended  ``connection" $\econn = A + c$ is identified with $\omega_\tau = b - \varpi$ in the algebroid context; $b$ corresponds to the gauge field $A$, and $\varpi$ corresponds to the ghost field $c$ (up to a sign difference). Significantly, $\omega_\tau$ is a genuine connection which defines a horizontal distribution on the algebroid. Moreover, the coboundary operator  $\hatd_\tau$ on the trivialized Lie algebroid behaves in the consistent splitting as $\td+\ts$, which reproduces the full BRST complex from the exterior algebra of trivialized algebroid.

As discussed in Subsection \ref{sec:curvature}, the ``Russian formula" central to the BRST analysis is also simply a geometric fact in the algebroid context arising from the observation that the curvature $\Omega$ of a Lie algebroid connection is zero when contracted with a vertical vector field, i.e. $\Omega$ is a horizontal form. Working in the consistent splitting of the trivialized algebroid, this version of the Russian formula can be stated in a more familiar form as
\begin{equation} \label{Algebroid Russian Formula 2}
	\Omega_\tau = \hatd_{\tau} \omega_\tau + \frac{1}{2}[\omega, \omega]_L = (\td + \ts)(b^A - \varpi^A)\otimes \un t_A + \frac{1}{2}[b-\varpi, b-\varpi]_{L} = \td b + \frac{1}{2}[b, b]_L = F\,,
\end{equation}
where $F \equiv \td b + \frac{1}{2}[b, b]_L$ is the gauge field strength of the gauge field $b$. In other words, the curvature $\Omega_\tau$ is now automatically ``ghost free'' without the need to apply any additional requirements.

\chapter{Anomalies from Lie Algebroid Cohomology}
\label{chap:anomaly}
In the BRST context, the Russian formula leads to the descent equations which subsequently characterize anomalies from a topological point of view. This form of the anomaly is referred to as the \emph{consistent anomaly} as it satisfies the Wess-Zumino consistency condition \cite{wess1971consequences}. However, the consistent form of the anomaly is not gauge covariant, and one can separately introduce the corresponding covariantized version, called the \emph{covariant anomaly} \cite{bardeen1984consistent}, as we have reviewed in Subsection \ref{sec:anomalyBRST}. In this final chapter we will demonstrate how this story carries over into the algebroid language. Moreover, we will give an illustration of how the algebroid may afford us with a more complete picture by demonstrating that it is capable of geometrizing the consistent form of the anomaly as well as the covariant form. The conventional analysis of the BRST complex can only cover the former. Here we will be computing anomalies from a purely cohomological perspective which is independent of any specific field theory. In other words, we simply mean that the consistent and covariant anomaly polynomials we derive have the correct topological and algebraic properties to be the anomalous divergences of the consistent and covariant currents that appear in the familiar physical considerations. 

\section{Characteristic Classes and Lie Algebroid Cohomology}
\label{sec:Chernalg}
In Section \ref{sec:BRST} we reviewed the cohomological formulation of anomalies in the BRST language, which begins by considering characteristic classes on a principal bundle and their associated Chern-Simons forms. In this section we will work in the context of an Atiyah Lie algebroid $A$, with connection reform $\omega$ and its curvature reform $\Omega = \hatd\omega + \frac{1}{2}[\omega,\omega]_L$. 

We begin by computing
\begin{equation}\label{Bianchi Identity}
\hatd\Omega = -[\omega,\Omega]_L\,,
\end{equation}
which can be recognized as the Bianchi identity, given $\hat\td^2=0$. The pair of equations
\begin{equation}\label{Algebra of omega and Omega}
\hatd \omega = \Omega - \frac{1}{2}[\omega,\omega]_L, \qquad \hatd \Omega = -[\omega, \Omega]_L
\end{equation}
implies that the ring of polynomials generated by $\omega$ and $\Omega$ form a closed subalgebra of $\Omega(A)$, just as \eqref{AFrelation} for the principal bundle case. This is the basis of the Chern-Weil homomorphism, which states that one can formulate cohomology classes in $\Omega(A)$ using such polynomials. The procedure of this is exactly parallel to what we introduced in Subsection \ref{sec:Chern}. Let $Q^{(l)}: L^{\otimes l} \rightarrow \mathbb{R}$ be a symmetric order-$l$ polynomial function on $L$ which is invariant under Lie algebroid morphisms. Such an object can be represented by a symmetric $l$-linear map in the tensor algebra of $L$. In other words, given the basis $\{t^A\}$ for $\Gamma(L^*)$ with $A = 1, \ldots, \dim G$, we can write 
\be
Q ^{(l)}= Q_{A_1 \ldots A_l} \bigotimes_{j = 1}^l t^{A_j}\,.
\ee
Notice that although this expression looks the same as the $Q^{(l)}$ defined in \eqref{Ql}, now each $t^A$ is a section on $L^*$ which is defined on $M$ pointwisely, while in \eqref{Ql} in the principal bundle case $t_A\in\mg$ does not depend on the point of $M$. In terms of such a symmetric invariant polynomial we can define the characteristic class on $A$ as follows:
\begin{equation}\label{CC of symmetric invariant poly}
\lambda_Q(\Omega) = Q^{(l)}(\underbrace{\Omega, \ldots, \Omega}_l) = Q_{A_1 \ldots A_l} \wedge_{j = 1}^l \Omega^{A_j} \in \Omega^{2l}(A)\,.
\end{equation}
Strictly speaking, the Chern-Weil theorem is proved in the context of principal bundle cohomology. However, the basis of the proof hinges on the fact that the principal connection and curvature satisfy the same algebraic relations as the algebroid connection and curvature given in \eqref{Algebra of omega and Omega}. Hence, the proof carries over to this case as well. (See \cite{fernandes2002lie} for a more rigorous discussion.) Then, the Chern-Weil theorem assures that each $\lambda_Q(\Omega)$ defines an element of the cohomology class of degree $2l$ in the exterior algebra $\Omega(A)$. Specifically, the two statements we introduced in Subsection \ref{sec:Chern} carries over directly to the Lie algebroid version:
\begin{enumerate}
	\item Characteristic classes are closed $2l$-forms in $\Omega(A)$:
\begin{equation}
	\hatd \lambda_Q(\Omega) = l! Q^{(l)}(\hatd \Omega, \underbrace{\Omega, \ldots, \Omega}_{l-1}) = l! Q^{(l)}(\hatd \Omega + [\omega, \Omega]_{L}, \underbrace{\Omega, \ldots, \Omega}_{l-1}) = 0\,,
\end{equation}
which follows from the symmetry of $Q^{(l)}$ and the Bianchi identity. 
	\item Given two different connections $\omega_1$ and $\omega_2$, with respective curvatures $\Omega_1$ and $\Omega_2$, we have that $\lambda_Q(\Omega_2) - \lambda_Q(\Omega_1) \in \Omega^{2l}(A)$ is $\hatd$-exact. The relevant $(2l-1)$-form potential is defined by introducing a one parameter family of connections $\omega_t = \omega_1 + t(\omega_2 - \omega_1)$ which interpolates between $\omega_1$ and $\omega_2$ as $t$ goes from $0$ to $1$. Then,
\begin{equation} \label{Transgression Formula}
	\lambda_Q(\Omega_2) - \lambda_Q(\Omega_1) = \hatd \left[ Q_{A_1 \cdots A_l} \int_{0}^{1} \td t \; (\omega_2 - \omega_1)^{A_1} \wedge_{j = 2}^{l} \left(\hatd \omega_t + \frac{1}{2}[\omega_t , \omega_t]_{L}  \right)^{A_j} \right]\,.
\end{equation}
\end{enumerate}

Once again, the characteristic class $\lambda_Q(\Omega)$ will be globally exact if there exists a one parameter family of connections for which $\omega_2 = \omega$ and $\omega_1$ is any connection that has zero curvature.\footnote{Note that a connection having zero curvature does not imply $\omega = 0$, which would be inconsistent with $\omega \circ j = -Id_L$. Rather, in the consistent splitting one can realize a connection with zero curvature by ensuring that the gauge field vanishes, i.e., $b = 0$. This implies $\omega_{\tau} = -\varpi$, which is consistent with the aforementioned identity. In physical contexts, this corresponds to the case that the connection is ``pure gauge".} Nonetheless, it is always true locally that any characteristic class can be written as $\hatd$ acting on a $(2l-1)$-form:\begin{equation}\label{Chern Simons in algebroid}
	\lambda_Q(\Omega) = \hatd \mathscr{C}_Q(\omega)\,,
\end{equation} 
where
\begin{equation}
\label{transgression}
	\mathscr{C}_Q(\omega):= Q_{A_1 \cdots A_l} \int_{0}^{1} \td t\, \omega^{A_1} \wedge_{j = 2}^{l} \left(t\hatd \omega + \frac{1}{2}t^2[\omega , \omega]_{L}  \right)^{A_j}\,.
\end{equation} 
This transgression formula defines the algebroid Chern-Simons form associated with the symmetric invariant polynomial $Q^{(l)}$. Note that \eqref{Chern Simons in algebroid} indicates that there does not exist $\gamma\in\Omega^{2l-2}(A)$ such that $\scr C_Q=\hat\td\gamma$, and $\scr C_Q$ can only be determined up to a $\hat\td$ closed term. 

\section{Descent Equations and the Consistent Anomaly}
\label{sec:consistent}
Now, let us move into the trivialized algebroid $A_\tau$ and work in the consistent splitting. As we have shown, in the consistent splitting $\omega_\tau = b - \varpi$, and $\hatd_{\tau} \to \td + \ts$. It is therefore natural to organize the Chern-Simons form order by order in the bi-complex $\Omega(TM,L)$ as
\begin{equation} \label{Algebroid CS Expansion}
	\mathscr{C}_Q(b - \varpi) = \sum_{r + s = 2l-1} \alpha^{(r,s)}(b,\varpi)\,,
\end{equation}
where $\alpha^{(r,s)}(b,\varpi) \in \Omega^{(r,s)}(TM,L)$, and $\alpha^{(2l-2,1)}(b,\varpi) = \mathscr{C}_Q(b)$. 

Combining \eqref{Algebroid Russian Formula 2} and \eqref{Chern Simons in algebroid} yields
\begin{equation} \label{Algebroid Descent 1}
	\hatd_\tau \mathscr{C}_Q(b - \varpi) = \lambda_Q(\Omega) = \lambda_Q(F) = \td \mathscr{C}_Q(b)\,.
\end{equation} 
From this point it is straightforward to derive the descent equations simply by plugging (\ref{Algebroid CS Expansion}) into (\ref{Algebroid Descent 1}), and enforcing the equality order by order in the bi-complex $\Omega^{(r,s)}(TM,L)$. The descent equations can be expressed as
\begin{equation}\label{Descent Equations}
	\td \alpha^{(r,s)}(b,\varpi) + \ts\alpha^{(r+1,s-1)}(b,\varpi) = 0\,,\qquad r + s = 2l-1\,,\quad r \neq 2l-1\,,
\end{equation}
In particular, the term with $r = 2l-3$ yields the Wess-Zumino consistency condition:
\begin{equation} \label{WZalgebroid}
	\td\alpha^{(2l-3,2)}(b,\varpi) + \ts\alpha^{(2l-2,1)}(b,\varpi) = 0\,.
\end{equation}
On the other hand, from the fact that $\mathscr{C}_Q(b - \varpi)$ is not $\hat\td_\tau$ exact we also have
\be
\label{WZ not exact}
\alpha^{(2l-2,1)}(b,\varpi)\neq\td \gamma^{(2l-3,1)}(b,\varpi)+\ts\gamma^{(2l-2,0)}(b,\varpi)\,.
\ee
The term $\alpha^{(2l-2,1)}(b,\varpi)$ satisfying \eqref{WZalgebroid} and \eqref{WZ not exact} is a candidate to be the density of the consistent anomaly. Thus, we have now demonstrated that the consistent anomaly arises naturally in the algebroid context:
\begin{equation}
 \label{Aconsistent}
	\mathfrak{a}_{\rm con} = \int_M \alpha^{(2l-2,1)}(b,\varpi)\,.
\end{equation}
This result precisely matches the consistent anomaly \eqref{Aconsistent0} derived from the BRST formalism, with the gauge field $A$ now represented by $b$ and the ghost field $c$ represented by $-\varpi$.

\section{Free Variation and the Covariant Anomaly}
\label{sec:covariant}

Strictly speaking, the results discussed in the previous subsection 
are merely a reformulation of those obtained in the BRST analysis \cite{Bilal:2008qx}, although now they come from a transparent formal and geometric foundation which makes their origin and meaning clear. However, beyond simply improving our interpretation of the BRST analysis, we would now like to demonstrate that the algebroid approach has the potential to produce new results in the study of anomalies. 

As we have stressed, the trivialized algebroid has two relevant splittings. By analyzing the cohomology of the consistent splitting above we found the consistent anomaly. This inspires the question of whether the covariant splitting also has an interpretation related to an anomaly. Following the previous subsection, we can instead organize the Chern-Simons form on $A_{\tau}$ order by order in the bi-complex $\Omega^{(r,s)}(H_{\tau},V_{\tau})$. The most transparent way of doing this is by expanding the Chern-Simons form as a polynomial in the connection $\omega \in \Omega^1(V;L)$ and its curvature $\Omega \in \Omega^2(H;L)$. Here again we see the Russian formula playing a crucial role in dictating that the curvature can generate a sub-algebra of $\Omega(H_{\tau})$. The expansion of the Chern-Simons form can now be written as
\begin{equation}\label{covariantCSexpn}
	\mathscr{C}_Q(\omega) = \sum_{r + s = 2l-1} \beta^{(r,s)}(\omega,\Omega)\,,
\end{equation} 
where $\beta^{(r,s)}(\omega,\Omega) \in \Omega^{(r,s)}(H,V)$ contains $r/2$ factors of the curvature and $s$ factors of the connection.

We will now show that the covariant splitting directly produces the covariant anomaly. As was established in \cite{bardeen1984consistent,stone2012gravitational,Hughes:2012vg} the covariant anomaly is obtained from the free variation of the Chern-Simons form with respect to the connection. Computing this variation in the algebroid context, one arrives at the following formula (see Appendix~\ref{app:covanomaly} for details):
\begin{equation} 
\label{Covariant Anomaly}
	\delta \mathscr{C}_Q(\omega) = l \beta^{(2l-2,1)}(\delta\omega, \Omega) + \hatd\Theta(\omega,\delta\omega)\,,
\end{equation}
where
\begin{equation} 
\label{beta2l-2}
\beta^{(2l-2,1)}(\delta\omega, \Omega)=\frac{1}{l}Q(\underbrace{\Omega,\ldots,\Omega}_{l-1},\delta\omega)\,.
\end{equation}
Hence, the covariant anomaly can be read off from the first term in \eqref{Covariant Anomaly}. We therefore recognize that the covariant anomaly is intimately related to the term of order one in the vertical part of the Lie algebroid exterior algebra appearing in the expansion of the Chern-Simons form. This establishes a pleasant symmetry between the covariant anomaly and the consistent anomaly, since the consistent anomaly was proportional to the ``ghost number" one term in the expansion of the Chern-Simons form when viewed in the consistent splitting. We should note that from this point of view, the consistent and covariant anomalies do not coincide precisely because $V^*$ is not canonical, depending on the connection.

The covariant anomaly does not come with a series of descent equations that leads to a consistency condition. Instead, its defining property is that it is covariant with respect to the gauge transformation. In fact, we can now readily interpret the geometric difference between the consistent and covariant anomalies in the algebroid formulation. The former, being written in the consistent splitting of the algebroid, respects the nilpotency of the coboundary operator $\hatd$ in both factors of its associated bi-complex but spoils the gauge covariance. Conversely, the latter, although it does not admit two nilpotent differential operators, respects the covariant splitting defined by the connection $\omega$ and thus is endowed with gauge covariance. Such a conclusion was not possible from the perspective of the BRST complex, precisely because it lacked a geometry for its connection to define a covariant splitting.

\section{Examples} \label{sec:examples}
After establishing the formalism, now we exhibit the calculation for two illuminating examples: one is the familiar chiral anomaly and the other is the (type A) Lorentz-Weyl anomaly. In both cases the covariant and consistent forms of the anomaly are deduced by analyzing an appropriate characteristic class and its associated Chern-Simons form. The analysis done here can easily be generalized to any arbitrary even dimension.

\subsection{Chiral Anomaly}\label{sec:chiral}
The analysis of the chiral anomaly arises in the context of an Atiyah Lie algebroid $A$ derived from a principal bundle $P(M,G)$, where $G$ is a semisimple Lie group. The characteristic class that is relevant to the chiral anomaly in $2d$ is the second Chern class\footnote{For simplicity, we have taken a basis such that the second Killing form is given by $\delta_{AB}$.}
\begin{equation}
\text{ch}_2(\Omega) = \delta_{AB} \; \Omega^A \wedge \Omega^B\,.
\end{equation}
The Chern-Simons form associated with $\text{ch}_2(\Omega)$ can be deduced by employing the transgression formula \eqref{Transgression Formula}:
\begin{equation}\label{CS form for second Chern class}
\mathscr{C}_2(\omega) = \delta_{AB}\left(\omega^A \wedge \hatd\omega^B + \frac{1}{3}\omega^A \wedge [\omega,\omega]_L^B\right)\,.
\end{equation}

Using \eqref{CS form for second Chern class}, we can easily determine the algebraic form of candidates for the covariant and consistent forms of the anomaly. To begin, still working in the algebroid $A$ we can decompose \eqref{CS form for second Chern class} order by order in the bi-complex $\Omega(H,V)$ by re-expressing it as a polynomial in the curvature and connection; that is, where there is a $\hatd\omega$ we will replace it by $\Omega - \frac{1}{2}[\omega,\omega]_L$. The resulting expression is
\begin{equation}
\mathscr{C}_2(\omega,\Omega) = \delta_{AB} \left(\omega^A \wedge \Omega^B - \frac{1}{6}\omega^A \wedge [\omega,\omega]_L^B\right)\,.
\end{equation}
In other words, the various terms in \eqref{covariantCSexpn} are given by
\begin{equation}
\beta^{(2,1)}(\omega,\Omega) = \delta_{AB} \; \omega^A \wedge \Omega^B\,, \qquad \beta^{(0,3)}(\omega,\Omega) = -\frac{1}{6}\delta_{AB} \; \omega^A \wedge [\omega,\omega]_L^B\,,
\end{equation}
from which we can read off  by applying \eqref{Covariant Anomaly} that the covariant anomaly polynomial is given in terms of the curvature $2\delta_{AB}\Omega^B$, as expected. 

To obtain the consistent anomaly polynomial, we pass to the trivialized Lie algebroid. That is, we specify a map $\tau: A \rightarrow A_{\tau}$ along with its inverse map $\overline{\tau}: A_{\tau} \rightarrow A$. Recall from Subsection \ref{sec:morphisms} that such a morphism implies the following relationships between the connections, curvatures, and coboundary operators of the two algebroids:
\begin{equation}
\overline{\tau}^*\omega = \omega_{\tau} = b - \varpi\,,\qquad 
\overline{\tau}^*\Omega = \Omega_{\tau} = F\,,\qquad 
\overline{\tau}^* \circ \hatd = \hatd_{\tau} \circ \overline{\tau}^* \,.
\end{equation}
Trivializing the Chern-Simons form, it follows from \eqref{hatdtauomega} that
\begin{eqnarray}\label{Trivialized CS Form for Chern Class}
\overline{\tau}^*\mathscr{C}_2(\omega)=\mathscr{C}_2(\omega_\tau)
=\mathscr{C}_2(b) + \delta_{AB} \left(-\varpi^A \wedge \td b^B - \frac{1}{2}b^A \wedge [\varpi,\varpi]_L^B + \frac{1}{6}\varpi^A \wedge [\varpi,\varpi]_L^B\right)\,.
\end{eqnarray}
Then, the expansion \eqref{Algebroid CS Expansion} gives
\begin{equation}
\begin{split}
\alpha^{(3,0)}(b,\varpi) &= \mathscr{C}_2(b)\,,\qquad 
\alpha^{(2,1)}(b,\varpi) = -\delta_{AB}  \varpi^A \wedge \td b^B\,,\\
\alpha^{(1,2)}(b,\varpi) &= -\frac{1}{2}\delta_{AB}  b^A \wedge[\varpi,\varpi]_L^B\,,\qquad
\alpha^{(0,3)}(b,\varpi) = \frac{1}{6}\delta_{AB}  \varpi^A \wedge [\varpi,\varpi]_L^B\,.
\end{split}
\end{equation}
The consistent anomaly polynomial can therefore be read off from the ghost number one contribution to \eqref{Trivialized CS Form for Chern Class}, which is $-\delta_{AB} \varpi^A \wedge \td b^B$. Recall that $-\varpi^A$ corresponds to the ghost field, the consistent anomaly can be recognized $\delta_{AB} \td b^B$, which is again in agreement with the known result. 

As promised, the covariant anomaly, which is written in terms of $\Omega$, is indeed covariant, while the consistent anomaly, which is written in terms of $\td b$, is not. Moreover, it is straightforward to show that the series of terms $\alpha^{(r,s)}(b,\varpi)$ satisfy the descent equations as introduced in \eqref{Descent Equations}.

\subsection{Lorentz-Weyl Anomaly}
\label{sec:LW}
To analyze the Lorentz-Weyl (LW) anomaly, let us begin by introducing the geometric framework and characteristic classes for a Lorentz-Weyl structure in arbitrary even dimension $d = 2l$. Consider an Atiyah Lie algebroid $A$ derived from a principal $G$-structure with $G = SO(1,d-1)\times\RR_+\subset GL(d,\RR)$. Here $SO(1,d-1)$ is the local Lorentz group, while $\RR_+$ corresponds to local Weyl rescaling. The corresponding Lie algebra can be expressed as $\mathfrak g=\mathfrak{so}(1,d-1)\oplus\mathfrak{r}_+$. The adjoint bundle of the group $G$ is given by $L = P \times_{G} \mg = L_{L} \oplus L_{W}$, where $L_L = P \times_{SO(1,d-1)} \mathfrak{so}(1,d-1)$ and $L_{W} = P \times_{\mathbb{R}_+}\mathfrak{r}_+d$ correspond to the Lorentz and Weyl factors, respectively. 
The connection reform on $A$ will therefore split as $\omega =\omega_L + \omega_W$ where $\omega_L$ and $\omega_W$ are the connection reform on the Lorentz and Weyl sub-algebroids, respectively. The curvature of the connection reform $\omega$ will have two pieces
\begin{equation}
\Omega = \hatd\omega + \frac{1}{2}[\omega,\omega]_L = \Omega_{L} + \Omega_{W}\,,
\end{equation}
where $\Omega_{L}\in \Omega^2(H;L_L)$ is related to the Riemann tensor and $\Omega_{W} \in \Omega^2(H;L_{W})$ is the gauge field strength of the Weyl connection. We can see that the curvature $\Omega$ remains horizontal.

There are two natural invariant structures associated with $L$. The Weyl factor $L_{W}$ is an Abelian subalgebra of $L$. Thus, the map $\tr_W: L \rightarrow L_{W}$ which projects an element $\underline{\mu} \in \Gamma(L)$ down to $L_{W}$ will be invariant under the adjoint action of $L$ on itself. In a linear representation of $L$ given by $v_E: L \rightarrow \text{End}(L)$, the generators of $L_L$ are represented by traceless antisymmetric matrices. Hence, as the notation indicates, the map $\tr_W$ can also be understood by selecting a representation and computing the ordinary trace. In other words, for any representation $E$ and given $\tr: \text{End}(E) \rightarrow C^\infty(M)$ we have
\begin{equation} \label{Trace Map}
\tr_W(\underline{\mu}) = \tr \circ v_E(\underline{\mu})\,.
\end{equation} 

Similarly, there is an invariant structure on $L_L$ which will correspond to the Pfaffian. In particular we define
\begin{equation} \label{Pfaffian Map}
\epsilon: L^{\otimes l} \rightarrow C^{\infty}(M)\,.
\end{equation}
One of the defining properties of the map $\epsilon$ is that $\epsilon(\underline{\mu}_1, \ldots , \underline{\mu}_{l}) = 0$ if $\underline{\mu_i} \in \Gamma(L_{W})$ for any $i$. In other words, $\epsilon$ only sees the orthogonal factor of $G$, and is an invariant polynomial on this factor. As was the case with the trace, $\epsilon$ can be computed by passing to a linear representation. To be precise, we should take a $2l$-dimensional representation space $E$ equipped with an inner product $g_E: E \times E \rightarrow C^{\infty}(M)$ of appropriate signature. Then, we can define the map $w_E: L \rightarrow \wedge^2 E^*$ such that given $\underline{\psi}_1, \underline{\psi}_2 \in \Gamma(E)$ we have
\begin{equation}
w_E(\underline{\mu})(\underline{\psi}_1, \underline{\psi}_2) = g_E\left(\underline{\psi}_1, v_E(\underline{\mu})(\underline{\psi}_2)\right)\,.
\end{equation}
Notice that $w_E \circ \tr_W = 0$, since a Weyl rescaling cannot be represented by an antisymmetric matrix. Given an oriented orthonormal basis $\{\underline{e}_a\}$ for $E$ along with its dual basis $\{e^a\}$, with $a=1,\ldots, 2l$, we can define an $SO(1,d-1)$ invariant volume form on $E$\footnote{Note that we are {\it not} specifying a solder form, and so we have no way to pull this volume form back to the base. Similarly the inner product on $E$ is not directly related to a metric on the base. These facts might be thought of as being responsible for the topological nature of the characteristic classes discussed below.}
\begin{equation}
\text{Vol}_E \equiv \epsilon_{a_1 \cdots a_d} e^{a_1} \wedge \cdots \wedge e^{a_d}\,.
\end{equation}
Thus, in this representation we can express:
\begin{equation}\label{Evaluation of Pffafian Map}
\epsilon(\underline{\mu}_1, \ldots, \underline{\mu}_l) = \epsilon_{a_1 b_1 \cdots a_l b_l} w_E(\underline{\mu}_1)^{a_1 b_1} \cdots w_E(\underline{\mu}_l)^{a_l b_l} = \epsilon^{a_1}{}_{b_1} \cdots^{a_l}{}_{b_l} v_E(\underline{\mu}_1)^{b_1}{}_{a_1} \cdots v_E(\underline{\mu}_l)^{b_l}{}_{a_l}\,.
\end{equation}
This construction satisfies the above-mentioned properties since $w_E \circ \tr_W(\underline{\mu}) = 0$ and 
\begin{equation}
\epsilon(\underline{\mu}, \ldots, \underline{\mu}) = \text{Pf}(\underline{\mu})\,.
\end{equation}
Note that this construction requires $d$ to be even, as the $\epsilon^{a_1}{}_{b_1} \cdots^{a_l}{}_{b_l}$ has an equal number of up and down indices (signifying its Weyl invariance).

We are now prepared to introduce the relevant characteristic class for the LW anomaly. If we intend to derive the anomaly for a $d = 2l$ dimensional theory, we must construct a characteristic class of form degree $d+2 = 2(l+1)$. Hence, we must construct a symmetric and invariant linear map $Q^{LW,l+1}: L^{\otimes (l+1)} \rightarrow \mathbb{R}$. As we have discussed, we have at our disposal two invariant objects corresponding to the trace \eqref{Trace Map} and the Pfaffian \eqref{Pfaffian Map}. We therefore obtain an $(l+1)$-order symmetric invariant polynomial by taking the symmetrized product of these two maps:
\begin{equation} \label{symminv}
Q^{LW,l+1}(\underline{\mu}_1, \ldots \underline{\mu}_{l+1}) = \sum_{\pi} \epsilon(\underline{\mu}_{\pi(1)}, \ldots, \underline{\mu}_{\pi(l)})\,\tr_{W}(\underline{\mu}_{\pi(l+1)})\,,
\end{equation}
where $\pi$ denotes the permutations of $(1,\ldots,l+1)$. The characteristic class associated with $Q^{LW,l+1}$ is therefore given by $\lambda_{Q^{LW,l+1}}(\Omega)$ as dictated in \eqref{CC of symmetric invariant poly}.  While $\lambda_{Q^{LW,l+1}}$ is the appropriate characteristic class in the LW context, in other situations (such as a simple or semi-simple group) one finds an Euler class.\footnote{Indeed in the literature \cite{Boulanger:2007st,Boulanger:2018rxo,Francois:2015oca,Francois:2015pg} there is an analysis of Cartan geometry, in which the symmetry is enhanced to $SO(2,d)$, and the type A conformal anomaly comes from the Euler class. Descending to the subgroup $SO(1,d-1)\times\RR_+$ considered here, one obtains \eqref{symminv}.}

Let us now specialize to the case $d = 2$ and show that $\lambda_{Q^{LW,2}}$ gives rise to the LW anomaly. The characteristic class of interest takes the following form:
\begin{equation}\label{Gen 2 Euler}
\lambda_{Q^{LW,2}}(\Omega) = \frac{1}{2}\left(\epsilon(\Omega) \wedge \tr_W(\Omega) + \tr_W(\Omega) \wedge \epsilon(\Omega)\right)\,.
\end{equation}
In the $2d$ case, since the structure group $G = SO(1,1) \times \mathbb{R}_{+}$ is Abelian, we can write $\Omega = \hatd\omega$. Hence, the Chern-Simons form can be obtained as
\begin{equation}\label{CS form for second generalized euler class}
\mathscr{C}_{LW,2}(\omega,\Omega) = \frac{1}{2}\left(\epsilon(\omega) \wedge \tr_W(\Omega) + \tr_W(\omega) \wedge \epsilon(\Omega)\right)\,.
\end{equation}
To read off the covariant form of the anomaly polynomial let us pass to a representation on $E$. Then using \eqref{Trace Map} and \eqref{Evaluation of Pffafian Map} we can write the covariant anomaly as (ignoring the constant factor)
\begin{equation}
\label{LW cov anom}
\Omega_{W}\epsilon^{a}{}_b + \text{Pf}(\Omega_{L})\delta^{a}{}_b\,.
\end{equation}
Noticing that $\epsilon(\omega)$ and $\tr_W(\omega)$ picks out the Lorentz and Weyl part of the connection, respectively, the first term in the above result should be interpreted as the Lorentz anomaly, which vanishes when the Weyl connection is turned off; the second term is the Weyl anomaly in $2d$, which is proportional to the Ricci scalar of the spacetime. Therefore, the LW anomaly is the mixed anomaly between the Lorentz and Weyl symmetry. In fact, it is easy to see that by adding a total derivative term, one can remove the Lorentz anomaly or Weyl anomaly but cannot remove both simultaneously.

To obtain the consistent form, we must employ a Lie algebroid trivialization. Under the trivialization we find that
\begin{equation}
\overline{\tau}^*\omega = b - \varpi_{L} + a - \varpi_{W}\,,\qquad 
\overline{\tau}^*\Omega = R + f\,,\qquad 
\overline{\tau}^* \circ \hatd = (\td + \ts_{L} + \ts_{W}) \circ \overline{\tau}^*\,,
\end{equation}
where $b$ and $a$ are the spin connection and Weyl connection on $M$, and $R$ and $f$ are their curvature 2-forms, respectively. The pairs $(\varpi_{L},\ts_{L})$ and $(\varpi_{W},\ts_{W})$ are the Maurer-Cartan forms and BRST operators for the $SO(1,1)$ and $\mathbb{R}_+$ factors of $L$. Let $B = b + a$ and $\varpi = \varpi_{L} + \varpi_{W}$ denote the combined gauge field and Maurer-Cartan forms. We subsequently identify the consistent LW anomaly from $Q^{LW,2}(\varpi,\td B)$. Since in the index notation of the representation we have 
\be
(\td B)^a{}_b = R\epsilon^a{}_b + f\delta^a{}_b\,,
\ee
the consistent form of the LW anomaly is merely the pullback of the covariant form by the trivialization $\overline{\tau}$, which reads
\begin{equation}
f\epsilon^{a}{}_b + \text{Pf}(R)\delta^{a}{}_b\,,
\end{equation}
which has the same form as \eqref{LW cov anom}. This follows in this particular case from the fact that $G$ is an Abelian group when $d=2$. A simplified account of the LW anomaly in two dimensions appeared also in Appendix A of \cite{Campoleoni:2022wmf}.

Note that here we have focused on the type A Weyl anomaly, and the type B Weyl anomaly remains an open question in general dimension. In Part I we have seen that the building blocks of the holographic Weyl anomaly are the Schouten tensor and obstruction tensors, and conjectured that it is true for the Weyl anomaly of a general theory. Since obstruction tensors, which prevents the type B Weyl anomaly to be topological [in the sense of \eqref{varX}], are expected to make an appearance in the type B Weyl anomaly, more consideration may be necessary in addition to the standard characteristic class construction.

\section{Discussion}
\label{sec:disc2}
\subsection{Summary and Outlook}
In Chapter \ref{chap:intro2} we raised a series of questions about the BRST formalism. We have provided answers to each of these questions in Part II of this thesis by geometrically formalizing the BRST complex in terms of the Atiyah Lie algebroid. As we promised in the introduction, each answer follows immediately from the geometry of the Atiyah Lie algebroid.

\textbf{Q:} Why should the Grassmann-valued fields $c^A(x)$, which started their life in the BRST quantization procedure have an interpretation as the generators of local gauge transformations? And why is it reasonable to combine the de Rham complex and the ghost algebra into a single exterior bi-algebra?

\textbf{A:} In the algebroid context the Maurer-Cartan form $\varpi \in \Omega^1(L;L)$ plays the role of the gauge ghost, and is also a generator of local gauge transformations. Working in the consistent splitting the exterior algebra of the trivialized algebroid $A_{\tau}$ subsequently takes the form of a bi-complex $\Omega^{(p,q)}(TM,L;E)$, where $p$ is the form degree with respect to the de Rham cohomology of $M$, and $q$ is the ``ghost number". The coboundary operator $\hat{\td}_{\tau}$ takes explicitly the form $\td + \ts$  on this exterior algebra, where $\td$ is the de Rham differential and $\ts$ is the BRST operator. 

\textbf{Q:} Why is it reasonable to consider $\econn = A + c$ as a ``connection'', and moreover what horizontal distribution does it define?

\textbf{A:} Still in the context of the trivialized Lie algebroid, one can introduce a connection reform, $\omega_\tau: A_{\tau} \rightarrow L$, defining the horizontal distribution $H_{\tau} = \text{ker}(\omega_\tau)$ for which $A_{\tau} = H_{\tau} \oplus V_{\tau}$. In the consistent splitting $\omega_\tau = b - \varpi$, where $b: TM \rightarrow L$ is a local gauge field, and $\varpi: L \rightarrow L$ is the Maurer-Cartan form on $L$. Hence, $\omega$ reproduces the ``connection" $\econn$ defined in the BRST complex, where again we see the role of the gauge ghost being played by the Maurer-Cartan form. 

\textbf{Q:} Why should the ``curvature" $\ecurv$ be taken to have ghost number zero? And why does enforcing this requirement turn the BRST operator $\ts$ into the Chevalley-Eilenberg operator for the Lie algebra of the structure group?

\textbf{A:} $\ecurv$ in the context of the trivialized Lie algebroid is represented by the curvature associated with $\omega_\tau$, $\Omega_\tau = \hat{\td}_\tau\omega_\tau + \frac{1}{2}[\omega_\tau, \omega_\tau]_{L}$, which is fully horizontal as a built-in geometric property of the algebroid. In the consistent splitting, this reproduces the Russian formula and the BRST transformation as presented in \eqref{Algebroid Russian Formula 2}. 

The culmination of all of these facts gives rise to the descent equations \eqref{Descent Equations} and the Wess-Zumino consistency condition \eqref{WZalgebroid}. Given a characteristic class $\lambda_{Q}(\Omega)$ with associated Chern-Simons form $\mathscr{C}_{Q}(\omega)$ we have
\begin{equation}
    \hat{\td}_{\tau} \mathscr{C}_{Q}(\omega) = (\td + \ts) \mathscr{C}_{Q}(b - \varpi) = \td \mathscr{C}_{Q}(b)\,.
\end{equation}
From the above equation, one can immediately compute the \emph{consistent} anomaly polynomial, which corresponds to the ghost number one contribution to $\mathscr{C}_Q(b - \varpi)$, and can be shown to be an element of the first cohomology of the BRST operator $\ts$ once integrated over a space of appropriate dimension. Furthermore, one can also obtain the \emph{covariant} form of the anomaly by viewing the Chern-Simons form in the covariant splitting and extracting the terms contributing with one exterior power in the vertical sub-bundle of the associated exterior algebra (multiplied by the order $l$ of $Q$). Although the formulas for finding the consistent and covariant anomalies have been known \cite{bardeen1984consistent}, our approach to these anomalies provides a meaningful explanation as to why the consistent anomaly is consistent and the covariant anomaly is covariant. From the algebroid perspective, they just correspond to different choices of splitting.

To understand the complete picture of the consistent and covariant anomalies, we will have to further exploit the structure of the configuration space of Lie algebroid connections. In this thesis we established a powerful approach for studying Lie algebroid isomorphisms in terms of commutative diagrams, which found a physical interpretation as a unified tool for implementing diffeomorphisms and gauge transformations. The authors of \cite{klinger2023abc} have made use of this construction to define a new geometric formalism, called the configuration algebroid, for understanding the extended configuration space of arbitrary gauge theories. From the point of view of the configuration algebroid, the presence of anomalies is associated with the question of whether the charge algebra is centrally extended. 

As mentioned in Chapter \ref{chap:intro2}, our analysis of anomalies so far applies to the perturbative anomalies for continuous symmetries. One possible direction is to investigate how to extend this geometric setup to discuss perturbative anomalies of large gauge transformations or discrete symmetries, which may involve studying the corresponding groupoid structure. Furthermore, it is also natural to consider how this formalism can be carried over to study anomalies of generalized symmetries.

Having a geometric understanding of the BRST formalism in the algebroid language, we also hope to further understand other interesting physical aspects of quantum gauge theory. One example is the Gribov problem \cite{Gribov:1977wm,Vandersickel:2012tz}, which states that when one restricts the space of gauge fields to the so-called Gribov region, some features related to confinement become manifest but the BRST symmetry is broken. The remedy for this issue requires analyzing the global topology of the Lie algebroid, which has been touched upon in \cite{AliAhmad:2024wja} in the context of the $G$-framed algebra. It would be valuable to explore this further and find applications of the geometric formulation presented in this thesis to understanding topics such as QCD and confinement.

\par
\subsection{Comments on the Weyl Anomaly}
At the end of this thesis, we would like to comment on some new insights into the Weyl anomaly, combining the understanding from Part I and Part II. In Part I we focused on the holographic Weyl anomaly and utilized the WFG gauge which provides a Weyl geometry background for the boundary theory. In Part II we studied the WL structure and identified the mixed anomaly nature of the type A Weyl anomaly. Although neither case addresses the most general form of the Weyl anomaly, these observations reveal some previously underemphasized features of it.

First, the Weyl connection plays a crucial role in identifying the mixed anomaly. In Subsection \ref{sec:LW}, we we explicitly demonstrated that the connection of the WL structure, split as $\omega=\omega_L+\omega_W$, gives rise to the mixed Lorentz-Weyl mixed anomaly, where the Weyl anomaly depends on the curvature of the Lorentz connection $\omega_L$. Now if we look back at the holographic Weyl anomaly derived in Chapter \ref{chap:HWA}, since we turned on two backgrounds fields $g$ and $a$ on the Weyl geometry background, the Weyl anomaly can be interpreted as a Weyl-diffeomorphism mixed anomaly. In fact, by turning on $g$ one also turns on the unique affine connection $\nabla$ satisfying $\nabla g=2ag$, and the Weyl-LC connection satisfying $\hat\nabla g=\nabla g-2ag$ [or equivalently \eqref{WLCsplit}] is precisely the counterpart of $\omega=\omega_L+\omega_W$. Similar discussion can also be applied to theories with other gauge groups. For example, with the Weyl connection, the famous trace anomaly of $4d$ QED or QCD can be recognized as the mixed anomaly between the Weyl and $U(1)$ or $SU(N)$ symmetries. 

Second, although the Weyl anomaly is sometimes considered to have no anomaly inflow, the holographic picture in the WFG gauge provides a natural anomaly inflow for it. Recall that in the anomaly inflow picture, the boundary anomaly matches the variation of the bulk theory induced on the boundary, and the boundary connection and symmetry transformation should also be induced from those in the bulk. In the WFG gauge, not only can the boundary anomaly be obtained from the bulk variation, but the boundary Weyl-LC connection and Weyl symmetry are indeed also induced from the bulk LC connection and the Weyl diffeomorphism. However, this anomaly inflow is unconventional in the following senses: (1) the boundary is not a finite boundary but an asymptotic boundary; (2) the bulk effective theory is not a topological field theory. The first property may be related to the fact that the Weyl anomaly is a real factor in the path integral transformation rather than a phase. The second is related to another distinctive property of the Weyl anomaly, namely it is not robust (not a 't Hooft anomaly) but monotonically decreases under the RG flow \cite{Zamolodchikov:1986gt,Komargodski:2011vj}.\footnote{See, however, \cite{Schwimmer:2010za,Schwimmer:2023nzk} for the discussions on the anomaly matching of the Weyl anomaly between the unbroken and spontaneously broken phases.} Therefore, holography not only potentially offers an inflow picture but may also unravel the peculiarities of the Weyl anomaly compared with other anomalies. It is appealing to unify the holographic and finite boundary pictures of anomaly inflow and find the relationship of this picture with the recently developed symmetry topological field theory (SymTFT) \cite{Bah:2019rgq,Apruzzi:2021nmk}.\footnote{Also note that in holography there is a duality between the bulk an boundary theories instead of having a theory coupled to the boundary of the bulk.}

Finally, we have seen that the holographic Weyl anomaly can be cast into the compact form \eqref{X1D}--\eqref{X4D} using the Schouten tensor and obstruction tensors, which provides clues for a general expression in arbitrary even dimensions. In Subsection \ref{sec:LW}, we also found that the type A Weyl anomaly can be derived from a characteristic class constructed from cuvature. Based on these results, for a general non-holographic theory, we expect that the building blocks of the Weyl anomaly are the Riemann curvature, Schouten tensor, and obstruction tensors. In this way, the Weyl tensor can be expressed as \eqref{Weyltensor} in terms of the Riemann tensor and Schouten tensor, and the derivatives of the Weyl tensor, which appear in the type B Weyl anomaly in $d > 4$, should be organized into the obstruction tensors. However, as we have previously remarked, the general geometric structure may require techniques beyond cohomology. Note that the holographic Weyl anomaly (also recognized as the Q-curvature) is constrained by the Einstein theory in the bulk; for example, for a $4d$ boundary, we have $a = c$ in \eqref{WA2d4d}. In the general case, to realize the holographic anomaly inflow, the bulk effective theory may need to be deformed to other theories, such as higher curvature theories.

The Weyl anomaly sits at the intersection of three topics explored in this thesis: Weyl geometry, holography and cohomology. We hope that our investigation from these three perspectives can shed light on the fundamental understanding of Nature.

\begin{figure}[!htbp]
\center
\includegraphics[width=2.6in]{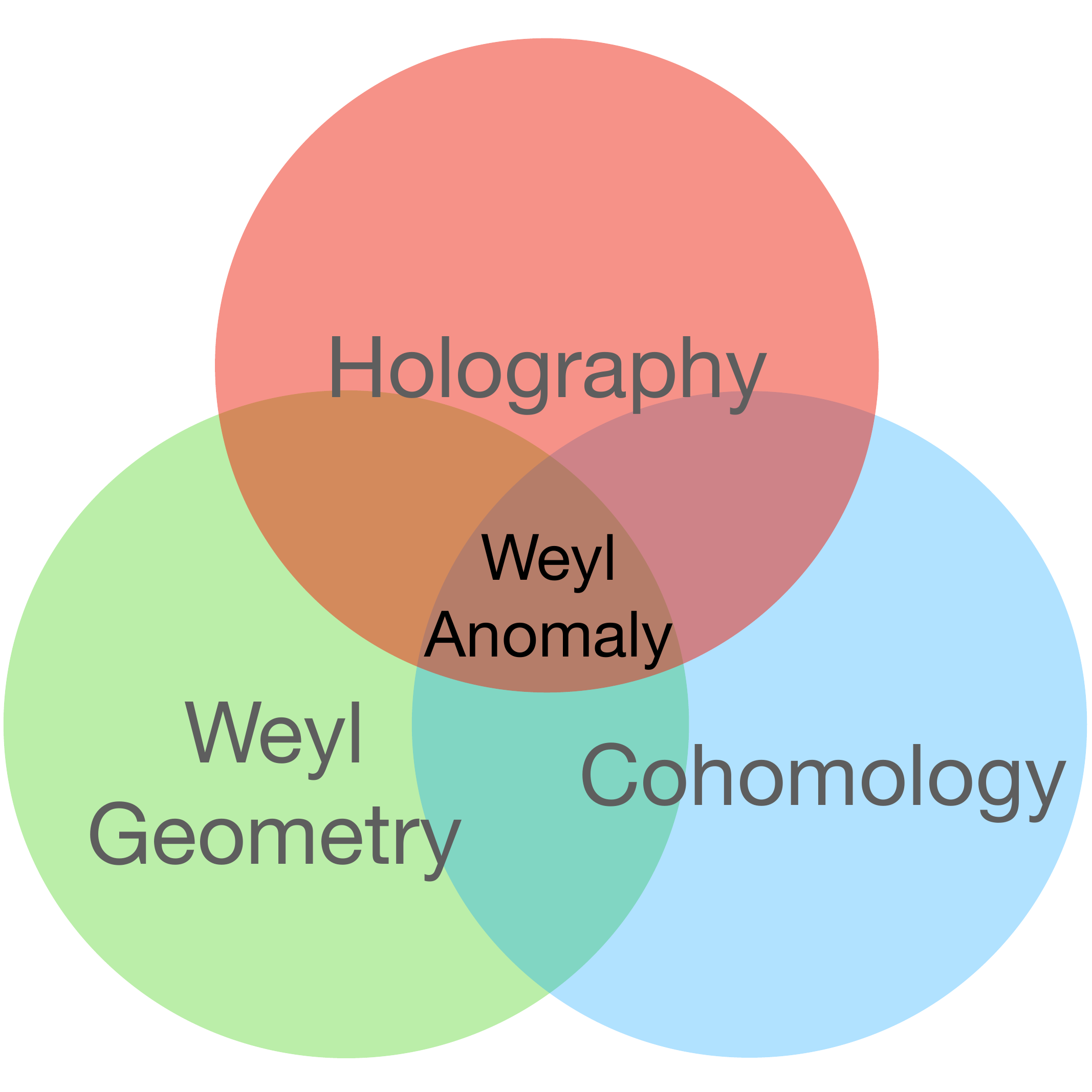}
\caption{The three-legged stool of the Weyl anomaly.}
\label{fig:three-legged}
\end{figure}



\backmatter

\clearpage
\setcounter{counterforappendices}{\value{page}}
\mainmatter
\setcounter{page}{\value{counterforappendices}}

\appendix
\addcontentsline{toc}{part}{Appendices}
\part*{Appendices}

\chapter{Supplement to Part I}
\section{Coordinate Systems of the Flat Ambient Space}\label{app:coords}
In this appendix section we demonstrate the transformation between the flat ambient metric in different coordinate systems introduced in Section \ref{sec:ambient}.
\par
Start with Minkowski spacetime $\mathbb{R}^{1,d+1}$ in Lorentzian coordinates $\{X^{0},X^{i}\}$ with $i=1,\dots , d+1$:
\begin{equation}\label{app:1}
\eta= - (\td X^{0})^{2}+\sum_{i=1}^{d+1}(\td X^{i})^{2}\,.
\end{equation}
First, we can define a stereographic coordinate system $\{\ell,r,x^{i}\}$ as follows:
\begin{equation}\label{Lorentzian_to_stereo}
X^{0}= \ell \frac{L^2+ r^2}{L^2-r^2}\,,\qquad X^{i}= \ell \frac{2L}{L^2-r^2} x^{i},\qquad i=1,\dots, d+1\,,
\end{equation}
where $r^2 = \sum\limits_{i=1}^{d+1} (x^{i})^2 $ and $L$ is a positive constant. In this system, the Minkowski metric \eqref{app:1} becomes
\begin{equation}\label{app:2}
\eta= -\td \ell^2 + \frac{\ell^2}{L^2} \frac{4}{(1- (r/L)^2)^2}\sum_{i=1}^{d+1}(\td x^i)^2 = -\td \ell^2 + \frac{\ell^2}{L^2} \frac{4}{(1- (r/L)^2)^2}\left(\td r^2 + r^2 \td\Omega^{2}_{d}\right)\,,
\end{equation}
where in the second equality we expressed $\{x^{i}\}$ in the spherical coordinates. The coordinate patch is $\ell>0$, $0 \leqslant r<L$, which covers the interior of the future light cone. Notice that in  these coordinates the metric has a ``cone'' form \eqref{Flat_Ambient_4}, with $g^+$ given in \eqref{eq:adsglob}, which is the $(d+1)$-dimensional Euclidean AdS metric $g^+_G$ in global coordinates. This AdS metric can be converted into the FG from by transforming the coordinate $r$ to a coordinate $z$
\begin{equation}\label{stereo_to_FG}
r= L\left(\frac{2L-z}{2L+z}\right)\,.
\end{equation}
Then, the metric \eqref{app:2} takes the form
\begin{equation}
\label{eq:etaFG}
\eta = - \td \ell^2 + \frac{\ell^2}{z^2} \left(\td z^2 + L^2(1- \frac{1}{4}(z/L)^2)^2\td\Omega_{d}^2\right)\,,
\end{equation}
and the interior of the future light cone is now covered by  $\ell>0$, $ 0<z< 2L$. We can further convert \eqref{eq:etaFG} into the ambient form \eqref{ambient_metric} by setting
\begin{equation}\label{FG_to_ambient}
\ell=z t\,,\qquad z^2 =-2\rho\,,
\end{equation}
and the metric turns into the form shown in \eqref{Flat_Ambient_2}:
\begin{equation}\label{app:3}
\eta = 2\rho \td t^2 + 2t \td t \td\rho + t^2 (1+ \frac{\rho}{2 L^2})^2 L^2 \td\Omega_{d}^2 \,.
\end{equation} 
Plugging \eqref{FG_to_ambient} and \eqref{stereo_to_FG} into \eqref{Lorentzian_to_stereo} we find that
\beq
X^{0}+ R= 2L t\,,\qquad\tan\alpha\equiv\frac{R}{X^{0}}= \frac{1+ \frac{\rho}{2L^2}}{1- \frac{\rho}{2L^2}}\,,
\eeq
where $R^2 = \sum\limits_{i=1}^{d+1} (X^{i})^2 $. From the above equation one can see that the constant-$t$ and constant-$\rho$ surfaces are indeed the cones depicted in Figure~\ref{fig:cones}, with  $m$ the angle of the constant-$\rho$ cone with respect to the $X^{0}$-axis.
\par
The Minkowski metric \eqref{app:1} can also be written in the cone form with $g^+=g^+_P$ the Euclidean AdS metric in Poincar\'e coordinates given in \eqref{eq:adsPoin}. Introduce another coordinate system $\{\ell,x^{i},z\}$ as follows:
\begin{equation}
X^{0}= \frac{\ell}{2L z}\left(L^{2}+ \sum_{i=1}^{d}(x^{i})^{2} + z^{2}\right)\,,\quad X^{d+1}= \frac{\ell}{2Lz}\left(L^{2}- \sum_{i=1}^{d}(x^{i})^{2}- z^{2}\right)\,,\quad X^{i}= \frac{\ell x^{i}}{z}\,.
\end{equation}
The metric \eqref{app:1} becomes 
\begin{equation}\label{app:4}
\eta= -\td \ell^2 + \frac{\ell^2}{z^2}\left(\td z^2+ \delta_{ij} \td x^{i}\td x^{j}\right),\qquad i=1,\cdots,d\,,\qquad z>0\,.
\end{equation}
Define the ambient coordinate system $\{t,x^{i},\rho\}$ as
\begin{equation}
\ell=zt\,,\qquad z^2 =-2\rho\,,
\end{equation}
then the metric \eqref{app:4} will have the form shown in \eqref{Flat_Ambient_3}
\begin{equation}\label{app:5}
\eta = 2\rho \td t^2 + 2t\td t \td\rho + t^2\delta_{ij} \td x^{i}\td x^{j}\,,\qquad i=1,\cdots,d\,.
\end{equation}

\section{Details of Null Frame Calculations}
\label{app:Null}
In Section \ref{sec:topdown} we introduced the following frame:
\begin{align}
\label{e+-}
\bm e^+&=\td t+ta_i\td x^i\,,\qquad\bm e^-=t\td\rho+\rho \td t-t\rho a_i\td x^i\,,\qquad \bm e^i=\td x^i\,,\\
\un D_+&=\un\p_t-\frac{\rho}{t}\un\p_\rho\,,\qquad\un D_-=\frac{1}{t}\un\p_\rho\,,\qquad\un D_i=\un\p_i-ta_i\un\p_t+2\rho a_i\un\p_\rho\,.
\end{align}
The metric \eqref{Weyl_ambient} can be written in this frame as
\begin{align*}
\tilde g=\bm e^+\otimes\bm e^-+\bm e^-\otimes\bm e^++t^2\gamma_{ij}\bm e^i\otimes\bm e^j\,,
\end{align*}
and the metric components read
\begin{align*}
\tilde g_{+-}&=\tilde g_{-+}=1\,,\qquad \tilde g_{ij}=t^2\gamma_{ij}\,,\qquad \tilde g^{+-}=\tilde g^{-+}=1\,,\qquad \tilde g^{ij}=\frac{1}{t^2}\gamma^{ij}\,.
\end{align*}
The commutation relations of the frame are as follows:
\begin{align}
\begin{split}
[\un D_+,\un D_i]&=-(a_i-\rho \varphi_i)\un D_+-\rho^2 \varphi_i\un D_-\,,\qquad[\un D_+,\un D_-]=0\,,\\
[\un D_-,\un D_i]&=(a_i+\rho\varphi_i)\un D_--\varphi_i\un D_+\,,\qquad[\un D_i,\un D_j]=-tf_{ij}\un D_++t\rho f_{ij}\un D_-\,,
\end{split}
\end{align}
where $\varphi=\p_\rho a_i$, and $f_{ij}=D_ia_j-D_ja_i$. From the above commutators we can read off the commutation coefficients:
\begin{align}
\begin{split}
C_{+i}{}^+&=-a_i+\rho \varphi_i\,,\qquad C_{+i}{}^-=-\rho^2 \varphi_i\,,\qquad C_{-i}{}^+=-\varphi_i\,,\\
C_{-i}{}^-&=a_i+\rho \varphi_i\,,\qquad C_{ij}{}^+=-tf_{ij}\,,\qquad C_{ij}{}^-=t\rho f_{ij}\,.
\end{split}
\end{align}
Then, we can compute the connection coefficients $\tilde\Gamma^{P}{}_{MN}$ of the ambient LC connection:
\begin{align}
\begin{split}
\tilde\Gamma^{P}{}_{MN}=&\frac{1}{2}\tilde g^{PQ}(D_M\tilde g_{NQ}+D_N\tilde g_{QM}-D_Q\tilde g_{MN})\\
&-\frac{1}{2} \tilde g^{PQ}(C_{MQ}{}^{R}\tilde g_{RN}+C_{NM}{}^{R}\tilde g_{RQ}-C_{QN}{}^{R}\tilde g_{RM})\,.
\end{split}
\end{align}
The nonvanishing components are
\begin{align}
\tilde\Gamma^{+}{}_{i+}&=a_i\,,\qquad
\tilde\Gamma^{+}{}_{ij}=-\frac{t}{2}(\p_\rho\gamma_{ij}+f_{ij})\,,\qquad
\tilde\Gamma^{-}{}_{ij}=-t\gamma_{ij}+\frac{\rho t}{2}(\p_\rho \gamma_{ij}+f_{ij})\,,\nn\\
\tilde\Gamma^{-}{}_{i-}&=-a_i\,,\qquad\tilde\Gamma^{i}{}_{j-}=\frac{1}{2t}\gamma^{ik}(\p_\rho\gamma_{jk}+f_{jk})\,,\qquad\tilde\Gamma^{i}{}_{j+}=\frac{1}{t}\delta^i{}_j-\frac{\rho}{2t}\gamma^{ik}(\p_\rho\gamma_{jk}+f_{jk})\,,\nn\\
\tilde\Gamma^{i}{}_{jk}&=\frac{1}{2}\gamma^{il}(\p_j\gamma_{lk}+\p_k\gamma_{jl}-\p_l\gamma_{jk})-(a_j\delta^i{}_k+a_k\delta^i{}_j-a^i\gamma_{jk})+\rho\gamma^{il}(a_j\p_\rho \gamma_{lk}+a_k\p_\rho \gamma_{jl}-a_l\p_\rho \gamma_{jk})\,,\nn\\
\tilde\Gamma^{+}{}_{+i}&=\rho \varphi_i\,,\qquad\tilde\Gamma^{i}{}_{++}=\frac{\rho^2}{t^2}\gamma^{ij}\varphi_j\,,\qquad\tilde\Gamma^{-}{}_{+i}=-\rho^2\varphi_i\,,\qquad\tilde\Gamma^{i}{}_{+-}=-\frac{\rho}{t^2}\gamma^{ij}\varphi_j\,,\nn\\
\tilde\Gamma^{+}{}_{-i}&=-\varphi_i\,,\qquad\tilde\Gamma^{i}{}_{-+}=-\frac{\rho}{t^2}\gamma^{ij}\varphi_j\,,\qquad\tilde\Gamma^{-}{}_{-i}=\rho \varphi_i\,,\qquad\tilde\Gamma^{i}{}_{--}=\frac{1}{t^2}\gamma^{ij}\varphi_j\,,\nn\\
\tilde\Gamma^{i}{}_{+j}&=\frac{1}{t}\delta^i{}_j-\frac{\rho}{2t}\gamma^{ik}(\p_\rho\gamma_{jk}+f_{jk})\,,\qquad\tilde\Gamma^{i}{}_{-j}=\frac{1}{2t}\gamma^{ik}(\p_\rho\gamma_{jk}+f_{jk})\,,
\end{align}
which constitute the connection 1-form $\tilde{\bm\omega}^{M}{}_{N}$ presented in \eqref{eq:conn1form}. Then, using Cartan's second structure equation
\begin{align}
\tilde{\bm R}^{M}{}_{N}=\td\tilde{\bm\omega}^{M}{}_{N}+\tilde{\bm\omega}^{M}{}_{P}\wedge\tilde{\bm\omega}^{P}{}_{N}\,,
\end{align}
we can find the ambient curvature 2-form, the nonvanishing components are
\begin{align}
\tilde{\bm R}^+{}_i={}&-t(\hat\nabla_j\psi_{ki}-\rho \varphi_if_{jk})\bm e^j\wedge \bm e^k+(\p_\rho\psi_{ji}-\psi_{jk}\psi_i{}^k-\hat\nabla_j\varphi_i-2\rho\varphi_i\varphi_j)\bm e^j\wedge(\bm e^--\rho\bm e^+)\nn\,,\\
\tilde{\bm R}^-{}_i={}&\rho t(\hat\nabla_j\psi_{ki}-\rho \varphi_if_{jk})\bm e^j\wedge \bm e^k-\rho(\p_\rho\psi_{ji}-\psi_{jk}\psi_i{}^k-\hat\nabla_j\varphi_i-2\rho\varphi_i\varphi_j)\bm e^j\wedge(\bm e^--\rho\bm e^+)\nn\,,\\
\tilde{\bm R}^i{}_+={}&-\frac{\rho}{t}(\hat\nabla_j\psi_k{}^i-\rho\varphi^if_{jk})\bm e^j\wedge \bm e^k+\frac{\rho}{t^2}(\p_\rho\psi_j{}^i+\psi_k{}^i\psi_j{}^k-\hat\nabla_j\varphi^i-2\rho\varphi^i\varphi_j) \bm e^j\wedge(\bm e^--\rho\bm e^+)\nn\,,\\
\tilde{\bm R}^i{}_-={}&\frac{1}{t}(\hat\nabla_j\psi_k{}^i-\rho\varphi^if_{jk})\bm e^j\wedge \bm e^k\nn-\frac{1}{t^2}(\p_\rho\psi_j{}^i+\psi_k{}^i\psi_j{}^k-\hat\nabla_j\varphi^i-2\rho\varphi^i\varphi_j)\bm e^j\wedge(\bm e^--\rho\bm e^+)\nn\,,\\
\tilde{\bm R}^i{}_j={}&\frac{1}{2}(\bar R^i{}_{jkl}+\delta^i{}_jf_{kl})\bm e^k\wedge \bm e^l-(\delta_k{}^i\psi_{lj}+\psi_k{}^i\gamma_{lj}-2\rho\psi_k{}^i\psi_{lj}+\rho\psi_j{}^if_{kl})\bm e^k\wedge \bm e^l\nn\\
\label{eq:tildeRcomp}
&+\frac{1}{t}\gamma^{il}(\hat\nabla_l\psi_{jk}-\hat\nabla_j\psi_{lk}+2\rho f_{jl}\varphi_k) \bm e^k\wedge(\bm e^--\rho\bm e^+)\,,
\end{align}
where $\hat\nabla$ is introduced in \eqref{eq:hatnabla}, $\psi_{ij}\equiv\frac{1}{2}(\p_\rho\gamma_{ij}+f_{ij})$, and
\begin{align}
\bar R^{i}{}_{jkl}\equiv D_{k}\ti \Gamma^{i}{}_{lj}-D_{l}\ti \Gamma^{i}{}_{kj}+\ti \Gamma^{i}{}_{km}\ti \Gamma^{m}{}_{lj}-\ti \Gamma^{i}{}_{lm}\ti \Gamma^{m}{}_{kj}\,.
\end{align}
The components in \eqref{eq:tildeRcomp} constitute the curvature 2-form $\tilde{\bm R}^{M}{}_{N}$ presented in \eqref{eq:curv2form}. 
\par
Now one can derive the extended Weyl-obstruction tensors according to Definition \ref{def1}. For example, $\hat\Omega^{(1)}_{ij}$ and $\hat\Omega^{(2)}_{ij}$ can be computed as follows:
\begin{align*}
\tilde R_{-ij-}={}&\p_\rho\gamma_{ij}-\psi_{ik}\psi_j{}^k-\hat\nabla_{(i}\varphi_{j)}-2\rho\varphi_i\varphi_j\,,\\
\nabla_-\tilde R_{-ij-}={}&\frac{1}{t}\Big[\p^2_\rho\gamma_{ij}-2\psi_j{}^k{\cal B}_{ki}-2\psi_{i}{}^{k}{\cal B}_{kj}-\hat\nabla_{(i}(\p_\rho\varphi_{j)})-6\varphi_i\varphi_j+\varphi^k\varphi_k\gamma_{ji}-\psi_i{}^k\hat\nabla_j\varphi_k-\psi_j{}^k\hat\nabla_i\varphi_k\\
&\quad+\varphi^k(\hat\nabla_i\psi_{jk}+2\hat\nabla_j\psi_{ki}-2\hat\nabla_k\psi_{ji}+\hat\nabla_i\psi_{kj}-\hat\nabla_k\psi_{ij})\\
&\quad+2\rho\big(\varphi^k(\varphi_j\psi_{ik}+\varphi_i\psi_{kj}-\varphi_k\psi_{ij})-2\varphi^k\varphi_{(i}\psi_{j)k}-3\p_\rho\varphi_{(i}\varphi_{j)}-2\varphi^k\varphi_{(i}f_{j)k}\big)\Big]\,.
\end{align*}
Plugging the on-shell solution \eqref{Pgf}--\eqref{g6} in to the above expressions, one obtains the extended Weyl-obstruction tensors $\hat\Omega^{(1)}_{ij}$ and $\hat\Omega^{(2)}_{ij}$ given in \eqref{eq:Omega1} and \eqref{eq:Omega2}.
\par
From the components of the ambient Riemann curvature, we can also find the Ricci components in this frame:
\begin{align*}
\tilde R_{++}&=-\rho\tilde  R_{+-}=-\rho\tilde  R_{-+}=\rho^2\tilde R_{--}=-\frac{\rho^2}{t^2}(\gamma^{ij}\p_\rho\psi_{ji}+\psi_k{}^i\psi_i{}^k-\hat\nabla_i\varphi^i-2\rho\varphi^i\varphi_i)\,, \\
\tilde R_{i+}&=\tilde R_{+i}=-\rho \tilde R_{i-}=-\rho \tilde R_{-i}=-\frac{\rho}{t}(\hat\nabla_j\psi_i{}^j-\hat\nabla_i\theta-2\rho\varphi^jf_{ji})\,,\\
\tilde R_{ij}&=\bar R_{ij}+f_{ij}-(d-2)\psi_{ji}-\theta\gamma_{ji}+2\rho({\cal B}_{ij}+\theta\psi_{ji}-\psi_j{}^k\psi_{ki}-\psi_i{}^kf_{kj})\,,
\end{align*}
where ${\cal B}_{ij}$ is defined in \eqref{eq:Bij}. The Ricci-flatness condition gives the following three equations:
\begin{align}
\label{eq:R++0}
0&=\gamma^{ij}\p_\rho\psi_{ji}+\psi_k{}^i\psi_i{}^k-\hat\nabla_i\varphi^i-2\rho\varphi^i\varphi_i \,,\\
\label{eq:R+i0}
0&=\hat\nabla_j\psi_i{}^j-\hat\nabla_i\theta-2\rho\varphi^jf_{ji}\,,\\
\label{eq:Rij0}
0&=\bar R_{ij}+f_{ij}-(d-2)\psi_{ji}-\theta\gamma_{ji}+2\rho({\cal B}_{ij}+\theta\psi_{ji}-\psi_j{}^k\psi_{ki}-\psi_i{}^kf_{kj})\,.
\end{align}
In the leading order when $\rho=0$, the condition \eqref{eq:R++0} leads to the fact that $\hat\Omega^{(1)}_{ij}$ is traceless, and \eqref{eq:Rij0} gives the Bianchi identity $\hat\nabla^{(0)}_i\hat P^i{}_j=\hat\nabla^{(0)}\hat P$, where $\hat P$ is the trace of $\hat P_{ij}$.
\par
Differentiating $\bar R_{ij}$ with respect to $\rho$ yields
\begin{align}
\p_\rho\bar R_{ij}={}&\hat\nabla_k\hat\nabla_j\psi^{k}{}_i+\hat\nabla_k\hat\nabla_i\psi_j{}^k-\hat\nabla_k\hat\nabla^k\psi_{ji}-\hat\nabla_j\hat\nabla_i\theta-\hat\nabla_i\varphi_j+(d-1)\hat\nabla_i\varphi_j+\gamma_{ij}\hat\nabla_k\varphi^k\nn\\
&+4\rho a_k(\varphi_j\psi^k{}_i+\varphi_i\psi^k{}_j-\varphi^k\psi_{ij})-4\rho a_j\varphi_i\theta+2\rho\hat\nabla(\varphi_j\psi^k{}_i+\varphi_i\psi^k{}_j-\varphi^k\psi_{ij})-2\rho\hat\nabla_i\theta\nn\\
&+2\rho\varphi_k(\nabla_j\psi^k{}_i+\nabla_i\psi_j{}^k-\nabla^k\psi_{ji})-2\rho\varphi_j\hat\nabla_i\theta-2\rho\big((d+2)\varphi_i\varphi_j-\varphi_k\varphi^k\gamma_{ij}\big)\nn\\
\label{eq:Rijprho}
&+2\rho \varphi_k(\varphi_j\psi^k{}_i+\varphi_i\psi^k{}_j-\varphi^k\psi_{ij})-2\rho \varphi_j\varphi_i\theta\,,
\end{align}
which leads to \eqref{eq:psipole2} when differentiating \eqref{eq:Rij0}.

\section{Solving the Bulk Einstein Equations}
\label{app:B0}
To solve for $\gamma^{(2k)}_{ij}$ in the expansion \eqref{hex} in the WFG gauge from the Einstein equations, we first introduced the following notations:
\begin{align}
\varphi_i&\equiv D_za_i\,,\qquad f_{ij}\equiv D_i a_j-D_j a_i\,,\qquad\rho_{ij}\equiv\frac{1}{2}D_zh_{ij}\,,\qquad\theta\equiv\tr\rho\,,\nn\\
\label{quantities}
\psi_{ij}&\equiv\rho_{ij}+\frac{L}{2}f_{ij}\,,\qquad
\gamma^k{}_{ij}\equiv\Gamma^k{}_{ij}=\frac{1}{2}h^{kl}( D_i h_{lj} +D_j h_{il} - D_l h_{ji})\,.
\end{align}
Since the integral curves of $\un D_z$ form a congruence, some of these quantities can be interpreted as the properties of this congruence: $\varphi^i$ is the acceleration, $f_{ij}$ is the twist, $\theta$ is the expansion and $\sigma_{ij}\equiv\rho_{ij}-\frac{1}{d}\theta h_{ij}$ is the shear. By plugging in the expansions \eqref{hex} and \eqref{aex}, one can obtain the expansions of the quantities above. A list of these expansions enough for capturing the first two leading orders of the Einstein equations can be found in the Appendix of \cite{Ciambelli:2019bzz}.
\par
Using the connection coefficients $\Gamma^k{}_{ij}$ in the bulk, one can compute the curvature tensors and the Einstein tensor. Then, the vacuum Einstein equations can be written as
\begin{align}
\label{eomzz}
0&=G_{zz}+g_{zz}\Lambda=-\frac{1}{2}\tr(\rho\rho)-\frac{3L^2}{8}\tr(ff)-\frac{1}{2}\bar{R}+\frac{1}{2}\theta^2+\Lambda\,\\
\label{eomzm}
0&=G_{zi}+g_{zi}\Lambda=\nabla_j\psi^j{}_i-D_i\theta+L^2f_{ji}\varphi^j\,,\\
\label{eommn}
0&=G_{ij}+g_{ij}\Lambda=\bar{G}_{ij}-(D_z+\theta)\psi_{ij}-L\nabla_j\varphi_i+2\rho_{jk}\rho^k{}_i+\frac{L^2}{2}f_{jk}f^k{}_i-L^2\varphi_i\varphi_j\nn\\
&\qquad\qquad\qquad\qquad+h_{ij}\left(L\nabla_i\varphi^i+D_z\theta+\frac{1}{2}\tr(\rho\rho)-\frac{L^2}{8}\tr(ff)+L^2\varphi^2+\frac{1}{2}\theta^2+\Lambda\right)\,.
\end{align}
where $\Lambda=-\frac{d(d-1)}{2L^2}$ is the cosmological constant, and $\bar{R}=h^{ij}\bar R_{ij}$ with
\begin{align}
\bar{R}_{ij}=D_k\gamma^k
{}_{ji}-D_j\gamma^k{}_{ki}+\gamma^{k}{}_{kl}\gamma^{l}{}_{ji}-\gamma^{k}{}_{jl}\gamma^{l}{}_{ki}\,.
\end{align}
Denote $m_{(2k)j}^{i}\equiv\gamma_{(0)}^{ik}\gamma^{(2k)}_{kj}$ and $n_{(2k)j}^{i}\equiv\gamma_{(0)}^{ik}\pi^{(2k)}_{kj}$. Expanding \eqref{eomzz}--\eqref{eommn} using \eqref{hex} and \eqref{aex}, one can solve the Einstein equations order by order. First, the $zz$-component of the Einstein equations gives
\begin{align}
0={}&\bigg[\frac{d(d-1)}{2L^2}+\Lambda\bigg]-\frac{z^2}{L^2}\bigg[\frac{R^{(0)}}{2}+\frac{d-1}{L^2}X^{(1)}\bigg]+\frac{z^4}{L^4}\bigg[\frac{d}{2L^2}(X^{(1)})^2-\frac{2(d-1)}{L^2}X^{(2)}-\frac{1}{2L^2}\tr(m_{(2)}^2)\nn\\
&-\frac{3L^2}{8}\tr(f_{(0)}^2)-\frac{1}{2}\Big(\gamma_{(0)}^{kj}\hat\nabla^{(0)}_k\hat\nabla_i \big(m_{(2)}{}^{i}{}_{j}-\tr (m_{(2)})\delta^i{}_j\big)
+2(d-1)\hat\nabla\cdot a^{(2)}
-\tr\big(m_{(2)}\gamma_{(0)}^{-1}R^{(0)}\big)\Big)\bigg]\nn\\
&+\cdots-\frac{z^d}{L^d}(d-1)\bigg[\frac{d}{2L^2}Y^{(1)}+\hat\nabla\cdot p_{(0)}\bigg]+\cdots\,,
\end{align}
where $X^{(1)}$, $X^{(2)}$ and $Y^{(1)}$ are given in expansion \eqref{sqrth}, which can be expressed in terms of the expansion of $h_{ij}$ as
\begin{align}
\label{XY}
X^{(1)}&=\tr(m_{(2)})\,,\quad X^{(2)}=\tr(m_{(4)})-\frac{1}{2}\tr(m_{(2)}^2)+\frac{1}{4}\left(\tr(m_{(2)})\right)^2\,,\,\cdots\,,Y^{(1)}=\tr(n_{(0)})\,,\,\cdots\,.
\end{align}
At the $O(1)$-order, the $zz$-equation is trivially satisfied, and at the $O(z^2)$-order, we can find that
\begin{align}
X^{(1)}=-\frac{L^2}{2(d-1)} R^{(0)}=-L^2\hat P\,.
\end{align}
Then, using the above result we can obtain from the $O(z^4)$-order that
\begin{align}
X^{(2)}&=-\frac{1}{4}\tr(m_{(2)}^2)+\frac{1}{4}(X^{(1)})^2-\frac{L^2}{2}\hat\nabla\cdot a^{(2)}-\frac{L^4}{16}\tr(f^{(0)}f^{(0)})\nn\\
&=-\frac{L^4}{4}\tr(\hat P^2)+\frac{L^4}{4}\hat P^2-\frac{L^2}{2}\hat\nabla\cdot a^{(2)}\,,
\end{align}
where we used \eqref{Pgf0}. Also notice that the $O(z^d)$-order gives the Weyl-Ward identity
\begin{align}
0=\frac{d}{2L^2}Y^{(1)}+\hat\nabla\cdot p_{(0)}\,.
\end{align}
\par
Now we look at the $ij$-components of the Einstein equations:
\begin{align}
0={}&\bigg[G^{(0)}_{ij}+\frac{d}{2}f^{(0)}_{ij}-\frac{d-2}{L^2}X^{(1)}\gamma^{(0)}_{ij}+\frac{d-2}{L^2}\gamma^{(2)}_{ij}\bigg]+\frac{z^2}{L^2}\bigg[\frac{1}{2}\hat\nabla_k\Big(\gamma_{(0)}^{kl}\Big(\hat\nabla_j \gamma^{(2)}_{li}+\hat\nabla_i \gamma^{(2)}_{lj}-\hat\nabla_l \gamma^{(2)}_{ij}\Big)\Big)\nn\\
&-\frac{1}{2}\gamma^{(0)}_{ij}\hat\nabla_i\hat\nabla_j \Big(\gamma_{(2)}^{ij}-X^{(1)}\gamma_{(0)}^{ij}\Big)-\frac{1}{2}\hat\nabla_{(i}\hat\nabla_{j)} X^{(1)}+(d-4)\big(\hat\nabla^{(0)}_{(i} a_{j)}^{(2)}-\gamma_{ij}^{(0)}\hat\nabla\cdot a^{(2)})\nn\\
\label{emn}
&+\frac{2(d-4)}{L^{2}}\gamma^{(4)}_{ij}+\frac{2}{L^{2}}m_{(2)}^k{}_i\gamma^{(2)}_{kj}+\frac{L^2}{2}f^{(0)}_{jk}f^{(0)}_{li}\gamma_{(0)}^{lk}+\Big(\frac{1}{2}\tr(m_{(2)}\gamma_{(0)}^{-1}{R}^{(0)})-\frac{L^2}{8}\tr(f^{(0)}f^{(0)})\nn\\
&-\frac{2(d-4)}{L^2}X^{(2)}+\frac{d-3}{2L^2}(X^{(1)})^2+\frac{1}{2L^2}\tr(m_ {(2)}^2)\Big)\gamma^{(0)}_{ij}\bigg]+\cdots\,.
\end{align}
Note that $\gamma_{(2)}^{ij}\equiv(\gamma_{(0)}^{-1}\gamma^{(2)}\gamma_{(0)}^{-1})^{ij}$ is not the inverse of $\gamma^{(2)}_{ij}$. Plugging in the results we got from the $zz$-equation, we obtain from the first two leading orders of \eqref{emn} that 
\begin{align}
\gamma^{(2)}_{ij}={}&-\frac{L^2}{d-2}\bigg(R^{(0)}_{(ij)}-\frac{1}{2(d-1)}R^{(0)}\gamma^{(0)}_{ij}\bigg)\,,\\
\gamma^{(4)}_{ij}={}&-\frac{L^2}{4(d-4)}\bigg(2\hat\nabla_k\hat\nabla_{(i} m_{(2)}{}^{k}{}_{j)}
-\hat\nabla\cdot\hat\nabla \gamma^{(2)}_{ij}-\hat\nabla_{(i}\hat\nabla_{j)} X^{(1)}-\frac{1}{L^2}\gamma^{(0)}_{ij}\tr(m_{(2)}^2)+\frac{4}{L^2}m_{(2)}^k{}_i\gamma^{(2)}_{kj}\nn\\
&\qquad\qquad\qquad+L^2f^{(0)}_{jk}f^{(0)}_{li}\gamma_{(0)}^{lk}-\frac{L^2}{4}\tr(f^{(0)}f^{(0)})\gamma^{(0)}_{ij}\bigg)-\frac{L^2}{2}\hat\nabla^{(0)}_{(i} a_{j)}^{(2)}\,.
\end{align}
\par
Furthermore, expanding \eqref{emn} to the $O(z^4)$-order one obtains
\begin{align}
\gamma^{(6)}_{ij}=&-\frac{L^2}{3(d-6)}\bigg[\hat\nabla_k\hat\gamma^k_{(4)ij}-\frac{1}{2}\hat\nabla_{(i}\hat\nabla_{j)} \tr(m_{(4)})-\hat\nabla_k(\hat\gamma^l_{(2)ij}{m}_{(2)}^k{}_l)+\hat\nabla_{(j}(\hat\gamma^l_{(2)i)k}{m}_{(2)}^k{}_l)\\
&+\frac{1}{2}\hat\nabla_l X^{(1)}\hat\gamma^l_{(2)ij}-\hat\gamma^l_{(2)ik}\hat\gamma^k_{(2)lj}-\frac{2}{L^2}(m_{(2)}^3)^k{}_j\gamma^{(0)}_{ik}+\frac{8}{L^2}\gamma^{(4)}_{k(i}m_{(2)}^k{}_{j)}-\frac{1}{L^2}\gamma^{(4)}_{ij}X^{(1)}\nn\\
&-\frac{L^2}{2}f^{(0)}_{li}f^{(0)}_{jk}\gamma_{(2)}^{kl}+L^2f^{(2)}_{l(i}f^{(0)}_{j)k}\gamma_{(0)}^{kl}-\frac{1}{L^2}\gamma^{(0)}_{ij}\Big(\tr(m_{(4)}m_{(2)})-\frac{1}{2}\tr(m_{(2)}^3)-\frac{L^4}{8}\tr(m_{(2)} f_{(0)}^2)\nn\\
&-\frac{L^4}{4}\hat\nabla_k a^{(2)}_l f_{(0)}^{kl}-\frac{L^2}{4}\hat\nabla_l X^{(1)} a_{(2)}^l+\frac{L^2}{2}\hat\nabla_k(\gamma_{(2)}^{kl} a^{(2)}_l)\Big)+2\hat\nabla_k({m}_{(2)}^k{}_{(j} a^{(2)}_{i)})-2\gamma^{(0)}_{l(j}\hat\gamma^l_{(2)i)k}a_{(2)}^k\nn\\
&-a^{(2)}_{(j}\hat\nabla_{i)} X^{(1)}-\hat\nabla_{(j}(X^{(1)} a^{(2)}_{i)})\bigg]-\frac{L^2}{3}\hat\nabla_{(i}a^{(4)}_{j)}-L^2a^{(2)}_{i}a^{(2)}_{j}+\frac{L^2}{6}a^{(2)}\cdot a^{(2)}\gamma^{(0)}_{ij}+\frac{L^2}{3}\hat\gamma^k_{(2)ij}a^{(2)}_{k}\nn\,,
\end{align}
where $f^{(2)}_{ij}\equiv\hat\nabla_i a^{(2)}_j-\hat\nabla_j a^{(2)}_i$, and 
\begin{align}
\hat\gamma^k_{(2)ij}&=\frac{1}{2}\gamma_{(0)}^{kl}(\hat\nabla^{(0)}_i\gamma^{(2)}_{jl}+\hat\nabla^{(0)}_j\gamma^{(2)}_{il}-\hat\nabla^{(0)}_l\gamma^{(2)}_{ij})=-\frac{L^2}{2}(\hat\nabla^{(0)}_i \hat P^k{}_{j}+\hat\nabla^{(0)}_j\hat P_i{}^k-\hat\nabla_{(0)}^k\hat P_{ij})\,.
\end{align} 
(In the second step we used $\hat\nabla^{(0)}_i f^{(0)}_{jk}+\hat\nabla^{(0)}_j f^{(0)}_{ki}+\hat\nabla^{(0)}_k f^{(0)}_{ij}=0$.) The $\gamma^{(4)}_{ij}$ and $\gamma^{(6)}_{ij}$ above can be organized in to \eqref{g40} and \eqref{g60}, respectively.
\par
Finally, the $zi$-component of the Einstein equations gives
\begin{align}
0=&-\frac{L}{d-2}\frac{z^2}{L^2}\gamma_{(0)}^{mn}\hat\nabla_m^{(0)}\hat G^{(0)}_{ni}+L^{-1}\frac{z^4}{L^4}\bigg[\hat\nabla_m\big (2m_{(4)i}^m-(m_{(2)}^2)^m{}_i\big)+\frac{1}{2}m_{(2)i}^m\hat\nabla_m X^{(1)}\nn\\
&+\frac{L^2}{2}\bigg(\hat\nabla\cdot\hat\nabla a_i^{(2)}-\hat\nabla_i\hat\nabla\cdot a^{(2)}+(R^{(0)}_{ni}+4f^{(0)}_{ni})\gamma_{(0)}^{mn}a^{(2)}_m-\hat\nabla_m\big(f^{(0)}_{ni}m_{(2)k}^m\gamma_{(0)}^{kn}\big)\nn\\
&-f^{(0)}_{jk}\gamma_{(0)}^{mj}\hat\nabla_m m_{(2)}^k{}_i+\frac{1}{2}f^{(0)}_{ni}\gamma_{(0)}^{mn}\hat\nabla_m X^{(1)}\bigg)-2\hat\nabla_i X^{(2)}+\frac{1}{2}\hat\nabla_i (X^{(1)})^2-\frac{1}{4}\hat\nabla_i\tr(m_{(2)}^2)
\bigg]+\cdots\nn\\
\label{emz}
&+\frac{z^d}{L^d}\bigg[\frac{d}{2L}\hat\nabla_m n_{(0)i}^m+\frac{L}{2}(\hat\nabla\cdot\hat\nabla p_i^{(0)}+\hat\nabla_m\hat\nabla_i p_{(0)}^m)\bigg]+\cdots\,.
\end{align}
One can observe that the $O(z^2)$-order of the above equation is exactly the contraction of the Weyl-Bianchi identity as shown in \eqref{BI}. By plugging in the results we got from the $zz$-equation, the $O(z^4)$-order can be organized into the identity \eqref{divWB}, which demonstrates the divergence of the Bach tensor. Also, the $O(z^d)$-order gives the conservation law of the improved energy-momentum tensor defined in \eqref{improvedT}.

\section{Expansions of the Raychaudhuri Equation and $\sqrt{-\det h}$}
\label{app:expansion}
Using the components of the Einstein equations \eqref{eomzz}--\eqref{eommn}, one can construct the following equation \cite{Ciambelli:2019bzz}:
\begin{align}
\label{eqa}
0&=\frac{g^{MN}(G_{MN}+\Lambda g_{MN})}{d-1}+(G_{zz}+\Lambda g_{zz})\nn\\
&=D_z\theta+L\nabla_j\varphi^j+L^2\varphi^2+\tr(\rho\rho)+\frac{L^2}{4}\tr(ff)-\frac{d}{L^2}\,,
\end{align}
where the indices $M,N$ represent the bulk components as $M=(z,i)$. This equation can be recognized as the Raychaudhuri equation of the congruence generated by $\un D_z$. Expanding each term in the above equation, we can write down a general expansion of this equation to any order. This combination of the components of the Einstein equations contains all the information we need for deriving $X^{(k)}$. We here provide some details of deriving $X^{(3)}$ and $X^{(4)}$ by means of the Raychaudhuri equation. 
\par
Recall that we have the expansion \eqref{hinv} of the inverse of $h_{ij}$:
\begin{align}
h^{ij}(z;x)&=\frac{z^2}{L^2}\left[\gamma_{(0)}^{ij}(x)+\frac{z^2}{L^2}\gamma_{(2)}^{ij}(x)+...\right]+\frac{z^{d+2}}{L^{d+2}}\left[\pi_{(0)}^{ij}(x)+\frac{z^2}{L^2}\pi_{(2)}^{ij}(x)+...\right]\\
&=\frac{z^2}{L^2}\left[\gamma_{(0)}^{ij}(x)-\frac{z^2}{L^2}\tilde{m}_{(2)k}^{i}\gamma_{(0)}^{kj}(x)-\frac{z^4}{L^4}\tilde{m}_{(4)k}^{i}\gamma_{(0)}^{kj}(x)+\cdots\right]\nn+\frac{z^{d+2}}{L^{d+2}}\left[\tilde{n}_{(2)k}^{i}\gamma_{(0)}^{kj}(x)+\cdots\right]\,,
\end{align}
where $\tilde{m}_{(2k)j}^{i}\equiv-\gamma_{(2k)}^{ik}\gamma^{(0)}_{kj}$, $\tilde{n}_{(2k)j}^{i}\equiv-\pi_{(2k)}^{ik}\gamma^{(0)}_{kj}$. By taking the inverse of the metric, one finds the following relation:
\begin{align}
m_{(2p)}-\tilde{m}_{(2p)}=\sum^{p-1}_{k=1}\tilde{m}_{(2k)}m_{(2p-2k)}\,.
\end{align}
Specifically, we have
\begin{align}
m_{(2)}-\tilde{m}_{(2)}=0\,,\qquad m_{(4)}-\tilde{m}_{(4)}=m^2_{(2)}\,,\qquad m_{(6)}-\tilde{m}_{(6)}=m_{(2)}m_{(4)}+\tilde m_{(4)}m_{(2)}\,.
\end{align}
Now we expand the quantities defined in \eqref{quantities} to an arbitrary order by plugging the expansions \eqref{hex}, \eqref{aex} and \eqref{hinv} into their definitions. For the purpose of finding the Weyl anomaly, here we only keep the $m_{(2p)}$ and $a_{(2p)}$ terms in the first series of $h_{ij}$ and $a_i$ and neglect the $n_{(2p)}$ and $p_{(2p)}$ terms. The expansions of these quantities are

\begin{align}
\label{exp1}
\rho^i{}_j=&-\delta^i{}_j+\frac{1}{2}\sum_{p=1}^\infty\left(\frac{z}{L}\right)^{2p}\Big[p(m_{(2p)}+\tilde{m}_{(2p)})+\sum_{k=1}^{p-1}(2k-p)\tilde{m}_{(2k)}m_{(2p-2k)}\Big]^i{}_j+O(z^d)\,,\\
\theta=&-\frac{d}{L}+\frac{1}{2L}\sum^{\infty}_{p=1}\bigg(\frac{z}{L}\bigg)^{2p}\bigg[p\tr(m_{(2p)}+\tilde{m}_{(2p)})+\sum^{p-1}_{k=1}(2k-p)\tr\tilde{m}_{(2k)}m_{(2p -2k)}\bigg]+O(z^d)\,,
\end{align}
\begin{align}
\varphi_i={}&\frac{1}{L}\sum^\infty_{p=0}\left(\frac{z}{L}\right)^{2p}2pa_i^{(2p)}+O(z^{d-2})\,,\\
f_{ij}=&\sum_{p=0}^{\infty}\left(\frac{z}{L}\right)^{2p}\big[f^{(2p)}_{ij}+\sum_{q=1}^{p-1} 2q(a_i^{(2p-2q)} p_j^{(2q)}-a_j^{(2p-2q)} p_i^{(2q)})\big]+O(z^{d-2})\,,\\
\gamma^k{}_{ij}={}&\gamma^k_{(0)ij}-\sum_{p=1}^\infty\left(\frac{z}{L}\right)^{2p}\bigg(\sum_{q=0}^{p-1}\tilde{m}_{(2q)}^k{}_l\hat\gamma^l_{(2p-2q)ij}+\frac{1}{2}\sum_{q=0}^{p-1}[\tilde{m}_{(2q)}\gamma_{(0)}^{-1}]^{kl}\sum_{k=0}^{p-q-1}(2k-2)\qquad\nn\\
\label{exp5}
&\times(a^{(2p-2q-2k)}_i\gamma^{(2k)}_{jl}+a^{(2p-2q-2k)}_j\gamma^{(2k)}_{il}-a^{(2p-2q-2k)}_l\gamma^{(2k)}_{ij})\bigg)+O(z^{d-2})\,,
\end{align}
where 
\begin{align*}
f^{(0)}_{ij}&=\p_i a_j^{(0)}-\p_j a_i^{(0)}\,,\qquad f^{(2k)}_{ij}=\hat\nabla^{(0)}_i a_j^{(2k)}-\hat\nabla^{(0)}_j a_i^{(2k)}\quad(k>0)\,,\\
\gamma^k_{(0)ij}&=\frac{1}{2}\gamma_{(0)}^{kl}\big(
 \p_i \gamma^{(0)}_{jl}
 +\p_j \gamma^{(0)}_{il}-\p_l \gamma^{(0)}_{ij}\big)-\big(a^{(0)}_i\delta^k{}_j+a^{(0)}_j\delta^k{}_i-a^{(0)}_l\gamma_{(0)}^{kl}\gamma^{(0)}_{ij}\big)\,,\\
\hat\gamma^k_{(2k)ij}&=\frac{1}{2}\gamma_{(0)}^{kl}(\hat\nabla^{(0)}_i\gamma^{(2k)}_{jl}+\hat\nabla^{(0)}_j\gamma^{(2k)}_{il}-\hat\nabla^{(0)}_l\gamma^{(2k)}_{ij})\quad(k>0)\,.
\end{align*} 
Expanding everything in \eqref{eqa} using \eqref{exp1}--\eqref{exp5}, we obtain the following equation:
\begin{align}
0={}&\frac{1}{L^2}p(p-1)\tr(m_{(2p)}+\tilde{m}_{(2p)})+\frac{1}{L^2}\sum^{p-1}_{q=1}(p-1)(2q-p)\tr\tilde{m}_{(2q)}m_{(2p-2q)}\nn\\
&-\sum^{p-1}_{q=1}2q\hat\nabla_i a_j^{(2q)}\big[\tilde{m}_{(2p-2q-2)}\gamma_{(0)}^{-1}\big]^{ij}-\sum^{p-1}_{q=1}\sum^{q-1}_{k=0}(2p-2q+2k)2ka_i^{(2p-2q)}a_j^{(2k)}\big[\tilde{m}_{(2q-2k-2))}\gamma_{(0)}^{-1}\big]^{ij}\nn\\
&-\sum^{p-1}_{q=1}\sum_{k=0}^{q-1}\sum_{n=0}^{p-q-1}na_k^{(2n)}[\tilde{m}_{(2p-2q-2n-2)}\gamma^{-1}_{(0)}]^{ij}\bigg(\tilde{m}_{(2k)}^k{}_l\hat\gamma^l_{(2q-2k)ij}\nn\\
&\quad-[\tilde{m}_{(2k)}\gamma_{(0)}^{-1}]^{kl}\sum_{m=0}^{q-k-1}(2-2m)(a^{(2q-2k-2m)}_i\gamma^{(2m)}_{jl}+a^{(2q-2k-2m)}_j\gamma^{(2m)}_{il}-a^{(2q-2k-2m)}_l\gamma^{(2m)}_{ij})\bigg)\nn\\
&+\frac{1}{4L^2}\sum_{q=1}^{p-1}(p-q)\tr\bigg[(m_{(2p-2q)}+\tilde{m}_{(2p-2q)})\Big[q(m_{(2q)}+\tilde{m}_{(2q)})+\sum_{k=1}^{q-1}2(2k-q)\tilde{m}_{(2k)}m_{(2q-2k)}\Big]\bigg]\nn\\
&+\frac{1}{4L^2}\sum_{q=1}^{p-1}\sum_{k=1}^{q-1}\sum_{m=1}^{p-q-1}(2k-q)(2m-p+q)\tr\big[\tilde{m}_{(2k)}m_{(2q-2k)}\tilde{m}_{(2m)}m_{(2p-2q-2m)}\big]\nn\\
&+\frac{L^2}{4}\sum_{q=1}^{p-1}\sum_{k=0}^{q-1}\big[f^{(2k)}_{il}+\sum_{m=1}^{k-1} 2m(a_i^{(2k-2m)} a_l^{(2m)}-a_l^{(2k-2m)} a_i^{(2m)})\big][\tilde{m}_{(2q-2k-2)}\gamma^{-1}_{(0)}]^{lj}\nn\\
\label{Rayex}
&\quad\times\sum_{n=0}^{p-q-1}\big[f^{(2n)}_{jl}+\sum_{s=1}^{n-1} 2s(a_j^{(2n-2s)} a_l^{(2s)}-a_l^{(2n-2s)} a_j^{(2s)})\big][\tilde{m}_{(2p-2q-2n-2)}\gamma^{-1}_{(0)}]^{li}\,.
\end{align}
From this equation, one can find $\tr(m_{(2p)}+\tilde{m}_{(2p)})$ in terms of $m_{(2q)}$ and $\tilde{m}_{(2q)}$ for all $q<p$. 
\par
Taking $p=3$ we get the Raychaudhuri equation at the $O(z^6)$-order:
\begin{align}
0={}&\frac{6}{L^2}\tr(m_{(6)}+\tilde{m}_{(6)})+\frac{4}{L^2}\tr(m_{(4)}m_{(2)})-\frac{4}{L^2}\tr(m_{(2)}^3)-\frac{L^2}{2}m_{(2)}^i{}_m f^m_{(0)n} f^n_{(0)i}\nn\\
\label{Ray6d}
&+4\hat\nabla\cdot a^{(4)}-2m_{(2)}^i{}_k\gamma_{(0)}^{kj}\hat\nabla_j a_i^{(2)}-2\gamma_{(0)}^{ij}\hat\gamma^{k}_{(2)ij}a^{(2)}_k-2(d-6)a^2_{(2)}+\frac{L^2}{2}f^{(2)}_{ij}f_{(0)}^{ji}\,.
\end{align}
And for $p=4$, we have the Raychaudhuri equation at the $O(z^8)$-order: 
\begin{align}
0={}&\frac{12}{L^2}\tr(m_{(8)}+\tilde{m}_{(8)})+\frac{6}{L^2}\tr(m_{(6)}m_{(2)})-\frac{22}{L^2}\tr(m_{(4)}m_{(2)}^2)+\frac{9}{L^2}\tr(m_{(2)}^4)+\frac{4}{L^2}\tr(m_{(4)}^2)\nn\\
&+\frac{L^2}{4}f^{(0)}_{ik}f_{(0)}^{jl}m_{(2)}^k{}_j m_{(2)}^i{}_l+\frac{L^2}{2}f^{(0)}_{ik}f_{(0)}^{kl}(m_{(2)}^2)^i{}_l-\frac{L^2}{2}f^{(0)}_{ik}f_{(0)}^{kl}(m_{(4)})^i{}_l+6\hat\nabla\cdot a^{(6)}\nn\\
&-4\hat\nabla_i a_j^{(4)}\gamma_{(2)}^{ij}+L^2\hat\nabla_{[i}a^{(4)}_{k]}f_{(0)}^{ki}-4a_l^{(4)}\gamma_{(0)}^{ij}\hat\gamma_{(2)}^l{}_{ij}-6(d-8)a^{(4)}\cdot a^{(2)}-2\hat\nabla_i a_j^{(2)}\gamma_{(4)}^{ij}\nn\\
&-2a_l^{(2)}\gamma_{(0)}^{ij}\hat\gamma_{(4)}^l{}_{ij}+2\hat\nabla_i a_j^{(2)}(m_{(2)}^2)^i{}_k\gamma_{(0)}^{kj}+L^2\hat\nabla_{[i}a^{(2)}_{k]}\hat\nabla^{[k}a_{(2)}^{i]}-2L^2\hat\nabla_{[i}a^{(2)}_{k]}f_{(0)}^{kl}m_{(2)}^i{}_l\nn\\
\label{Ray8d}
&+2a_l^{(2)}\gamma_{(2)}^{ij}\hat\gamma_{(2)}^l{}_{ij}+2a_k^{(2)}\gamma_{(0)}^{ij}m_{(2)}^k{}_l\hat\gamma_{(2)}^l{}_{ij}+2(d-8)a^{(2)}_i a^{(2)}_j\gamma_{(2)}^{ij}+2X^{(1)}a^{(2)}\cdot a^{(2)}\,.
\end{align}
\par
Now let us look at the expansion of $\sqrt{-\det h}$. Using the fact that $\theta=D_z(\ln \sqrt{-\det h})$, we can write down the expansion of $\sqrt{-\det h}$ to any order as
\begin{align}
\label{deth}
\sqrt{-\det h}={}&\sqrt{-\det\gamma_{(0)}}\bigg(\frac{z}{L}\bigg)^{-d}\sum_0^\infty\frac{1}{n!}\\
&\times\left[\frac{1}{2}\sum^{\infty}_{m=1}\bigg(\frac{z}{L}\bigg)^{2m}\bigg[\frac{1}{2}\tr(m_{(2m)}+\tilde{m}_{(2m)})+\sum^{m-1}_{k=1}\bigg(\frac{k}{m}-\frac{1}{2}\bigg)\tr(\tilde{m}_{(2k)}m_{(2m-2k)})\bigg]\right]^n\nn\,.
\end{align}
Comparing with \eqref{sqrth}, at the $O(z^6)$-order and the $O(z^8)$-order, the above equation gives respectively
\begin{align}
\label{X3}
X^{(3)}={}&\frac{1}{2}\tr(m_{(6)}+\tilde{m}_{(6)})-\frac{1}{6}\tr (m_{(2)}^3)+\frac{1}{2}X^{(1)}X^{(2)}-\frac{1}{12}(X^{(1)})^3\,,\\
X^{(4)}={}&\frac{1}{2}\tr(m_{(8)}+\tilde{m}_{(8)})-\frac{1}{2}\tr(m_{(4)}m_{(2)}^2)+\frac{1}{4}\tr(m_{(2)}^4)\nn\\
\label{X4}
&+\frac{1}{2}X^{(3)}X^{(1)}-\frac{1}{4}X^{(2)}(X^{(1)})^2+\frac{1}{4}(X^{(2)})^2+\frac{1}{32}(X^{(1)})^{4}\,.
\end{align}
Now solving for $\tr(m_{(6)}+\tilde{m}_{(6)})$ from \eqref{Ray6d} and plugging \eqref{g2}, \eqref{g40} and \eqref{X1X2} into \eqref{X3}, we can organize all the $m_{(2)}$ and $f_{(0)}$ terms in $X^{(3)}$ and get \eqref{X3P}. Similarly, plugging $\tr(m_{(8)}+\tilde{m}_{(8)})$ obtained from \eqref{Ray8d} into \eqref{X4}, the expression for $X^{(4)}$ can be organized in terms of the Weyl-Schouten tensor and extended Weyl-obstruction tensors as
\begin{align}
\frac{24}{L^2}X^{(4)}={}&L^6\bigg(\frac{1}{8}\hat P^4-\frac{3}{4}\tr(\hat P^2)\hat P^2+\frac{3}{8}[\tr(\hat P^2)]^2+\tr(\hat P^3)\hat P-\frac{3}{4}\tr(\hat P^4)-\tr(\hat\Omega_{(1)}\hat P)\hat P+\tr(\hat\Omega_{(1)}\hat P^2)\nn\\
&-\frac{1}{4}\tr(\hat\Omega_{(1)}^2)-\frac{1}{4}\tr(\hat\Omega_{(2)}\hat P)\bigg)+2(d-8)\big[3a^{(4)}\cdot a^{(2)}+a^{(2)}_i a^{(2)}_j(\hat P^{ij}-\hat P\gamma_{(0)}^{ij})\big]-6\hat\nabla\cdot a^{(6)}\nn\\
&-L^2\hat\nabla_i \big[a_j^{(4)}(4\hat P^{ij}+2\hat P^{ji}-4\hat P\gamma_{(0)}^{ij})\big]-\frac{L^2}{2}\hat\nabla_i \big[a^{(2)}_{j}(3\hat\nabla^{j}a_{(2)}^{i}+\hat\nabla^{i} a^{j}_{(2)}-3\hat\nabla\cdot a_{(2)}\gamma_{(0)}^{ij})\big]\nn\\
&+L^4\hat\nabla_i\big[a^{(2)}_j(3\hat P^{ij}\hat P+\hat P^{ji}\hat P)\big]+\frac{3L^4}{2}\hat\nabla^i\big[a^{(2)}_i(\tr(\hat P^2)-\hat P^2) \big]-\frac{3L^4}{2}\hat\nabla_i (a_j^{(2)}\hat\Omega_{(1)}^{ij})\nn\\
\label{X8t}
&-\frac{L^4}{4}\hat\nabla_{i}\big[a_{j}^{(2)}(3\hat P^{ki}\hat P^j{}_k-5\hat P^{ki}\hat P_k{}^j+7\hat P^{ik}\hat P_k{}^j-9\hat P^{ik}\hat P^j{}_k)\big]\,,
\end{align}
which leads to \eqref{X4P}.

\section{Proof of Lemma \ref{lem:Riem}}
\label{app:Lemproof}
\begin{proof}[Proof of Lemma \ref{lem:Riem}]
We will prove this identity by induction. First, noticing that $\tilde R_{-+MN}=0$, when $n=0$ we have
\begin{align*}
\tilde\nabla_i\tilde R_{-+MN}&=-\tilde\Gamma^j{}_{i-}\tilde R_{j+MN}-\tilde\Gamma^{j}{}_{i+}\tilde R_{-jMN}=\frac{1}{t}\psi_i{}^j \tilde R_{+jMN}-\frac{1}{t}(\delta^j{}_i-\rho\psi^j{}_i )\tilde R_{-jMN}\\
&=-\frac{\rho}{t}\psi_i{}^j \tilde R_{-jMN}-\frac{1}{t}(\delta^j{}_i-\rho\psi_i{}^j)\tilde R_{-jMN}=-\frac{1}{t}\tilde R_{-iMN}\,,\\
\tilde\nabla_-\tilde R_{-+MN}&=-\tilde\Gamma^j{}_{--}\tilde R_{j+MN}-\tilde\Gamma^{j}{}_{-+}\tilde R_{-jMN}=0\,,\\
\tilde\nabla_+\tilde R_{-+MN}&=-\tilde\Gamma^j{}_{+-}\tilde R_{j+MN}-\tilde\Gamma^{j}{}_{++}\tilde R_{-jMN}=0\,,
\end{align*}
where we used the fact that $\tilde\Gamma^i{}_{M+}=-\rho\tilde\Gamma^i{}_{M-}$ and $\tilde R_{+jMN}=-\rho\tilde R_{-jMN}$, which can be seen from
\eqref{eq:conn1form} and \eqref{eq:curv2form}, respectively. Thus, for $n=0$ we have $\nabla_{P}\tilde R_{-+MN}=-\frac{1}{t}\delta^{i}{}_{P}\tilde R_{-iMN}$. Assuming that this lemma holds for all $n\leqslant k-1$, now we show that it will hold for $n=k>0$:
\begin{align*}
&\tilde\nabla_i\underbrace{\tilde\nabla_-\cdots\tilde\nabla_-}_{k}\tilde R_{-+MN}\\
={}&D_i\underbrace{\tilde\nabla_-\cdots\tilde\nabla_-}_{k-1}\tilde R_{-+MN}-\tilde\Gamma^j{}_{i-}\tilde\nabla_j\underbrace{\tilde\nabla_-\cdots\tilde\nabla_-}_{k-1}\tilde R_{-+MN}-\cdots-\tilde\Gamma^j{}_{i-}\underbrace{\tilde\nabla_-\cdots\tilde\nabla_-}_{k-1}\tilde\nabla_j\tilde R_{-+MN}
\\
&-\tilde\Gamma^+{}_{i-}\tilde\nabla_+\underbrace{\tilde\nabla_-\cdots\tilde\nabla_-}_{k-1}\tilde R_{-+MN}-\cdots-\tilde\Gamma^+{}_{i-}\underbrace{\tilde\nabla_-\cdots\tilde\nabla_-}_{k-1}\tilde\nabla_+\tilde R_{-+MN}
\\
&-\tilde\Gamma^j{}_{i-}\underbrace{\tilde\nabla_-\cdots\tilde\nabla_-}_{k}\tilde R_{j+MN}-\tilde\Gamma^j{}_{i+}\underbrace{\tilde\nabla_-\cdots\tilde\nabla_-}_{k}\tilde R_{-jMN}\\
&-\tilde\Gamma^{P}{}_{i M}\underbrace{\tilde\nabla_-\cdots\tilde\nabla_-}_{k}\tilde R_{-+ P N}-\tilde\Gamma^{P}{}_{i N}\underbrace{\tilde\nabla_-\cdots\tilde\nabla_-}_{k}\tilde R_{-+MP}\\
={}&\frac{k}{t^2}\psi_i{}^j \underbrace{\tilde\nabla_-\cdots\tilde\nabla_-}_{k-1}\tilde R_{-jMN}- \frac{1}{t}\psi_i{}^j\underbrace{\tilde\nabla_-\cdots\tilde\nabla_-}_{k}(\rho \tilde R_{-jMN})-\frac{1}{t}(\delta^j{}_i-\rho\psi_i{}^j )\underbrace{\tilde\nabla_-\cdots\tilde\nabla_-}_{k}\tilde R_{-jMN}\\
={}&-\frac{1}{t}\underbrace{\tilde\nabla_-\cdots\tilde\nabla_-}_{k}\tilde R_{-jMN}\,,\\
&\tilde\nabla_-\underbrace{\tilde\nabla_-\cdots\tilde\nabla_-}_{k}\tilde R_{-+MN}\\
={}&D_-\underbrace{\tilde\nabla_-\cdots\tilde\nabla_-}_{k-1}\tilde R_{-+MN}-\tilde\Gamma^j{}_{--}\tilde\nabla_j\underbrace{\tilde\nabla_-\cdots\tilde\nabla_-}_{k-1}\tilde R_{-+MN}-\cdots-\tilde\Gamma^j{}_{--}\underbrace{\tilde\nabla_-\cdots\tilde\nabla_-}_{k-1}\tilde\nabla_j\tilde R_{-+MN}
\\
&-\tilde\Gamma^j{}_{--}\underbrace{\tilde\nabla_-\cdots\tilde\nabla_-}_{k}\tilde R_{j+MN}-\tilde\Gamma^j{}_{-+}\underbrace{\tilde\nabla_-\cdots\tilde\nabla_-}_{k}\tilde R_{-jMN}\\
&-\tilde\Gamma^{P}{}_{- M}\underbrace{\tilde\nabla_-\cdots\tilde\nabla_-}_{k}\tilde R_{-+PN}-\tilde\Gamma^{P}{}_{-N}\underbrace{\tilde\nabla_-\cdots\tilde\nabla_-}_{k}\tilde R_{-+MP}\\
={}&\frac{k}{t^2}\varphi^j\underbrace{\tilde\nabla_-\cdots\tilde\nabla_-}_{k-1}\tilde R_{-jMN}-\frac{1}{t}\varphi^j\underbrace{\tilde\nabla_-\cdots\tilde\nabla_-}_{k}(\rho \tilde R_{-jMN})+\frac{\rho}{t}\varphi^j\underbrace{\tilde\nabla_-\cdots\tilde\nabla_-}_{k}\tilde R_{-jMN}=0\,,\\
&\tilde\nabla_+\underbrace{\tilde\nabla_-\cdots\tilde\nabla_-}_{k}\tilde R_{-+MN}\\
={}&D_+\underbrace{\tilde\nabla_-\cdots\tilde\nabla_-}_{k-1}\tilde R_{-+MN}-\tilde\Gamma^j{}_{+-}\tilde\nabla_j\underbrace{\tilde\nabla_-\cdots\tilde\nabla_-}_{k-1}\tilde R_{-+MN}-\cdots-\tilde\Gamma^j{}_{+-}\underbrace{\tilde\nabla_-\cdots\tilde\nabla_-}_{k-1}\tilde\nabla_j\tilde R_{-+MN}
\\
&-\tilde\Gamma^j{}_{+-}\underbrace{\tilde\nabla_-\cdots\tilde\nabla_-}_{k}\tilde R_{j+MN}-\tilde\Gamma^j{}_{++}\underbrace{\tilde\nabla_-\cdots\tilde\nabla_-}_{k}\tilde R_{-jMN}\\
&-\tilde\Gamma^{P}{}_{+M}\underbrace{\tilde\nabla_-\cdots\tilde\nabla_-}_{k}\tilde R_{-+PN}-\tilde\Gamma^{P}{}_{+N}\underbrace{\tilde\nabla_-\cdots\tilde\nabla_-}_{k}\tilde R_{-+MP}\\
={}&-\frac{k\rho}{t^2}\varphi^j\underbrace{\tilde\nabla_-\cdots\tilde\nabla_-}_{k-1}\tilde R_{-jMN}+\frac{\rho}{t}\varphi^j\underbrace{\tilde\nabla_-\cdots\tilde\nabla_-}_{k}(\rho \tilde R_{-jMN})-\frac{\rho^2}{t}\varphi^j\underbrace{\tilde\nabla_-\cdots\tilde\nabla_-}_{k}\tilde R_{-jMN}=0\,.
\end{align*}
Therefore, $\tilde\nabla_{P}\underbrace{\tilde\nabla_-\cdots\tilde\nabla_-}_{n}\tilde R_{-+MN}=-\frac{1}{t}\delta^{i}{}_{P}\underbrace{\tilde\nabla_-\cdots\tilde\nabla_-}_{n}\tilde R_{-iMN}$ holds for $n=k$ if it is valid for all $n\leqslant k-1$, which completes the proof.
\end{proof}

\chapter{Supplement to Part II}
\section{Nilpotency and Linearity of $\hatd$}
\label{app:dhat}
In this appendix section, we will show that the coboundary operator $\hat\td:\Omega^p(A;E)\to\Omega^{p+1}(A,E)$ in Definition \ref{def:hatd} is nilpotent and linear.
First we verify the nilpotency of $\hatd$ when it acts on $E$-valued 0-forms and 1-forms on $A$. The action of $\hatd$ on $\un\psi_1\in\Omega^{1}(A;E)$ reads
\begin{equation}
\label{d2b}
(\hatd\un\psi_1)(\umX_1,\umX_2)=\phi_E(\umX_1)\un\psi_1(\umX_2)-\phi_E(\umX_2)\un\psi_1(\umX_1)-\un\psi_1([\umX_1,\umX_2]_A)\,.
\end{equation}
Taking $\un\psi_1=\hatd\un\psi_0$, we have
\begin{align*}
(\hatd\hatd\un\psi_0)(\umX_1,\umX_2)&=\phi_E(\umX_1)\hatd\un\psi_0(\umX_2)-\phi_E(\umX_2)\hatd\un\psi_0(\umX_1)-\hatd\un\psi_0([\umX_1,\umX_2]_A)\\
&=[\phi_E(\umX_1),\phi_E(\umX_2)]_{\Der(E)}(\un\psi_0)-\phi_E([\umX_1,\umX_2]_A)(\un\psi_0)\,.
\end{align*}
Thus, $\hatd$ is nilpotent when acting twice on a 0-form provided that $\phi_E$ is a morphism.
\par
The action of $\hatd$ on $\un\psi_2\in\Omega^{2}(A,E)$ reads
\begin{align}
(\hatd\un\psi_2)(\umX_1,\umX_2,\umX_3)&=\phi_E(\umX_1)\un\psi_2(\umX_2,\umX_3)-\phi_E(\umX_2)\un\psi_2(\umX_1,\umX_3)+\phi_E(\umX_3)\un\psi_2(\umX_1,\umX_2)\nn\\
&\quad-\un\psi_2([\umX_1,\umX_2]_A,\umX_3)+\un\psi_2([\umX_1,\umX_3]_A,\umX_2)-\un\psi_2([\umX_2,\umX_3]_A,\umX_1)\,.
\end{align}
Taking $\un\psi_2=\hatd\un\psi_1$, we have
\begin{align*}
(\hatd\hatd\un\psi_1)(\umX_1,\umX_2,\umX_3)&=\phi_E(\umX_1)\hatd\un\psi_1(\umX_2,\umX_3)-\phi_E(\umX_2)\hatd\un\psi_1(\umX_1,\umX_3)+\phi_E(\umX_3)\hatd\un\psi_1(\umX_1,\umX_2)\\
&\quad-\hatd\un\psi_1([\umX_1,\umX_2]_A,\umX_3)+\hatd\un\psi_1([\umX_1,\umX_3]_A,\umX_2)-\hatd\un\psi_1([\umX_2,\umX_3]_A,\umX_1)\\
&=\phi_E(\umX_1)\phi_E(\umX_2)\un\psi_1(\umX_3)-\phi_E(\umX_1)\phi_E(\umX_3)\un\psi_1(\umX_2)-\phi_E(\umX_1)\un\psi_1([\umX_2,\umX_3]_A)\\
&\quad-\phi_E(\umX_2)\phi_E(\umX_1)\un\psi_1(\umX_3)+\phi_E(\umX_2)\phi_E(\umX_3)\un\psi_1(\umX_1)+\phi_E(\umX_2)\un\psi_1([\umX_1,\umX_3]_A)\\
&\quad+\phi_E(\umX_3)\phi_E(\umX_1)\un\psi_1(\umX_2)-\phi_E(\umX_3)\phi_E(\umX_2)\un\psi_1(\umX_1)-\phi_E(\umX_3)\un\psi_1([\umX_1,\umX_2]_A)\\
&\quad-\phi_E([\umX_1,\umX_2]_A)\un\psi_1(\umX_3)+\phi_E(\umX_3)\un\psi_1([\umX_1,\umX_2]_A)+\un\psi_1([[\umX_1,\umX_2]_A,\umX_3]_A)\\
&\quad+\phi_E([\umX_1,\umX_3]_A)\un\psi_1(\umX_2)-\phi_E(\umX_2)\un\psi_1([\umX_1,\umX_3]_A)-\un\psi_1([[\umX_1,\umX_3]_A,\umX_2]_A)\\
&\quad-\phi_E([\umX_2,\umX_3]_A)\un\psi_1(\umX_1)+\phi_E(\umX_1)\un\psi_1([\umX_2,\umX_3]_A)+\un\psi_1([[\umX_2,\umX_3]_A,\umX_1]_A)\\
&=\un\psi_1([[\umX_1,\umX_2]_A,\umX_3]_A)-\un\psi_1([[\umX_1,\umX_3]_A,\umX_2]_A)+\un\psi_1([[\umX_2,\umX_3]_A,\umX_1]_A)\,,
\end{align*}
where in the third equality we treated $\phi_E$ as a morphism. This indicates that $\hatd$ is nilpotent when acting twice on 1-forms if the Lie bracket on $A$ satisfies the Jacobi identity. Having these observations, we can carry this over to any higher forms. 
\bthm
The operator $\hatd$ is nilpotent, i.e.\ $\hatd\hatd\un\psi_n=0$ $\forall\un\psi_n\in\Omega^{n}(A;E)$, if
\par
(a) $\phi_E([\umX,\umY]_A)=[\phi_E(\umX),\phi_E(\umY)]_{\Der(E)}\,,\qquad\forall\umX,\umY\in\Gamma(A)$;
\par
(b) $[[\umX,\umY]_A,\umZ]_A+[[\umX,\umY]_A,\umZ]_A+[[\umX,\umY]_A,\umZ]_A=0\,,\qquad\forall\umX,\umY,\umZ\in\Gamma(A)$.
\ethm
\bpf
Suppose $\un\psi_n=\hatd\un\psi_{n-1}$, then
\begin{align*}
&\quad(\hatd\hatd\un\psi_{n-1})(\umX_1,\cdots,\umX_{n+1})\\
&=\sum_{r=1}^{n+1}(-1)^{r+1}\phi_E(\umX_r)(\hatd\un\psi_{n-1}(\umX_1,\cdots,\widehat{\umX_r},\cdots,\umX_{n+1}))\\
&\quad+\sum_{r<s}^{n+1}(-1)^{r+s}\hatd\un\psi_{n-1}([\umX_r,\umX_s]_A,\umX_1,\cdots,\widehat{\umX_r},\cdots,\widehat{\umX_s},\cdots,\umX_{n+1})\\
&=\sum_{r>s}(-1)^{r+s}\phi_E(\umX_r)\phi_E(\umX_s)(\un\psi_{n-1}(\umX_1,\cdots,\widehat{\umX_s},\cdots,\widehat{\umX_r},\cdots,\umX_{n+1}))\\
&\quad-\sum_{r<s}(-1)^{r+s}\phi_E(\umX_r)\phi_E(\umX_s)(\un\psi_{n-1}(\umX_1,\cdots,\widehat{\umX_r},\cdots,\widehat{\umX_s},\cdots,\umX_{n+1}))\\
&\quad+\sum_{s<t<r}(-1)^{r+s+t+1}\phi_E(\umX_r)\un\psi_{n-1}([\umX_s,\umX_t]_A,\umX_1,\cdots,\widehat{\umX_s},\cdots,\widehat{\umX_t},\cdots,\widehat{\umX_r},\cdots,\umX_{n+1})\\
&\quad+\sum_{s<r<t}(-1)^{r+s+t}\phi_E(\umX_r)\un\psi_{n-1}([\umX_s,\umX_t]_A,\umX_1,\cdots,\widehat{\umX_s},\cdots,\widehat{\umX_r},\cdots,\widehat{\umX_t},\cdots,\umX_{n+1})\\
&\quad+\sum_{r<s<t}(-1)^{r+s+t+1}\phi_E(\umX_r)\un\psi_{n-1}([\umX_s,\umX_t]_A,\umX_1,\cdots,\widehat{\umX_r},\cdots,\widehat{\umX_s},\cdots,\widehat{\umX_t},\cdots,\umX_{n+1})\\
&\quad+\sum_{r<s}^{n+1}(-1)^{r+s}\phi_E([\umX_r,\umX_s]_A)(\un\psi_{n-1}(\umX_1,\cdots,\widehat{\umX_r},\cdots,\widehat{\umX_s},\cdots,\umX_{n+1}))\\
&\quad+\sum_{t<r<s}(-1)^{r+s+t}\phi_E(\umX_t)(\un\psi_{n-1}([\umX_r,\umX_s]_A,\umX_1,\cdots,\widehat{\umX_t},\cdots,\widehat{\umX_r},\cdots,\widehat{\umX_s},\cdots,\umX_{n+1}))\\
&\quad+\sum_{r<t<s}(-1)^{r+s+t+1}\phi_E(\umX_t)(\un\psi_{n-1}([\umX_r,\umX_s]_A,\umX_1,\cdots,\widehat{\umX_r},\cdots,\widehat{\umX_t},\cdots,\widehat{\umX_s},\cdots,\umX_{n+1}))\\
&\quad+\sum_{r<s<t}(-1)^{r+s+t}\phi_E(\umX_t)(\un\psi_{n-1}([\umX_r,\umX_s]_A,\umX_1,\cdots,\widehat{\umX_r},\cdots,\widehat{\umX_s},\cdots,\widehat{\umX_t},\cdots,\umX_{n+1}))\\
&\quad+\sum_{t<r<s}^{n+1}(-1)^{r+s+t}\un\psi_{n-1}([[\umX_r,\umX_s]_A,\umX_t]_A,\umX_1,\cdots,\widehat{\umX_t},\cdots,\widehat{\umX_r},\cdots,\widehat{\umX_s},\cdots,\umX_{n+1})\\
&\quad+\sum_{r<t<s}^{n+1}(-1)^{r+s+t+1}\un\psi_{n-1}([[\umX_r,\umX_s]_A,\umX_t]_A,\umX_1,\cdots,\widehat{\umX_r},\cdots,\widehat{\umX_t},\cdots,\widehat{\umX_s},\cdots,\umX_{n+1})\\
&\quad+\sum_{r<s<t}^{n+1}(-1)^{r+s+t}\un\psi_{n-1}([[\umX_r,\umX_s]_A,\umX_t]_A,\umX_1,\cdots,\widehat{\umX_r},\cdots,\widehat{\umX_s},\cdots,\widehat{\umX_t},\cdots,\umX_{n+1})\\
&\quad+\sum_{t<u<r<s}^{n+1}(-1)^{r+s+t+u}\un\psi_{n-1}([\umX_t,\umX_u]_A,[\umX_r,\umX_s]_A,\umX_1,\cdots,\widehat{\umX_t},\cdots,\widehat{\umX_u},\cdots,\widehat{\umX_r},\cdots,\widehat{\umX_s},\cdots,\umX_{n+1})\\
&\quad+\sum_{t<r<u<s}^{n+1}(-1)^{r+s+t+u+1}\un\psi_{n-1}([\umX_t,\umX_u]_A,[\umX_r,\umX_s]_A,\umX_1,\cdots,\widehat{\umX_t},\cdots,\widehat{\umX_r},\cdots,\widehat{\umX_u},\cdots,\widehat{\umX_s},\cdots,\umX_{n+1})\\
&\quad+\sum_{t<r<s<u}^{n+1}(-1)^{r+s+t+u}\un\psi_{n-1}([\umX_t,\umX_u]_A,[\umX_r,\umX_s]_A,\umX_1,\cdots,\widehat{\umX_t},\cdots,\widehat{\umX_r},\cdots,\widehat{\umX_s},\cdots,\widehat{\umX_u},\cdots,\umX_{n+1})\\
&\quad+\sum_{r<t<u<s}^{n+1}(-1)^{r+s+t+u}\un\psi_{n-1}([\umX_t,\umX_u]_A,[\umX_r,\umX_s]_A,\umX_1,\cdots,\widehat{\umX_r},\cdots,\widehat{\umX_t},\cdots,\widehat{\umX_u},\cdots,\widehat{\umX_s},\cdots,\umX_{n+1})\\
&\quad+\sum_{r<t<s<u}^{n+1}(-1)^{r+s+t+u+1}\un\psi_{n-1}([\umX_t,\umX_u]_A,[\umX_r,\umX_s]_A,\umX_1,\cdots,\widehat{\umX_r},\cdots,\widehat{\umX_t},\cdots,\widehat{\umX_s},\cdots,\widehat{\umX_u},\cdots,\umX_{n+1})\\
&\quad+\sum_{r<s<t<u}^{n+1}(-1)^{r+s+t+u}\un\psi_{n-1}([\umX_t,\umX_u]_A,[\umX_r,\umX_s]_A,\umX_1,\cdots,\widehat{\umX_r},\cdots,\widehat{\umX_s},\cdots,\widehat{\umX_t},\cdots,\widehat{\umX_u},\cdots,\umX_{n+1})\\
&=\sum_{t<r<s}^{n+1}(-1)^{r+s+t}\un\psi_{n-1}([[\umX_r,\umX_s]_A,\umX_t]_A,\umX_1,\cdots,\widehat{\umX_t},\cdots,\widehat{\umX_r},\cdots,\widehat{\umX_s},\cdots,\umX_{n+1})\\
&\quad+\sum_{t<r<s}^{n+1}(-1)^{r+s+t}\un\psi_{n-1}([[\umX_s,\umX_t]_A,\umX_r]_A,\umX_1,\cdots,\widehat{\umX_t},\cdots,\widehat{\umX_r},\cdots,\widehat{\umX_s},\cdots,\umX_{n+1})\\
&\quad+\sum_{t<r<s}^{n+1}(-1)^{r+s+t}\un\psi_{n-1}([[\umX_t,\umX_r]_A,\umX_s]_A,\umX_1,\cdots,\widehat{\umX_t},\cdots,\widehat{\umX_r},\cdots,\widehat{\umX_s},\cdots,\umX_{n+1})\\
&=\sum_{t<r<s}^{n+1}(-1)^{r+s+t}\\
&\quad\un\psi_{n-1}([[\umX_r,\umX_s]_A,\umX_t]_A+[[\umX_s,\umX_t]_A,\umX_r]_A+[[\umX_t,\umX_r]_A,\umX_s]_A,\umX_1,\cdots,\widehat{\umX_t},\cdots,\widehat{\umX_r},\cdots,\widehat{\umX_s},\cdots,\umX_{n+1})\\
&=0\,.
\end{align*}
Thus, $\hatd\hatd\un\psi_n=0$ as long as $\phi_E$ is a morphism and the Lie bracket on $A$ satisfies the Jacobi identity. 
\epf
The next thing we want to verify is that the Koszul formula is linear in the sections $\umX_1,\cdots,\umX_{n+1}$. Let $f\in C^\infty(M)$, then for any $p=1,\cdots,n+1$ we can derive that
\begin{align*}
&\quad(\hatd\un\psi_n)(\umX_1,\cdots,f\umX_p,\cdots,\umX_{n+1})\\
&=\sum_{r=1}^{p-1}(-1)^{r+1}\phi_E(\umX_r)(\un\psi_n(\umX_1,\cdots,\widehat{\umX_r},\cdots,f\umX_p,\cdots,\umX_{n+1}))\\
&\quad+(-1)^{p+1}\phi_E(f\umX_p)(\un\psi_n(\umX_1,\cdots,\widehat{\umX_p},\cdots,\umX_{n+1}))\\
&\quad+\sum_{r=p+1}^{n+1}(-1)^{r+1}\phi_E(\umX_r)(\un\psi_n(\umX_1,\cdots,f\umX_p,\cdots,\widehat{\umX_r},\cdots,\umX_{n+1}))\\
&\quad+\sum_{s=2}^{p-1}\sum_{r=1}^{s-1}(-1)^{r+s}\un\psi_n([\umX_r,\umX_s]_A,\umX_1,\cdots,\widehat{\umX_r},\cdots,\widehat{\umX_s},\cdots,f\umX_p,\cdots,\umX_{n+1})\\
&\quad+\sum_{s=p+1}^{n+1}\sum_{r=p}^{s-1}(-1)^{r+s}\un\psi_n([\umX_r,\umX_s]_A,\umX_1,\cdots,f\umX_p,\cdots,\widehat{\umX_r},\cdots,\widehat{\umX_s},\cdots,\umX_{n+1})\\
&\quad+\sum_{s=p+1}^{n+1}\sum_{r=1}^{p-1}(-1)^{r+s}\un\psi_n([\umX_r,\umX_s]_A,\umX_1,\cdots,\widehat{\umX_r},\cdots,f\umX_p,\cdots,\widehat{\umX_s},\cdots,\umX_{n+1})\\
&\quad+\sum_{r=1}^{p-1}(-1)^{r+p}\un\psi_n([\umX_r,f\umX_p]_A,\umX_1,\cdots,\widehat{\umX_r},\cdots,\widehat{\umX_p},\cdots,\umX_{n+1})\\
&\quad+\sum_{r=p+1}^{n+1}(-1)^{p+s}\un\psi_n([f\umX_p,\umX_s]_A,\umX_1,\cdots,\widehat{\umX_p},\cdots,\widehat{\umX_s},\cdots,\umX_{n+1})\\
&=\sum_{r\neq p}(-1)^{r+1}\phi_E(\umX_r)(f\un\psi_n(\umX_1,\cdots,\widehat{\umX_r},\cdots,\umX_{n+1}))\\
&\quad+(-1)^{p+1}\phi_E(f\umX_p)(\un\psi_n(\umX_1,\cdots,\widehat{\umX_p},\cdots,\umX_{n+1}))\\
&\quad+\sum_{p\neq r<s\neq p}(-1)^{r+s}f\un\psi_n([\umX_r,\umX_s]_A,\umX_1,\cdots,\widehat{\umX_r},\cdots,\widehat{\umX_s},\cdots,\umX_{n+1})\\
&\quad+\sum_{r=1}^{p-1}(-1)^{r+p}\un\psi_n([\umX_r,f\umX_p]_A,\umX_1,\cdots,\widehat{\umX_r},\cdots,\widehat{\umX_p},\cdots,\umX_{n+1})\\
&\quad+\sum_{s=p+1}^{n+1}(-1)^{p+s}\un\psi_n([f\umX_p,\umX_s]_A,\umX_1,\cdots,\widehat{\umX_p},\cdots,\widehat{\umX_s},\cdots,\umX_{n+1})\\
&=\sum_{r\neq p}(-1)^{r+1}f\phi_E(\umX_r)(\un\psi_n(\umX_1,\cdots,\widehat{\umX_r},\cdots,\umX_{n+1}))\\
&\quad+\sum_{r\neq p}(-1)^{r+1}\rho(\umX_r)(f)(\un\psi_n(\umX_1,\cdots,\widehat{\umX_r},\cdots,\umX_{n+1}))\\
&\quad+(-1)^{p+1}f\phi_E(\umX_p)(\un\psi_n(\umX_1,\cdots,\widehat{\umX_p},\cdots,\umX_{n+1}))\\
&\quad+\sum_{p\neq r<s\neq p}(-1)^{r+s}f\un\psi_n([\umX_r,\umX_s]_A,\umX_1,\cdots,\widehat{\umX_r},\cdots,\widehat{\umX_s},\cdots,\umX_{n+1})\\
&\quad+\sum_{r=1}^{p-1}(-1)^{r+p}f\un\psi_n([\umX_r,\umX_p]_A,\umX_1,\cdots,\widehat{\umX_r},\cdots,\widehat{\umX_p},\cdots,\umX_{n+1})\\
&\quad+\sum_{r=1}^{p-1}(-1)^{r}\rho(\umX_r)(f)\un\psi_n(\umX_1,\cdots,\widehat{\umX_r},\cdots,\umX_p,\cdots,\umX_{n+1})\\
&\quad+\sum_{s=p+1}^{n+1}(-1)^{p+s}f\un\psi_n([\umX_p,\umX_s]_A,\umX_1,\cdots,\widehat{\umX_p},\cdots,\widehat{\umX_s},\cdots,\umX_{n+1})\\
&\quad+\sum_{s=p+1}^{n+1}(-1)^{s}\rho(\umX_s)(f)\un\psi_n(\umX_1,\cdots,\umX_p,\cdots,\widehat{\umX_s},\cdots,\umX_{n+1}))\\
&=\sum_{r=1}^{n+1}(-1)^{r+1}f\phi_E(\umX_r)(\un\psi_n(\umX_1,\cdots,\widehat{\umX_r},\cdots,\umX_{n+1})\\
&\quad+\sum_{r<s}^{n+1}(-1)^{r+s}f\un\psi_n([\umX_r,\umX_s]_A,\umX_1,\cdots,\widehat{\umX_r},\cdots,\widehat{\umX_s},\cdots,\umX_{n+1})\\
&=f(\hatd\un\psi_n)(\umX_1,\cdots,\umX_p,\cdots,\umX_{n+1})\,.
\end{align*}
Therefore, the operator $\hatd$ defined through the Koszul formula is linear.

\section{Relation Between Curvatures ${\cal R}^E$ and $\Omega$}
\label{app:ROmega}
Given a representation $\phi_E$ of a Lie algebroid, the curvature of the induced connection $\nabla^E$ on a representation algebroid introduced in Subsection \ref{sec:reps} is defined as
\begin{align}
{\cal R}^E(\umX,\umY)(\un\psi_0)\equiv\nabla^E_{\rho(\umX_H)}(\nabla^E_{\rho(\umY_H)}\psi_0)-\nabla^E_{\rho(\umY_H)}(\nabla^E_{\rho(\umX_H)}\psi_0)-\nabla^E_{\rho([\umX_H,\umY_H]_H)}\psi_0\,,
\end{align}
where $[\umX_H,\umY_H]_H$ represents the horizontal part of $[\umX_H,\umY_H]_A$. Using the condition that $\phi_E$ is a morphism, we have
\begin{align*}
0&=[\phi_E(\umX),\phi_E(\umY)]_{\Der(E)}(\un\psi_0)-\phi_E([\umX,\umY]_A)(\un\psi_0)\\
&=\phi_E(\umX)(\phi_E(\umY)(\un\psi_0))-\phi_E(\umY)(\phi_E(\umX)(\un\psi_0))-\phi_E([\umX,\umY]_A)(\un\psi_0)\\
&=\phi_E(\umX)(\nabla^E_{\rho(\umY)}\psi_0-v_E(\omega(\umY))(\un\psi_0))-\phi_E(\umY)(\nabla^E_{\rho(\umX)}\psi_0-v_E(\omega(\umX))(\un\psi_0))\\
&\quad-\nabla^E_{\rho([\umX,\umY]_H)}\psi_0+v_E(\omega([\umX,\umY]_V))(\un\psi_0)\\
&=\nabla^E_{\rho(\umX)}(\nabla^E_{\rho(\umY)}\psi_0-v_E(\omega(\umY))(\un\psi_0))-v_E(\omega(\umX))(\nabla^E_{\rho(\umY)}\psi_0-v_E(\omega(\umY))(\un\psi_0))\\
&\quad-\nabla^E_{\rho(\umY)}(\nabla^E_{\rho(\umX)}\psi_0-v_E(\omega(\umX))(\un\psi_0))+v_E(\omega(\umY))(\nabla^E_{\rho(\umX)}\psi_0-v_E(\omega(\umX))(\un\psi_0))\\
&\quad-\nabla^E_{\rho([\umX,\umY]_H)}\psi_0+v_E(\omega([\umX,\umY]_V))(\un\psi_0)\\
&={\cal R}^E(\umX,\umY)(\un\psi_0)-\nabla^E_{\rho(\umX)}(v_E(\omega(\umY))(\un\psi_0))+\nabla^E_{\rho(\umY)}(v_E(\omega(\umX))(\un\psi_0))-v_E(\omega(\umX))(\nabla^E_{\rho(\umY)}\psi_0)\\
&\quad+v_E(\omega(\umY))(\nabla^E_{\rho(\umX)}\psi_0)+v_E(\omega(\umX_V))v_E(\omega(\umY_V))(\un\psi_0)-v_E(\omega(\umY_V))v_E(\omega(\umX_V))(\un\psi_0)+v_E(\omega([\umX,\umY]_V))(\un\psi_0)\\
&={\cal R}^E(\umX,\umY)(\un\psi_0)-\nabla^E_{\rho(\umX)}(v_E(\omega(\umY))(\un\psi_0))+\nabla^E_{\rho(\umY)}(v_E(\omega(\umX))(\un\psi_0))-v_E(\omega(\umX))(\nabla^E_{\rho(\umY)}\psi_0)\\
&\quad+v_E(\omega(\umY))(\nabla^E_{\rho(\umX)}\psi_0)+v_E([\omega(\umX_V),\omega(\umY_V)]_L)(\un\psi_0)+v_E(\omega([\umX,\umY]_V))(\un\psi_0)\\
&={\cal R}^E(\umX,\umY)(\un\psi_0)+\nabla^E_{\rho(\umX)}(v_E(\omega(\umY))(\un\psi_0))-\nabla^E_{\rho(\umY)}(v_E(\omega(\umX))(\un\psi_0))-v_E(\omega(\umX))(\nabla^E_{\rho(\umY)}\psi_0)\\
&\quad+v_E(\omega(\umY))(\nabla^E_{\rho(\umX)}\psi_0)+v_E(R^\omega(\umX,\umY))(\un\psi_0)\,,
\end{align*}
where we used the fact that $v_E$ is a morphism in the sixth equality. Since ${\cal R}^E(\umX_V,\umY_V)=0$ and $R^\omega(\umX_V,\umY_V)=0$, we can see that when $\umX$ and $\umY$ are purely vertical, this expression identically vanishes. For the case $\umX$ being horizontal and $\umY$ being vertical, we have
\begin{align*}
0&=[\phi_E(\umX_H),\phi_E(\umY_V)]_{\Der(E)}(\un\psi_0)-\phi_E([\umX_H,\umY_V]_A)(\un\psi_0)\\
&=\nabla^E_{\rho(\umX)}(v_E(\omega(\umY))(\un\psi_0))+v_E(\omega(\umY))(\nabla^E_{\rho(\umX)}\psi_0)+v_E(R^\omega(\umX_H,\umY_V))(\un\psi_0)\\
&=\nabla^E_{\rho(\umX)}(v_E(\omega(Y_V))\un\psi)-v_E(\omega(\umY_V))(\nabla^E_{\rho(\umX)}\un\psi_0)-v_E(\nabla^L_{X_H}\omega(Y_V))(\un\psi_0)\,.
\end{align*}
This can be regarded as a Leibniz rule relating $\nabla^E$ to the induced connection $\nabla^L$ in the adjoint representation. 
Finally we look at the case where $\umX$ and $\umY$ are both horizontal,
\begin{align*}
0&=[\phi_E(\umX_H),\phi_E(\umY_H)]_{\Der(E)}(\un\psi_0)-\phi_E([\umX_H,\umY_H]_A)(\un\psi_0)\\
&={\cal R}^E(\umX_H,\umY_H)(\un\psi_0)+v_E(R^\omega(\umX_H,\umY_H))(\un\psi_0)\\
&={\cal R}^E(\umX_H,\umY_H)(\un\psi_0)-v_E(\Omega(\umX,\umY))(\un\psi_0)\,.
\end{align*}
Thus,
\begin{align}
{\cal R}^E(\umX_H,\umY_H)(\un\psi_0)=v_E(\Omega(\umX,\umY))(\un\psi_0)\,,
\end{align}
which relates ${\cal R}^E$ to the curvature reform $\Omega$ of the Lie algebroid. 
\par
In the special case of the adjoint representation, the morphisms $\phi_E$ and $v_E$ can be expressed in terms of the Lie brackets:
\begin{align}
\label{phiLm}
\phi_L(\umX)(\un\mu)&=-\omega([\umX,j(\un\mu)]_A)\,,\qquad\forall\umX\in A, \un\in L\\
(v_L(\un\mu))(\un\nu)&=[\un\mu,\un\nu]_L\,,\qquad\forall\un\mu,\un\nu\in L\,,
\end{align}
and we have seen that the induced connection $\nabla^L$ behaves as
\begin{align}
\label{nablaL}
\nabla^L_{\rho(\umX)}\un\mu=\nabla^L_{\rho(\umX_H)}\un\mu=-R^\omega(\umX_H,j(\un\mu))\,.
\end{align}
Define the curvature ${\cal R}^L:A\times A\times L\to L$ of $\nabla^L$ as follows (which is in fact ${\cal R}^L:H\times H\times L\to L$): 
\begin{align}
{\cal R}^L(\umX,\umY)(\un\mu)\equiv\nabla^L_{\rho(\umX_H)}(\nabla^L_{\rho(\umY_H)}\un\mu)-\nabla^L_{\rho(\umY_H)}(\nabla^L_{\rho(\umX_H)}\un\mu)-\nabla^L_{\rho([\umX_H,\umY_H]_H)}\un\mu\,.
\end{align}
Using \eqref{nablaL}, the equation above can be evaluated directly as follows
\begin{align}
{\cal R}^L(\umX,\umY)(\un\mu)&=-\nabla^L_{\rho(\umX_H)}(\omega([\umY_H,j(\un\mu)]_A))+\nabla^L_{\rho(\umY_H)}(\omega([\umY_H,j(\un\mu)]_A))+\omega([[\umX_H,\umY_H]_H,j(\un\mu)]_A)\,,\nn\\
&=\omega([\umX_H,j(\omega([\umY_H,j(\un\mu)]_A))]_A)-\omega([\umY_H,j(\omega([\umX_H,j(\un\mu)]_A))]_A)+\omega([[\umX_H,\umY_H]_H,j(\un\mu)]_A)\,,\nn\\
&=-\omega([\umX_H,[\umY_H,j(\un\mu)]_A]_A)+\omega([\umY_H,[\umX_H,j(\un\mu)]_A]_A)+\omega([[\umX_H,\umY_H]_H,j(\un\mu)]_A)\,,\nn\\
&=\omega([\umX_H,[j(\un\mu)]_A,\umY_H]_A)+\omega([\umY_H,[\umX_H,j(\un\mu)]_A]_A)+\omega([j(\un\mu),[\umY_H,\umX_H]_A)\nn\\
&\quad-\omega([[\umX_H,\umY_H]_A,j(\un\mu)]_A)+\omega([[\umX_H,\umY_H]_H,j(\un\mu)]_A)\,,\nn\\
&=-\omega([[\umX_H,\umY_H]_V,j(\un\mu)]_A)\,,\nn\\
&=-R^\omega([\umX_H,\umY_H]_V,j(\un\mu))+[\omega([\umX_H,\umY_H]_V),\omega(j(\un\mu))]_L\,,\nn\\
&=-[\omega([\umX_H,\umY_H]_V),\un\mu]_L\,,\nn\\
&=-v_L(\omega([\umX_H,\umY_H]_V)(\un\mu)\,,\nn\\
&=v_L(\Omega([\umX_H,\umY_H])(\un\mu)\,,
\end{align}
where we used the Jacobi identity in the fifth equality, the fact that $R^\omega(\umX_V,\umY_V)=0$ is used in the seventh equality, \eqref{vL} is used in the eighth equality, and \eqref{Omega2d} is used in the last equality. Thus, ${\cal R}^L$ also represents the curvature of the Lie algebroid. 

\section{Commutation Coefficients of the Algebroid Lie Bracket}
\label{app:commutation}
Given a split basis $\{\un E_{\ualpha},\un E_{\un A}\}$, the Lie bracket on $A$ gives
\begin{align}
[\un E_{\ualpha},\un E_{\ubeta}]_A&= C_{\ualpha\ubeta}{}^{\un\gamma}\un E_{\un\gamma}+C_{\ualpha\ubeta}{}^{\un A}\un E_{\un A}\,,\\
[\un E_{\ualpha},\un E_{\un A}]_A&= C_{\ualpha\un A}{}^{\un B}\un E_{\un B}\,,\\
\label{LAE1}
[\un E_{\un A},\un E_{\un B}]_A&= C_{\un{AB}}{}^{\un C}\un E_{\un C}\,,
\end{align}
First we evaluate $C_{\un{AB}}{}^{\un C}$ in \eqref{LAE1}. Recall that in a basis $\{\un t_A\}$ of $\Gamma(L)$ we have
\begin{align}
\label{Liett}
[\un t_A,\un t_B]_L= f_{AB}{}^C\un t_C\,.
\end{align}
Applying $j$ to both sides of \eqref{Liett} yields
\begin{align}
j([\un t_A,\un t_B]_L)&= f_{AB}{}^Cj(\un t_C)\\
[j(\un t_A),j(\un t_B)]_A&= f_{AB}{}^Cj(\un t_C)\\
[j^{\un A}{}_AE_{\un A},j^{\un B}{}_BE_{\un B}]_A&= f_{AB}{}^Cj^{\un C}{}_CE_{\un C}\,,
\end{align}
where we used \eqref{iotat} in the last step. Comparing this with \eqref{LAE1} yields
\begin{align}
\label{Cf}
C_{\un{AB}}{}^{\un C}j^{\un A}{}_Aj^{\un B}{}_B= f_{AB}{}^Cj^{\un C}{}_C\,,
\end{align}
which leads to \eqref{CABC}.
\par
For a horizontal section $\umX_H\in\Gamma(H)$ and a vertical section $j(\un\mu)\in\Gamma(V)$ with $\un\mu\in\Gamma(L)$, the Lie bracket gives
\begin{align}
[\umX_H,j(\un\mu)]_A&=[\mX_H^{\ualpha}\un E_{\ualpha},j(\mu^A\un t_A)]_A\nn\\
&=[\mX_H^{\ualpha}\un E_{\ualpha},\mu^Aj^{\un A}{}_A\un E_{\un A}]_A\nn\\
&=\mX_H^{\ualpha}\mu^Aj^{\un A}{}_A[\un E_{\ualpha},\un E_{\un A}]_A+\mX_H^{\ualpha}\rho(\un E_{\ualpha})(j^{\un A}{}_A\mu^A)\un E_{\un A}-\mu^A\rho(\un E_{\un A})(j^{\un A}{}_A\mX_H^{\ualpha})\un E_{\ualpha}\nn\\
&=\mX_H^{\ualpha}\mu^Aj^{\un A}{}_AC_{\ualpha\un A}{}^{\un B}\un E_{\un B}+\mX_H^{\ualpha}\rho(\un E_{\ualpha})(j^{\un A}{}_A\mu^A)\un E_{\un A}\nn\\
&=\mX_H^{\ualpha}\big(\mu^Aj^{\un A}{}_AC_{\ualpha\un A}{}^{\un B}+\rho(\un E_{\ualpha})(j^{\un B}{}_A\mu^A)\big)\un E_{\un B}\nn\\
\label{XHiota0}
&=\mX_H^{\ualpha}\big(\mu^Aj^{\un A}{}_AC_{\ualpha\un A}{}^{\un B}+\rho(\un E_{\ualpha})(\mu^B)j^{\un B}{}_B+\rho(\un E_{\ualpha})(j^{\un B}{}_A)\mu^A\big)\un E_{\un B}\,.
\end{align}
On the other hand, it follows from \eqref{nablaL} that
\begin{align}
[\umX_H,j(\un\mu)]_A=j(\nabla^L_{\umX_H}\un\mu)\,,
\end{align}
and it follows from \eqref{phiLm} and \eqref{phiEf} that
\begin{align}
[\umX_H,j(\un\mu)]_A&=j(\phi_L(\umX_H)(\mu^A\un t_A))\nn\\
&=j(\mu^A\phi_L(\umX_H)(\un t_A)+(\rho(\umX_H)\mu^A)\un t_A)\nn\\
\label{XHiota1}
&=j(\mu^A\nabla^L_{\umX_H}\un t_A+(\rho(\umX_H)\mu^A)\un t_A)\,.
\end{align}
Since $\nabla^L_{\umX}\un t_A$ is a section on $L$, we can expand it using $\{\un t_A\}$:
\begin{align}
\label{nablaLt}
\nabla^L_{\umX_H}\un t_A={\cal A}^B{}_A(\umX_H)\un t_B\,.
\end{align}
where ${\cal A}^B{}_A(\umX)$ are the connection coefficients, which depends linearly on $\umX$. Thus, now \eqref{XHiota1} becomes
\begin{align}
[\umX_H,j(\un\mu)]_A&=j(\mu^A{\cal A}^B{}_A(\umX_H)\un t_B+(\rho(\umX_H)\mu^A)\un t_A)\nn\\
&=\mu^A{\cal A}^B{}_A(\umX_H)j^{\un B}{}_B\un E_{\un B}+(\rho(\umX_H)\mu^A)j^{\un A}{}_A\un E_{\un A}\nn\\
\label{XHiota2}
&=\umX_H^{\ualpha}\big(\mu^A{\cal A}^B{}_A(\un E_{\ualpha})+\rho(\un E_{\ualpha})\mu^B\big)j^{\un B}{}_B\un E_{\un B}\,.
\end{align}
Comparing \eqref{XHiota0} and \eqref{XHiota2} yields
\begin{align}
j^{\un A}{}_AC_{\ualpha\un A}{}^{\un B}+\rho(\un E_{\ualpha})(j^{\un B}{}_A)={\cal A}_{\ualpha}{}^B{}_Aj^{\un B}{}_B\,.
\end{align}
where ${\cal A}_{\ualpha}{}^B{}_A\equiv {\cal A}^B{}_A(\un E_{\ualpha})$. This equation gives rise to \eqref{CaAB}. 
\par
Plugging $\un E_{\ualpha},\un E_{\ubeta}$ into \eqref{Omega1}, we have
\begin{align*}
j(\Omega(\un E_{\ualpha},\un E_{\ubeta}))&=[\un E_{\ualpha},\un E_{\ubeta}]_V\\
j(\Omega^A(\un E_{\ualpha},\un E_{\ubeta})\un t_A)&=C_{\ualpha\ubeta}{}^{\un A}\un E_{\un A}\\
\Omega^A(\un E_{\ualpha},\un E_{\ubeta})j^{\un A}{}_A\un E_{\un A}&=C_{\ualpha\ubeta}{}^{\un A}\un E_{\un A}\,.
\end{align*}
Thus,
\begin{align}
C_{\ualpha\ubeta}{}^{\un A}=\Omega^A{}_{\ualpha\ubeta}j^{\un A}{}_A\,,
\end{align}
where $\Omega^A{}_{\ualpha\ubeta}\equiv\Omega^A(\un E_{\ualpha},\un E_{\ubeta})$. Now we consider two horizontal sections $\sigma(\uX)$ and $\sigma(\uY)$ of $A$ with $\uX,\uY\in \Gamma(TM)$. The commutator gives
\begin{align*}
[\sigma(\uX),\sigma(\uY)]_A&=[X^\mu\sigma^{\ualpha}{}_\mu\un E_{\ualpha},Y^\nu\sigma^{\ubeta}{}_\nu\un E_{\ubeta}]_A\\
&=X^\mu\sigma^{\ualpha}{}_\mu Y^\nu\sigma^{\ubeta}{}_\nu[\un E_{\ualpha},\un E_{\ubeta}]_A+X^\mu\sigma^{\ualpha}{}_\mu\rho(\un E_{\ualpha})(Y^\nu\sigma^{\ubeta}{}_\nu)\un E_{\ubeta}-Y^\nu\sigma^{\ubeta}{}_\nu\rho(\un E_{\ubeta})(X^\mu\sigma^{\ualpha}{}_\mu)\un E_{\ualpha}\\
&=X^\mu\sigma^{\ualpha}{}_\mu Y^\nu\sigma^{\ubeta}{}_\nu(C_{\ualpha\ubeta}{}^{\un\gamma}\un E_{\un\gamma}+C_{\ualpha\ubeta}{}^{\un A}\un E_{\un A})+X^\mu\sigma^{\ualpha}{}_\mu\rho^\rho{}_{\ualpha}\un\p_\rho(Y^\nu\sigma^{\ubeta}{}_\nu)\un E_{\ubeta}-Y^\nu\sigma^{\ubeta}{}_\nu\rho^\rho{}_{\ubeta}\un\p_\rho(X^\mu\sigma^{\ualpha}{}_\mu)\un E_{\ualpha}\\
&=X^\mu\sigma^{\ualpha}{}_\mu Y^\nu\sigma^{\ubeta}{}_\nu(C_{\ualpha\ubeta}{}^{\un\gamma}\un E_{\un\gamma}+C_{\ualpha\ubeta}{}^{\un A}\un E_{\un A})+X^\mu\un\p_\mu(Y^\nu\sigma^{\ubeta}{}_\nu)\un E_{\ubeta}-Y^\nu\un\p_\nu(X^\mu\sigma^{\ualpha}{}_\mu)\un E_{\ualpha}\\
&=X^\mu\sigma^{\ualpha}{}_\mu Y^\nu\sigma^{\ubeta}{}_\nu(C_{\ualpha\ubeta}{}^{\un\gamma}\un E_{\un\gamma}+C_{\ualpha\ubeta}{}^{\un A}\un E_{\un A})+X^\mu(\un\p_\mu Y^\nu)\sigma^{\ubeta}{}_\nu\un E_{\ubeta}+X^\mu Y^\nu(\un\p_\mu\sigma^{\ubeta}{}_\nu)\un E_{\ubeta}\\
&\quad-Y^\nu(\un\p_\nu X^\mu)\sigma^{\ualpha}{}_\mu\un E_{\ualpha}-Y^\nu X^\mu(\un\p_\nu\sigma^{\ualpha}{}_\mu)\un E_{\ualpha}\\
&=X^\mu\sigma^{\ualpha}{}_\mu Y^\nu\sigma^{\ubeta}{}_\nu(C_{\ualpha\ubeta}{}^{\un\gamma}\un E_{\un\gamma}+C_{\ualpha\ubeta}{}^{\un A}\un E_{\un A})+[\uX,\uY]^\mu\sigma^{\un\gamma}{}_\mu\un E_{\un\gamma}+X^\mu Y^\nu(\un\p_\mu\sigma^{\un\gamma}{}_\nu-\un\p_\nu\sigma^{\un\gamma}{}_\mu)\un E_{\un\gamma}\\
&=X^\mu\sigma^{\ualpha}{}_\mu Y^\nu\sigma^{\ubeta}{}_\nu(C_{\ualpha\ubeta}{}^{\un\gamma}\un E_{\un\gamma}+C_{\ualpha\ubeta}{}^{\un A}\un E_{\un A})+\sigma([\uX,\uY]_{TM})+X^\mu Y^\nu(\un\p_\mu\sigma^{\un\gamma}{}_\nu-\un\p_\nu\sigma^{\un\gamma}{}_\mu)\un E_{\un\gamma}\,,
\end{align*}
and hence
\begin{align*}
R^\sigma(\uX,\uY)&=[\sigma(\uX),\sigma(\uY)]_A-\sigma([\uX,\uY]_{TM})\\
&=X^\mu\sigma^{\ualpha}{}_\mu Y^\nu\sigma^{\ubeta}{}_\nu(C_{\ualpha\ubeta}{}^{\un\gamma}\un E_{\un\gamma}+C_{\ualpha\ubeta}{}^{\un A}\un E_{\un A})+X^\mu Y^\nu(\un\p_\mu\sigma^{\un\gamma}{}_\nu-\un\p_\nu\sigma^{\un\gamma}{}_\mu)\un E_{\un\gamma}\,.
\end{align*}
Since it follows from \eqref{Rsigma} that $R^\sigma(\uX,\uY)$ is purely vertical, it only has components in the $\un E_{\un A}$-direction. Thus, we can read off from the above equation that 
\begin{align}
C_{\ualpha\ubeta}{}^{\un\gamma}\sigma^{\ualpha}{}_\mu \sigma^{\ubeta}{}_\nu =-\un\p_\mu\sigma^{\un\gamma}{}_\nu+\un\p_\nu\sigma^{\un\gamma}{}_\mu\,,
\end{align}
which is equivalent to \eqref{Cabc}.

\section{Calculations for Lie Algebroid Trivializations}
\label{app:LAT}
\subsection{Connection and Curvature in a Local Trivialization}
\label{app:trivial}
Starting from the morphism condition of $\tau$, i.e.,
\begin{align}
\label{taumor1}
[\tau(\umX),\tau(\umY)]_{TM\oplus L}=\tau([\umX,\umY]_A)\,,
\end{align}
we now derive explicitly the results in \eqref{ttftf}, \eqref{bArelation} and \eqref{FbaseA}. Note that in this appendix section we work in a specific open set $U\subset M$ without specifying in the notation. 
\par
First we should define the Lie bracket on $TM\oplus L$. Given a basis $\{\un\p_\mu,\ut_A\}$, we can define the Lie bracket following condition (b) in Definition \ref{LA}:
\begin{align}
[\un\p_\mu,\un\p_\nu]_{TM\oplus L}&=0\,,\qquad[f\un\p_\mu,g\un\p_\nu]_{TM\oplus L}=f(\p_\mu g)\un\p_\nu-g(\p_\nu f)\un\p_\mu\,,\\
[\un\p_\mu,\ut_A]_{TM\oplus L}&=0\,,\qquad[f\un\p_\mu,g\ut_A]_{TM\oplus L}=f(\p_\mu g)\ut_A\,,\\
[\ut_A,\ut_B]_{TM\oplus L}&=f_{AB}{}^C\ut_C\,,\qquad[f\ut_A,g\ut_B]_{TM\oplus L}=fgf_{AB}{}^C\ut_C\,,\qquad f,g\in C^\infty(M)\,.
\end{align}
In the case where $\umX,\umY$ are both vertical, the condition \eqref{taumor1} gives
\begin{align*}
[\tau(\umX_V),\tau(\umY_V)]_{TM\oplus L}&=\tau([\umX_V,\umY_V]_A)\\
[\umX_V^\uA\tau(\uE_\uA),Y_V^\uB\tau(\uE_\uB)]_{TM\oplus L}&=\tau([\umX_V^\uA\uE_\uA,Y_V^\uB\uE_\uB]_A)\\
[\tau^A{}_\uA\ut_A,\tau^B{}_\uB\ut_B]_L&=\tau(C_{\un{AB}}{}^{\un C}\uE_{\un C})\\
\tau^A{}_\uA\tau^B{}_\uB f_{AB}{}^C\ut_C&=C_{\un{AB}}{}^{\un C}\tau^C{}_{\un C}\ut_C\,.
\end{align*}
Thus,
\begin{align}
\tau^A{}_\uA\tau^B{}_\uB f_{AB}{}^C&=C_{\un{AB}}{}^{\un C}\tau^C{}_{\un C}\,.
\end{align}
Applying $j^\uA{}_Dj^\uB{}_E$ to both sides of the above equation and considering \eqref{Cf} we get
\begin{align}
\label{ttftf1}
\tau^A{}_\uA j^\uA{}_D\tau^B{}_\uB j^\uB{}_E f_{AB}{}^C&=\tau^C{}_{\un C}j^{\un C}{}_Ff_{DE}{}^F\,.
\end{align}
\par
Now we take $\umX=\umX_H$ to be horizontal and $\umY=j(\un\mu)$ to be vertical. Then \eqref{taumor1} gives
\begin{align*}
[\tau(\umX_H),\tau(j(\un\mu))]_{TM\oplus L}&=\tau([\umX_H,j(\un\mu)]_A)\\
[\umX_H^{\ualpha}\tau(\un E_{\alpha}),\mu^C(\tau\circ j)(\un t_C)]_{TM\oplus L}&=\tau(j(\mu^A{\cal A}^B{}_A(\umX_H)\un t_B+(\rho(\umX_H)\mu^A)\un t_A))\\
[\umX_H^{\ualpha}\tau^\mu{}_{\ualpha}(\un\p_\mu+b^A{}_\mu \un t_A),\mu^C(\tau\circ j)^B{}_C\un t_B]_{TM\oplus L}&=\umX_H^{\ualpha}(\mu^A{\cal A}^B{}_A(\uE_{\ualpha})(\tau\circ j)^C{}_B\un t_C+(\rho(\uE_{\ualpha})\mu^A)(\tau\circ j)^B{}_A\un t_B)\\
\umX_H^{\ualpha}\tau^\mu{}_{\ualpha}\mu^D(\p_\mu(\tau\circ j)^C{}_D+b^A{}_\mu(\tau\circ j)^B{}_D f_{AB}{}^C)\un t_C&=\umX_H^{\ualpha}\mu^A{\cal A}_{\ualpha}{}^B{}_A(\tau\circ j)^C{}_B\un t_C\,,
\end{align*}
where we used \eqref{XHiota2} in the second step and the fact that $\tau^\mu{}_{\ualpha}=\rho^\mu{}_{\ualpha}$ in the last step. Then, we obtain that
\begin{align}
{\cal A}_{\ualpha}{}^D{}_C=((\tau\circ j)^{-1})^E{}_C(\rho^\mu{}_{\ualpha}b^A{}_\mu f_{AB}{}^C+\delta^C{}_B\rho^\mu{}_{\ualpha}\p_\mu)(\tau\circ j)^B{}_D\,.
\end{align}
Using \eqref{ttftf1}, this can be written alternatively as
\begin{align}
{\cal A}_{\ualpha}{}^D{}_C=\rho^\mu{}_{\ualpha}(b^A{}_\mu((\tau\circ j)^{-1}){}^B{}_A f_{BC}{}^D+((\tau\circ j)^{-1}\p_\mu(\tau\circ j))^D{}_C)\,.
\end{align}
\par
When $\umX=\umX_H$ and $\umY=\umY_H$ are both horizontal, the condition \eqref{taumor1} gives
\begin{align}
\label{taumorH}
[\tau(\umX_H),\tau(\umY_H)]_{TM\oplus L}&=\tau([\umX_H,\umY_H]_A)
\end{align}
The left-hand side of this equation can be evaluated as follows:
\begin{align*}
[\tau(\umX_H),\tau(\umY_H)]_{TM\oplus L}&=[\umX_H^{\ualpha}\tau^\mu{}_{\ualpha}(\un\p_\mu+b^A{}_\mu \un t_A),\umY_H^{\ubeta}\tau^\nu{}_{\ubeta}(\un\p_\nu+b^B{}_\nu \un t_B)]_{TM\oplus L}\nn\\
&=[\umX_H^{\ualpha}\tau^\mu{}_{\ualpha}\un\p_\mu,\umY_H^{\ubeta}\tau^\nu{}_{\ubeta}\un\p_\nu]_{TM\oplus L}+[\umX_H^{\ualpha}\tau^\mu{}_{\ualpha}\un\p_\mu,\umY_H^{\ubeta}\tau^\nu{}_{\ubeta} b^B{}_\nu \un t_B]_{TM\oplus L}\nn\\
&\quad+[\umX_H^{\ualpha}\tau^\mu{}_{\ualpha} b^A{}_\mu \un t_A,\umY_H^{\ubeta}\tau^\nu{}_{\ubeta}\un\p_\nu]_{TM\oplus L}+[\umX_H^{\ualpha}\tau^\mu{}_{\ualpha} b^A{}_\mu \un t_A,\umY_H^{\ubeta}\tau^\nu{}_{\ubeta} b^B{}_\nu \un t_B]_{TM\oplus L}\nn\\
&=\umX_H^{\ualpha}\tau^\mu{}_{\ualpha}\p_\mu(\umY_H^{\ubeta}\tau^\nu{}_{\ubeta})\un\p_\nu-\umY_H^{\ubeta}\tau^\nu{}_{\ubeta}\p_\nu(\umX_H^{\ualpha}\tau^\mu{}_{\ualpha})\un\p_\mu\nn\\
&\quad+\umX_H^{\ualpha}\tau^\mu{}_{\ualpha}\p_\mu(\umY_H^{\ubeta}\tau^\nu{}_{\ubeta} b^B{}_\nu)\un t_B-\umY_H^{\ubeta}\tau^\nu{}_{\ubeta}\p_\nu(\umX_H^{\ualpha}\tau^\mu{}_{\ualpha} b^A{}_\mu) \un t_A\nn\\
&\quad+\umX_H^{\ualpha}\tau^\mu{}_{\ualpha}\umY_H^{\ubeta}\tau^\nu{}_{\ubeta} b^A{}_\mu b^B{}_\nu f_{AB}{}^C\ut_C\,.
\end{align*}
Since we have $\tau^\mu{}_{\ualpha}=\rho^\mu{}_{\ualpha}$, we can notice that
\begin{align}
\umX_H^{\ualpha}\tau^\mu{}_{\ualpha}\un\p_\mu=\umX_H^{\ualpha}\rho^\mu{}_{\ualpha}\un\p_\mu=\rho(\umX_H)=\uX=X^\mu\un\p_\mu\,,
\end{align}
and thus $\umX_H^{\ualpha}\tau^\mu{}_{\ualpha}=X^\mu$. Hence, we have
\begin{align}
&\quad[\tau(\umX_H),\tau(\umY_H)]_{TM\oplus L}\nn\\
&= X^\mu\p_\mu Y^\nu\un\p_\nu-Y^\nu\p_\nu X^\mu\un\p_\mu+X^\mu\p_\mu(Y^\nu b^B{}_\nu)\ut_B-Y^\nu\p_\nu(X^\mu b^A{}_\mu) \ut_A+X^\mu Y^\nu b^A{}_\mu b^B{}_\nu f_{AB}{}^C\ut_C\nn\\
&=X^\mu\p_\mu Y^\nu(\un\p_\nu+b^B{}_\nu\ut_B)-Y^\nu\p_\nu X^\mu(\un\p_\mu+b^A{}_\mu\ut_A)+X^\mu Y^\nu(\p_\mu b^A{}_\nu-\p_\nu b^A{}_\mu+ b^B{}_\mu b^C{}_\nu f_{BC}{}^A)\ut_A\nn\\
&=[\uX,\uY]^\mu(\un\p_\mu+b^A{}_\mu\ut_A)+X^\mu Y^\nu(\p_\mu b^A{}_\nu-\p_\nu b^A{}_\mu+ b^B{}_\mu b^C{}_\nu f_{BC}{}^A)\ut_A\nn\\
\label{tauHtauH}
&=[\uX,\uY]^\mu \un D_\mu+X^\mu Y^\nu F^A{}_{\mu\nu}\ut_A\,,
\end{align}
where we defined $\un D_\mu\equiv\un\p_\mu+b^A{}_\mu\ut_A$ and the curvature of $b^A{}_\mu$:
\begin{align}
F^A{}_{\mu\nu}\equiv\p_\mu b^A{}_\nu-\p_\nu b^A{}_\mu+ b^B{}_\mu b^C{}_\nu f_{BC}{}^A
\end{align}
On the other hand, the right-hand side of \eqref{taumorH} is
\begin{align}
\tau([\umX_H,\umY_H]_A)&=\tau([\sigma(\uX),\sigma(\uY)]_A)\nn\\
&=\tau(\sigma([\uX,\uY]_{TM}))+\tau(R^\sigma(\uX,\uY))\nn\\
&=\tau(\sigma([\uX,\uY]^\nu\un\p_\nu))+\tau((R^\sigma(\uX,\uY))^\uA\uE_A)\nn\\
&=\tau(\sigma^{\ualpha}{}_\nu[\uX,\uY]^\nu\uE_{\ualpha})+\tau^A{}_\uA (R^\sigma(\uX,\uY))^\uA\ut_A\nn\\
&=\tau^\mu{}_{\ualpha}\sigma^{\ualpha}{}_\nu[\uX,\uY]^\nu\un D_\mu-\omega^A{}_\uA (R^\sigma(\uX,\uY))^\uA\ut_A\nn\\
&=\rho^\mu{}_{\ualpha}\sigma^{\ualpha}{}_\nu[\uX,\uY]^\nu\un D_\mu-\omega(R^\sigma(\uX,\uY))\nn\\
&=[\uX,\uY]^\mu\un D_\mu+\Omega(\umX_H,\umY_H)\nn\\
\label{tauHH}
&=[\uX,\uY]^\mu\un D_\mu+\Omega^A{}_{\ualpha\ubeta}\umX_H^{\ualpha}\umY_H^{\ubeta}\ut_A\,.
\end{align}
Comparing \eqref{tauHtauH} and \eqref{tauHH} and noticing that $\umX_H^{\ualpha}\rho^\mu{}_{\ualpha}=X^\mu$, we obtain
\begin{align}
F^A{}_{\mu\nu}\rho^\mu{}_{\ualpha}\rho^\nu{}_{\ubeta}=\Omega^A{}_{\ualpha\ubeta}\,,
\end{align}
which indicates that $F^A{}_{\mu\nu}$ also represents the curvature of the Lie algebroid.

\subsection{The Decomposition of $\hatd_\tau$ on a Trivialized Algebroid}
\label{app:hatd}
In this part of the appendix we present the calculation details of \eqref{hatdsalar} and \eqref{dhat1form}.
First, for an $E$-valued scalar $\un{\psi} = \psi^a \un{e}_a\in\Gamma(E)$. Using the Koszul formula \eqref{dhat}, we have
\begin{align}
\hatd_{\tau} \un{\psi}
&=\hat E^{\un M}\otimes\phi_E(\un {\hat E}_{\un M})(\psi^a \un{e}_a)\nn\\
&=\rho_\tau(\un {\hat E}_{\un M})\psi^a\hat E^{\un M}\otimes\un e_a+\psi^a\hat E^{\un M}\otimes\phi_E(\un {\hat E}_{\un M})(\un e_a)\nn\\
&=\rho_\tau(\un {\hat E}_{\ualpha})\psi^a\hat E^{\ualpha}\otimes\un e_a+\psi^a{\cal A}_{\ualpha}{}^b{}_a\hat E^{\ualpha}\otimes\un e_b - \psi^a\hat E^{\un A}\otimes v_E(\omega_\tau(\un {\hat E}_{\un A}))^b{}_a \un{e}_b\nn\\
&=\rho_\tau^{\mu}{}_{\un{\alpha}}\left(\partial_{\mu} \psi^a + b^A{}_{\mu}v_E(\un t_A)^a{}_b \psi^b\right)  \sigma_\tau^{\un{\alpha}}{}_{\nu} \td x^\nu\otimes \un{e}_a- \omega_\tau^A{}_{\un A}v_E(\un t_A)^a{}_b \psi^b j_\tau^{\un{A}}{}_{B}(t^B-b^B_\mu \td x^\mu)\otimes\un{e}_{a}\nn\\
\label{hatddsscalar}
&=
 \Big(\td \psi^a
 +  v_E(\un t_A)^a{}_b\varpi^A \psi^b  \Big)\otimes\un{e}_a \,,
\end{align}
where in the second equality we used \eqref{phiEf}, in the third equality we used \eqref{defSpinConnHV}, in the fourth equality we plugged in \eqref{taubasis}, \eqref{taudualbasis} and \eqref{bArelationE4},\footnote{This derivation can also be done in the trivialization introduced in \eqref{trivialbasis} without introducing the basis \eqref{taubasis}, \eqref{taudualbasis} for the trivialized algebroid. In this case the linear relation \eqref{bArelationE4} does not hold and one should use \eqref{bArelationE3}. However, the inhomogeneous term therein can be absorbed by redefining the Maurer-Cartan form $\varpi$ and so the algebra proceeds similarly.} and in the last step we used the fact that $\rho_\tau^{\mu}{}_{\un{\alpha}} \sigma_\tau^{\un{\alpha}}{}_{\nu}=\delta^\mu{}_\nu$ and $\omega_\tau^A{}_{\un A}j_\tau^{\un{A}}{}_{B}=\delta^A{}_B$.

Next, we consider $\beta\in\Gamma(A_\tau^*\times E)$. Employing the Koszul formula (which is most easily employed by translating $\alpha$ into the covariant split basis), we find
\beqn
\hatd_\tau\beta&=&\hatd_\tau(\beta^a_{\un M}\hat E^{\un M}\otimes\un e_a)\nn\\
&=&\frac12\hat E^{\un M}\wedge \hat E^{\un N}\otimes\Big(
\phi_E(\un {\hat E}_{\un M})(\beta^a_{\un N}\un e_a)
-\phi_E(\un {\hat E}_{\un N})(\beta^a_{\un M}\un e_a)
-\beta([\un {\hat E}_{\un M},\un {\hat E}_{\un N}]_{A_\tau})\Big)\nn\\
&=&-\Big(\hat E^{\un N}\wedge\Big(\td \beta^a_{\un N} +  v_E(\un t_A)^a{}_bt^A \beta^b_{\un N}  \Big)
+\frac12C_{\un M\un N}{}^{\un P}\beta^a_{\un P} \hat E^{\un M}\wedge \hat E^{\un N}\Big)\otimes\un e_a
\nn\\
&=&\Big(\td \beta^a_{\un\alpha} +  v_E(\un t_A)^a{}_bt^A \beta^b_{\un\alpha}  \Big)\wedge\hat E^{\un\alpha}\otimes\un e_a
+\Big(\td \beta^a_{\un B} +  v_E(\un t_A)^a{}_bt^A \beta^b_{\un B}  \Big)\wedge\hat E^{\un B}\otimes\un e_a
-\frac12C_{\un\alpha\un\beta}{}^{\un\gamma}\beta^a_{\un\gamma} \hat E^{\un\alpha}\wedge \hat E^{\un\beta}\otimes\un e_a
\nn\\&&
-\frac12C_{\un\alpha\un\beta}{}^{\un C}\beta^a_{\un C} \hat E^{\un\alpha}\wedge\hat E^{\un\beta}\otimes\un e_a
-C_{\un\alpha\un A}{}^{\un B}\beta^a_{\un B}\hat E^{\un\alpha}\wedge \hat E^{\un A}\otimes\un e_a
-\frac12C_{\un A\un B}{}^{\un C}\beta^a_{\un C} \hat E^{\un A}\wedge \hat E^{\un B}\otimes\un e_a
\nn\\
&=&\bigg[\Big(\td (\sigma_\tau^{\un\alpha}{}_\nu\beta^a_{\un\alpha}) +  v_E(\un t_A)^a{}_bt^A (\sigma_\tau^{\un\alpha}{}_\nu\beta^b_{\un\alpha})  \Big)\wedge \td x^\nu
+\Big(\td (j_\tau^{\un B}{}_B\beta^a_{\un B}) +  v_E(\un t_A)^a{}_bt^A (j_\tau^{\un B}{}_B\beta^b_{\un B})  \Big)\wedge (t^B-b^B_\nu \td x^\nu)
\nn\\&&
-\frac12F^B{}_{\mu\nu}\beta^a_{B}  \td x^\mu\wedge \td x^\nu
-
\sigma_\tau^{\un\alpha}{}_\mu {\cal A}_{\un\alpha}{}^A{}_B\beta^a_{A}   \td x^\mu\wedge  (t^B-b^B_\nu \td x^\nu)\nn\\
&&
-\frac12f_{AB}{}^{C}\beta^a_{C}  (t^A-b^A_\mu \td x^\mu)\wedge  (t^B-b^B_\nu \td x^\nu)\bigg]\otimes\un e_a
\nn\\
&=&\bigg[\Big(\td (\sigma_\tau^{\un\alpha}{}_\nu\beta^a_{\un\alpha}-j_\tau^{\un B}{}_B\beta^a_{\un B}b_\nu^B) +  v_E(\un t_A)^a{}_bt^A (\sigma_\tau^{\un\alpha}{}_\nu\beta^a_{\un\alpha}-j_\tau^{\un B}{}_B\beta^a_{\un B}b_\nu^B)  \Big)\wedge \td x^\nu
\nn\\&&
+\Big(\td (j_\tau^{\un B}{}_B\beta^a_{\un B}) +  v_E(\un t_A)^a{}_bt^A (j_\tau^{\un B}{}_B\beta^b_{\un B}) -\frac12f_{AB}{}^{C}(j_\tau^{\un B}{}_C\beta^b_{\un B}) t^A
 \Big)\wedge t^B
\nn\\&&+(\sigma_\tau^{\un\alpha}{}_\nu {\cal A}_{\un\alpha}{}^C{}_A+f_{AB}{}^{C}b^B_\nu) (j_\tau^{\un B}{}_C\beta^b_{\un B}) t^A\wedge \td x^\nu+(\sigma_\tau^{\un\alpha}{}_\mu {\cal A}_{\un\alpha}{}^C{}_B-f_{AB}{}^Cb_\mu^A) b_\nu^B(j_\tau^{\un B}{}_C\beta^a_{\un B})  \td x^\mu\wedge \td x^\nu\Big]\otimes\un e_a\nn\\
&=&\Big(\td (\sigma_\tau^{\un\alpha}{}_\nu\beta^a_{\un\alpha}-j_\tau^{\un B}{}_B\beta^a_{\un B}b_\nu^B) +  v_E(\un t_A)^a{}_bt^A (\sigma_\tau^{\un\alpha}{}_\nu\beta^a_{\un\alpha}-j_\tau^{\un B}{}_B\beta^a_{\un B}b_\nu^B)  \Big)\wedge \td x^\nu\otimes\un e_a
\nn\\&&
+\Big(\td (j_\tau^{\un B}{}_B\beta^a_{\un B}) +  v_E(\un t_A)^a{}_bt^A (j_\tau^{\un B}{}_B\beta^b_{\un B}) -\frac12f_{AB}{}^{C}(j_\tau^{\un B}{}_C\beta^b_{\un B}) t^A
 \Big)\wedge t^B\otimes\un e_a\,,
\eeqn
where in the third equality we applied the result from \eqref{hatddsscalar}, in the fifth equality we plugged in the commutation coefficients \eqref{Cabc}--\eqref{CABC}, and in the last equality the terms are canceled by means of \eqref{bArelationE4}. Recognizing from \eqref{betaspliting} that $\beta^a_\nu=\sigma_\tau^{\un\alpha}{}_\nu\beta^a_{\un\alpha}-j_\tau^{\un B}{}_B\beta^a_{\un B}b_\nu^B$ and $\beta^a_A=j_\tau^{\un B}{}_A\beta^a_{\un B}$, we obtain the result in \eqref{dhat1form}:
\beq
\hatd_\tau\beta=
\Big(\td \beta^a_\nu +  v_E(\un t_A)^a{}_bt^A \beta^a_\nu\Big)\wedge \td x^\nu\otimes\un e_a
+\Big(\td \beta^a_B +  v_E(\un t_A)^a{}_bt^A \beta^b_B -\frac12f_{AB}{}^{C}\beta^a_C t^A
 \Big)\wedge t^B\otimes\un e_a\,.
\eeq

\section{The Free Variation of the Chern-Simons Form}
\label{app:covanomaly}
In Subsection \ref{sec:covariant}, we introduced that the covariant anomaly can be derived by taking the free variation of the Chern-Simons form $\scr C_Q(\omega)$ in the covariant splitting, as shown in equation \eqref{Covariant Anomaly}. We will now provide an explicit demonstration of this derivation. Following the approach presented in \cite{bardeen1984consistent}, we introduce a nilpotent operator $K:\Omega^{p}(A;L)\to\Omega^{p-1}(A;L)$ that acts as follows:
\begin{align}
K\omega=0\,,\qquad K\Omega=\delta\omega\,,\qquad K\delta\omega=0\,.
\end{align}
Then, the variation operator on $\omega$ and $\Omega$ can be written as
\begin{align}
\delta=K\hat\td+\hat\td K\,.
\end{align}
When performing the variation of the Chern-Simons form:
\begin{align}
\label{KddKC}
\delta\mathscr{C}_Q=K\hat\td\mathscr{C}_Q+\hat\td K\mathscr{C}_Q\,,
\end{align}
the second term is a total derivative, and thus all we have to show is that the first term in \eqref{KddKC} gives rise to the first term in \eqref{Covariant Anomaly}, namely $\beta^{(2l-2,1)}(\delta\omega, \Omega)$. Using the transgression formula \eqref{transgression}, one finds
\begin{align}
K\hat\td\mathscr{C}_Q(\omega)={}& Q_{A_1 \cdots A_l} \int_{0}^{1} \td t\,\delta\omega^{A_1} \wedge_{j = 2}^{l} \left(t\Omega + \frac{1}{2}(t^2-t)[\omega , \omega]_{L}  \right)^{A_j}\nn\\
&+(l-1)Q_{A_1 \cdots A_l} \int_{0}^{1} \td t\,\td\omega^{A_1} t\delta\omega^{A_2} \wedge_{j = 3}^{l} \left(t\Omega + \frac{1}{2}(t^2-t)[\omega , \omega]_{L}  \right)^{A_j}\nn\\
&+(l-1) Q_{A_1 \cdots A_l} \int_{0}^{1} \td t\, \omega^{A_1} t^2[\delta\omega , \omega]_{L}^{A_2} \wedge_{j = 3}^{l} \left(t\Omega + \frac{1}{2}(t^2-t)[\omega , \omega]_{L}  \right)^{A_j}\nn\\
&+(l-1)(l-2) Q_{A_1 \cdots A_l} \int_{0}^{1} \td t\, \omega^{A_1} t^3[\Omega , \omega]_{L}^{A_2} \delta\omega^{A_3}\wedge_{j = 4}^{l} \left(t\Omega + \frac{1}{2}(t^2-t)[\omega , \omega]_{L}  \right)^{A_j}\nn\\
={}& Q_{A_1 \cdots A_l} \int_{0}^{1} \td t\,\delta\omega^{A_1} \wedge_{j = 2}^{l} \left(t\Omega + \frac{1}{2}(t^2-t)[\omega , \omega]_{L}  \right)^{A_j}\nn\\
&+(l-1)Q_{A_1 \cdots A_l} \int_{0}^{1} \td t\,\td\omega^{A_1} t\delta\omega^{A_2} \wedge_{j = 3}^{l} \left(t\Omega + \frac{1}{2}(t^2-t)[\omega , \omega]_{L}  \right)^{A_j}\nn\\
&+(l-1) Q_{A_1 \cdots A_l} \int_{0}^{1} \td t\,t^2 \delta\omega^{A_1} [ \omega,\omega]^{A_2} \wedge_{j = 3}^{l} \left(t\Omega + \frac{1}{2}(t^2-t)[\omega , \omega]_{L}  \right)^{A_j}\nn\\
\label{KdC1}
={}&l Q_{A_1 \cdots A_l} \int_{0}^{1} \td t\,\delta\omega^{A_1} \wedge_{j = 2}^{l} \left(t\Omega + \frac{1}{2}(t^2-t)[\omega , \omega]_{L}  \right)^{A_j}\nn\\
&+\frac{l-1}{2}Q_{A_1 \cdots A_l} \int_{0}^{1} \td t\,t^2\delta\omega^{A_1} [\omega , \omega]_{L}^{A_2}\wedge_{j = 3}^{l} \left(t\Omega + \frac{1}{2}(t^2-t)[\omega , \omega]_{L}  \right)^{A_j}\,,
\end{align}
To further evaluate this, we first perform the integral of the following form:
\begin{align}
&\int_{0}^{1} \td t \left[ l(t A+\frac{t^2-t}{2}B)^{l-1}+\frac{l-1}{2}t^2 B(t A+\frac{t^2-t}{2}B)^{l-2} \right] \nn\\
=&\int_{0}^{1} \td t \left[ l\sum_{n=0}^{l-1}C^n_{l-1}t^n (\frac{t^2-t}{2})^{l-1-n}A^nB^{l-1-n}+\frac{l-1}{2}\sum_{n=0}^{l-2}C^n_{l-2} t^{n+2}(\frac{t^2-t}{2})^{l-2-n}A^nB^{l-1-n} \right] \nn\\
=&\int_{0}^{1} \td t \left[lt^{l-1}A^{l-1}+ \sum_{n=0}^{l-2}\left(lC^n_{l-1}t^n (\frac{t^2-t}{2})^{l-1-n}+\frac{l-1}{2}C^n_{l-2} t^{n+2}(\frac{t^2-t}{2})^{l-2-n}\right)A^nB^{l-1-n} \right] \nn\\
=&\int_{0}^{1} \td t \left[lt^{l-1}A^{l-1}+ \sum_{n=0}^{l-2}\left(\frac{l(l-1)!}{n!(l-1-n)!}\frac{t^2-t}{2}+\frac{(l-1)(l-2)!}{n!(l-2-n)!}\frac{t^{2}}{2}\right)t^n(\frac{t^2-t}{2})^{l-2-n}A^nB^{l-1-n} \right] \nn\\
=&\int_{0}^{1} \td t \left[lt^{l-1}A^{l-1}+ \sum_{n=0}^{l-2}\frac{(l-1)!}{n!(l-1-n)!}\left(l\frac{t-1}{2}+(l-1-n)\frac{t}{2}\right)t^{n+1}(\frac{t^2-t}{2})^{l-2-n}A^nB^{l-1-n} \right] \nn\\
=&\int_{0}^{1} \td t \left[lt^{l-1}A^{l-1}+ \sum_{n=0}^{l-2}\frac{(l-1)!}{n!(l-1-n)!2^{l-1-n}}[l(t-1)+(l-1-n)t]t^{l-1}(t-1)^{l-2-n}A^nB^{l-1-n} \right] \nn\\
=&\int_{0}^{1} \td t lt^{l-1}A^{l-1}+ \sum_{n=0}^{l-2}\frac{(l-1)!A^nB^{l-1-n}}{n!(l-1-n)!2^{l-1-n}}t^{l}(t-1)^{l-1-n}\Big|_0^1 \nn\\
=&A^{l-1}\,.
\end{align}
Then, taking $A$ as $\Omega$ and $B$ as $[\omega,\omega]_L$, the integral in \eqref{KdC1} yields 
\begin{align}
K\hat\td\mathscr{C}_Q(\omega)=Q(\underbrace{\Omega,\ldots,\Omega}_{l-1},\delta\omega)\,.
\end{align}
Now we can compare this with $\beta^{(2l-2,1)}(\delta\omega, \Omega)$. From \eqref{transgression}, one can pick up the term with a single $\omega$ and find
\begin{align}
 \beta^{(2l-2,1)}(\omega, \Omega)=Q_{A_1 \cdots A_l} \int_{0}^{1} \td t\,\omega^{A_1} t^{l-1}\wedge_{j = 2}^{l} \Omega^{A_j}=\frac{1}{l}Q(\underbrace{\Omega,\ldots,\Omega}_{l-1},\omega)\,,
\end{align}
and hence
\begin{equation} 
\label{beta2l-2a}
\beta^{(2l-2,1)}(\delta\omega, \Omega)= \frac{1}{l}Q(\underbrace{\Omega,\ldots,\Omega}_{l-1},\delta\omega)\,.
\end{equation}
Therefore, we can see that \eqref{KddKC} can be written as
\begin{equation} 
\label{deltaC}
\delta \mathscr{C}_Q(\omega) = l \beta^{(2l-2,1)}(\delta\omega, \Omega) + \hat\td\Theta(\omega,\delta\omega)\,,
\end{equation}
where $\Theta\equiv K\mathscr{C}_Q$. The covariant anomaly can be read off from the first term, while the $\Theta$ in the second term serves as the Bardeen-Zumino polynomial which covariantizes the consistent anomaly when added to the anomalous current \cite{bardeen1984consistent}.


\printbibliography[heading=bibintoc,title={References}]

\end{document}